\makeatletter\@addtoreset{footnote}{page}\makeatother
\renewcommand{\thefootnote}{\ifcase\value{footnote}\or(*)\or
(**)\or(***)\or(****)\or(\#)\or(\#\#)\or(\#\#\#)\or(\#\#\#\#)\or($\infty$)\fi} 
\newtheorem{theorem}{Theorem}[section]
\newtheorem{proposition}[theorem]{Proposition}
\newtheorem{lemma}[theorem]{Lemma}
\newtheorem{definition}[theorem]{Definition}
\newtheorem{corollary}[theorem]{Corollary}
\newtheorem{remark}[theorem]{Remark}
\newtheorem{example}[theorem]{Example}
\def\bbbone{{\mathchoice {\rm 1\mskip-4mu l} {\rm 1\mskip-4mu l}
{\rm 1\mskip-4.5mu l} {\rm 1\mskip-5mu l}}}
\def\one{\bbbone}
\newcommand{\per}{\mathrm{perp}}
\newcommand{\sperp}{{\,\s\!\perp}}
\newcommand{\Span}{\mathrm{Span}}
\renewcommand{\i}{\mathrm{i}}
\newcommand{\cH}{\mathcal{H}}
\newcommand{\cB}{\mathcal{B}}
\newcommand{\cZ}{\mathcal{Z}}
\newcommand{\cC}{\mathcal{C}}
\newcommand{\cI}{\mathcal{I}}
\newcommand{\cU}{\mathcal{U}}
\newcommand{\cK}{\mathcal{K}}
\newcommand{\cL}{\mathcal{L}}
\newcommand{\cV}{\mathcal{V}}
\newcommand{\cW}{\mathcal{W}}
\newcommand{\cG}{\mathcal{G}}
\newcommand{\cY}{\mathcal{Y}}
\newcommand{\cD}{\mathcal{D}}
\newcommand{\cb}{\mathcal{B}}
\newcommand{\Ran}{\mathcal{R}}
\newcommand\supp{\mathrm{supp}}
 \newcommand{\Ker}{\mathcal{N}}
\def\braket#1#2{\langle{#1}|{#2}\rangle}
\def\bracet#1#2{({#1}|{#2})}
\def\ketbra#1#2{|{#1}\rangle\langle{#2}|}
\newcommand{\M}{{\max}}
\newcommand{\m}{{\min}}
\newcommand{\wlim}{\mathrm{w}-\lim}
\newcommand{\rs}{\mathrm{rs}}
\newcommand{\spec}{\mathrm{sp}}
\newcommand{\ind}{\mathrm{ind}}
\newcommand{\s}{\mathrm{s}}
\renewcommand\Re{\operatorname{Re}}
\renewcommand\Im{\operatorname{Im}}
\newcommand{\Tr}{\mathrm{Tr}\,}
\newcommand{\bbC}{\mathbb{C}}
\newcommand{\loc}{\mathrm{loc}}
\renewcommand{\bar}{\overline}
\def\lbra {{[\![}}
\def\rbra {{]\!]}}
\def\R{\mathbb R}
\def\C{\mathbb C}
\def\P{\mathbb{P}}
\def\B{\mathcal B}
\def\d{\mathrm d}
\def\x{\mathsf x}
\newcounter{smallroman}
\newenvironment{romanenumerate}
{\begin{list}{{\normalfont\textrm{(\roman{smallroman})}}}
  {\usecounter{smallroman}\setlength{\itemindent}{0cm}
   \setlength{\leftmargin}{5ex}\setlength{\labelwidth}{4ex}
   \setlength{\topsep}{0.75\parsep}\setlength{\partopsep}{0ex}
   \setlength{\itemsep}{0ex}}}
{\end{list}}
\begin{document}

\title[1-dimensional Schr\"odinger operators with complex potentials]
{1-dimensional Schr\"odinger operators\\
  with complex potentials \\[2mm]
  {\tiny\today}
}

\author[J. Derezi\'nski]{Jan Derezi\'nski}
\address[Jan Derezi\'nski]{Department of Mathematical Methods in
  Physics, Faculty of Physics, University of Warsaw, Pasteura 5,
  02-093 Warszawa, Poland} 
\email{jan.derezinski@fuw.edu.pl}

\author[V. Georgescu]{Vladimir Georgescu} \address[Vladimir
Georgescu]{D\'epartement de Math\'ematiques, Universit\'e de
  Cergy-Pontoise, 95000 Cergy-Pontoise, France}
\email{vladimir.georgescu@math.cnrs.fr}

\begin{abstract}
  We discuss realizations of $L:=-\partial_x^2+V(x)$ as closed
  operators on $L^2]a,b[$, where $V$ is complex, locally integrable
  and may have an arbitrary behavior at (finite or infinite) endpoints
  $a$ and $b$.  The main tool of our analysis are Green's operators,
  that is, various right inverses of $L$.
\end{abstract}

\maketitle

{\hypersetup{linkcolor=black} 
  \tableofcontents
}

\section{Introduction}

The paper is devoted to operators of the form 
\begin{equation}
  L=-\partial_x^2+V(x)\label{sturm}
\end{equation}
on $]a,b[$, where $a<b$, $a$ can be $-\infty$ and $b$ can be
$\infty$. The potential $V$ can be complex, have low regularity, and a
rather arbitrary behavior at the boundary of the domain: we assume
that $V\in L_\loc^1]a,b[$.  We study realizations of $L$ as closed
operators on $L^2]a,b[$.

Operators of the form (\ref{sturm}) are commonly called {\em
  1-dimensional Schr\"odinger operators} or, shorter, {\em 1d
  Schr\"odinger operators.}  They are special cases of {\em
  Sturm-Liouville operators}, that is operators of the form
\begin{equation}
 -\frac{1}{w(x)}\partial_xp(x)\partial_x+\frac{q(x)}{w(x)}.\label{sturm-l}
\end{equation}
Note, however, that if $\frac{p(x)}{w(x)}$ is real, then under rather
weak assumptions on $w,p,q$, a simple unitary transformation reduces
(\ref{sturm-l}) to (\ref{sturm}). Therefore, it is not a serious
restriction to consider 1d Schr\"odinger operators instead of
Sturm-Liouville operators.

1d Schr\"odinger operators is a classic subject with a lot of
literature.  Most of the literature is devoted to  real $V$, when
$L$ can be realized as self-adjoint operator.  It is,
however, quite striking that the usual theory well-known from the real
(self-adjoint) case works almost equally well in the complex case.  In
particular, essentially the same theory for boundary conditions and
the same formulas for {\em Green's operators} (right inverses of (\ref{sturm}))
hold as in the real case. We will describe these topics in detail in
this paper.

A large part of the literature on 1d Schr\"odinger operators assumes
that potentials are $L^1$ near finite endpoints. Under this condition
one can impose the so called {\em regular boundary conditions}
(Dirichlet, Neumann or Robin).  In this case, it is natural to use the
so-called {\em Weyl-Titchmarsh function} and the formalism of the
 so-called {\em boundary triplets}, see e.g. \cite{BBMNP} and
references therein. We are interested in general boundary conditions,
such as those considered in \cite{BDG,DR,DR2}, where the above
approach does not directly apply. See the discussion at the end of Subsect. \ref{triplet}.

One of the motivations of the present work is the study of exactly
solvable Schr\"odinger operators, such as those given by the Bessel
equation \cite{BDG,DR}, or the Whittaker equation \cite{DR2}. Analysis
of those operators indicates that non-real potentials are as good from
the point of view of the exact solvability as real ones. It is also
natural to organize exactly solvable Schr\"odinger operators in
holomorphic families, whose elements are self-adjoint only in
exceptional cases. Therefore, a theory for 1d Schr\"odinger operators
with complex potentials and general boundary conditions provides a
natural framework for the study of exactly solvable Hamiltonians.

As we mentioned above, we suppose that $V\in L_\loc^1]a,b[$.  The
theory is much easier if $V\in L_\loc^2]a,b[$, because one could then
assume that the operator acts on $C_\mathrm{c}^2]a,b[$. Dealing with
potentials in $L_\loc^1$ causes of a lot of trouble---this is however
a rather natural assumption. We think that handling a more general case
forces us to better understand the problem.  Actually, one could
consider even more singular potentials: it is easy to generalize our
results to potentials $V$ being  Borel measures on $]a,b[$.

In the  preliminary Sect. \ref{s:bode} we study the inhomogeneous problem
given by the operator (\ref{sturm}) by basic ODE methods.  We
introduce some distinguished Green's operators: The {\em two-sided
  Green's operators} are related to boundary conditions on both sides.
The {\em forward} and {\em backward Green's operators} are related to
the Cauchy problem at the endpoints of the interval. These operators
belong to the most often used objects in mathematics. Usually they
appear under the guise of {\em Green's functions}, which are the
integral kernels of Green's operators.

Note that in the Hilbert space $ L^2]a,b[$ we have a natural
conjugation $f\mapsto \bar f$ and a bilinear form
$\langle f|g\rangle: =\int fg$. For an operator $T$ it is natural to
define its {\em transpose} $T^\#:=\bar{T^*}$, where the bar denotes
the complex conjugation.  We say that $T$ is {\em self-transposed} if
$T^\#=T$ (in the literature the alternative name {\em J-self-adjoint}
is also used). These concepts play an important role in the theory of
differential operators on $ L^2]a,b[$. Therefore, we devote
Sect. \ref{s:abs} to a general theory of operators in a Hilbert space
with a conjugation.  We briefly recall the theory of
restrictions/extensions of unbounded operators. The concept of
self-transposed operators turns out to be a natural alternative to the
concept of self-adjointness.  It is well-known that self-adjoint
operators are {\em well-posed} (possess non-empty resolvent set). Not every self-transposed operator is
well-posed, however they often are.

The remaining sections are devoted to realizations of $L$ given by
(\ref{sturm}) as closed operators on the Hilbert space $L^2]a,b[$. The
most obvious realizations are the {\em minimal} one $L_{\min}$ and the
{\em maximal} one $L_{\max}$.  We prove that these operators are
closed and densely defined.  Under the assumption $V\in L_\loc^1]a,b[$
the proof is quite long and technical but, in our opinion,
instructive. If we assumed $V\in L_\loc^2]a,b[$, the proof would be
easy.

At this point it is helpful to recall basic theory of 1d Schr\"odinger
operators for real potentials. One is usually interested in
self-adjoint extensions of the Hermitian operator $L_{\min}$.  They
are are situated ``half-way'' between $L_{\min}$ and $L_{\max}$. More
precisely, we have 3 possibilities:
\begin{enumerate}
\item $L_{\min}=L_{\max}$: then $L_{\min}$ is already self-adjoint.
\item The codimension of $\cD(L_{\min})$ in $\cD(L_{\max})$ is
  $2$: if $L_\bullet$ is a self-adjoint extension of $L_{\min}$, the
  inclusions
  $\cD(L_{\min})\subset \cD(L_\bullet)\subset \cD(L_{\max})$ are of
  codimension $1$.
\item The codimension of $\cD(L_{\min})$ in $\cD(L_{\max})$ is
  $4$: if $L_\bullet$ is a self-adjoint extension of $L_{\min}$, the
  inclusions
  $\cD(L_{\min})\subset \cD(L_\bullet)\subset \cD(L_{\max})$ are of
  codimension $2$.
\end{enumerate}
Note that in the literature it is common to use the theory of {\em
  deficiency indices}. The cases (1), (2), resp. (3) correspond to
$L_{\min}$ having the deficiency indices $(0,0)$, $(1,1)$ and
$(2,2)$. However, the deficiency indices do not have a straightforward
generalization to the complex case.

Let us go back to complex potentials.  Note that the Hermitian
conjugate of an operator $T$, denoted $T^*$, turns out to be less
useful than its transpose
$T^\#$. In
particular, the role of self-adjoint operators is taken up by
self-transposed operators.

By choosing a subspace of $\cD(L_{\max})$ closed in the graph topology
and restricting $L_{\max}$ to this subspace we can define a closed
operator.  Such operators will be called {\em closed realizations of
  $L$}.  We will show that in the complex case closed realizations of
$L$ possess a theory quite analogous to that of the real case.

We are mostly interested in realizations of $L$ whose domain contains
$\cD(L_{\min})$, that is, operators $L_\bullet$ satisfying
$L_\m\subset L_\bullet\subset L_\M$. Such realizations are defined by specifying boundary
conditions. Similarly as in the real case, boundary conditions are
given by functionals on $\cD(L_\M)$ that vanish on $\cD(L_\m)$. For
each of endpoints, $a$ and $b$, there is a space of functionals
describing boundary conditions.  We call the dimension of this space
the {\em boundary index at $a$}, resp.\ $b$, and denote it $\nu_a(L)$,
resp.\ $\nu_b(L)$. They can take the values $0$ or $2$ only.
Therefore, we have the following classification of operators $L$:
\begin{equation}
  \begin{array}{l}
    \text{(1)    $\dim\cD(L_\M)/\cD(L_\m)=0$, or $L_\m=L_\M$},\\
    \text{(2)    $\dim\cD(L_\M)/\cD(L_\m)=2$},\\
      \text{(3)   $\dim\cD(L_\M)/\cD(L_\m)=4$}.
  \end{array}
  \label{table1}\end{equation}
Let $\lambda\in\C$. It is natural to consider the space of solutions
of $(L-\lambda)u=0$ that are square integrable near $a$, resp.\
$b$. We denote these spaces by $\cU_a(\lambda)$,
resp. $\cU_a(\lambda)$.  We will prove that
  \begin{align}
    & \nu_a(L)=0 \Longleftrightarrow
      \dim\cU_a(\lambda)\leq1 \ \forall\lambda\in\C
      \Longleftrightarrow
      \dim\cU_a(\lambda)\leq1 \text{ for some } \lambda\in\C,
      \label{eq:1} \\
    & \nu_a(L)=2 \Longleftrightarrow \dim\cU_a(\lambda)=2 \
      \forall\lambda\in\C \Longleftrightarrow
      \dim\cU_a(\lambda)=2 \text{ for some } \lambda\in\C. 
      \label{eq:2}
\end{align}
The most useful realizations of $L$ are well-posed. Not all $L$ possess such realizations. One
can classify such $L$'s as follows. If $L$ possesses a well-posed
realization $L_\bullet$, then one of the following conditions holds:
\begin{equation} \label{table2}
  \begin{array}{l}
    \text{(1)    $\dim\cD(L_\M)/\cD(L_\bullet)=0$, or $L_\bullet=L_\M$}\\
    \text{(2)    $\dim\cD(L_\M)/\cD(L_\bullet)=1$},\\
    \text{(3)   $\dim\cD(L_\M)/\cD(L_\bullet)=2$}.
  \end{array}
\end{equation}
There is a strict correspondence between (1), (2) and (3) of
(\ref{table1}) and (1), (2) and (3) of (\ref{table2}).  In cases (1)
and (2) from Table (\ref{table2}) we describe all realizations with
nonempty resolvent set and their resolvents.  We prove that if
$L_\bullet$ is such a realization, then we can find
$u\in\cU_a(\lambda)$ and $v\in\cU_b(\lambda)$ with the Wronskian equal
to $1$, so that the integral kernel of $(L_\bullet-\lambda)^{-1}$ can
then be easily expressed in terms of $u$ and $v$.

The case (3) is much richer. We describe all realizations of $L$ that
 have {\em separated boundary conditions} (given by independent boundary conditions at $a$ and
$b$). If in addition they are self-transposed, then essentially the
same formula as in (1) and (2) gives $(L_\bullet-\lambda)^{-1}$. There
are however two other separated realizations of $L$, which are denoted
$L_a$ and $L_b$, with boundary conditions only at $a$, resp.\
$b$. They are not self-transposed, in fact, they satisfy
$L_a^\#=L_b$. Their resolvents are given by what we call {\em forward}
and {\em backward Green's operators}, which incidentally are cousins
of the {\em retarded} and {\em advanced Green's functions}, well-known
from the theory of the wave equation.

In the last section we discuss potentials with a negative imaginary
part. We show that under some weak conditions they define dissipative
1d Schr\"odinger
operators. We also describe Weyl's limit point--limit cricle method
for such potentials. For real potentials, this method allows us to
determine the dimension of $\cU_a(\lambda)$ for $\Im(\lambda)>0$: if
$a$ is limit point, then $\dim\cU_a(\lambda)=1$; if $a$ is limit
circle then $\dim\cU_a(\lambda)=2$.  The picture is more complicated
if the potential is complex: there are examples where the endpoint $a$
is limit point and $\cU_a(\lambda)$ is two-dimensional.

1d Schr\"odinger operators is one of the most classic topics in
mathematics.  Already in the first half of 19 century Sturm and
Liouville considered second order differential operators on a finite
interval with various boundary consitions. The theory was extended to
a half-line and a line in a celebrated work by Weyl.

2nd order ODE's and 1d Schr\"odinger operators are considered in many
textbooks, including Atkinson \cite{At}, Coddington-Levinson
\cite{CL}, Dunford-Schwartz \cite{DS2,DS3}, Naimark \cite{Nai}, Pryce
\cite{Pry}, de Alfaro-Regge \cite{DAR}, Reed-Simon \cite{RS2}, Stone
\cite{S}, Titchmarsh \cite{Ti}, Teschl \cite{Te},
Gitman-Tyutin-Voronov \cite{GTV}.  However, in the literature complex
potentials are rarely studied in detail, and if so, then one does not
pay attention to nontrivial boundary conditions. The monograph by
Edmunds-Evans \cite{EE} is often considered as one of the most
up-to-date source for results on this subject. Many results presented
in our article have their counterpart in the literature, especially in
\cite{EE}. Let us try to make a more detailed comparison of our paper
with the literature

Most of the material of Sect. \ref{s:bode} is standard. However, we
have not seen a separate discussion of semi-regular boundary
condition, as described in Prop. \ref{prop:pro} (2) and (3). The
definitions of the canonical bisolution $G_\leftrightarrow$, various
Green's operators $G_{u,v}$ $G_\leftarrow$, $G_\rightarrow$ and
relations between them (\ref{moja-7})--(\ref{mojj}) are implicit in
many works on the subject, however they are rarely separately
emphasized.

The material of our Sect. \ref{s:abs} on Hilbert spaces with
conjugation is to a large extent contained in Chap. 3 Sects 3 and 5 of
\cite{EE}. It is based on previous results of Vishik \cite{Vi},
Galindo \cite{Gal} and Knowles \cite{Kn}. However, our presentation
seems to be somewhat different. It shows in particular that the
existence of a self-transposed extension follows almost trivially from
a basic theory of symplectic spaces described in Appendix
\ref{a1}. Another special feature of our Sect. \ref{s:abs} is a
discussion of properties of right inverses of an unbounded
operator.

The deepest result described in our paper is probably Theorem
\ref{th:proofconj} about the characterization of boundary conditions
by square integrable solutions. This result is actually not contained
in \cite{EE}. It is based on Everitt and Zettl \cite[Theorem 9.1]{EZ},
and uses \cite[Theorem 9.11.2]{At} of Atkinson and \cite[Theorem
5.4]{Rac} of Race.

The study of Green's operators contained in Sect. \ref{sect-spectrum}
is probably to a some extent new.

A separate subject that we discuss are potentials with negative
imaginary part studied by means of the so-called Weyl limit
circle/limit point method. Here the main reference is Sims \cite{Si},
see also \cite{BCEP,EE}.

The present manuscript grew out of the Appendix of \cite{BDG} devoted
to 1d Schr\"odinger operators with the potential
$\frac{1}{x^2}$. \cite{BDG} and its follow-up papers \cite{DR,DR2}
illustrated that 1d Schr\"odinger operators with complex potentials
and unusual boundary conditions appear naturally in various
situations. Motivated by these applications, we decided to write an
exposition of basic general theory of 1d Schr\"odinger operators.

We decided to make our exposition as complete and self-contained as
possible, explaining things that are perhaps obvious to experts, but
often difficult to many readers.  We use freely the modern operator
theory---this is not the case of a large part of literature, which
often sticks to old-fashioned approaches.  We also use the terminology
and notation which, we believe, is as natural as possible in the
context we consider.  For instance, we prefer the name
``self-transposed'' to ``$J$-self-adjoint'' found in a large part of
the literature. We believe that our treatment of the subject is quite
different from the one found in the literature.  One of the main tools
that we have found useful, rarely appearing in the literature, are
right inverses, which in the context of 1d Schr\"odinger operators we
call by the traditional name of {\em Green's operators}.

\section{Basic ODE theory} \label{s:bode}
\protect\setcounter{equation}{0}

\subsection{Notations}

Recall that $a<b$, $a$ can be $-\infty$ and $b$ can be $\infty$.  The
notation $[a,b]$ stands for the interval including the enpoints $a$
and $b$, while $]a,b[$ for the interval without endpoints. $[a,b[$ and
$]a,b]$ have the obvious meaning.

In some cases one could use the notation involving either $[a,b]$ or
$]a,b[$ without a change of the meaning. For instance,
$L^p\big([a,b]\big)=L^p\big(]a,b[\big)$. For esthetic reasons, we try
to use a uniform notation---we usually write $L^p]a,b[$, dropping the
round bracket for brevity.

In other cases, the choice of either $[a,b]$ or $]a,b[$ influences the
meaning of a symbol. For instance, $C[a,b]\subsetneq C]a,b[$. For
example, $f\in C[-\infty,b]$ implies that
$\lim\limits_{x\to-\infty}f(x)=:f(-\infty)$ exists.
    
$f\in L_\loc^p]a,b[$ iff for any $a<a_1<b_1<b$, we have
$f\Big|_{]a_1,b_1[}\in L^p]a_1,b_1[$.  $f\in L_\loc^p[a,b[$ if in
addition $f\in L^1[a,a_1[$ and similarly for $L_\loc^p]a,b]$.
      
$f\in L_\mathrm{c}^p]a,b[$ iff $f\in L^p]a,b[$ and $\supp f$ is a
compact subset of $]a,b[$.  The analogous meaning has the subscript
$\mathrm{c}$ in different situations.

$\oplus$ will mean the topological direct sum of two spaces.
          
\subsection{Absolutely continuous functions}

We will denote $f'$ or $\partial f$ the derivative of a distribution
$f$.  We will denote by $AC]a,b[$ the space of {\em absolutely
  continuous functions} on $]a,b[$, that is, distributions on $]a,b[$
whose derivative is in $L_\loc^1]a,b[$.  This definition is equivalent
to the more common definition: for every $\epsilon>0$ there exists
$\delta>0$ such that for any family of non-intersecting intervals
$[x_i,x_i']$ in $a,b[$ satisfying $\sum_{i=1}^n|x_i'-x_i|<\delta$ we
have $\sum_{i=1}^n|f(x_i')-f(x_i)|<\epsilon$.
  
We have $AC]a,b[\subset C]a,b[$. If $f,g\in AC]a,b[$, then
$fg\in AC]a,b[$ and the Leibniz rule holds:
\begin{equation}  \label{leibnitz1}
  (fg)'= f'g+fg' .
\end{equation}
$AC^n]a,b[$ will denote the space of distributions whose $n$th
derivative is in $AC]a,b[$.

\begin{lemma}
  Let $f_n\in AC]a,b[$ be a sequence such that for any $a<a_1<b_1<b$,
  $f_n\to f$ uniformly on $[a_1,b_1]$ and $f_n'\to g$ in
  $L^1[a_1,b_1]$. Then $f\in AC]a,b[$ and $g=f'$.
\label{prel}\end{lemma}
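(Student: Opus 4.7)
The plan is to use the integral characterization of absolute continuity. Fix $a<a_1<b_1<b$ and pick a reference point $x_0\in[a_1,b_1]$. Since each $f_n$ is absolutely continuous on $[a_1,b_1]$, the fundamental theorem of calculus for $AC$ functions gives
\begin{equation*}
f_n(x)=f_n(x_0)+\int_{x_0}^x f_n'(t)\,\d t,\qquad x\in[a_1,b_1].
\end{equation*}

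Next I would pass to the limit $n\to\infty$ on both sides. The left-hand side and the term $f_n(x_0)$ converge to $f(x)$ and $f(x_0)$ respectively by the uniform convergence hypothesis on $[a_1,b_1]$. For the integral term, I use
\begin{equation*}
\left|\int_{x_0}^x f_n'(t)\,\d t-\int_{x_0}^x g(t)\,\d t\right|\le \int_{a_1}^{b_1}|f_n'(t)-g(t)|\,\d t\longrightarrow 0,
\end{equation*}
which is the $L^1[a_1,b_1]$ convergence hypothesis. Hence $f(x)=f(x_0)+\int_{x_0}^x g(t)\,\d t$ for all $x\in[a_1,b_1]$.

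Because $a_1,b_1$ are arbitrary, this identity holds on all of $]a,b[$, with $g\in L_\loc^1]a,b[$. Therefore $f$ is the indefinite integral of an $L_\loc^1$ function, which means $f\in AC]a,b[$ and $f'=g$ (as a distribution, and in fact pointwise a.e.). The argument is entirely routine; the only very mild point to watch is that uniform convergence on each compact subinterval is what one needs to commute the limit with evaluation at $x$ and $x_0$, while $L^1$ convergence on each compact subinterval is exactly what is needed for the integral term. Alternatively, one could argue directly in the distributional sense: $f_n\to f$ in $\cD']a,b[$ (by uniform convergence on compacta) implies $f_n'\to f'$ in $\cD']a,b[$, while $f_n'\to g$ in $L_\loc^1]a,b[$ implies $f_n'\to g$ in $\cD']a,b[$, so $f'=g$; but the integral version above is more concrete and suffices.
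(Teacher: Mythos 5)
Your proof is correct. The paper states this lemma without proof, treating it as a standard fact, so there is nothing to compare against; your argument via the fundamental theorem of calculus on compact subintervals plus passage to the limit is the standard one and is complete. Note that your alternative distributional argument is actually the one most directly aligned with the paper's conventions, since the paper \emph{defines} $AC]a,b[$ as the space of distributions on $]a,b[$ whose derivative lies in $L_\loc^1]a,b[$: uniform convergence on compacta gives $f_n\to f$ in $\cD']a,b[$, hence $f_n'\to f'$ in $\cD']a,b[$, while $f_n'\to g$ in $L_\loc^1$ gives $f_n'\to g$ in $\cD']a,b[$, so $f'=g\in L_\loc^1]a,b[$ and $f\in AC]a,b[$ by definition, with no need to invoke the equivalence with the integral characterization.
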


We will denote by $AC[a,b]$ the space of functions on $[a,b]$ whose
(distributional) derivative is in $L^1]a,b[$. Clearly,
$AC[a,b]\subset C[a,b]$.  If $f\in AC[a,b]$, then
\begin{align}
  \int_{a}^{b} f'(x)\d x&=f(b)-f(a).  \label{leibnitz2}
\end{align}
Note that $a$ can be $-\infty$ and $b$ can be $\infty$.

Obviously, if $f\in AC]a,b[$ and $a<a_1<b_1<b$ then
$f\Big|_{[a_1,b_1]}$ belongs to $AC[a_1,b_1]$.

    \subsection{Choice of functional-analytic setting}
\label{Choice of funtional-analytic setting}

Throughout the section, we assume that $V\in L_\loc^1]a,b[$ and we
consider the differential expression
\begin{equation} \label{setting}
  L:=-\partial^2+V .
\end{equation}
Sometimes we restrict our operator to a smaller interval,
say $]c,d[$, where $a\leq c<d\leq b$. Then 
(\ref{setting}) restricted to $]c,d[$ is denoted $L^{c,d}$.

Eventually, we would like to study operators in $L^2]a,b[$ associated
to $L$, which in many respects seems the most natural setting for one
dimensional Schr\"odinger operators. This introductory section,
however, is devoted mostly to the {\em equation} $Lf=g$. We postpone
considerations related to {\em operator realizations of $L$} to
the later sections.

Suppose that we choose $L_\loc^1]a,b[$ as the target space for
(\ref{setting}), which seems to be a rather general function space.
Note that if $f\in L_\loc^\infty]a,b[$, the product $fV$ is well
defined and belongs to $L_\loc^1]a,b[$. Moreover, this is the best we
can do if $V$ is an arbitrary locally integrable function, i.e.\ we
cannot replace $L_\loc^\infty$ by a larger space.  Then, if we
consider $L_\loc^\infty$ as the initial space for (\ref{setting}) and
we require that the target space for (\ref{setting}) is
$L_\loc^1]a,b[$, we are forced to work with functions
$f\in L_\loc^\infty]a,b[$ such that $Lf \in L_\loc^1]a,b[$. But then
$f''\in L_\loc^1]a,b[$, and hence $f\in AC^1]a,b[$.
  
Therefore it is natural to consider (\ref{setting}) as an an operator
$L:AC^1]a,b[\,\to L_\loc^1]a,b[$, which we will do throughout this
paper.  Restrictions of $L$ to subspaces of $AC^1]a,b[$ which are sent
into $L^2]a,b[$ by $L$ are the objects of main interest in our study.

We equip $L^1_\loc]a,b[$ with the topology of local uniform
convergene, i.e.\ a sequence $\{f_n\}$ converges to $f$ if and only if
$\lim\limits_{n\to\infty}\|f_n-f\|_{L^1(J)}=0$ for any compact
$J\subset]a,b[$. Clearly this is a complete space. It is convenient to
think of $L$ as an operator in $L^1_\loc]a,b[$ with domain
$AC^1]a,b[$.  Then $L$ is densely defined and later on we will prove
that it is is closed (see Corollary \ref{cor1}). \label{p:locunif}

\subsection{The Cauchy problem}

For $ g\in L_\loc^1]a,b[$ we consider the problem
\begin{equation}   \label{mojtaba1}
  Lf=g.
\end{equation}

\begin{proposition} \label{mojo}
  Let $a<d<b$. Then for any $p_0$, $p_1$ there exists a unique
  $f\in AC^1]a,b[$ satisfying (\ref{mojtaba1}) and
\begin{equation}
  f(d)=p_0,\quad f'(d)=p_1.
\end{equation}
\end{proposition}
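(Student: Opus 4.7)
The plan is to reformulate the scalar problem $(\ref{mojtaba1})$ with prescribed Cauchy data at $d$ as a first-order integral equation and solve it by a Picard-type fixed point argument adapted to $L^1_{\loc}$ coefficients. Setting $F = (f, f')^T$, the equation $-f'' + Vf = g$ together with the initial conditions is equivalent to
\begin{equation}
F(x) = \begin{pmatrix} p_0 \\ p_1 \end{pmatrix} + \int_d^x \begin{pmatrix} F_2(t) \\ V(t)F_1(t) - g(t) \end{pmatrix} \d t. \label{plan:integ}
\end{equation}
Any continuous $F$ on a compact subinterval $[a_1,b_1] \subset\,]a,b[$ containing $d$ solving $(\ref{plan:integ})$ is automatically such that $F_1 \in AC^1$, $F_2 = F_1'$, and $F_1$ satisfies both $Lf=g$ and the Cauchy conditions; conversely any solution in $AC^1$ gives such an $F$. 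So it suffices to produce a unique continuous fixed point of the right-hand side on each such $[a_1,b_1]$.

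First I would work on an arbitrary compact subinterval $[a_1,b_1]\ni d$ contained in $]a,b[$, on which $V,g\in L^1$. Denote by $T$ the map $F\mapsto$ right-hand side of $(\ref{plan:integ})$, acting on $C([a_1,b_1],\C^2)$ with the sup norm $\|\cdot\|_\infty$. The estimate
\begin{equation}
\|TF(x) - T\tilde F(x)\|_{\C^2} \leq \int_{[d,x]} \bigl(1 + |V(t)|\bigr)\,\|F(t)-\tilde F(t)\|_{\C^2}\,\d t
\end{equation}
together with induction on $n$ yields
\begin{equation}
\|T^n F(x) - T^n \tilde F(x)\|_{\C^2} \leq \frac{W(x)^n}{n!}\,\|F-\tilde F\|_\infty, \qquad W(x) := \int_{[d,x]}(1+|V(t)|)\,\d t,
\end{equation}
so some $T^n$ is a strict contraction on $C([a_1,b_1],\C^2)$. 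Banach's fixed point theorem (applied to an iterate) gives existence and uniqueness of a continuous fixed point on $[a_1,b_1]$.

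Finally, exhausting $]a,b[$ by an increasing sequence of compact intervals $[a_k,b_k]$ containing $d$ and invoking uniqueness on each of them, the local solutions patch consistently into a single $f\in AC^1]a,b[$ solving $Lf=g$ with $f(d)=p_0$, $f'(d)=p_1$. Uniqueness on $]a,b[$ follows from uniqueness on each $[a_k,b_k]$. The main subtlety is that $V$ is only locally integrable, so pointwise ODE methods are unavailable; the cure is that the kernel in $(\ref{plan:integ})$ is linear in $F$, so the $L^1_{\loc}$ regularity of $V$ passes through the integral operator and the weighted Grönwall-type iteration converges despite the absence of any pointwise bound on $V$.
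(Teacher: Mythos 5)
Your argument is correct. It follows the same basic strategy as the paper --- recast the Cauchy problem as a Volterra-type integral equation on compact subintervals containing $d$ and apply a fixed point theorem --- but differs in two technical choices. First, the paper integrates the equation twice and keeps the scalar unknown: its fixed point map is $f\mapsto p_0+p_1(x-d)+Q_df+T_dg$, where $Q_d$ is the integral operator with kernel $\pm(x-y)V(y)$, acting on $C[a_1,b_1]$; you instead pass to the first-order system in $(f,f')$. Second, the paper makes the map itself contractive by shrinking the interval until $\|Q_d\|<1$, and must then cover $]a,b[$ by small intervals and propagate the solution step by step, whereas you show that an iterate $T^n$ is a strict contraction on an \emph{arbitrary} compact subinterval via the factorial estimate $W(x)^n/n!$ (which is valid: $W$ is absolutely continuous with $W'=\pm(1+|V|)$ a.e., so $\int_d^x(1+|V|)W^n\,\d t=W(x)^{n+1}/(n+1)$), and then only need an exhausting sequence of compacts. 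Your variant is arguably cleaner for this particular proposition, since it removes the interval-shrinking and the multi-step continuation. What the paper's formulation buys is reuse: the same operator with kernel $(x-y)V(y)$, now based at the endpoint $a$, remains bounded under the weaker semiregularity hypothesis $(x-a)V\in L^1$ near $a$ (Prop.\ 2.7(3), on the weighted space $|x-a|^{-1}C$), a situation in which your weight $1+|V(t)|$ is not integrable and the first-order system approach would break down at the endpoint.
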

    
\proof Define the operators $Q_d$ and $T_d$ by their integral kernels
\begin{align}\label{Qd}
Q_d(x,y):=\begin{cases}
 (y-x)V(y) ,&x<y<d,\\
(x-y)V(y),&x>y>d,\\
0&\text{ otherwise; }
\end{cases}\\\label{Td}
T_d(x,y):=\begin{cases}
 (y-x) ,&x<y<d,\\
(x-y),&x>y>d,\\
0&\text{ otherwise.}
\end{cases}
\end{align}
The Cauchy problem can be rewritten as $F(f)=f$ where $F$ is a map on
$C]a,b[$ given by
\begin{equation}
  F(f)(x):=p_0+p_1(x-d)+Q_df(x) +T_dg(x).
\end{equation}
If $a\leq a_1<d<b_1\leq b$ and we view $Q_d$ as an operator on
$C[a_1,b_1]$ with the supremum norm, then
\begin{equation}
  \|Q_d\|\leq \max\Big\{  \int_{a_1}^d|V(y)(y-a_1)|\d y,\,
  \int_d^{b_1}|V(y)(y-b_1)|\d y\Big\}.\label{preq}
\end{equation}
If the interval $[a_1,b_1]$ is finite, the operator $T_d$ is bounded
from $L^1]a_1,b_1[$ into $C[a_1,b_1]$.

Thus, by choosing a sufficiently small interval $[a_1,b_1]$ containing
$d$, we can make $F$ well defined and contractive on $C[a_1,b_1]$.
($F$ is contractive iff $\|Q_d\|<1$).  By Banach's Fixed Point Theorem
(or the convergence of an appropriate Neumann series) there exists
$f\in C[a_1,b_1]$ such that $f=F(f)$. Then note that we have
\[
  f'(x)=F(f)'(x) =p_1+\int_d^xV(y)f(y)\d y+\int_d^xg(y)\d y
\]
hence $f'\in AC[a_1,b_1]$ and $f\in AC^1[a_1,b_1]$.

Thus for every $d\in]a,b[$ we can find an open interval containing $d$ on
which there exists a unique solution to the Cauchy problem.  We can
cover $]a,b[$ with intervals $]a_j,b_j[$ containing $d_j$
with the analogous property.  This allows us to extend the solution
with initial conditions at any $d\in]a,b[$ to the whole $]a,b[$.  \qed

\subsection{Regular and semiregular endpoints}

One dimensional Schr\"odinger operators possess the simplest theory
when $-\infty<a<b<\infty$ and $V\in L^1]a,b[$. Then we say that
\emph{$L$ is a regular operator}. Most of the classical
Sturm-Liouville theory is devoted to such operators. More generally,
the following standard terminology will be convenient.

\begin{definition}\label{df:reg}
  The endpoint $a$ is called \emph{regular}, or $L$ is called
  \emph{regular at $a$}, if $a$ is finite and $V\in L_\loc^1[a,b[$
  (i.e. $V$ is integrable around $a$).  Similarly for $b$. Hence $L$
  is \emph{regular} if both endpoints are regular.
\end{definition}

1d Schr\"odinger 
operators satisfying the following conditions are also relatively well
behaved:

\begin{definition}\label{df:semireg}
  The endpoint $a$ is called \emph{semiregular} if $a$ is finite and
  $(x{-}a)V\in L_\loc^1[a,b[$ (i.e. $(x{-}a)V$ is integrable around
  $a$).  Similarly for $b$.
\end{definition}

\begin{proposition}\label{prop:pro}
  Let $g\in L_\loc^1[a,b[$.
  \begin{enumerate}
  \item Let $a$ be a regular endpoint.  Let $p_0$, $p_1$ be
    given. Then there exists a unique $f\in AC^1[a,b[$ satisfying
    $Lf=g$ and $f(a)=p_0, f'(a)=p_1$.
  \item Let $a$ be a semiregular endpoint. Then all solutions $f$ have
    a limit at $a$.
  \item Let $a$ be a semiregular endpoint.  Let $p_1$ be given. Then
    there exists a unique $f\in AC^1[a,b[$ satisfying $Lf=g$ and
    $f(a)=0, f'(a)=p_1$.
\end{enumerate}
\end{proposition}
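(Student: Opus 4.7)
I will reduce each part to a fixed-point/contraction argument on a small interval $[a,c]$ near the endpoint $a$, and then extend to all of $[a,b[$ by iterating Proposition \ref{mojo}. For part (1), since $V \in L^1_\loc[a,b[$, the operator $Q_a$ on $C[a,c]$ defined by $Q_a f(x) := \int_a^x (x-y)V(y)f(y)\,\d y$ satisfies the norm bound $\|Q_a\| \leq (c-a)\int_a^c |V(y)|\,\d y$, which tends to $0$ as $c \to a^+$. The contraction argument from the proof of Proposition \ref{mojo} then applies verbatim with the interior base point $d$ replaced by the endpoint $a$, producing a unique $f \in AC^1[a,c]$ with the prescribed Cauchy data; the solution is then extended to $[a,b[$ by Proposition \ref{mojo}.

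\textbf{Part (2).} For $f \in AC^1]a,b[$ with $Lf = g$ and a fixed $c \in ]a,b[$, integrating $f'' = Vf - g$ twice from $c$ down to $x \in (a,c)$ yields
\begin{equation*}
f(x) = f(c) + f'(c)(x-c) + \int_x^c (y-x)\bigl(V(y)f(y) - g(y)\bigr)\,\d y.
\end{equation*}
Using the pointwise bound $(y-x) \leq (y-a)$ together with $(y-a)V \in L^1[a,c]$ (semiregularity) and $g \in L^1[a,c]$, a Gronwall estimate applied to the resulting integral inequality for $|f(x)|$ gives a uniform bound $\|f\|_{L^\infty(a,c]} \leq M$ depending only on $f(c)$, $f'(c)$, $V$, $g$, and $c$. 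With this bound, the integrand in the identity is dominated by the integrable function $(y-a)\bigl(M|V(y)| + |g(y)|\bigr)$, so dominated convergence as $x \to a^+$ shows the integral converges, whence $\lim_{x \to a^+} f(x)$ exists.

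\textbf{Part (3) — the main obstacle.} Here $V$ need not be integrable at $a$, so a contraction on $C[a,c]$ in the supremum norm is not available. The key idea is that imposing $f(a) = 0$ forces $f$ to vanish linearly at $a$, and this extra factor $(y-a)$ exactly absorbs the singularity of $V$. Accordingly, I work in the weighted Banach space
\begin{equation*}
X_c := \bigl\{ f \in C[a,c] : f(a) = 0,\ \|f\|_X := \sup_{a < x \leq c} |f(x)|/(x-a) < \infty \bigr\},
\end{equation*}
and consider the map
\begin{equation*}
F(f)(x) := p_1(x-a) + \int_a^x (x-y)\bigl(V(y)f(y) - g(y)\bigr)\,\d y.
\end{equation*}
The elementary bound $(x-y)(y-a) \leq (x-a)(y-a)$ for $y \in [a,x]$ yields
\begin{equation*}
\|F(f) - F(h)\|_X \leq \|f - h\|_X \int_a^c (y-a)|V(y)|\,\d y,
\end{equation*}
so $F$ is a contraction on $X_c$ once $c$ is chosen close enough to $a$. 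The unique fixed point $f$ satisfies $f(a) = 0$ by inclusion in $X_c$; differentiating the integral equation gives $f'(x) = p_1 + \int_a^x (Vf - g)(y)\,\d y$, from which $f'(a) = p_1$ (the integrand is in $L^1[a,c]$ since $|Vf| \leq M(y-a)|V|$) and $f'' = Vf - g \in L^1[a,c]$, so $f \in AC^1[a,c]$. For uniqueness, any solution $\tilde f \in AC^1[a,b[$ with $\tilde f(a) = 0$ has $\tilde f' \in C[a,c]$, hence $|\tilde f(y)| \leq \|\tilde f'\|_{\infty,[a,c]}(y-a)$, placing $\tilde f$ in $X_c$; Banach uniqueness then forces $\tilde f = f$ on $[a,c]$, and Proposition \ref{mojo} propagates the identification to all of $[a,b[$.
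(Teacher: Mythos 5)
Your proposal is correct, and for parts (1) and (3) it follows essentially the same route as the paper: part (1) is the one\--sided version of the contraction argument of Prop.~\ref{mojo} with base point $d=a$, and part (3) uses exactly the paper's weighted space $|x-a|^{-1}C[a,c]$ with the bound $(x-y)(y-a)\leq(x-a)(y-a)$ giving the contraction constant $\int_a^c(y-a)|V(y)|\,\d y$; your uniqueness step (a solution with $\tilde f(a)=0$ and $\tilde f'$ continuous up to $a$ automatically lies in the weighted space) is a point the paper leaves implicit, and it is good that you spelled it out. The only genuine divergence is part (2): the paper chooses $d$ with $\int_a^d|V(y)||y-a|\,\d y<1$ so that $Q_d$ is contractive on $C[a,d]$, and the existence of the limit at $a$ then comes for free because the fixed point of $f\mapsto p_0+p_1(x-d)+Q_df+T_dg$ already lives in $C[a,d]$ (and coincides with the given solution by uniqueness of the Cauchy problem at $d$). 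You instead derive an a priori $L^\infty$ bound near $a$ by Gronwall from the twice\--integrated equation and then pass to the limit by dominated convergence. Both arguments rest on the same integrability $(y-a)V\in L^1$ near $a$; the paper's version is slightly slicker (no separate Gronwall step), while yours has the small advantage of not requiring the smallness condition on the interval for part (2) and of exhibiting the limit value explicitly. Either way the proof is complete and correct.
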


\begin{proof} (1) is proven as Prop.\ \ref{mojo}, choosing $d=a$; the
  operators \eqref{Qd} and \eqref{Td} are now
  \begin{align}\label{Qa}
Q_a(x,y):=\begin{cases}
(x-y)V(y),&x>y>a,\\
0&\text{ otherwise; }
\end{cases}\\\label{Ta}
T_a(x,y):=\begin{cases}
(x-y),&x>y>a,\\
0&\text{ otherwise.}
\end{cases}
\end{align}

To prove (2) we choose $d$ inside $]a,b[$ such that
$\int_a^dV(y)|y-a|\d y<1$. This guarantees that the operator $Q_d$ is
contractive on $C[a,d]$.

To prove (3) we modify the proof of Prop.\ \ref{mojo}. We chose $d=a$
and use the Banach space
$|x-a|^{-1}C]a,b_1[:= \{f\in C]a,b_1[ \ \mid \|f\|:=
\sup\frac{|f(x)|}{|x-a|}<\infty\}$, where $a<b_1<b$ is finite and will
be chosen later.  The Cauchy problem can be rewritten as $F(f)=f$
where $F$ is given by ${|x-a|^{-1}C]a,b_1[}$ given by
\begin{equation}
  F(f)(x):=p_1(x-a)+Q_af(x) +T_ag(x).
\end{equation}
$Q_a$ is an operator on $|x-a|^{-1}C[a,b_1]$ with the norm
\begin{equation}
  \|Q_a\|\leq  \int_{a}^{b_1}|V(y)(y-a)|\d y.\label{preqa}
\end{equation}
The operator $T_a$ is bounded from $L^1]a,b_1[$ into
$|x-a|^{-1}C[a,b_1]$.  Therefore, $F$ is a well-defined map on
$|x-a|^{-1}C[a,b_1]$.  Then we argue similarly as in the proof of
Prop. \ref{mojo}.  For $b_1$ close enough to $a$ the map $F$ is
contractive and we can apply Banach's Fixed Point Theorem. From
\[
  f'(x)=F(f)'(x) =p_1+\int_a^x|y-a|V(y)|y-a|^{-1}f(y)\d
  y+\int_a^xg(y)\d y
\]
we see that $f'(a)=p_1$.
\end{proof}

  An example of a potential with a finite point which is not
  semiregular is $V(x)=\frac{c}{x^2}$ on $]0,\infty[$. For its theory
  see \cite{BDG,DR}.
      
\subsection{Wronskian}

\begin{definition} The {\em Wronskian} of two differentiable functions
  $u,v$ is
\begin{equation}
  W(u,v;x)= W_x(u,v)=u(x)v'(x)-u'(x)v(x).
\end{equation}
\end{definition}

\begin{proposition}
  Let $u,v\in AC^1]a,b[$.
  Then the {\em Lagrange identity} holds:
  \begin{equation} \label{mojtaba}
    \partial_x W(u,v;x)=-(Lu)(x) v(x)+u(x) (Lv)(x).
\end{equation}
  Consequently, if $Lu=Lv=0$, then $W(u,v)$ is a constant function. 
\end{proposition}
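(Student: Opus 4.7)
The plan is a direct calculation in two steps: first differentiate $W(u,v) = uv' - u'v$ using the Leibniz rule for absolutely continuous functions, then eliminate $u''$ and $v''$ by means of the identities $u'' = Vu - Lu$ and $v'' = Vv - Lv$, which come directly from the definition $L = -\partial^2 + V$.

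For the first step, since $u, v \in AC^1]a,b[$ the derivatives $u', v'$ lie in $AC]a,b[$ and $u'', v''$ are well-defined as elements of $L^1_\loc]a,b[$. Thus each of $uv'$ and $u'v$ is a product of two $AC$ functions, to which the Leibniz rule (\ref{leibnitz1}) applies. Applying it termwise and cancelling the two copies of $u'v'$ gives
\begin{equation*}
  \partial_x W(u,v;x) = u(x) v''(x) - u''(x) v(x)
\end{equation*}
as an equality in $L^1_\loc]a,b[$. Inserting the expressions $u'' = Vu - Lu$ and $v'' = Vv - Lv$, the two scalar contributions involving $V$ cancel exactly (since $u\cdot Vv = V\cdot uv$), and what remains is precisely the Lagrange identity (\ref{mojtaba}).

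There is essentially no obstacle here: the algebra is trivial once one has the Leibniz rule in the right generality, and that is exactly what (\ref{leibnitz1}) provides in the $AC$ setting. For the concluding sentence of the proposition, if $Lu = Lv = 0$ then the right-hand side of (\ref{mojtaba}) vanishes, so $W(u,v;\cdot)\in AC]a,b[$ has zero distributional derivative on the connected interval $]a,b[$ and is therefore constant.
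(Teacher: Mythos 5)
Your approach is exactly the paper's: the paper's entire proof consists of the remark that $u,u',v,v'\in AC]a,b[$, so the Wronskian can be differentiated and ``a simple computation'' yields the identity. You have simply written out that computation, and your justification of the differentiation step (Leibniz rule in the $AC$ setting) and of the concluding constancy statement is fine.

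One caveat: if you actually carry out the final substitution you do not land on (\ref{mojtaba}) as printed, but on its negative. From $\partial_x W(u,v;x)=u v''-u''v$ together with $u''=Vu-Lu$ and $v''=Vv-Lv$ one gets
\[
\partial_x W(u,v;x)=u(Vv-Lv)-(Vu-Lu)v=(Lu)(x)\,v(x)-u(x)\,(Lv)(x),
\]
which is $(-1)$ times the right-hand side of (\ref{mojtaba}). (Sanity check with $V=0$, $u=x^2$, $v=1$: then $W=uv'-u'v=-2x$, so $\partial_xW=-2$, while $-(Lu)v+u(Lv)=+2$.) So with the stated conventions $W(u,v)=uv'-u'v$ and $L=-\partial^2+V$, the displayed identity carries a sign error, and your assertion that ``what remains is precisely the Lagrange identity (\ref{mojtaba})'' glosses over this. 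The discrepancy is in the statement rather than in your method, and it does not affect the conclusion that $W(u,v)$ is constant when $Lu=Lv=0$; but a careful write-up should either flag the sign or record the identity in the form $\partial_xW(u,v;x)=(Lu)(x)v(x)-u(x)(Lv)(x)$.
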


\proof Since $u,u',v,v'\in AC]a,b[\,$, the Wronskian can be
differentiated and a simple computation yields (\ref{mojtaba}). \qed

\begin{definition}
  The set of solutions in $AC^1]a,b[$ of the homogeneous equation
  $Lf=0$ will be denoted $\Ker( L)$.
\end{definition}
$\Ker(L)$  is a two dimensional complex space and the map
$W:\Ker( L)\times\Ker( L)\to\bbC$ is bilinear and antisymmetric.
$u,v\in\Ker( L)$ are linearly independent if and only if
$W(u,v)\neq0$.  If
$u_2=\alpha u_1+\beta v_1, v_2=\gamma u_1+\delta v_1$ then
\[
W(u_2,v_2)=W(\alpha u_1,\delta v_1)+W(\beta v_1,\gamma u_1)
= (\alpha\delta-\beta\gamma) W(u_1,v_1).
\]
hence if $W(u_1,v_1)=1$ then $W(u_2,v_2)=1$ if and only if
$\alpha\delta-\beta\gamma=1$, and in this case a simple computation gives
\begin{equation} \label{comment}
  u_2(x)v_2(y)-u_2(y)v_2(x)=u_1(x) v_1(y)-u_1(y) v_1(x),\quad x,y\in]a,b[.
\end{equation}
Thus the function
\begin{equation}
  G_\leftrightarrow(x,y):=u(x) v(y)-u(y) v(x)\label{canon}\end{equation}
is independent of
the choice of the solutions $u,v$ of the homogeneous equation $Lf=0$
if they satisfy $W(u,v)=1$.
(\ref{canon}) can be interpreted as the integral kernel of an operator 
$G_\leftrightarrow: L_{\mathrm{c}}^1]a,b[\to AC^1]a,b[$,  and will be called the {\em canonical bisolution of $L$}.
       It satisfies
       \begin{equation} LG_\leftrightarrow=0,\quad
G_\leftrightarrow L=0,\quad G_\leftrightarrow(x,y)=
- G_\leftrightarrow(y,x).\end{equation}
         
\subsection{Green's operators}

The expression ``Green's function'' is commonly used to denote the
integral kernel of a right inverse of a differential operator, usually
of a second order.  We will use the expression ``Green's operator''
for a right inverse of $L$.

\begin{definition} \label{green1} An operator
  $G_\bullet: L_{\mathrm{c}}^1]a,b[\to AC^1]a,b[$ is called a {\em
    Green's operator of $L$} if
\begin{equation}
  LG_\bullet g=g,\quad g\in  L_{\mathrm{c}}^1]a,b[. \label{green11}
\end{equation} 
\end{definition}

Note that we do not require that $ G_\bullet L=\one$. Note also that
$G_\leftrightarrow$ is not Green's operator---it is a
bisolution. However, it is so closely related to various Green's
operators that its symbol contains the same letter $G$.

There are many Green's operators. If $G_\bullet$ is a Green's
operator, $ u,v$ are two solutions of the homogeneous equation, and
$\phi,\psi\in L_\loc^\infty]a,b[$ are arbitrary, then
\[
  G_\bullet+|u\rangle\langle \phi|+|v\rangle\langle \psi|
\]
is also a Green's operator.  Recall that if $E,F$ are vector spaces,
$g$ belongs to the dual of $E$, and $f\in F$, then $\ketbra{f}{g}$ is
the linear map $E\to F$ defined by $e\mapsto g(e) f$.

Let us define some distinguished Green's operators.  Let $u ,v$ be two
solutions of the homogeneous equation such that
$W(v ,u)=1$.
We easily check that the operators
$G_{u,v}$, $G_a$ and $G_b$ defined below are Green's operators in the
sense of Def. \ref{green1}:

\begin{definition}
  {\em Green's operator associated to $u$ at $a$ and $v$ at $b$},
  denoted $G_{u,v}$, is defined by its integral kernel
  \[G_{u,v}(x,y):=\begin{cases}u(x)v (y),&x<y,\\
    v (x)u(y),&x>y.\end{cases}\]
  \label{two-side} \end{definition}
 Operators of the form $G_{u,v}$ will be sometimes called {\em
   two-sided Green's operators}.
\begin{definition}  {\em Forward Green's operator} $G_\rightarrow $
  has the integral kernel
  \begin{equation}G_\rightarrow (x,y):=\begin{cases}0,&x<y,\\
  v (x)u(y)-u(x)v (y),&x>y.\end{cases}\label{qeq0}\end{equation}
 \end{definition}
 \begin{definition}  \label{df:Ga}
   {\em Backward Green's operator} $G_\leftarrow $
  has the integral kernel
  \[G_\leftarrow (x,y):=\begin{cases}
  u (x)v(y)-v(x)u (y),&x<y,\\
  0,&x>y.\end{cases}\]
 \end{definition}
 By the comment after \eqref{comment}, the operators $G_\rightarrow $ and $G_\leftarrow $
 are independent of the choice of $u,v$.  For $a<b_1<b$, $G_\rightarrow $ maps
 $L_{\mathrm{c}}^1]b_1,b[$ into functions that are zero on $]a,b_1]$.
 Similarly, for $a<a_1<b$, $G_\rightarrow $ maps $L_{\mathrm{c}}^1]a,a_1[$ into
 functions that are zero on $[a_1,b[$.

 Note also some formulas for differences of two kinds of Green's
 operators:
 \begin{align}
   G_{u,v}-G_\rightarrow &=|u\rangle\langle v|, \label{moja-7}
\\ 
   G_{u,v}-G_\leftarrow &=|v\rangle\langle u| ,      \label{mojta}
\\
G_\rightarrow -G_\leftarrow &=|v\rangle\langle u|-|u\rangle\langle v|=
G_\leftrightarrow,\label{mojj}\\
   G_{u,v}-G_{u_1,v_1}&=|u\rangle\langle v|-|u_1\rangle\langle v_1|,
 \end{align}

 The following definition introduces another class of Green's
 operators in the sense of Def. \ref{green1}, which are
 generalizations of forward and backward Green's operators.

 \begin{definition}  \label{greend}
 {\em Green's operator associated to $d\in]a,b[$} is defined by the
 integral kernel       
  \[G_d(x,y):=\begin{cases}
   u (x)v(y)-v(x)u (y),&x<y<d,\\
   v (x)u(y)-u(x)v(y),&x>y>d,\\
   0,&\text{ otherwise.}
  \end{cases}\]
\end{definition}

As in the case of $G_\rightarrow $ and $G_\leftarrow $, these operators are independent of
the choice of $u,v$.  Note that if $a<a_1<d<b_1<b$, then $G_d$ maps
$L_{\mathrm{c}}^1]a,a_1[$ on functions that are zero on $[a_1,b[$, and
$L_{\mathrm{c}}^1] b_1,b[$ on functions that are zero on $]a,b_1]$.

 The proof of Prop.  \ref{mojo} suggests how to construct $G_d$
 without knowing $v,u$ using the operators $Q_d$ and $T_d$ defined
 there. We have, at least formally,
\begin{equation}
  G_d=(\one-Q_d)^{-1}T_d.
\label{preq1}  \end{equation}
If we choose $a\leq a_1<d<b_1\leq b$ with a finite $[a,b]$ and
\begin{equation}
  \max\Big\{  \int_{a_1}^d|V(x)(x-a_1)|\d x,\,
  \int_d^{b_1}|V(x)(x-b_1)|\d x\Big\}<1,\label{preq2}
\end{equation}
then
(\ref{preq1}) is given by a convergent Neumann series in the
sense of an operator from $L^1]a_1,b_1[$ to $C[a_1,b_1]$.

\begin{remark}
  The 1-dimensional Schr\"odinger equation can be interpreted as the
  Klein Gordon equation on a $1+0$ dimensional spacetime (no spacial
  dimensions, only time). The operators $G_\leftrightarrow$,
  $G_\rightarrow$ and $G_\leftarrow$ have important generalizations to
  globally hyperbolic spacetimes of any dimension---they are then
  usually called the Pauli-Jordan, retarded, resp. advanced
  propagator, see e.g. \cite{DS}.
\end{remark}

\subsection{Some estimates}

The following elementary estimates will be useful later on.

\begin{lemma}\label{lm:ode-est}
Let $J$ be an open interval  of length
$\nu<\infty$ and $f\in L^1(J)$ with
 $f''\in L^1(J)$. Then $f$ and $f'$ are
continuous functions on the closure of $J$ and if $a$ is an end
point of $J$ then
\begin{equation}\label{eq:ode-est}
|f(a)|+\nu |f'(a)|\leq C\int_J \left(\nu|f''(x)| + \nu^{-1} |f(x)| \right) \d x,
\end{equation}
where $C$ is a real number independent of $f$ and $J$.
\end{lemma}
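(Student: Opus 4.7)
The plan is to reduce to the unit interval by rescaling and then establish the estimate there by integrating twice by parts against a pair of fixed test functions with prescribed Hermite data at the endpoints. Regularity comes for free: since $f''\in L^1(J)\subset L^1_{\loc}(J)$, the distribution $f$ lies in $AC^1_{\loc}(J)$, so $f'$ is already a continuous function on $J$ satisfying $f'(x)-f'(x_0)=\int_{x_0}^x f''(t)\,\d t$ for any $x_0\in J$; because $J$ has finite length and $f''\in L^1(J)$, this formula extends $f'$ continuously to the compact closure $\bar J$, and the analogous integral representation of $f$ extends $f$ continuously to $\bar J$.

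For the estimate, by the reflection $y\mapsto 1-y$ it suffices to treat the case where $a$ is the left endpoint of $J$. Writing $J=\,]a,a+\nu[$ and setting $g(y):=f(a+\nu y)$ for $y\in[0,1]$, a change of variable gives
\begin{equation*}
g(0)=f(a),\quad g'(0)=\nu f'(a),\quad \|g\|_{L^1(0,1)}=\nu^{-1}\|f\|_{L^1(J)},\quad \|g''\|_{L^1(0,1)}=\nu\|f''\|_{L^1(J)},
\end{equation*}
so the lemma reduces to the scale-invariant inequality $|g(0)|+|g'(0)|\leq C\bigl(\|g\|_{L^1(0,1)}+\|g''\|_{L^1(0,1)}\bigr)$ on the unit interval.

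To obtain this, I would fix once and for all polynomials $\phi,\psi\in C^2[0,1]$---cubics suffice, by a dimension count in the four-dimensional space of cubic polynomials---satisfying $\phi(0)=0$, $\phi'(0)=1$, $\phi(1)=\phi'(1)=0$ and $\psi(0)=1$, $\psi'(0)=0$, $\psi(1)=\psi'(1)=0$. Integrating by parts twice over $[0,1]$ for $g\in C^1[0,1]$ with $g''\in L^1$ gives
\begin{equation*}
\int_0^1 \phi\, g''\,\d y=\bigl[\phi g'-\phi' g\bigr]_0^1+\int_0^1 \phi''\, g\,\d y,
\end{equation*}
and the chosen boundary data collapse the bracket to exactly $g(0)$, yielding $g(0)=\int_0^1\phi g''\,\d y-\int_0^1\phi'' g\,\d y$; the analogous calculation with $\psi$ gives $g'(0)=\int_0^1\psi'' g\,\d y-\int_0^1\psi g''\,\d y$. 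Taking absolute values and bounding $\phi,\phi'',\psi,\psi''$ by their $L^\infty$ norms produces the claim with a universal constant $C$, and rescaling back yields \eqref{eq:ode-est}. The main ``obstacle'' is really just bookkeeping: one must verify that the substitution delivers precisely the asymmetric weights $\nu$ and $\nu^{-1}$ appearing in the estimate; everything else is mechanical.
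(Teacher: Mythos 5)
Your proof is correct: the regularity argument, the scaling identities, the existence of the Hermite cubics, and the boundary-term bookkeeping in the double integration by parts all check out, and the rescaled inequality $|g(0)|+|g'(0)|\leq C\bigl(\|g\|_{L^1(0,1)}+\|g''\|_{L^1(0,1)}\bigr)$ does translate back into exactly \eqref{eq:ode-est}. The paper's proof rests on the same underlying idea --- rescale to unit length and then pair $f''$ against a fixed compactly supported weight to extract the boundary values --- but packages it differently. After reducing to $\nu=1$, the paper extends $f$ to a distribution on $\R$ with $f''=0$ outside $J$, takes a truncated ramp $\theta$ (equal to $x$ near $0$, supported in $[0,1]$) and defines $\eta$ by $\theta''=\delta-\eta$, obtaining the convolution identity $f=\theta*f''+\eta*f$ and, by differentiating it, $f'=\theta'*f''+\eta'*f$; evaluating at the endpoint gives both bounds at once from a single kernel. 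Your version replaces this parametrix identity by two explicit cubic test functions $\phi,\psi$ with prescribed Hermite data and a hands-on double integration by parts. What the paper's route buys is that one kernel handles both $f(a)$ and $f'(a)$ and the manipulation happens distributionally on $\R$, so no boundary terms ever appear; what your route buys is that it is entirely self-contained on $[0,1]$, makes the constant completely explicit (sup norms of $\phi,\phi'',\psi,\psi''$), and requires no extension of $f$ beyond $J$ --- at the mild cost of first establishing the continuity of $f,f'$ up to $\bar J$, which the lemma asserts anyway.
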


\begin{proof} By a scaling argument we may assume $\nu=1$. It suffices to
assume that $f$ is a distribution on $\R$ such that $f''=0$ outside
$J$.  Let $\theta:\R\to\R$ be of class $C^\infty$ outside of $0$ and
such that $\theta(x)=0$ if $x\leq0$, $\theta(x)=x$ if $0<x<1/2$,
$\theta(x)=0$ if $x\geq1$. Define $\eta$ by $\theta''=\delta-\eta$
where $\delta$ is the Dirac measure at the origin. Clearly $\eta$ is
of class $C^\infty$ with support in $[1/2,1]$. For any distribution
$f$ we have
\[
f=\delta *f=\theta'' * f + \eta * f= \theta * f'' + \eta * f \
\text{ hence also } \
f'= \theta' * f'' + \eta' * f.
\]
This clearly implies \eqref{eq:ode-est} for $\nu=1$ and $a$ the right
endpoint of $J$.  \end{proof}

\begin{lemma}\label{lm:ode-v}
  Assume that $\ell:=\sup_J\int_J|V(x)|\d x<\infty$ where $J$ runs
  over all the intervals $J\subset ]a,b[$ of length $\leq1$.  Then there
  are numbers $C,\nu_0>0$ such that
\begin{equation}\label{eq:ode-main}
\|f\|_{L^{\infty}(J)} + \nu \|f'\|_{L^{\infty}(J)} \leq
C\nu\|Lf\|_{L^1(J)} + C\nu^{-1}\|f\|_{L^1(J)}
\end{equation}
for all $f\in L_\loc^\infty]a,b[$, all $\nu\leq\nu_0$, and all intervals
$J\subset ]a,b[$ of length $\nu$.
\end{lemma}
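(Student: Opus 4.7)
The plan is to reduce \eqref{eq:ode-main} to the pointwise endpoint bound \eqref{eq:ode-est} by (a) controlling $\|f''\|_{L^1(J)}$ via the identity $f''=Vf-Lf$, (b) upgrading an endpoint bound to a sup bound by Taylor's formula with integral remainder, and (c) absorbing a nuisance term $\|f\|_{L^\infty(J)}$ into the left-hand side by choosing $\nu_0$ small.

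First, the estimate is trivial unless the right-hand side is finite, so I may assume $Lf\in L^1_\loc]a,b[$. Combined with $f\in L^\infty_\loc]a,b[$ and $V\in L^1_\loc]a,b[$, this gives $Vf\in L^1_\loc$, hence
\[
f''=Vf-Lf\in L^1_\loc]a,b[,
\]
so $f\in AC^1]a,b[$ and Lemma \ref{lm:ode-est} applies on any finite open $J\subset]a,b[$ of length $\nu\leq 1$. On such a $J$,
\[
\|f''\|_{L^1(J)}\leq \|Vf\|_{L^1(J)}+\|Lf\|_{L^1(J)}\leq \ell\|f\|_{L^\infty(J)}+\|Lf\|_{L^1(J)},
\]
where I have used the hypothesis $\int_J|V|\leq\ell$.

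Next, let $a$ denote the left endpoint of $J$. For every $x\in J$, Taylor's formula with integral remainder gives
\[
f(x)=f(a)+f'(a)(x-a)+\int_a^x(x-t)f''(t)\,\d t,\qquad f'(x)=f'(a)+\int_a^x f''(t)\,\d t.
\]
Taking absolute values and then the supremum over $x\in J$ yields
\[
\|f\|_{L^\infty(J)}\leq |f(a)|+\nu|f'(a)|+\nu\|f''\|_{L^1(J)},\qquad \|f'\|_{L^\infty(J)}\leq |f'(a)|+\|f''\|_{L^1(J)}.
\]
Adding the first inequality to $\nu$ times the second and using $|f(a)|+2\nu|f'(a)|\leq 2(|f(a)|+\nu|f'(a)|)$, I get
\[
\|f\|_{L^\infty(J)}+\nu\|f'\|_{L^\infty(J)}\leq 2\bigl(|f(a)|+\nu|f'(a)|\bigr)+2\nu\|f''\|_{L^1(J)}.
\]
Now Lemma \ref{lm:ode-est} bounds the first summand by $C_0\nu\|f''\|_{L^1(J)}+C_0\nu^{-1}\|f\|_{L^1(J)}$, so with $C_1:=2C_0+2$,
\[
\|f\|_{L^\infty(J)}+\nu\|f'\|_{L^\infty(J)}\leq C_1\nu\|f''\|_{L^1(J)}+2C_0\nu^{-1}\|f\|_{L^1(J)}.
\]

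Finally, I insert the bound on $\|f''\|_{L^1(J)}$ obtained in the first step:
\[
\|f\|_{L^\infty(J)}+\nu\|f'\|_{L^\infty(J)}\leq C_1\ell\nu\,\|f\|_{L^\infty(J)}+C_1\nu\|Lf\|_{L^1(J)}+2C_0\nu^{-1}\|f\|_{L^1(J)}.
\]
Choose $\nu_0>0$ so that $C_1\ell\nu_0\leq 1/2$; then for $\nu\leq\nu_0$ the term $C_1\ell\nu\|f\|_{L^\infty(J)}$ can be absorbed into the left-hand side, giving \eqref{eq:ode-main} with $C:=\max(2C_1,4C_0)$.

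The only nontrivial step is the absorption argument, which is the reason for requiring $\nu\leq\nu_0$; every other step is either an application of the already established Lemma \ref{lm:ode-est} or a routine use of Taylor's formula.
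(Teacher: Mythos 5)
Your proof is correct, and its core is the same as the paper's: apply Lemma \ref{lm:ode-est}, replace $\|f''\|_{L^1(J)}$ by $\ell\|f\|_{L^\infty(J)}+\|Lf\|_{L^1(J)}$ via $f''=Vf-Lf$ and the hypothesis on $V$, and absorb the resulting term $C\ell\nu\|f\|_{L^\infty(J)}$ into the left-hand side by taking $\nu_0$ small. The one genuine difference is how you upgrade the endpoint bound of Lemma \ref{lm:ode-est} to a bound at every $x\in J$: the paper re-applies \eqref{eq:ode-est} to a translated interval $]x,x+\nu[$ or $]x-\nu,x[$ having $x$ as an endpoint, whereas you apply it once at the left endpoint of $J$ and then propagate into the interior with Taylor's formula with integral remainder. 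Your variant has the advantage that all $L^1$ integrals stay on $J$ itself, which matches the statement exactly (the translated intervals in the paper's version need not be contained in $J$, so strictly speaking the paper's final display requires enlarging $J$ or an extra covering step). Two small points to tidy: you should state explicitly that $\nu_0\leq1$ (so that $\int_J|V|\leq\ell$ applies), and the absorption step tacitly uses $\|f\|_{L^\infty(J)}<\infty$, which is automatic when $\overline{J}\subset\,]a,b[$ and is an issue shared with the paper's own argument for intervals touching the boundary.
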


\begin{proof} Note that for a continuous $f$ we have
$f''\in L^1_{\text{loc}}$ if and only $Lf\in L^1_{\text{loc}}$ and
then $f'$ is absolutely continuous. We take $\nu_0\leq1$ and strictly
less than half the length of $]a,b[$. If $\nu\leq\nu_0$, then
\eqref{eq:ode-est} gives for $x$ such that $]x,x+\nu[\subset ]a,b[$:
\begin{eqnarray*}
|f(x)|+\nu |f'(x)| &\leq& C\int_x^{x+\nu}
\left(\nu|Lf| + \nu|Vf| + \nu^{-1} |f(y)| \right) \d y  \\
&\leq&
C \|\nu|Lf| + \nu^{-1} |f| \|_{L^1(x,x+\nu)}  + 
C \ell\nu \|f\|_{L^\infty(x,x+\nu)} \\
&\leq&
C \nu \|Lf\|_{L^1(J)} + C \nu^{-1} \|f \|_{L^1(J)}  + 
C \ell\nu \|f\|_{L^\infty(J)}.
\end{eqnarray*}
If $x\in ]a,b[$ and $]x,x+\nu[\,\not\subset\, ]a,b[$ then
$]x-\nu,x[\,\subset\, ]a,b[$ and we have an estimate as above with
$]x,x+\nu[$ replaced by $]x-\nu,x[$. Hence
\[
\|f\|_{L^\infty(J)} + \nu \|f'\|_{L^\infty(J)} \leq
C \nu \|Lf\|_{L^1(J)} + C \nu^{-1} \|f \|_{L^1(J)}  + 
C \ell\nu \|f\|_{L^\infty(J)}.
\]
If $\nu_0$ is such that $C\ell\nu_0<1$ we get the required estimate.
\end{proof}

Recall (see Subsect. \ref{Choice of funtional-analytic setting}) that
$L^1_\loc]a,b[$ is equipped with the topology of local uniform
convergence and that we think of $L$ as an operator in $L^1_\loc]a,b[$
with domain $AC^1]a,b[$.  The next result says that this operator is
closed.

\begin{corollary} \label{cor1} Let $\{f_n\}$ be a sequence in
  $AC^1]a,b[$ such that the sequences $\{f_n\}$ and $\{Lf_n\}$ are
  Cauchy in $L^1_\loc]a,b[$.  Then the limits
  $f:=\lim\limits_{n\to\infty}f_n$ and
  $g:=\lim\limits_{n\to\infty}Lf_n$ exist in $ L_\loc^1]a,b[$ and we
  have $f\in AC^1]a,b[$ and $Lf=g$.
\end{corollary}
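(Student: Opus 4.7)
The plan is to use Lemma \ref{lm:ode-v} to boost the $L^1_\loc$ Cauchy property of $\{f_n\}$ and $\{Lf_n\}$ into a local uniform Cauchy property for $\{f_n\}$ and $\{f_n'\}$, and then identify $f''$ via the algebraic identity $f_n''=Vf_n-Lf_n$.

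First, since $L^1_\loc]a,b[$ with the topology of local uniform convergence is complete, the limits $f:=\lim f_n$ and $g:=\lim Lf_n$ exist in $L^1_\loc]a,b[$. Now fix a compact subinterval $[a_1,b_1]\subset\,]a,b[$ and choose a slightly larger open interval $I$ with $a<\inf I<a_1$ and $b_1<\sup I<b$ and $\bar I\subset\,]a,b[$. Since $V\in L^1(I)$, we have $\ell_I:=\sup_J\int_J|V|\,\d x<\infty$ where $J$ ranges over subintervals of $I$ of length $\leq 1$. Thus Lemma \ref{lm:ode-v} (applied inside $I$) yields constants $C,\nu_0>0$ such that for every subinterval $J\subset I$ of length $\nu\leq\nu_0$,
\[
\|\varphi\|_{L^\infty(J)}+\nu\|\varphi'\|_{L^\infty(J)}\leq C\nu\|L\varphi\|_{L^1(J)}+C\nu^{-1}\|\varphi\|_{L^1(J)}
\]
for any $\varphi\in AC^1(I)$. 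Applying this to $\varphi=f_n-f_m$ and then covering $[a_1,b_1]$ by finitely many such intervals $J$ of length $\nu_0$, we obtain
\[
\|f_n-f_m\|_{L^\infty[a_1,b_1]}+\|f_n'-f_m'\|_{L^\infty[a_1,b_1]}\longrightarrow 0
\]
as $n,m\to\infty$, using that both $\|f_n-f_m\|_{L^1(I)}$ and $\|Lf_n-Lf_m\|_{L^1(I)}$ tend to zero.

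Consequently $f_n\to f$ and $f_n'\to h$ uniformly on every compact subinterval of $]a,b[$, for some continuous $h$. From the identity $f_n''=Vf_n-Lf_n$, the uniform convergence of $f_n$ to $f$ on compacts together with $V\in L^1_\loc]a,b[$ gives $Vf_n\to Vf$ in $L^1_\loc]a,b[$; combined with $Lf_n\to g$ in $L^1_\loc]a,b[$ this yields $f_n''\to Vf-g$ in $L^1_\loc]a,b[$. Since $f_n'\to h$ locally uniformly and $(f_n')'=f_n''\to Vf-g$ in $L^1_\loc]a,b[$, Lemma \ref{prel} applied to the sequence $\{f_n'\}$ shows that $h\in AC]a,b[$ with $h'=Vf-g$. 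A second application of Lemma \ref{prel} to $\{f_n\}$ (using $f_n\to f$ uniformly and $f_n'\to h$ in $L^1_\loc$) gives $f\in AC]a,b[$ with $f'=h$. Therefore $f'\in AC]a,b[$, i.e. $f\in AC^1]a,b[$, and $f''=Vf-g$, which rearranges to $Lf=-f''+Vf=g$, as desired.

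There is no serious obstacle: the proof is essentially a bootstrapping argument, and Lemma \ref{lm:ode-v} is precisely tailored for it. The only small point requiring care is that the hypothesis $\ell<\infty$ of Lemma \ref{lm:ode-v} is global in the lemma's statement, so one must localise by passing to a relatively compact open subinterval $I$ before invoking the estimate, and then cover the target compact subinterval by finitely many pieces of length $\nu_0$.
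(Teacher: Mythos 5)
Your proof is correct and follows essentially the same route as the paper: both use the estimate \eqref{eq:ode-main} of Lemma \ref{lm:ode-v} to upgrade $L^1_\loc$ convergence of $\{f_n\}$ and $\{Lf_n\}$ to local uniform convergence of $f_n$ and $f_n'$, then pass to the limit in $-f_n''=Lf_n-Vf_n$ via Lemma \ref{prel}. Your extra care in localising to a relatively compact subinterval $I$ (so that the global hypothesis $\ell<\infty$ of Lemma \ref{lm:ode-v} is satisfied) is a worthwhile refinement of a point the paper's proof leaves implicit, but it does not change the argument.
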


\proof The estimate (\ref{eq:ode-main}) implies that on every compact
interval $J$ we have uniform convergence of $f_n$ to $f$ (and also of
$f_n'$ to $f'$).  Therefore, $Vf_n\big|_J\to Vf\big|_J$ in $L^1(J)$
for any such $J$. Hence, $-f_n''=Lf_n-Vf_n$ converges in $L^1(J)$ to
$g-Vf$. Therefore, by Lemma \ref{prel}, $-f''=g-Vf$.  We know that
$g-Vf\in L_\loc^1]a,b[$, hence $f\in AC^1]a,b[$. \qed

\section{Hilbert spaces with conjugation} \label{s:abs}
\protect\setcounter{equation}{0}

\subsection{Bilinear scalar product} \label{ss:bil}

Let $\cH$ be a Hilbert space equipped with a {\em scalar product}
$(\cdot|\cdot)$.  One says that $J$ is a {\em conjugation} if it is an
antilinear operator on $\cH$ satisfying $J^2=\one$ and
$(Jf|Jg)=\bar{(f|g)}$. In a Hilbert space with a conjugation $J$ an
important role is played by the {\em natural bilinear form}
\begin{equation}
  \langle f|g\rangle:=(Jf|g).
\end{equation}
In our paper we usually consider the Hilbert space $\cH=L^2]a,b[$,
which has the obvious conjugation $f\mapsto \bar f$.  Its scalar
product and its bilinear form are as follows
\begin{align}\label{sesqui}
  ( f|g)&:=\int_a^b \overline{f(x)}g(x)\d x,\\
\label{bili}
\langle f|g\rangle&:=\int_a^b f(x)g(x)\d x=(\bar f| g).
\end{align}
Thus we use round brackets for the sesquilinear scalar product and
angular brackets for the bilinear form. Note that in some sense the
latter plays a more important role in our paper (and in similar
problems) than the former. See e.g. \cite{D,DR2}, where the same
notation is used.

If $\cG\subset L^2]a,b[$, we will write
\begin{align}
  \cG^\perp&:=\{f\in L^2]a,b[\ |\ ( f|g)=0,\ g\in\cG\},\\
  \cG^\per&:=\{f\in L^2]a,b[
            \ |\ \langle f|g\rangle=0,\ g\in\cG\}
            =\overline{\cG^\perp}. \label{eq:per}
\end{align}

\subsection{Transposition of operators}

If $T$ is an operator, then $\cD(T)$, $\Ker(T)$ and $\Ran(T)$ will
denote the domain, the nullspace (kernel) and the range of $T$.
$\bar T$ denotes the {\em complex conjugation of $T$}. This means,
\begin{equation} \label{eq:conjug}
 \cD(\bar{T}):=\{\bar {f}\mid f\in\cD(T)\},\quad  \overline{ T}f:=\overline{T\overline{ f}},\quad f\in\cD(\bar{T}).
\end{equation}
Suppose that $T$ is densely defined. We say that $u\in\cD(T^\#)$ if
\begin{equation}
  \langle u|Tv\rangle=  \langle w|v\rangle, \quad u\in\cD(T),
\end{equation}
for some $w\in\cH$. Then we set $T^\#u:=w$. The operator $T^\#$ is
called the {\em transpose of $T$}. Clearly, if $T^*$ denotes the usual
{\em Hermitian adjoint} of $T$, then $T^\#=\overline{T^*}=\bar{T}^*$.

If $T$ is a bounded linear operator  with
\begin{equation*}
\big(T f\big)(x):=\int_a^b T(x,y)f(y)\d y,
\end{equation*}
then
\begin{align}
\big(T^* f\big)(x)&=\int_a^b\overline{ T (y,x)}f(y)\d y,\\
\big(T^\# f\big)(x)&=\int_a^b T (y,x)f(y)\d y,\label{eq:trans}\\
\big(\overline{ T} f\big)(x)&=\int_a^b \overline{T (x,y)}f(y)\d y.
\end{align}
An operator $T$ is {\em self-adjoint} if $T=T^*$. We will say that it
is {\em self-transposed} if $ T^\#=T.$ It is useful to note that a
holomorphic function of a self-transposed operator is self-transposed.

\begin{remark}

  {\rm It would be natural to call an operator satisfying
    $T\subset T^\#$ {\em symmetric}. The natural name for an operator
    satisfying $T\subset T^*$ would then be {\em
      Hermitian}. Unfortunately and confusingly, in a large part of
    mathematical literature the word {\em symmetric} is reserved for
    an operator satisfying the latter condition.}
  
\end{remark}

Many properties of the transposition have their exact analogs for the
Hermitian conjugation and are proven similarly.  Below we describe
some of them.

\begin{proposition}
  Let $T$ be a densely defined operator.
  \begin{enumerate}
  \item $T^\#$ is closed.
  \item If $T$ is closed, then $T=T^{\#\#}$.
  \item Let $T\subset S$. Then $S^\#\subset T^\#$ and
  \begin{equation}
    \dim\cD(S)/\cD(T)=\dim\cD(T^\#)/\cD(S^\#).
  \end{equation}
  \end{enumerate}
\end{proposition}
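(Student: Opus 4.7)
The strategy is to reduce each statement to its standard counterpart for the Hermitian adjoint via the identity $T^\#=\overline{T^*}=(\bar T)^*$. This identity, together with the commutation relation $\overline{A^*}=(\bar A)^*$ for any densely defined $A$, follows directly from (\ref{eq:conjug}), from $\langle f|g\rangle=(\bar f|g)$, and from the antilinear isometric involution property of the conjugation. In particular, $\bar T$ is densely defined whenever $T$ is.

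For (1), $T^\#=(\bar T)^*$ is the Hermitian adjoint of a densely defined operator, hence closed by the standard theorem. For (2), if $T$ is closed then so is $\bar T$, and von Neumann's double-adjoint theorem gives $(\bar T)^{**}=\bar T$. Applying $\overline{A^*}=(\bar A)^*$ and $\overline{\bar A}=A$ repeatedly,
\[
T^{\#\#}=\overline{(T^\#)^*}=\overline{(\overline{T^*})^*}=\overline{\overline{T^{**}}}=T^{**}=T.
\]

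For (3), the inclusion $S^\#\subset T^\#$ is immediate: $T\subset S$ gives $\bar T\subset\bar S$, and the Hermitian adjoint reverses inclusions. For the codimension identity I would introduce the bilinear form
\[
B(u,v):=\langle u|Sv\rangle-\langle T^\# u|v\rangle,\qquad u\in\cD(T^\#),\ v\in\cD(S).
\]
Using $T\subset S$ and the defining property of $T^\#$, $B(u,v)=0$ whenever $v\in\cD(T)$. Using $S^\#\subset T^\#$ (already established) and the defining property of $S^\#$, $B(u,v)=0$ whenever $u\in\cD(S^\#)$. Hence $B$ descends to a well-defined bilinear pairing
\[
\bar B:\bigl(\cD(T^\#)/\cD(S^\#)\bigr)\times\bigl(\cD(S)/\cD(T)\bigr)\to\C.
\]
Non-degeneracy in the $u$-variable is immediate from the definition of $S^\#$. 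Non-degeneracy in the $v$-variable, via the identity $T^{\#\#}=T$ supplied by (2), is exactly the classical adjoint relation transported through conjugation. The equality of dimensions then follows from non-degeneracy of a bilinear pairing on a product of quotient spaces.

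\textbf{Main obstacle.} The only non-routine ingredient is the non-degeneracy of $\bar B$ in the second variable, i.e.\ showing that $\langle T^\# u|v\rangle=\langle u|Sv\rangle$ for all $u\in\cD(T^\#)$ forces $v\in\cD(T)$. This is the heart of the standard codimension identity for Hermitian adjoints and implicitly requires $T$ to be closed (so that $T^{\#\#}=T$ is available). I would either cite the Hermitian version and transport it by conjugation, or reprove it directly via the graph/symplectic framework developed in Appendix \ref{a1}.
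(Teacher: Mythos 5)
Your proposal is correct and takes essentially the same route as the paper, which offers no written proof for this proposition and merely asserts that the properties of $T^\#$ "are proven similarly" to those of the Hermitian adjoint; your reduction via $T^\#=\overline{T^*}=(\bar T)^*$ is exactly the intended formalization of that remark, and the nondegenerate-pairing argument for (3) is the standard one. You are also right to flag that the codimension identity in (3) tacitly requires $T$ (hence $S$) to be closed so that $T^{\#\#}=T$ is available --- a hypothesis the paper's statement omits but which holds in all of its applications.
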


\begin{lemma}\label{lm:abst}
  Let $S,T$ be linear operators on a Hilbert space $\mathcal{H}$ such
  that:\\[1mm]
  {\rm(1)} $\langle Sf|g\rangle=\langle f|Tg\rangle $ for all
  $f\in \mathcal{D}(S)$ and $g\in \mathcal{D}(T)$,\\
  {\rm(2)} $T$ is surjective,\\
  {\rm(3)} $\Ran(S)^\per\subset \Ker( T)$.\\[1mm]
  Then $S$ is densely defined.
\end{lemma}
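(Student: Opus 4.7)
The plan is to rephrase density of $\mathcal D(S)$ in terms of the bilinear form. By \eqref{eq:per} we have $\cG^\per=\overline{\cG^\perp}$, and taking complex conjugates is a bijection, so $\mathcal D(S)$ is dense in $\cH$ if and only if $\mathcal D(S)^\per=\{0\}$. I would work with $\langle\cdot|\cdot\rangle$ throughout, which matches the way hypotheses (1) and (3) are phrased; note that this bilinear form is symmetric, since $(Jf|g)=(Jg|f)$ follows from $J^2=\one$ and $(Jf|Jg)=\overline{(f|g)}$.

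The main step is then a short chain. I would fix an arbitrary $h\in\mathcal D(S)^\per$ and use surjectivity (2) to choose $g\in\mathcal D(T)$ with $Tg=h$. For every $f\in\mathcal D(S)$, hypothesis (1) gives
\[
\langle Sf|g\rangle=\langle f|Tg\rangle=\langle f|h\rangle=0,
\]
where the last equality uses $h\in\mathcal D(S)^\per$ together with the symmetry of $\langle\cdot|\cdot\rangle$. This shows $g\in\Ran(S)^\per$, so by hypothesis (3) we have $g\in\Ker(T)$, and therefore $h=Tg=0$. Hence $\mathcal D(S)^\per=\{0\}$.

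There is really no obstacle here beyond bookkeeping: the three hypotheses are designed to dovetail exactly, with (2) producing a preimage $g$, (1) transporting the orthogonality from $h$ back to $\Ran(S)$, and (3) collapsing $g$ to a zero of $T$. The only point worth double-checking is that (3) is stated with the bilinear annihilator $\Ran(S)^\per$ (rather than the sesquilinear $\Ran(S)^\perp$); this is precisely the version compatible with the identities produced by (1), so no conjugation juggling is needed.
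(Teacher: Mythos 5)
Your proof is correct and is essentially the paper's own argument: reduce density to showing the bilinear annihilator of $\cD(S)$ is trivial, lift $h$ to $g$ via surjectivity of $T$, transport the orthogonality through hypothesis (1) to conclude $g\in\Ran(S)^\per$, and apply (3). The remarks on the symmetry of $\langle\cdot|\cdot\rangle$ and on $\cG^\per=\overline{\cG^\perp}$ are accurate but the paper leaves them implicit.
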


\begin{proof}
  We must show:
  \[
    \langle f|h\rangle=0,\ \forall f\in \mathcal{D}(S)\Rightarrow h=0.
  \]
  Since $T$ is surjective, there is $g\in\mathcal{D}(T)$ such that
  $h=Tg$ and then we get
  $0=\langle f|h\rangle=\langle f|Tg\rangle =\langle Sf|g\rangle$ by
  (1) for all $f\in \mathcal{D}(S)$. Thus $g\in \Ran(S)^\per$ and (3)
  gives $Tg=0$ hence $h=0$.
\end{proof}

Here is a version of the closed range theorem \cite[Sect.\ VII.5]{Y},
which we will use in \S\ref{ss:L2G}.

\begin{theorem}\label{th:crt}
  If $T$ is a closed densely defined operator in $\cH$, then the
  following assertions are equivalent:\\
  {\rm(1)} $\Ran(T)$ is closed, \quad
  {\rm(2)} $\Ran(T^\#)$ is closed, \\
  {\rm(3)} $\Ran (T)=\Ker (T^\#)^\per$,  \quad
  {\rm(4)}   $\Ran( T^\#)=\Ker( T)^\per$.
\end{theorem}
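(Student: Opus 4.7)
The plan is to use the identities $T^\#=\overline{T^*}$ and $\cG^\per=\overline{\cG^\perp}$ (see \eqref{eq:per}), together with the fact that complex conjugation preserves closedness, kernels, and ranges, to reduce the four statements to their sesquilinear analogues for $T^*$ and $\perp$, which is the classical Banach-space closed range theorem. I would however prefer to prove the result directly in terms of $T^\#$, following the standard template.

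First I would record the two universal inclusions
\begin{equation*}
  \Ran(T)\subset\Ker(T^\#)^\per,\qquad \Ran(T^\#)\subset\Ker(T)^\per,
\end{equation*}
both of which follow immediately from the defining identity $\langle Tu|v\rangle=\langle u|T^\# v\rangle$ on $\cD(T)\times\cD(T^\#)$. Since $\per$ of any subset is closed, the implications $(3)\Rightarrow(1)$ and $(4)\Rightarrow(2)$ drop out at once.

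Next, for $(1)\Rightarrow(3)$, I would argue by Hahn--Banach: given $w\in\Ker(T^\#)^\per$, assume toward contradiction that $w\notin\Ran(T)$. Since $\Ran(T)$ is closed, there exists $v\in\cH$ with $(v|Tu)=0$ for all $u\in\cD(T)$ but $(v|w)\neq0$. The orthogonality condition means $v\in\cD(T^*)$ with $T^*v=0$, hence $\bar v\in\Ker(T^\#)$. Then $\langle\bar v|w\rangle=(v|w)\neq0$ contradicts $w\in\Ker(T^\#)^\per$. The implication $(2)\Rightarrow(4)$ follows by the symmetric argument applied to $T^\#$, using $T^{\#\#}=T$.

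The substantive equivalence is $(1)\Leftrightarrow(2)$. Assuming $\Ran(T)$ closed, the closed operator $T$ induces a bijection $\tilde T:\cD(T)/\Ker(T)\to\Ran(T)$ between Banach spaces (the quotient carrying the quotient graph norm, the target its norm from $\cH$). The open mapping theorem applied to $\tilde T^{-1}$ produces a constant $C$ such that every $u\in\cD(T)$ admits a representative $u'=u-u_0$ with $u_0\in\Ker(T)$ and $\|u'\|\leq C\|Tu\|$. From this a priori bound I would dualise: given a sequence with $T^\# v_n\to h$ in $\cH$, the assignment $\phi(Tu):=\langle u|h\rangle$ is well defined on $\Ran(T)$ because $\langle k|h\rangle=\lim\langle Tk|v_n\rangle=0$ for $k\in\Ker(T)$, and the estimate bounds $\phi$ by $C\|h\|\cdot\|Tu\|$. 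Riesz representation (in the bilinear pairing) then yields $v\in\cH$ with $\langle v|Tu\rangle=\langle u|h\rangle$ for all $u\in\cD(T)$, i.e.\ $v\in\cD(T^\#)$ and $T^\# v=h$. The reverse direction $(2)\Rightarrow(1)$ is the same argument applied to $T^\#$, again using $T^{\#\#}=T$.

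The main obstacle is the production of the a priori estimate $\|u-u_0\|\leq C\|Tu\|$: it rests on the nontrivial fact that $\cD(T)/\Ker(T)$ with the quotient graph norm is complete with target the closed subspace $\Ran(T)$ of $\cH$, which requires the full strength of the open mapping theorem for unbounded operators. Once this bound is in hand, the rest is routine duality.
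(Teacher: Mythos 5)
Your argument is correct and complete. Note that the paper does not actually prove Theorem \ref{th:crt}: it cites the classical closed range theorem (Yosida, Sect.\ VII.5) and leaves implicit the transfer to the transposed/bilinear setting via $T^\#=\overline{T^*}$ and $\cG^\per=\overline{\cG^\perp}$ --- which is exactly the reduction you describe in your opening paragraph before opting for a direct proof instead. Your direct route checks out in every step: the universal inclusions $\Ran(T)\subset\Ker(T^\#)^\per$ and $\Ran(T^\#)\subset\Ker(T)^\per$ plus closedness of $\per$-complements give $(3)\Rightarrow(1)$ and $(4)\Rightarrow(2)$; your ``Hahn--Banach'' step for $(1)\Rightarrow(3)$ is really just orthogonal projection onto the closed subspace $\Ran(T)$, and the identity $(v|Tu)=0\ \forall u\in\cD(T)\Leftrightarrow \bar v\in\Ker(T^\#)$ is precisely the point where the sesquilinear and bilinear pictures are glued; and the open-mapping/duality step produces, for any $h$ annihilating $\Ker(T)$ in the bilinear pairing, an element $v\in\cD(T^\#)$ with $T^\#v=h$, so it in fact establishes $(1)\Rightarrow(4)$ and a fortiori $(1)\Rightarrow(2)$. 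The symmetric implications are legitimate because $T^\#$ is again closed and densely defined with $T^{\#\#}=T$, as recorded in the paper's proposition on properties of the transpose. What the paper's citation buys is brevity; what your proof buys is a self-contained verification that the bilinear pairing $\langle\cdot|\cdot\rangle$ supports the closed range theorem verbatim, the only Hilbert-space inputs being orthogonal projection, Riesz representation (composed with the conjugation to produce a bilinear representative), and the completeness of $\cD(T)$ in the graph norm needed for the open mapping theorem.
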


\begin{remark}

  {\rm In a large part of literature \cite{EE,Kn} a different
    terminology and notation is used.  If $T$ is an operator, then
    $JT^*J$ is called the {\em $J$-adjoint of $T$}; an operator $T$
    satisfying $T=JT^*J$ is called {\em $J$-self-adjoint}, etc.  In
    the context of our paper $Jf=\bar f$. Moreover, $(Jf|g)$ and
    $JT^*J$ are denoted $\langle f|g\rangle$, resp. $T^\#$. Our
    notation and terminology stresses the naturalness of the bilinear
    product $\langle\cdot|\cdot\rangle$ and of the transposition
    $T\mapsto T^\#$.  Therefore, we prefer them instead of the
    notation and terminology of e.g. \cite{EE,Kn}, which puts $J$ in
    many places.}

\end{remark}

\subsection{Spectrum}

Let $T$ be a closed operator.  We say that $G$ is an {\em inverse} of
$T$ if $G$ is bounded, $GT=\one$ on $\cD(T)$, and $TG=\one$ on $\cH$.
An inverse of $T$, if it exists, is unique. If $T$ possesses an
inverse, we say that it is {\em invertible}.

We will denote by $\rs(T)$ the {\em resolvent set},
that is the set of $\lambda\in\C$ such that $T-\lambda$ is invertible.
The spectrum of $T$ is $\spec(T):=\C\backslash\rs(T)$.

Let $T$ be densely defined.  Then $T$ is invertible iff $T^\#$ is,
$T^{-1\#}=T^{\#-1}$ and $\spec(T)=\spec(T^\#)$.

We say that a closed operator $T$ is {\em Fredholm} if
$\dim\Ker(T)<\infty$ and $\dim\cH/\Ran(T)<\infty$.  If $T$ is a
Fredholm operator, we define its {\em index}
\begin{equation}
  \ind(T)=\dim\Ker(T)-\dim\cH/\Ran(T).\label{fred}
\end{equation}
If a closed operator $T$ is densely defined, then we have an
equivalent, more symmetric definition: $T$ is Fredholm if $\Ran(T)$ is
closed (equivalently, $\Ran(T^\#)$ is closed), $\dim\Ker(T)<\infty$
and $\dim\Ker(T^\#)<\infty$.  Besides,
\begin{equation}
  \ind(T)=\dim\Ker(T)-\dim\Ker(T^\#).\label{fred1a}\end{equation}
One can introduce various varieties of the {\em essential
  resolvent set} and {\em essential spectrum}, \cite{EE}.  One of them is
\begin{equation}
  \rs_{\mathrm{F}0}(T):=\{\lambda\in\C\mid T-\lambda\quad\text{is Fredholm of index $0$}\},
  \quad \spec_{\mathrm{F}0}(T):=\C\backslash\rs_{\mathrm{F}0}(T).\label{fred2}
  \end{equation}
  Clearly, $\rs(T)\subset\rs_{\mathrm{F}0}(T)$.

\subsection{Restrictions of closed operators}
\label{ss:L2G}

In this subsection we fix a closed operator on a Hilbert space
$\cH$. For consistency with the rest of the paper, this operator will
be denoted by $L_\M$.  Note that $\cD(L_\M)$ can be treated as a
Hilbert space with the scalar product
\begin{equation}  \label{scala}
  (f|g)_L:=(f|g)+(L_\M f|L_\M g).
\end{equation}
We will investigate closed operators $L_\bullet$ contained in $L_\M$.
Obviously, $L_\bullet\subset L_\M$ if and only if
$L_\bullet-\lambda\subset L_\M-\lambda$, where $\lambda$ is a
  complex number. Therefore, many of the statements in this and next
subsections have obvious generalizations, where $L_\M$ is replaced
with $L_\M-\lambda$. For simplicity of presentation, we keep
$\lambda=0$.

\begin{proposition} $ $
  \begin{enumerate}
  \item We have a 1-1 correspondence between closed subspaces
    $\cL_\bullet$ of $\cD(L_\M)$ and closed operators
    $L_\bullet\subset L_\M$ given by $\cL_\bullet=\cD(L_\bullet)$.
  \item If $L_\M$ is Fredholm and $L_\bullet\subset L_\M$ is
      closed, then $L_\bullet$ is Fredholm if and only if
    $\dim\cD(L_\M)/\cD(L_\bullet)<\infty$, and then
    \begin{equation} \label{eq:dimrel}
      \dim\cD(L_\M)/\cD(L_\bullet)=\ind(L_\M)-\ind(L_\bullet)
    \end{equation}
  \item If $\Ran(L_\M)=\cH$,
    and $L_\bullet\subset L_\M$, then $\ind(L_\bullet)=0$ iff
    \begin{equation}
      \dim\cD(L_\M)/\cD(L_\bullet)=\dim\Ker(L_\M).
    \end{equation}
\end{enumerate}\end{proposition}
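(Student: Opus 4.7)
\emph{Part (1).} The map $u\mapsto(u,L_\M u)$ identifies $\cD(L_\M)$, equipped with the inner product (\ref{scala}), isometrically with the graph $\Gamma(L_\M)\subset\cH\oplus\cH$, which is closed by the closedness of $L_\M$. Closed subspaces of $\cD(L_\M)$ thus correspond bijectively to closed subspaces of $\Gamma(L_\M)$. Any closed subspace of $\Gamma(L_\M)$ contains no nonzero vector of the form $(0,w)$, so it is itself the graph of a closed operator $L_\bullet\subset L_\M$ with $\cD(L_\bullet)=\cL_\bullet$ and $L_\bullet u=L_\M u$. The inverse correspondence—sending a closed $L_\bullet\subset L_\M$ to its domain—is immediate since the isometry carries $\Gamma(L_\bullet)$ back to $\cD(L_\bullet)$.

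\emph{Part (2).} The main tool will be the short exact sequence
\begin{equation}
 0 \longrightarrow \Ker(L_\M)/\Ker(L_\bullet) \xrightarrow{\iota} \cD(L_\M)/\cD(L_\bullet) \xrightarrow{\bar L_\M} \Ran(L_\M)/\Ran(L_\bullet) \longrightarrow 0,
\end{equation}
where $\iota$ is induced by the inclusion $\Ker(L_\M)\hookrightarrow\cD(L_\M)$ (well-defined and injective because $\Ker(L_\bullet)=\Ker(L_\M)\cap\cD(L_\bullet)$) and $\bar L_\M[u]:=[L_\M u]$. Surjectivity of $\bar L_\M$ is obvious, and exactness at the middle is the only nontrivial point: $[L_\M u]=0$ means $L_\M u=L_\bullet v$ for some $v\in\cD(L_\bullet)$, hence $u-v\in\Ker(L_\M)$ and $[u]\in\mathrm{Im}\,\iota$. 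Since $\dim\Ker(L_\M)<\infty$, the leftmost term is finite, so $\dim\cD(L_\M)/\cD(L_\bullet)<\infty$ is equivalent to $\dim\Ran(L_\M)/\Ran(L_\bullet)<\infty$, which together with $\dim\cH/\Ran(L_\M)<\infty$ is equivalent to $\dim\cH/\Ran(L_\bullet)<\infty$.

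The remaining ingredient for Fredholmness of $L_\bullet$ is closedness of $\Ran(L_\bullet)$, and this is the only genuinely analytic step. The map $L_\M$ induces a bounded bijection $\cD(L_\M)/\Ker(L_\M)\to\Ran(L_\M)$ between Banach spaces (the quotient with its quotient graph norm, the range closed in $\cH$), which by the open mapping theorem is a topological isomorphism. The image of $\cD(L_\bullet)$ in this quotient equals $(\cD(L_\bullet)+\Ker(L_\M))/\Ker(L_\M)$; this is closed because $\cD(L_\bullet)$ is graph-closed and $\Ker(L_\M)$ is finite-dimensional. Transporting through the isomorphism yields $\Ran(L_\bullet)$ closed in $\Ran(L_\M)\subset\cH$. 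Finally, adding cardinals across the exact sequence gives
\begin{equation}
\dim\cD(L_\M)/\cD(L_\bullet)=\dim\Ker(L_\M)/\Ker(L_\bullet)+\dim\Ran(L_\M)/\Ran(L_\bullet)=\ind(L_\M)-\ind(L_\bullet),
\end{equation}
where the second equality is a rearrangement using $\ind=\dim\Ker-\dim\cH/\Ran$.

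\emph{Part (3).} Specializing to $\Ran(L_\M)=\cH$, the third term of the sequence becomes $\cH/\Ran(L_\bullet)$, so
\[
\dim\cD(L_\M)/\cD(L_\bullet)=\dim\Ker(L_\M)/\Ker(L_\bullet)+\dim\cH/\Ran(L_\bullet).
\]
If $\ind(L_\bullet)=0$, then $L_\bullet$ is Fredholm with $\dim\Ker(L_\bullet)=\dim\cH/\Ran(L_\bullet)<\infty$, and summing cardinals the right-hand side equals $\dim\Ker(L_\M)$. Conversely, when $\dim\cD(L_\M)/\cD(L_\bullet)=\dim\Ker(L_\M)$ is finite, all dimensions involved are finite, the closed-range argument of (2) applies (with $\Ran(L_\M)=\cH$ trivially closed), so $L_\bullet$ is Fredholm, and cancellation in the identity gives $\dim\Ker(L_\bullet)=\dim\cH/\Ran(L_\bullet)$, i.e.\ $\ind(L_\bullet)=0$. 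The main obstacle in the entire proof is establishing closedness of $\Ran(L_\bullet)$; once this is in hand, the rest is linear algebra on the short exact sequence.
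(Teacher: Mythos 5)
Your proof is correct and follows essentially the same route as the paper: both reduce \eqref{eq:dimrel} to the identity $\dim\cD(L_\M)/\cD(L_\bullet)=\dim\Ker(L_\M)/\Ker(L_\bullet)+\dim\Ran(L_\M)/\Ran(L_\bullet)$ and invoke the open/closed mapping theorem to identify $\cD(L_\M)$ modulo the kernel with $\Ran(L_\M)$. The only difference is presentational — you package the dimension count as a short exact sequence where the paper chooses explicit closed complements $\cD_\bullet\subset\cD_\M$ — and your extra remarks on closedness of $\Ran(L_\bullet)$ are correct but not needed for the paper's definition of Fredholm.
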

    
\begin{proof}
  (1) is obvious.

  Since $L_\bullet$ is a restriction of $L_\M$, we
  have $\Ker(L_\bullet)\subset \Ker(L_\M)$ and, $\Ker(L_\bullet)$
  being a finite dimensional subspace of the Banach space
  $\cD(L_\bullet)$, there is a closed subspace of $\cD_\bullet$ of
  $\cD(L_\bullet)$ such that
  $\cD(L_\bullet)=\Ker(L_\bullet) \oplus \cD_\bullet$. We have
  $\cD_\bullet+\Ker(L_\M)\subset\cD(L_\M)$ with
  $\cD_\bullet\cap\Ker(L_\M)=0$ hence, since $\Ker(L_\M)$ is finite
  dimensional, there is a closed subspace $\cD_\M$ of $\cD(L_\M)$
  which contains $\cD_\bullet$ and such that
  $\cD(L_\M)=\Ker(L_\M) \oplus \cD_\M$.  Then we clearly get
  \begin{align*}
    \dim\cD(L_\M)/\cD(L_\bullet) &=\dim \Ker(L_\M)/\Ker(L_\bullet)
    + \dim \cD_\M/\cD_\bullet \\
    &=\dim \Ker(L_\M)-\dim\Ker(L_\bullet) + \dim \cD_\M/\cD_\bullet .
  \end{align*}
  The map $L_\M:\cD_\M\to\Ran(L_\M)$ is bijective and its restriction
  to $\cD_\bullet$ has $\Ran(L_\bullet)$ as image hence by the closed
  map theorem it is a homeomorphism, so
  \[
    \dim\cD_\M/\cD_\bullet=\dim\Ran(L_\M)/\Ran(L_\bullet)
    =\dim\cH/\Ran(L_\bullet)-\dim\cH/\Ran(L_\M).
\]
The last two relations imply \eqref{eq:dimrel}.

Finally, (3) is an immediate consequence of (2).
\end{proof}

The following concept is useful in the study of invertible operators
contained in $L_\M$:

\begin{definition}\label{rightinverse}
  We say that $G_\bullet$ is a {\em right inverse of $L_\M$} if
  it is a bounded operator on $\cH$ such that
  $\Ran (G_\bullet)\subset \mathcal{D}(L_\M)$ and
  \begin{equation}\label{eq:311}
    L_\M G_\bullet=\one.
  \end{equation}
\end{definition}

Note that $L_\M$ can have many right inverses or none.

\begin{proposition}\label{pr:lsurj}
  The following conditions are equivalent:\\
  {\rm(1)} $L_\M $ has right inverses,\\
  {\rm(2)} $\Ran(L_\M)=\cH$,\\
If in addition $L_\M$ is densely defined, then (1) and (2) are equivalent to\\
  {\rm(3)} $L_\M ^\#:\cD(L_\M ^\#)\to \cH$ is injective and $\Ran(L_\M^\#)$ is closed. \\
  {\rm(4)} $\Ker (L_\M ^\#)=\{0\}$ and $\Ran (L_\M ^\#)=\Ker( L_\M )^\per$.\\
Under these conditions  we have
  a  bijective correspondence  
  between\\
  {\rm(a)} right inverses $G_\bullet$ of $L_\M$;\\  
{\rm(b)} invertible operators $L_\bullet$ contained in $L_\M$;\\
{\rm(c)} closed subspaces
  $\mathcal{L}_\bullet$ of $\cD(L_\M )$ such that
$\cD(L_\M )=\mathcal{L}_\bullet\oplus \Ker( L_\M) $.\\
This correspondence is  given by $G_\bullet=L_\bullet^{-1}$ and $\cD(L_\bullet)=\cL_\bullet$.
\end{proposition}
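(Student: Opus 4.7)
The plan is to handle (1)$\Leftrightarrow$(2) directly, deduce (2)$\Leftrightarrow$(3)$\Leftrightarrow$(4) from the closed range theorem (Theorem \ref{th:crt}) in the densely defined case, and then establish the three-way correspondence (a)$\Leftrightarrow$(b)$\Leftrightarrow$(c) by carefully tracking domains in the graph norm \eqref{scala}.

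For (1)$\Rightarrow$(2), if $G_\bullet$ is a right inverse then $h=L_\M(G_\bullet h)\in\Ran(L_\M)$ for every $h\in\cH$. For (2)$\Rightarrow$(1), I equip $\cD(L_\M)$ with the graph Hilbert structure; since $L_\M$ is closed, $\Ker(L_\M)$ is a closed subspace of this Hilbert space, so orthogonal projection in $(\cdot|\cdot)_L$ produces a closed complement $\cM\subset\cD(L_\M)$ with $\cD(L_\M)=\cM\oplus\Ker(L_\M)$. The restriction $L_\M|_\cM:\cM\to\cH$ is a bijective bounded linear map between Banach spaces (bounded because $\|L_\M f\|\leq\|f\|_L$), so by the open mapping theorem its inverse is bounded; composing it with the continuous embedding $\cM\hookrightarrow\cH$ yields the desired $G_\bullet$.

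Assuming $L_\M$ densely defined, the equivalences (2)$\Leftrightarrow$(3)$\Leftrightarrow$(4) follow from Theorem \ref{th:crt}. If $\Ran(L_\M)=\cH$, then $\Ran(L_\M)$ is trivially closed, so by the closed range theorem $\Ran(L_\M^\#)$ is closed and equals $\Ker(L_\M)^\per$, and $\Ker(L_\M^\#)=\Ran(L_\M)^\per=\{0\}$, giving (3) and (4). Conversely, if (3) holds, then $\Ran(L_\M)$ is closed, whence $\Ran(L_\M)=\Ker(L_\M^\#)^\per=\cH$.

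For the bijective correspondence, I define the three assignments $G_\bullet\mapsto L_\M|_{\Ran(G_\bullet)}$, $L_\bullet\mapsto\cD(L_\bullet)$ and $\cL_\bullet\mapsto L_\M|_{\cL_\bullet}$, and check that they compose to the identity in every order. The key verifications are: (i) $\Ran(G_\bullet)$ is closed in the graph norm, because $G_\bullet:\cH\to\cD(L_\M)$ is a bounded injection satisfying $\|G_\bullet h\|_L^2=\|G_\bullet h\|^2+\|h\|^2$ and $f=G_\bullet L_\M f$ on the range, so $L_\bullet:=L_\M|_{\Ran(G_\bullet)}$ is closed and invertible with $L_\bullet^{-1}=G_\bullet$; (ii) if $L_\bullet\subset L_\M$ is invertible then $\cD(L_\bullet)\cap\Ker(L_\M)=\{0\}$ by injectivity, and for $f\in\cD(L_\M)$ one solves $L_\bullet g=L_\M f$ using $\Ran(L_\bullet)=\cH$ so that $f-g\in\Ker(L_\M)$, yielding $\cD(L_\M)=\cD(L_\bullet)\oplus\Ker(L_\M)$, with closedness of $\cD(L_\bullet)$ in the graph norm coming directly from closedness of $L_\bullet$; (iii) given a closed complement $\cL_\bullet$, the restriction $L_\M|_{\cL_\bullet}$ is closed, injective, and surjective (any $h=L_\M f$ can be written via the decomposition $f=g+k$ with $g\in\cL_\bullet$, $k\in\Ker(L_\M)$, so $h=L_\M g$), hence invertible with bounded inverse by the closed graph theorem, and this inverse is a right inverse of $L_\M$.

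The main obstacle will be the bookkeeping between the graph topology on $\cD(L_\M)$ and the ambient norm on $\cH$: one must verify that $\Ran(G_\bullet)$ and $\cL_\bullet$ are closed for the graph topology (not merely in $\cH$) and that all three constructions are mutually inverse. Once this is done, everything else reduces to two standard pillars, the open mapping / closed graph theorem and the closed range theorem recorded as Theorem \ref{th:crt}.
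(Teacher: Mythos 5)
Your proof is correct and follows essentially the same route as the paper: the implication $(2)\Rightarrow(1)$ via the orthogonal complement of $\Ker(L_\M)$ in the graph Hilbert space, and the closed range theorem (Theorem \ref{th:crt}) for $(3)$ and $(4)$. The only difference is that you spell out the $(a)$--$(b)$--$(c)$ correspondence in situ (using the lower bound $\|G_\bullet h\|_L\geq\|h\|$ to get graph-closedness of $\Ran(G_\bullet)$), whereas the paper delegates these verifications to Propositions \ref{piuy5} and \ref{piuy}, which establish the same facts via the projection $P_\bullet=G_\bullet L_\M$.
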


\begin{proof}
  If $G_\bullet$ is a right inverse for $L_\M $ then $L_\M $ is
  surjective due to \eqref{eq:311}. Reciprocally, assume $L_\M $ is
  surjective.  Let $\cL_\bullet$ be the orthogonal complement of the
  closed subspace $\Ker( L_\M) $ in the Hilbert space
  $\cD(L_\M)$. Then $\cD(L_\M )=\cL_\bullet\oplus \Ker( L_\M )$. Now
  $L_\bullet= L_\M \big|_{\cL_\bullet}$ is a bijective map
  $\cL_\bullet\to \cH$ and $G_\bullet:= L_\bullet^{-1}$ is a right
  inverse of $L_\M $.  This proves $(1)\Leftrightarrow(2)$.  The
  equivalence with (3) and (4) follows by the closed range theorem
  (see Theorem \ref{th:crt}).
\end{proof}

\begin{proposition}\label{piuy5}
  Let $G_\bullet$ be a right inverse of $L_\M $.  Then
  \begin{enumerate} \item $\Ker(  G_\bullet)=\{0\}$.
  \item $G_\bullet$ is bounded from $\cH$ to $\cD(L_\M )$.
  \item $P_\bullet:=G_\bullet L_\M $ is a bounded projection in the
    space $\cD(L_\M )$ such that
    \[
      \Ran (P_\bullet) =\Ran( G_\bullet),\quad \Ker( P_\bullet) =\Ker( L_\M ).
    \]
  \item $\Ran (G_\bullet)$ is closed in $\cD(L_\M )$.
  \item $\cD(L_\M )=\Ran (G_\bullet)\oplus\Ker( L_\M )$.
  \end{enumerate}
\end{proposition}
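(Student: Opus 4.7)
The proof is essentially a direct verification using the defining identity $L_\M G_\bullet=\one$ together with basic facts about bounded projections on Banach spaces. I would handle the five assertions in the stated order, since each feeds into the next.

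For (1), if $G_\bullet f=0$ then applying $L_\M$ gives $f=L_\M G_\bullet f=0$, so $\Ker(G_\bullet)=\{0\}$. For (2), the graph norm on $\cD(L_\M)$ is $\|h\|_L^2=\|h\|^2+\|L_\M h\|^2$, and since $L_\M G_\bullet f=f$ we get $\|G_\bullet f\|_L^2=\|G_\bullet f\|^2+\|f\|^2\le(\|G_\bullet\|^2+1)\|f\|^2$, so $G_\bullet:\cH\to\cD(L_\M)$ is bounded. A useful consequence, needed in step (3), is that $L_\M$ itself is bounded as a map $\cD(L_\M)\to\cH$ when the domain carries the graph norm, by definition.

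For (3), boundedness of $P_\bullet=G_\bullet L_\M$ on $\cD(L_\M)$ follows by composing the two bounded maps from (2) and the remark above. The projection identity is $P_\bullet^2=G_\bullet L_\M G_\bullet L_\M=G_\bullet(L_\M G_\bullet)L_\M=G_\bullet L_\M=P_\bullet$, using $L_\M G_\bullet=\one$. The inclusion $\Ran(P_\bullet)\subset\Ran(G_\bullet)$ is trivial; conversely, for $f=G_\bullet g\in\Ran(G_\bullet)$ one has $P_\bullet f=G_\bullet L_\M G_\bullet g=G_\bullet g=f$, giving equality. For the kernel, $P_\bullet f=0$ combined with the injectivity of $G_\bullet$ from (1) forces $L_\M f=0$, and conversely $L_\M f=0$ obviously yields $P_\bullet f=0$.

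Parts (4) and (5) are then immediate consequences of (3): the range of any bounded projection on a Banach space is closed, and the space decomposes as the topological direct sum of the range and kernel of that projection. Applied to the Banach space $\cD(L_\M)$ and the projection $P_\bullet$, this gives $\Ran(G_\bullet)$ closed and $\cD(L_\M)=\Ran(G_\bullet)\oplus\Ker(L_\M)$. There is no genuine obstacle here; the only point to be careful about is that all topological statements refer to $\cD(L_\M)$ with the graph norm rather than the ambient $\cH$-norm, which is why step (2) must precede step (3).
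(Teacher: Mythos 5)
Your proof is correct and follows essentially the same route as the paper: the graph-norm estimate for (2), the identity $L_\M G_\bullet=\one$ to show $P_\bullet$ is a bounded projection with the stated range and kernel, and the standard Banach-space facts about projections for (4) and (5). The only cosmetic difference is that for $\Ker(P_\bullet)\subset\Ker(L_\M)$ you invoke injectivity of $G_\bullet$ from (1), while the paper applies $L_\M$ to $P_\bullet f=0$; the two are interchangeable.
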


\begin{proof}
(1) is obvious and
\begin{equation}\label{eq:GL}
  \|G_\bullet f\|_L^2=  \|L_\M G_\bullet f\|^2+\|G_\bullet f\|^2\leq
  \big(1+\|G_\bullet\|^2\big)\|f\|^2 
\end{equation}
implies (2). Since $L_\M : \cD(L_\M ) \to \cH$ is
  bounded, $P_\bullet $ is bounded on $\cD(L_\M )$.  Then
\[
  P_\bullet^2=G_\bullet( L_\M  G_\bullet) L_\M =G_\bullet
  L_\M =P_\bullet
\]
hence $P_\bullet$ is a projection.

It is obvious that $\Ran (P_\bullet)\subset \Ran( G_\bullet)$.                
If $g\in \cH$, then
\[
  G_\bullet g=G_\bullet L_\M  G_\bullet g=P_\bullet  G_\bullet g.
\]
Hence $\Ran (G_\bullet)\subset \Ran( P_\bullet)$. This shows that
$\Ran( G_\bullet)= \Ran( P_\bullet)$.
               
It is obvious that $\Ker (L_\M )\subset\Ker( P_\bullet)$. If
$0=P_\bullet f$, then
\[
  0=L_\M P_\bullet f=(L_\M G_\bullet) L_\M f= L_\M f.
\]
Hence $\Ker (P_\bullet)\subset L_\M $. This shows that
$\Ker( P_\bullet)= \Ker (L_\M )$.
  
Thus we have shown (3), which implies immediately (4) and (5).
\end{proof}

\begin{proposition} \label{piuy} Let $L_\bullet$ be a closed operator
  such that $L_\bullet\subset L_\M$ and $L_\bullet$ is invertible.
  Then
  \begin{equation}\label{dimen1}
    \cD(L_\M )=\cD(L_\bullet)\oplus\Ker (L_\M).
  \end{equation}
\end{proposition}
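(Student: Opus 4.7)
The plan is to reduce Proposition \ref{piuy} immediately to Proposition \ref{piuy5} via the observation that an invertible restriction $L_\bullet \subset L_\M$ produces a right inverse of $L_\M$ in the sense of Definition \ref{rightinverse}.

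First, since $L_\bullet$ is closed and invertible, its inverse $G_\bullet := L_\bullet^{-1}$ is a bounded operator on $\cH$ with $\Ran(G_\bullet) = \cD(L_\bullet) \subset \cD(L_\M)$. The inclusion $L_\bullet \subset L_\M$ gives, for every $g \in \cH$,
\begin{equation*}
L_\M G_\bullet g = L_\bullet G_\bullet g = g,
\end{equation*}
so $L_\M G_\bullet = \one$ on $\cH$. Hence $G_\bullet$ is a right inverse of $L_\M$.

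Next, I apply Proposition \ref{piuy5} to $G_\bullet$. Part (5) yields the topological direct sum decomposition
\begin{equation*}
\cD(L_\M) = \Ran(G_\bullet) \oplus \Ker(L_\M).
\end{equation*}
Since $G_\bullet = L_\bullet^{-1}$ is a bijection from $\cH$ onto $\cD(L_\bullet)$, we have $\Ran(G_\bullet) = \cD(L_\bullet)$, and substituting gives the desired decomposition \eqref{dimen1}.

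There is no serious obstacle here; the only point that needs a moment of care is that the direct sum in \eqref{dimen1} is meant in the topological sense (as fixed in the notation of Section \ref{s:bode}), and this is precisely what Proposition \ref{piuy5}(3)--(5) provides through the bounded projection $P_\bullet = G_\bullet L_\M$ on the Hilbert space $\cD(L_\M)$ equipped with the graph norm \eqref{scala}.
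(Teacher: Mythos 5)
Your proof is correct and follows exactly the paper's own argument: the paper likewise notes that $G_\bullet:=L_\bullet^{-1}$ is a right inverse of $L_\M$ and then invokes Proposition \ref{piuy5}(5), identifying $\Ran(G_\bullet)$ with $\cD(L_\bullet)$. You have merely spelled out the verification that $G_\bullet$ is a right inverse, which the paper leaves implicit.
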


\proof $G_\bullet:=L_\bullet^{-1}$ is a right inverse of $L_\M $. Now
(\ref{dimen1}) is the same as Prop. \ref{piuy5}.(5).  \qed

\begin{proposition} \label{piuy7}

  Let $G_\bullet$ be a right inverse of $L_\M $. If
  $K:\cH\to\Ker (L_\M) $ is a linear continuous map, then
  $G_\bullet+K$ is also a right inverse of $L_\M $.  Conversely, all
  right inverses of $L_\M $ are of this form.
  
  In particular, suppose that $\Ker (L_\M )$ is $n$-dimensional and
  spanned by $u_1,\dots,u_n$. Then if $G_1,G_2$ are two right inverses
  of $L_\M $, there exist $\phi_1,\dots\phi_n\in\cH$ such that
  \begin{equation} \label{piuy7ab}
    G_1-G_2=\sum_{j=1}^n|u_j\rangle\langle \phi_j|
.  \end{equation}

\end{proposition}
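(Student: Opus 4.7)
The plan is to verify both directions of the first statement by a direct algebraic manipulation using the defining identity $L_\M G_\bullet=\one$, and then deduce the ``in particular'' part by decomposing a general bounded map into $\Ker(L_\M)$ along the basis and invoking Riesz representation.

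For the forward direction, I note that $K$ maps $\cH$ into $\Ker(L_\M)\subset\cD(L_\M)$, so $\Ran(G_\bullet+K)\subset\cD(L_\M)$; boundedness of $G_\bullet+K$ on $\cH$ is inherited from its two summands; and $L_\M(G_\bullet+K)=L_\M G_\bullet+L_\M K=\one+0=\one$. Hence Definition \ref{rightinverse} is satisfied. For the converse, given two right inverses $G_1,G_2$ of $L_\M$, I set $K:=G_1-G_2$. Then $K$ is bounded on $\cH$ and $L_\M K=L_\M G_1-L_\M G_2=\one-\one=0$, so $\Ran(K)\subset\Ker(L_\M)$. Since $L_\M$ is closed, $\Ker(L_\M)$ is a closed subspace of $\cH$, and $K$ is automatically continuous as a map $\cH\to\Ker(L_\M)$ with the induced norm.

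For the special case, assume $\Ker(L_\M)$ is spanned by the linearly independent vectors $u_1,\dots,u_n$. For any $f\in\cH$ the vector $Kf\in\Ker(L_\M)$ admits a unique expansion $Kf=\sum_{j=1}^n c_j(f)u_j$, and the coordinate functionals $c_j$ are continuous on $\Ker(L_\M)$ because the latter is finite-dimensional; composing with the continuous map $K:\cH\to\Ker(L_\M)$ shows that each $c_j$ is a bounded linear functional on $\cH$. By the Riesz representation theorem there exist $\psi_j\in\cH$ with $c_j(f)=(\psi_j|f)$, and setting $\phi_j:=\overline{\psi_j}$ rewrites this as $c_j(f)=\langle\phi_j|f\rangle$ in the bilinear pairing of Subsection \ref{ss:bil}, so $K=\sum_{j=1}^n|u_j\rangle\langle\phi_j|$. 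Applying this representation to $K:=G_1-G_2$ yields (\ref{piuy7ab}).

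There is no serious obstacle in the argument; the only point requiring a moment of care is that the statement uses the bilinear bracket $\langle\cdot|\cdot\rangle$ rather than the sesquilinear scalar product $(\cdot|\cdot)$, so the Riesz representation must be composed with the conjugation to produce the vectors $\phi_j$ in the claimed form.
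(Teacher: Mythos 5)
Your proof is correct. The paper states Proposition \ref{piuy7} without proof, and your argument is precisely the intended one: the forward direction and the converse follow directly from the defining identity $L_\M G_\bullet=\one$ together with $\Ker(L_\M)\subset\cD(L_\M)$, and the rank-$n$ representation \eqref{piuy7ab} follows from continuity of the coordinate functionals on the finite-dimensional space $\Ker(L_\M)$ plus Riesz representation, with the conjugation correctly inserted to pass from the sesquilinear product $(\cdot|\cdot)$ to the bilinear bracket $\langle\cdot|\cdot\rangle$ used in the statement.
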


\begin{proposition} \label{piuy6} Suppose that $G_\bullet$ is a
  bounded operator on $\cH$ and $\cD\subset\cH$ a dense subspace such
  that $G_\bullet \cD\subset \mathcal{D}(L_\M )$ and
  \begin{equation}
    L_\M  G_\bullet g=g,\quad g\in\cD.
  \end{equation}
  Then $G_\bullet$ is a right inverse of $L_\M $.
\end{proposition}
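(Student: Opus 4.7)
The plan is to exploit the closedness of $L_\M$ together with the density of $\cD$ in $\cH$ to upgrade the identity $L_\M G_\bullet g = g$ from $g \in \cD$ to arbitrary $g \in \cH$, which simultaneously establishes that $\Ran(G_\bullet) \subset \cD(L_\M)$.

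More precisely, fix any $f \in \cH$. By density of $\cD$ there is a sequence $\{g_n\} \subset \cD$ with $g_n \to f$ in $\cH$. Since $G_\bullet$ is bounded on $\cH$, we have $G_\bullet g_n \to G_\bullet f$ in $\cH$. By hypothesis, each $G_\bullet g_n$ lies in $\cD(L_\M)$ and $L_\M G_\bullet g_n = g_n \to f$ in $\cH$. Thus the sequence $\{G_\bullet g_n\}$ converges in $\cH$ and its image under $L_\M$ also converges in $\cH$. Closedness of $L_\M$ then forces $G_\bullet f \in \cD(L_\M)$ and $L_\M G_\bullet f = f$.

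Since $f \in \cH$ was arbitrary, this shows $\Ran(G_\bullet) \subset \cD(L_\M)$ and $L_\M G_\bullet = \one$ on $\cH$. Combined with the boundedness of $G_\bullet$ assumed in the statement, this is exactly the definition of a right inverse (Def.\ \ref{rightinverse}). There is no real obstacle here: the only ingredient beyond the trivial density argument is the fact that $L_\M$ is assumed closed throughout this subsection, and it is precisely the closed-graph property that lets us pass the identity to the limit.
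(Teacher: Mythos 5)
Your proof is correct and is essentially identical to the paper's: approximate $f\in\cH$ by a sequence in $\cD$, use boundedness of $G_\bullet$ to pass the convergence through, and invoke the closedness of $L_\M$ to conclude $G_\bullet f\in\cD(L_\M)$ and $L_\M G_\bullet f=f$. Nothing further is needed.
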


\proof Let $f\in \cH$ and $(f_n)\subset\cD$ such that
$f_n\underset{n\to\infty}{\to}f$. Then
$G_\bullet f_n\underset{n\to\infty}{\to}G_\bullet f$ and
$L_\M G_\bullet f_n=f_n\underset{n\to\infty}{\to} f$. By the
closedness of $L_\M $, $G_\bullet f\in\cD(L_\M )$ and
$L_\M G_\bullet f=f$.  \qed

\subsection{Nested pairs of operators}
\label{ss:L2GII-}

In this subsection we assume that $L_\m$ and $L_\M$ are two
densely defined closed operators. We assume that
they form a {\em nested pair},
\begin{equation}
  L_\m\subset L_\M .\label{nest1}
\end{equation}
Note that along with (\ref{nest1}) we have a second nested pair
\begin{equation}
  L_\M^\#\subset L_\m^\# .\label{nest2}
\end{equation}
In this subsection we do not assume that $L_\m^\#=L_\M$, so that the two
nested pairs can be different.

\begin{remark} The notion of a nested pair is closely related   to the
  notion of    {\em conjugate pair}, often introduced in the
  literature, e.g. in \cite{EE}. Two operators $A,B$ form a  conjugate
  pair if  $A\subset B^*$, and hence $B\subset A^*$. The pair of
  operators $L_\m$,  $ L_\M^*$ is an example of a conjugate pair.  
\end{remark}

\begin{proposition}\label{dire} $ $  
  \begin{enumerate}
  \item We have a direct decomposition
  \begin{equation}
    \cD(L_\M)=\cD(L_\m)\oplus\Ker(L_\m^* L_\M+1),
  \end{equation}
  where
\[ \Ker(L_\m^* L_\M+1)
    =\{u\in\cD(L_\M)\mid L_\M u\in\cD(L_\m^*) 
    \text{ and } L_\m^*L_\M u+u=0\}.
\]    
\item If in addition  $\Ran (L_\m^\#)=\cH$, then
  \begin{equation}
    \cD(L_\M)=\cD(L_\m)\oplus\Ker(L_\m^* L_\M).
  \end{equation}
  where
 $ \Ker(L_\m^* L_\M)
    =\{u\in\cD(L_\M)\mid L_\M u\in\cD(L_\m^*) 
    \text{ and } L_\m^*L_\M u=0\}$. 
\end{enumerate}
\end{proposition}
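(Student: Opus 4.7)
The plan is to reduce part (1) to a routine orthogonal-complement argument in the graph Hilbert space associated with $L_\M$, and to handle part (2) by a direct construction based on the closed range theorem.

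For (1), I would equip $\cD(L_\M)$ with the graph scalar product \eqref{scala}. Since $L_\m$ is closed and $L_\m\subset L_\M$, the subspace $\cD(L_\m)$ is closed in this Hilbert space; call its orthogonal complement $\cM$, so that $\cD(L_\M)=\cD(L_\m)\oplus\cM$ as an orthogonal direct sum. It then suffices to identify $\cM$ with $\Ker(L_\m^*L_\M+1)$. For $u\in\cD(L_\M)$, the condition $u\in\cM$ reads
\[
(v|u)+(L_\m v|L_\M u)=0 \qquad\text{for every } v\in\cD(L_\m),
\]
which, by the very definition of the Hermitian adjoint $L_\m^*$, is precisely the condition $L_\M u\in\cD(L_\m^*)$ together with $L_\m^*L_\M u=-u$, i.e.\ $u\in\Ker(L_\m^*L_\M+1)$.

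For (2), the first step is to translate the hypothesis $\Ran(L_\m^\#)=\cH$ into structural information about $L_\m$ itself. Since $L_\m$ is closed and densely defined, so is $L_\m^\#$, and $L_\m^{\#\#}=L_\m$; applying Proposition~\ref{pr:lsurj} to $L_\m^\#$ therefore yields that $L_\m$ is injective with closed range, and by Theorem~\ref{th:crt} we have the orthogonal decomposition $\cH=\Ran(L_\m)\oplus\Ker(L_\m^*)$. Given $u\in\cD(L_\M)$, I would split $L_\M u=f_1+f_2$ accordingly, use injectivity of $L_\m$ to obtain the unique $u_1\in\cD(L_\m)$ with $L_\m u_1=f_1$, and set $u_2:=u-u_1$; then $L_\M u_2=f_2\in\Ker(L_\m^*)$, so $u_2\in\Ker(L_\m^*L_\M)$. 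Directness of the sum is immediate: any $v\in\cD(L_\m)\cap\Ker(L_\m^*L_\M)$ satisfies $L_\m v=L_\M v\in\Ran(L_\m)\cap\Ker(L_\m^*)=\{0\}$, whence $v=0$ by injectivity of $L_\m$.

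I do not anticipate any serious obstacle. Part (1) is an essentially formal orthogonal-complement calculation; the only subtle point is recognising that orthogonality in the graph norm is a condition on the Hermitian (not the transpose) adjoint of $L_\m$. Part (2) is bookkeeping once the hypothesis has been converted into injectivity and closed range of $L_\m$, for which Proposition~\ref{pr:lsurj} and Theorem~\ref{th:crt} already do all the heavy lifting.
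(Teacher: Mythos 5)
Your proposal is correct and follows essentially the same route as the paper: part (1) is the identification of the graph-orthogonal complement of $\cD(L_\m)$ in $\cD(L_\M)$ with $\Ker(L_\m^*L_\M+1)$ via the definition of the Hermitian adjoint, and part (2) uses the closed range theorem to write $\cH=\Ran(L_\m)\oplus\Ker(L_\m^*)$, decomposes $L_\M u$ accordingly, and deduces directness from the injectivity of $L_\m$. The only cosmetic difference is that you invoke Proposition~\ref{pr:lsurj} for $L_\m^\#$ to package the consequences of $\Ran(L_\m^\#)=\cH$, where the paper extracts closedness of $\Ran(L_\m)$ and $\Ker(L_\m)=\{0\}$ directly.
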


\begin{proof}

(1)  We will show that
\begin{equation}
  \Ker(L_\m^* L_\M+1)=\cD(L_\m)^{\perp_L},
\end{equation}
where the superscript $\perp_L$ denotes the orthogonal complement with
respect to the scalar product (\ref{scala}).  In fact, $u\in\cD(L_\m)$
and $v\in \cD(L_\m)^{\perp_L}$ if and only if
\begin{equation}
  0=(v|u)+(L_\M v|L_\M u)=(v|u)+(L_\M v|L_\m u).
\end{equation}
This means $L_\M v\in\cD(L_\m^*)$ and
\begin{equation}
  0=(L_\m^*L_\M v+v| u).
\end{equation}
Since $\cD(L_\m)$ is dense, we obtain
\begin{equation}
  0=v+L_\m^*L_\M v.
\end{equation}

(2) $\Ran (L_\m^\#)=\cH$ implies that $\Ran( L_\m)$ is
closed. Therefore,
\begin{equation}\label{direct}
  \cH=\Ran(L_\m)\oplus\Ker(L_\m^*).
\end{equation}
Let $u\in\cD(L_\M)$. By (\ref{direct}), there exist $v\in\cD(L_\m)$
and $w\in\Ker(L_\m^*)$ such that
\begin{equation}
  L_\M u=L_\m v+w.
\end{equation}
Hence $L_\M(u-v)=w$. Therefore, $u-v\in\Ker (L_\m^*L_\M)$. Thus we
have proven that
\begin{equation}
  \cD(L_\M)=\cD(L_\m)+\Ker(L_\m^* L_\M).
\end{equation}
Suppose now that $u\in\cD(L_\m)\cap\Ker(L_\m^*L_\M)$. Thus
\begin{equation}
  0=L_\m^*L_\M u=L_\m^*L_\m u.
\end{equation}
This implies $L_\m u=0$. But $\Ran (L_\m^\#)=\cH$ implies
$\Ker(L_\m)=\{0\}$.  Hence $u=0$.
\end{proof}
  
Our main goal in this and the next subsection is to study closed
operators $L_\bullet$ satisfying $L_\m\subset L_\bullet\subset L_\M$.
Such operators are the subject of the following proposition.

\begin{proposition}  \label{propo1}
  Let $L_\bullet$ be a closed operator such that
  $L_\bullet\subset L_\M$. Then $L_\m\subset L_\bullet$ if and only if
  $L_\bullet^\#\subset L_\m^\#$.
\end{proposition}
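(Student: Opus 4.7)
The plan is to reduce the statement to two elementary properties of transposition established earlier: that $T\subset S$ implies $S^\#\subset T^\#$, and that $T=T^{\#\#}$ whenever $T$ is closed and densely defined. The only subtlety is that $L_\bullet^\#$ needs to be well defined, i.e. $L_\bullet$ has to be densely defined. On the forward side this will come for free from $L_\m\subset L_\bullet$, while on the backward side it is implicit in the assumption that $L_\bullet^\#$ exists and satisfies $L_\bullet^\#\subset L_\m^\#$.

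First I would handle the forward implication. Assume $L_\m\subset L_\bullet$. Since $L_\m$ is densely defined, so is $L_\bullet$, and hence $L_\bullet^\#$ is defined. Applying the monotonicity-reversing property of $\#$ to the inclusion $L_\m\subset L_\bullet$ yields $L_\bullet^\#\subset L_\m^\#$ immediately.

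For the reverse implication, assume $L_\bullet^\#\subset L_\m^\#$ (which presupposes $L_\bullet$ densely defined). Apply $\#$ again to this inclusion, invoking the monotonicity-reversing property once more, to get $L_\m^{\#\#}\subset L_\bullet^{\#\#}$. Now both $L_\m$ and $L_\bullet$ are closed and densely defined, so by the double-transpose identity $L_\m^{\#\#}=L_\m$ and $L_\bullet^{\#\#}=L_\bullet$. This yields $L_\m\subset L_\bullet$, as required.

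There is no real obstacle here: the entire argument is a two-line diagram chase using the established formal properties of $\#$. The only care needed is the bookkeeping about dense definition, which is automatic on both sides given the hypotheses.
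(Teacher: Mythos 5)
Your argument is correct: the paper states this proposition without proof, and your two-step reduction to the order-reversal of $\#$ and the identity $T^{\#\#}=T$ for closed densely defined $T$ (both recorded earlier in the same section) is exactly the intended argument, with the dense-definedness bookkeeping handled properly. Note in passing that the hypothesis $L_\bullet\subset L_\M$ is not even needed for this equivalence.
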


Let us reformulate Proposition \ref{propo1} for invertible
$L_\bullet$, using right inverses as the basic concept:

\begin{proposition}\label{propo2}
  Let $G_\bullet$ be a right inverse of $L_\M$ and
  $L_\bullet^{-1}=G_\bullet$. Then $L_\m\subset L_\bullet$ if and only
  if $G_\bullet^\#$ is a right inverse of $L_\m^\#$.
\end{proposition}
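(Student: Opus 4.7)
The plan is to reduce Proposition \ref{propo2} directly to Proposition \ref{propo1} via the identification of a right inverse of a closed densely defined operator $L_\bullet\subset L_\M$ with the inverse of the restricted operator, transposed.

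First, since $G_\bullet$ is a right inverse of $L_\M$ and is the inverse of the closed operator $L_\bullet\subset L_\M$, both $L_\bullet$ and $G_\bullet$ are closed, densely defined (the latter being bounded on $\cH$), and mutually inverse. From the discussion of the spectrum, the transpose preserves invertibility and satisfies $(L_\bullet^{-1})^\#=(L_\bullet^\#)^{-1}$, hence $G_\bullet^\#$ is a bounded operator on $\cH$ equal to $(L_\bullet^\#)^{-1}$. In particular, $\Ran(G_\bullet^\#)=\cD(L_\bullet^\#)$ and $L_\bullet^\#G_\bullet^\#=\one$ on $\cH$.

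For the direct implication, assume $L_\m\subset L_\bullet$. By Proposition \ref{propo1}, $L_\bullet^\#\subset L_\m^\#$, so $\Ran(G_\bullet^\#)=\cD(L_\bullet^\#)\subset\cD(L_\m^\#)$, and for every $f\in\cH$ one has $L_\m^\#G_\bullet^\# f=L_\bullet^\#G_\bullet^\# f=f$, showing that $G_\bullet^\#$ is a right inverse of $L_\m^\#$ in the sense of Definition \ref{rightinverse}.

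For the reverse implication, assume $G_\bullet^\#$ is a right inverse of $L_\m^\#$. Take any $u\in\cD(L_\bullet^\#)$; since $G_\bullet^\#$ is a bijection $\cH\to\cD(L_\bullet^\#)$, there exists $g\in\cH$ with $u=G_\bullet^\# g$. The hypothesis then gives $u\in\cD(L_\m^\#)$ and
\[
L_\m^\# u=L_\m^\# G_\bullet^\# g=g=L_\bullet^\# G_\bullet^\# g=L_\bullet^\# u,
\]
hence $L_\bullet^\#\subset L_\m^\#$. Applying Proposition \ref{propo1} in the other direction concludes $L_\m\subset L_\bullet$. The only delicate point is checking that $(L_\bullet^{-1})^\#=(L_\bullet^\#)^{-1}$ in the unbounded setting, but this is the content of the elementary remark already recorded in the Spectrum subsection, so no real obstacle remains.
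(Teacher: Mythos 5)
Your proposal is correct and follows exactly the route the paper intends: the paper presents Proposition \ref{propo2} as a reformulation of Proposition \ref{propo1} and omits the verification, which is precisely what you supply via the identity $(L_\bullet^{-1})^\#=(L_\bullet^\#)^{-1}$ recorded in the Spectrum subsection. The only point you assert without justification is that $L_\bullet$ is densely defined (needed for $L_\bullet^\#$ to exist): in the forward direction this follows from the hypothesis $L_\m\subset L_\bullet$, and in the reverse direction from the injectivity of $G_\bullet^\#$ (forced by $L_\m^\# G_\bullet^\#=\one$), since $\Ker(G_\bullet^\#)=\{0\}$ is equivalent to the density of $\Ran(G_\bullet)=\cD(L_\bullet)$.
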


The following proposition should be compared with Prop. \ref{piuy7}.

\begin{proposition}\label{piuy7a}
  Suppose that $G_1$ is a right inverse of $L_\M$ and $G_1^\#$ is a
  right inverse of $L_\m^\#$. Then $G_2$ is also a right inverse of
  $L_\M$ and $G_2^\#$ is a right inverse of $L_\m^\#$ if and only if
  \begin{equation} G_1-G_2=K,\end{equation} where $K$ is a bounded
  operator in $\cH$ with $\Ran(K)\subset\Ker (L_\M)$ and
  $\Ran(K^\#)\subset\Ker( L_\m^\#)$.

  In particular, let $\Ker( L_\M)$ and $\Ker( L_\m^\#)$ be finite
  dimensional. Chose a basis $(u_1,\dots u_n)$ of $\Ker( L_\M)$ and a
  basis $(w_1,\dots w_m)$ of $\Ker( L_\m^\#)$.  Then
  \[
    G_1-G_2=\sum_{i,j}\alpha_{ij}|u_i\rangle\langle w_j| \quad\text{
      for some matrix }\quad [\alpha_{ij}].
  \]
\end{proposition}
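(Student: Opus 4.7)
The plan is to reduce the statement to Proposition \ref{piuy7} applied on both sides of the transposition, carefully tracking the implications of the condition that $G_i^\#$ be a right inverse of $L_\m^\#$. First I would set $K:=G_1-G_2$ and observe that $K$ is automatically bounded on $\cH$ (since both $G_1$ and $G_2$ are). The identity $L_\M G_i=\one$ yields $L_\M K=0$, which requires $\Ran(K)\subset \cD(L_\M)$ and then forces $\Ran(K)\subset \Ker(L_\M)$. Transposing the bounded identities $L_\M G_i=\one$ and $L_\m^\# G_i^\#=\one$ and subtracting gives $L_\m^\# K^\#=0$, and by the same reasoning $\Ran(K^\#)\subset \Ker(L_\m^\#)$. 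This handles the ``only if'' direction.

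For the converse, given such a $K$, define $G_2:=G_1-K$. Since $\Ker(L_\M)\subset \cD(L_\M)$, the range of $K$ lies in $\cD(L_\M)$, and $L_\M G_2=L_\M G_1-L_\M K=\one-0=\one$, so $G_2$ is a right inverse of $L_\M$. An analogous computation using $\Ran(K^\#)\subset \Ker(L_\m^\#)\subset \cD(L_\m^\#)$ shows that $G_2^\#=G_1^\#-K^\#$ is a right inverse of $L_\m^\#$.

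For the ``in particular'' part, I would exploit that $K$ is bounded with range contained in the finite-dimensional space $\Ker(L_\M)=\Span(u_1,\dots,u_n)$. Expanding $Kf=\sum_i c_i(f)u_i$, the coefficients $c_i$ are bounded linear functionals on $\cH$, which in the bilinear pairing $\langle\cdot|\cdot\rangle$ take the form $c_i(f)=\langle \phi_i|f\rangle$ for some $\phi_i\in\cH$; hence $K=\sum_i|u_i\rangle\langle \phi_i|$. By \eqref{eq:trans} the transpose is $K^\#=\sum_i|\phi_i\rangle\langle u_i|$. The hypothesis $\Ran(K^\#)\subset \Ker(L_\m^\#)=\Span(w_1,\dots,w_m)$, combined with the linear independence of the $u_i$ (which makes the functionals $\langle u_i|\cdot\rangle$ linearly independent on a suitable subspace), forces each $\phi_i$ to lie in $\Span(w_1,\dots,w_m)$. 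Writing $\phi_i=\sum_j\alpha_{ij}w_j$ and substituting yields the desired formula $G_1-G_2=\sum_{i,j}\alpha_{ij}|u_i\rangle\langle w_j|$.

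The most delicate point is the extraction of the $\phi_i$ from $K$: one must ensure that the decomposition $Kf=\sum_i c_i(f)u_i$ gives bounded functionals $c_i$ (which follows from boundedness of $K$ and linear independence of the $u_i$, via any projection of $\Ker(L_\M)$ onto the coefficient coordinates) and then that the Riesz-type identification holds in the bilinear rather than the sesquilinear pairing. All other steps are essentially formal algebraic manipulations with transposes, with only the mild bookkeeping of verifying that $\Ran(K)\subset \cD(L_\M)$ and $\Ran(K^\#)\subset \cD(L_\m^\#)$ so that the relevant compositions make sense.
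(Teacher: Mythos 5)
Your proof is correct. The paper states Proposition \ref{piuy7a} without proof, treating it as a routine verification, and your argument---subtracting the two right-inverse identities $L_\M G_i=\one$ and $L_\m^\# G_i^\#=\one$ to get $L_\M K=0$ and $L_\m^\# K^\#=0$, and then expanding the finite-rank operator $K$ in the bases of $\Ker(L_\M)$ and $\Ker(L_\m^\#)$ via test vectors dual to the $u_i$---is exactly the intended one, with the delicate points (boundedness of the coefficient functionals, the bilinear rather than sesquilinear Riesz identification) correctly handled.
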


\subsection{Nested pairs consisting of an operator and its transpose}
\label{ss:L2GII}

In this subsection we assume that the pair of densely defined closed
operators $L_\m$, $L_\M $ satisfies
\begin{equation}
  L_\m^\#=L_\M,\quad L_\m\subset L_\M,
\end{equation}
Note that this is a special case of conditions of the previous
subsection---now the two nested pairs (\ref{nest1}) and (\ref{nest2})
coincide with one another.

We will use the terminology related to symplectic vector spaces
introduced in  Appendix
\ref{a1}.

\begin{lemma} \label{sympl1}
  Let $u,v\in\cD(L_\M)$. Consider
  \begin{equation}
    \langle L_\M u|v\rangle-    \langle u| L_\M v\rangle.
    \label{sympl}\end{equation}
  Then (\ref{sympl}) is zero if $u\in\cD(L_\m)$ or $v\in\cD(L_\m)$. If
  we fix $u$ and (\ref{sympl}) is zero for all $v \in\cD(L_\M)$, 
  then $u\in\cD(L_\m)$.  Besides,
  \begin{equation} \label{sympl2}
    |\langle L_\M u|v\rangle- \langle u|
    L_\M v\rangle| \leq \|u\|_L\|v\|_L.
  \end{equation}
\end{lemma}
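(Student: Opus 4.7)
The lemma splits into three assertions, which I would prove in this order: first the vanishing under the asymmetric hypothesis ($u\in\cD(L_\m)$ or $v\in\cD(L_\m)$), then the converse characterization of $\cD(L_\m)$, and finally the boundedness estimate.

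For the first assertion, I would simply unwind the definition of the transpose. Since $L_\m\subset L_\M$, for $u\in\cD(L_\m)$ we have $L_\M u=L_\m u$, so
\[
\langle L_\M u|v\rangle=\langle L_\m u|v\rangle=\langle u|L_\m^\# v\rangle=\langle u|L_\M v\rangle
\]
whenever $v\in\cD(L_\m^\#)=\cD(L_\M)$, using the hypothesis $L_\m^\#=L_\M$. The case $v\in\cD(L_\m)$ is identical by the symmetric roles.

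For the converse, suppose $u\in\cD(L_\M)$ and $\langle L_\M u|v\rangle=\langle u|L_\M v\rangle$ for every $v\in\cD(L_\M)$. Reading this as $\langle u|L_\M v\rangle=\langle L_\M u|v\rangle$ for all $v\in\cD(L_\M)$ means exactly that $u\in\cD(L_\M^\#)$. Because $L_\m$ is closed, the proposition on double transposition gives $L_\M^\#=L_\m^{\#\#}=L_\m$, hence $u\in\cD(L_\m)$ as required. This step is where I expect the main (minor) obstacle to lie: one has to remember that closedness of $L_\m$ is what licenses the identification $L_\M^\#=L_\m$.

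For the estimate \eqref{sympl2}, I would apply Cauchy--Schwarz twice. Since $|\langle f|g\rangle|=|(\bar f|g)|\leq\|f\|\,\|g\|$, we get
\[
|\langle L_\M u|v\rangle-\langle u|L_\M v\rangle|\leq \|L_\M u\|\,\|v\|+\|u\|\,\|L_\M v\|,
\]
and then the discrete Cauchy--Schwarz inequality in $\R^2$ bounds the right-hand side by $\sqrt{\|L_\M u\|^2+\|u\|^2}\,\sqrt{\|L_\M v\|^2+\|v\|^2}=\|u\|_L\|v\|_L$ by the definition \eqref{scala} of the graph norm. No new ideas are needed here.
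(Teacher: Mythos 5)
Your proof is correct, and since the paper states Lemma \ref{sympl1} without proof, your argument is exactly the routine one the authors leave implicit: the first two assertions unwind the definition of the transpose together with $L_\m^\#=L_\M$ and $L_\M^\#=L_\m^{\#\#}=L_\m$ (valid because $L_\m$ is closed and densely defined), and the bound \eqref{sympl2} follows from Cauchy--Schwarz for the bilinear form plus the two-term Cauchy--Schwarz inequality applied to the graph norms. No gaps.
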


If $\phi,\psi\in\cD(L_\M)/\cD(L_\m)$ are represented by
$u,v\in\cD(L_\M)$, we set
\begin{equation}
  \lbra\phi|\psi\rbra:=    \langle L_\M u|v\rangle-    \langle u| L_\M
  v\rangle. 
\end{equation}
By Lemma \ref{sympl1}, $\lbra\cdot|\cdot\rbra$ is a well defined
continuous symplectic form on $\cD(L_\M)/\cD(L_\m)$.

To every closed operator $L_\bullet$ such that
$L_\m\subset L_\bullet\subset L_\M$ we associate
a closed subspace  $\cW_\bullet$ of $\cD(L_\M)/\cD(L_\m)$ by
\[\cW_\bullet:=\cD(L_\bullet)/\cD(L_\m).\]

\begin{proposition} \label{sympl6}
$ $  \begin{enumerate}
\item The above correspondence is bijective.
\item If $L_\bullet$ is mapped to $\cW_\bullet$, then $L_\bullet^\#$
    is mapped to $W_\bullet^{\,\s\!\perp}$ (the symplectic orthogonal
    complement of $\cW_\bullet$).
\item Self-transposed operators are mapped to Lagrangian subspaces.
\end{enumerate}

\end{proposition}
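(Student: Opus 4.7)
For (1), the plan is to give an explicit inverse of the map $L_\bullet\mapsto\cW_\bullet$. Given a closed $L_\bullet$ with $L_\m\subset L_\bullet\subset L_\M$, the graph norms of $L_\bullet$ and $L_\M$ coincide on $\cD(L_\bullet)$, so $\cD(L_\bullet)$ is closed in $\cD(L_\M)$ equipped with the scalar product \eqref{scala}; since $\cD(L_\m)\subset\cD(L_\bullet)$, $\cW_\bullet:=\cD(L_\bullet)/\cD(L_\m)$ is a well-defined closed subspace of the Hilbert quotient. Conversely, given a closed subspace $\cW\subset\cD(L_\M)/\cD(L_\m)$, its preimage $\cV$ under the quotient map $\pi:\cD(L_\M)\to\cD(L_\M)/\cD(L_\m)$ is closed in $\cD(L_\M)$ and contains $\cD(L_\m)$. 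Then $L_\bullet:=L_\M|_{\cV}$ is closed, since its graph is the image of $\cV$ under the homeomorphism $u\mapsto(u,L_\M u)$ from $\cD(L_\M)$ onto the graph of $L_\M$. These two constructions are manifestly mutually inverse.

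For (2), first note that $L_\m\subset L_\bullet\subset L_\M$ combined with Proposition~\ref{propo1} and the hypothesis $L_\m^\#=L_\M$ (whence also $L_\M^\#=L_\m$ by closedness of $L_\m$) yields $L_\m\subset L_\bullet^\#\subset L_\M$, so $L_\bullet^\#$ corresponds under (1) to some closed subspace $\cW'\subset\cD(L_\M)/\cD(L_\m)$, and the goal is to show $\cW'=\cW_\bullet^{\sperp}$. A vector $v\in\cD(L_\M)$ lies in $\cD(L_\bullet^\#)$ iff $\langle L_\bullet u|v\rangle=\langle u|L_\M v\rangle$ for every $u\in\cD(L_\bullet)$, where $L_\bullet^\# v=L_\M v$ is forced by the inclusion $L_\bullet^\#\subset L_\M$. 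Since $L_\bullet u=L_\M u$, this condition rewrites as $\langle L_\M u|v\rangle-\langle u|L_\M v\rangle=0$ for all $u\in\cD(L_\bullet)$, i.e.\ $\lbra[u]|[v]\rbra=0$ for all $[u]\in\cW_\bullet$. This is exactly $[v]\in\cW_\bullet^{\sperp}$, hence $\cW'=\cW_\bullet^{\sperp}$.

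Part (3) is then immediate: $L_\bullet^\#=L_\bullet$ translates via (1)--(2) into $\cW_\bullet^{\sperp}=\cW_\bullet$, which is the defining property of a Lagrangian subspace. The main obstacle, in my view, lies in (2): one must handle the transpose carefully by using $L_\bullet^\#\subset L_\M$ to replace $L_\bullet^\# v$ with $L_\M v$, so that the defining identity of $L_\bullet^\#$ can be recast purely in terms of the bilinear form on $\cD(L_\M)$ that descends to the symplectic form $\lbra\cdot|\cdot\rbra$. Everything else is formal once Lemma~\ref{sympl1} provides well-definedness, continuity, and non-degeneracy of $\lbra\cdot|\cdot\rbra$ on the quotient, together with the basic symplectic framework of Appendix~\ref{a1}.
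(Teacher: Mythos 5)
Your proof is correct. The paper states Proposition \ref{sympl6} without proof, and your argument is exactly the one the authors intend: part (1) is the quotient version of their 1-1 correspondence between closed subspaces of $\cD(L_\M)$ and closed restrictions of $L_\M$, and part (2) correctly reduces the membership condition for $\cD(L_\bullet^\#)$ to the vanishing of the form of Lemma \ref{sympl1}, using $L_\m\subset L_\bullet^\#\subset L_\M$ (from $L_\m^\#=L_\M$, $L_\M^\#=L_\m$) to force $L_\bullet^\#v=L_\M v$.
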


The following result is quite striking and shows that in a certain
respect the concept of self-transposedness is superior to the concept
of self-adjointness. It is due to Galindo \cite{Gal}, with a
simplified proof given by Knowles \cite{Kn}, see also \cite{EE}.  It
is a generalization of a well-known property of real Hermitian
operators: they have a self-adjoint extension which commutes with the
usual conjugation.

\begin{theorem}\label{zorn}
  There exists a self-transposed operator $L_\bullet $ such that
  $L_\m\subset L_\bullet \subset L_\M$. Moreover,
  $\dim \cD(L_\M^\#)/\cD(L_\m)$ is even or infinite
  and
  \begin{equation}
    \dim\cD(L_\bullet )/\cD(L_\m)=
    \dim\cD(L_\M^\#)/\cD(L_\bullet )=
    \frac12\dim\cD(L_\M^\#)/\cD(L_\m).
  \end{equation}
\end{theorem}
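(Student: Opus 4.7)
The plan is to translate the assertion into a question about Lagrangian subspaces of the symplectic quotient $\cV:=\cD(L_\M)/\cD(L_\m)$ and then invoke the Zorn's-lemma machinery of Appendix \ref{a1}. By Lemma \ref{sympl1} the bilinear form $\lbra\cdot|\cdot\rbra$ is continuous on $\cV$ (via the estimate \eqref{sympl2}) and non-degenerate: if $\lbra\phi|\psi\rbra=0$ for every $\psi\in\cV$, then any lift $u$ of $\phi$ already lies in $\cD(L_\m)$, so $\phi=0$. By Proposition \ref{sympl6}(3), producing a self-transposed $L_\bullet$ with $L_\m\subset L_\bullet\subset L_\M$ is exactly the same thing as producing a closed Lagrangian subspace $\cW_\bullet\subset\cV$.

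For existence I would apply Zorn's lemma to the family of closed isotropic subspaces of $\cV$ ordered by inclusion. This family is non-empty (it contains $\{0\}$) and chain-complete: for a chain $\{\cW_\alpha\}$ the subspace $\overline{\bigcup_\alpha\cW_\alpha}$ is an upper bound, since the union is isotropic (by directedness) and its closure remains isotropic by continuity of $\lbra\cdot|\cdot\rbra$. Let $\cW_\bullet$ be a maximal element. If there existed $\phi\in\cW_\bullet^{\sperp}\setminus\cW_\bullet$, then $\cW_\bullet\oplus\C\phi$ would be a strictly larger closed isotropic subspace, contradicting maximality. Hence $\cW_\bullet^{\sperp}\subset\cW_\bullet$; combined with the isotropy $\cW_\bullet\subset\cW_\bullet^{\sperp}$ we obtain $\cW_\bullet=\cW_\bullet^{\sperp}$, i.e., $\cW_\bullet$ is Lagrangian.

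For the dimensions I would rely on a fact recorded in Appendix \ref{a1}: for any closed subspace $\cW$ of a non-degenerate continuous symplectic space, $\lbra\cdot|\cdot\rbra$ descends to a non-degenerate continuous pairing $\cV/\cW^{\sperp}\times\cW\to\C$. Applied to our Lagrangian this yields a non-degenerate pairing between $\cV/\cW_\bullet$ and $\cW_\bullet$, forcing $\dim\cV/\cW_\bullet=\dim\cW_\bullet$ (and hence $\dim\cV=2\dim\cW_\bullet$, in particular even) in the finite-dimensional case, while in the infinite-dimensional case both quantities are simply infinite. Translating back through the bijection of Proposition \ref{sympl6}, with $\cW_\bullet\simeq\cD(L_\bullet)/\cD(L_\m)$ and $\cV/\cW_\bullet\simeq\cD(L_\M)/\cD(L_\bullet)$, gives the three equal dimensions in the statement.

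The main subtlety I expect is the closedness requirement: the bijection in Proposition \ref{sympl6} is only with \emph{closed} subspaces of $\cV$, so a purely algebraic Zorn argument on isotropic subspaces would not be enough; one has to run Zorn inside the class of \emph{closed} isotropic subspaces. What makes this compatible is the continuity estimate \eqref{sympl2}, which guarantees that closures of isotropic subspaces stay isotropic and that the extension step $\cW_\bullet\oplus\C\phi$ used in the maximality argument remains closed (being a finite-dimensional enlargement of a closed subspace). Once these topological points are settled, the remainder is standard symplectic linear algebra, carried out abstractly in Appendix \ref{a1}.
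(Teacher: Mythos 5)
Your proof is correct and follows essentially the same route as the paper: reduce the statement to producing a closed Lagrangian subspace of the symplectic quotient $\cD(L_\M)/\cD(L_\m)$ (whose form is continuous and nondegenerate by Lemma \ref{sympl1}), obtain one by a Zorn argument, and translate back through Proposition \ref{sympl6}, reading off the dimension identities from Proposition \ref{sympl4}. The only divergence is that you run Zorn over \emph{closed} isotropic subspaces because you believe a purely algebraic argument ``would not be enough''; in fact the paper's Proposition \ref{sympl5} shows that any algebraic Lagrangian is automatically closed --- it equals its own symplectic orthogonal complement $\cW^\sperp$, which is closed whenever the form is continuous --- so the subtlety you flag evaporates, and your extra care, while harmless and correct, is not needed.
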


\begin{proof}
  By Prop.  \ref{sympl4}, there exists a Lagrangian subspace
  $\cW_\bullet$ contained 
   in the symplectic space
  $\cD(L_\M)/\cD(L_\m)$. By Prop. \ref{sympl5} it is closed.  The
  corresponding operator $L_\bullet$ is self-transposed
  (Prop. \ref{sympl6}).
\end{proof}

\begin{proposition}\label{zorn-1}
  Suppose that $L_\m\subset L_\bullet \subset L_\M$ and
  \begin{equation}
    \dim\cD(L_\bullet )/\cD(L_\m)= \dim\cD(L_\M^\#)/\cD(L_\bullet )=
    \frac12\dim\cD(L_\M^\#)/\cD(L_\m)=1.
  \end{equation}
  Then $L_\bullet$ is self-transposed.
\end{proposition}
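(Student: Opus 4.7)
The strategy is to translate the statement into the symplectic framework of Proposition \ref{sympl6} and then invoke the elementary fact that every line in a two-dimensional symplectic space is Lagrangian.

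In the present subsection $L_\M=L_\m^\#$, hence $L_\M^\#=L_\m$, and the dimensional hypothesis is to be read as $\dim\cD(L_\M)/\cD(L_\m)=2$ with $\dim\cD(L_\bullet)/\cD(L_\m)=\dim\cD(L_\M)/\cD(L_\bullet)=1$. By Lemma \ref{sympl1}, the bracket $\lbra\cdot|\cdot\rbra$ descends to a non-degenerate continuous symplectic form on the two-dimensional quotient $V:=\cD(L_\M)/\cD(L_\m)$. Set $\cW_\bullet:=\cD(L_\bullet)/\cD(L_\m)$; this is a one-dimensional, and hence automatically closed, subspace of $V$.

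Next, I would check that $\cW_\bullet$ is Lagrangian. Picking any nonzero $w$ spanning $\cW_\bullet$, antisymmetry gives $\lbra w|w\rbra=0$, so $\cW_\bullet\subset \cW_\bullet^{\sperp}$. Non-degeneracy of the form forces $\dim \cW_\bullet^{\sperp}=\dim V-\dim \cW_\bullet=1$, so the two subspaces coincide: $\cW_\bullet=\cW_\bullet^{\sperp}$.

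Finally, by Proposition \ref{sympl6}(2), $L_\bullet^\#$ corresponds to $\cW_\bullet^{\sperp}=\cW_\bullet$, which is the subspace corresponding to $L_\bullet$ itself. The bijectivity of the correspondence in Proposition \ref{sympl6}(1) then yields $L_\bullet^\#=L_\bullet$. There is no real obstacle beyond this packaging: once the symplectic picture is in place, the content reduces to the triviality that every line in a two-dimensional symplectic space is self-orthogonal, which follows from one application of antisymmetry and a dimension count.
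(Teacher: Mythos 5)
Your proof is correct and is essentially the paper's own argument: the paper's proof is the one-line observation that all $1$-dimensional subspaces of a $2$-dimensional symplectic space are Lagrangian, implicitly invoking Proposition \ref{sympl6} exactly as you do. You have merely spelled out the isotropy-plus-dimension-count details that the paper leaves to the reader.
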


\begin{proof}
  All 1-dimensional subspaces in a 2-dimensional symplectic space are
  Lagrangian.
\end{proof}

Here is a version of Prop. \ref{dire} adapted to the present context:

\begin{proposition}\label{dire1} $ $  
\begin{enumerate}
\item
    We have a direct decomposition
  \begin{equation}
    \cD(L_\M)=\cD(L_\m)\oplus\Ker(\bar L_\M L_\M+1).
  \end{equation}
\item If in addition $\Ran (L_\M)=\cH$, then
  \begin{equation}
    \cD(L_\M)=\cD(L_\m)\oplus\Ker(\bar L_\M L_\M).
  \end{equation}
\end{enumerate}
  \end{proposition}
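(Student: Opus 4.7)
The plan is to derive Proposition \ref{dire1} as a direct specialization of Proposition \ref{dire} to the setting where $L_\m^\# = L_\M$. The only work is to translate the operators appearing there into the current notation.

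First, I would unpack the transposition relation. Since $T^\# = \overline{T^*}$ by definition, the hypothesis $L_\m^\# = L_\M$ reads $\overline{L_\m^*} = L_\M$. Applying complex conjugation to both sides and using the involution property $\overline{\overline{T}} = T$ — immediate from (\ref{eq:conjug}) — I obtain $L_\m^* = \overline{L_\M}$ as an equality of (unbounded) operators, with matching domains. Consequently, the composition $L_\m^* L_\M$ appearing in Proposition \ref{dire} is literally the same operator as $\overline{L_\M}\, L_\M$, acting on the natural domain $\{u\in\cD(L_\M) : L_\M u\in\cD(\overline{L_\M})\}$.

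For part (1), I apply Proposition \ref{dire}(1) directly: it gives $\cD(L_\M) = \cD(L_\m) \oplus \Ker(L_\m^* L_\M + 1)$, and the identification $L_\m^* L_\M = \overline{L_\M}\, L_\M$ yields the desired decomposition. For part (2), the hypothesis $\Ran(L_\m^\#) = \cH$ of Proposition \ref{dire}(2) is, under $L_\m^\# = L_\M$, exactly our assumption $\Ran(L_\M) = \cH$; applying that proposition and substituting once more gives $\cD(L_\M) = \cD(L_\m) \oplus \Ker(\overline{L_\M}\, L_\M)$.

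There is no real obstacle in this argument — it is essentially a notational reformulation of Proposition \ref{dire} — and the only step deserving explicit justification is the operator equality $L_\m^* = \overline{L_\M}$, which rests on the involution $\overline{\overline{T}} = T$ together with the definition of $T^\#$.
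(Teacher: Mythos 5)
Your proposal is correct and follows exactly the route the paper intends: Proposition \ref{dire1} is introduced there as ``a version of Prop.\ \ref{dire} adapted to the present context,'' and the only content of that adaptation is precisely the identity $L_\m^*=\bar{L}_\M$, which you derive correctly from $L_\m^\#=L_\M$ and the involutivity of the conjugation. Nothing is missing.
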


Here is a consequence of Prop. \ref{propo2}

\begin{proposition}\label{piuy00+}
  Suppose that $L_\bullet$ is invertible and
  $ L_\m\subset L_\bullet\subset L_\M$. Then there exists a
  self-transposed invertible $L_1$ such that
  $ L_\m\subset L_1\subset L_\M$.
\end{proposition}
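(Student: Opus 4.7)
The plan is to symmetrize the inverse of $L_\bullet$ under transposition in order to manufacture a self-transposed invertible realization.

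The starting observation is that $G_\bullet := L_\bullet^{-1}$ is a right inverse of $L_\M$ by Proposition \ref{pr:lsurj}, and by Proposition \ref{propo2} the hypothesis $L_\m \subset L_\bullet$ is equivalent to $G_\bullet^\#$ also being a right inverse of $L_\m^\# = L_\M$. So the assumption hands me, for free, a \emph{pair} of bounded right inverses of $L_\M$ on $\cH$ which are each other's transposes.

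Next, I would define the symmetric average
\[
G_1 := \tfrac12\bigl(G_\bullet + G_\bullet^\#\bigr).
\]
Two easy verifications complete the existence step. First, $G_1^\# = G_1$, since transposition is $\bbC$-linear on bounded operators and $G_\bullet^{\#\#} = G_\bullet$. Second, $G_1$ is a right inverse of $L_\M$: its range is in $\cD(L_\M)$ (being a combination of ranges of two right inverses), and $L_\M G_1 = \tfrac12(L_\M G_\bullet + L_\M G_\bullet^\#) = \one$. By Proposition \ref{pr:lsurj}, $G_1$ then corresponds to a unique closed invertible operator $L_1 \subset L_\M$ with $L_1^{-1} = G_1$. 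Applying Proposition \ref{propo2} in the opposite direction, the fact that $G_1^\# = G_1$ is a right inverse of $L_\m^\# = L_\M$ gives $L_\m \subset L_1$, so $L_\m \subset L_1 \subset L_\M$.

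The final step is self-transposedness. Since $L_\m$ is densely defined and $L_\m \subset L_1$, the operator $L_1$ is densely defined, and $L_1^\#$ is a closed operator whose inverse is $(L_1^{-1})^\# = G_1^\# = G_1 = L_1^{-1}$. Two closed operators sharing the same everywhere-defined bounded inverse must coincide, so $L_1^\# = L_1$. There is no substantial obstacle here; the whole argument is a one-line averaging trick, and the only thing to be careful about is invoking Proposition \ref{propo2} in both directions to recognize that the hypothesis supplies a second right inverse and that the symmetrized one automatically enforces $L_\m \subset L_1$.
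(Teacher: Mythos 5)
Your proof is correct and follows essentially the same route as the paper: symmetrize $G_\bullet=L_\bullet^{-1}$ to $G_1=\tfrac12(G_\bullet+G_\bullet^\#)$, check it is a self-transposed right inverse of $L_\M$, and let $L_1$ be the invertible restriction with $L_1^{-1}=G_1$. The extra detail you supply (invoking Prop.~\ref{propo2} in both directions and identifying $L_1^\#$ via the shared bounded inverse) is exactly what the paper leaves implicit.
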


\begin{proof} 
  $G_\bullet:=L_\bullet^{-1}$ is a right inverse of $L_\M$.  By
  Prop. \ref{propo2}, $G_\bullet^\#$ is also a right inverse of
  $L_\M $.  Therefore, also $G_1: =(G_\bullet+G_\bullet^\#)/2$ is 
a  right inverse, which in addition is self-transposed.  Now $L_1$ such
  that $L_1^{-1}=G_1$ has the required properties.
\end{proof}

Here is a version of Prop. \ref{piuy7a} adapted to the present
context:

\begin{proposition}\label{piuy7b}
  Suppose that $G_1$ is a right inverse of $L_\M$ and $G_1^\#$ is a
  right inverse of $L_\M$. Then $G_2$ is also a right inverse of
  $L_\M$ and $G_2^\#$ is a right inverse of $L_\M$ if and only if
  \begin{equation}
    G_1-G_2=K,
  \end{equation}
  where $K$ and $K^\#$ are bounded from $\cH$ to
  $\Ker (L_\M)$.
  
  In particular, let $\Ker( L_\M)$ be finite dimensional. Chose a
  basis $(u_1,\dots u_n)$ of $\Ker(L_\M)$.  Then
  \[
    G_1-G_2=\sum_{i,j}\alpha_{ij}|u_i\rangle\langle u_j| \quad\text{
      for some matrix }\quad [\alpha_{ij}].
  \]
\end{proposition}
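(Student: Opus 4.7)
The plan is to derive this as a direct corollary of Proposition \ref{piuy7a}, exploiting the special assumption of this subsection that $L_\m^\# = L_\M$. Under this hypothesis the two nested pairs (\ref{nest1}) and (\ref{nest2}) coincide, so in particular being a right inverse of $L_\m^\#$ is the same as being a right inverse of $L_\M$, and $\Ker(L_\m^\#) = \Ker(L_\M)$.

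First, I would invoke Proposition \ref{piuy7a} verbatim with $L_\m^\# = L_\M$: it gives that $G_2$ is a right inverse of $L_\M$ and $G_2^\#$ is a right inverse of $L_\m^\#$ if and only if $K := G_1 - G_2$ is a bounded operator on $\cH$ with $\Ran(K) \subset \Ker(L_\M)$ and $\Ran(K^\#) \subset \Ker(L_\m^\#)$. Substituting $\Ker(L_\m^\#) = \Ker(L_\M)$ on both sides yields exactly the first assertion: $G_2,G_2^\#$ are both right inverses of $L_\M$ iff $K$ and $K^\#$ are bounded maps from $\cH$ into $\Ker(L_\M)$.

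For the finite-dimensional refinement, suppose $\Ker(L_\M)$ has basis $(u_1,\dots,u_n)$. Any bounded operator $K$ with range in this finite-dimensional subspace admits a unique representation
\[
  K = \sum_{i=1}^{n} |u_i\rangle\langle\phi_i|
  \quad\text{for some } \phi_1,\dots,\phi_n\in\cH,
\]
where the bra refers to the bilinear form $\langle\cdot|\cdot\rangle$. By the transposition formula \eqref{eq:trans}, the transpose of a rank-one operator satisfies $(|u\rangle\langle\phi|)^\# = |\phi\rangle\langle u|$, hence
\[
  K^\# = \sum_{i=1}^n |\phi_i\rangle\langle u_i|.
\]
Requiring $\Ran(K^\#) \subset \Ker(L_\M)$ forces each $\phi_i \in \Ker(L_\M)$, so we can expand $\phi_i = \sum_j \alpha_{ij} u_j$, which gives
\[
  G_1 - G_2 = K = \sum_{i,j} \alpha_{ij}\,|u_i\rangle\langle u_j|
\]
as required. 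Conversely, any $K$ of this form clearly has $K$ and $K^\#$ bounded with ranges in $\Ker(L_\M)$, so the equivalence is complete.

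I do not anticipate any real obstacle; the only mildly delicate point is the transposition of rank-one operators, which is immediate from \eqref{eq:trans}. The essence of the statement is already contained in Proposition \ref{piuy7a}, and this version is simply its specialization to the self-dual situation $L_\m^\# = L_\M$.
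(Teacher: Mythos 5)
Your proposal is correct and matches the paper's intent exactly: the paper introduces Prop.~\ref{piuy7b} as ``a version of Prop.~\ref{piuy7a} adapted to the present context'' and offers no separate proof, so the specialization $L_\m^\#=L_\M$ (hence $\Ker(L_\m^\#)=\Ker(L_\M)$) is precisely the intended argument. Your filling in of the finite-dimensional part via $(|u\rangle\langle\phi|)^\#=|\phi\rangle\langle u|$ from \eqref{eq:trans} is a sound and welcome elaboration of what the paper leaves implicit.
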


\begin{theorem}\label{fred1}
  Suppose that $ L_\m\subset L_\bullet\subset L_\M$.
  \begin{enumerate}
  \item If $L_\bullet$ is Fredholm of index $0$, then
    \begin{equation}\label{dimen3}
      \dim\cD(L_\M)/\cD(L_\bullet)=\cD(L_\bullet)/\cD(L_\m)
      =\frac{1}{2}\dim\cD(L_\M)/\cD(L_\m)  .
    \end{equation}
  \item If $\dim\cD(L_\M)/\cD(L_\m)<\infty$ and \eqref{dimen3} holds,
    then $L_\bullet$ is Fredholm of index $0$.
\end{enumerate}
\end{theorem}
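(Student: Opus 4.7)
The plan is to leverage the symplectic structure on
$V:=\cD(L_\M)/\cD(L_\m)$ from Lemma~\ref{sympl1} together with the correspondence
of Prop.~\ref{sympl6}, which assigns to each closed intermediate operator $L_\bullet$ the closed subspace $\cW_\bullet:=\cD(L_\bullet)/\cD(L_\m)$ and to $L_\bullet^\#$ the symplectic orthogonal $\cW_\bullet^{\,\s\!\perp}$. From the symplectic duality for closed subspaces of a non-degenerate continuous symplectic Banach space (Appendix~\ref{a1}), $\dim V/\cW=\dim \cW^{\,\s\!\perp}$ for any closed $\cW\subset V$. Specializing to $\cW_\bullet$ yields the backbone identities
\begin{equation*}
\dim\cD(L_\M)/\cD(L_\bullet)=\dim\cD(L_\bullet^\#)/\cD(L_\m),\qquad
\dim\cD(L_\M)/\cD(L_\bullet^\#)=\dim\cD(L_\bullet)/\cD(L_\m).
\end{equation*}

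For part (1), since the two codimensions of $L_\bullet$ between $L_\m$ and $L_\M$ sum to $\dim V$, \eqref{dimen3} is equivalent to the single equality $\dim\cD(L_\bullet)/\cD(L_\m)=\dim\cD(L_\bullet^\#)/\cD(L_\m)$. To prove this I would use the short exact sequence
\begin{equation*}
0\to\Ker(L_\bullet)/\Ker(L_\m)\to\cD(L_\bullet)/\cD(L_\m)\xrightarrow{L_\bullet}\Ran(L_\bullet)/\Ran(L_\m)\to 0,
\end{equation*}
induced by the fact that $L_\bullet$ restricts to $L_\m$ on $\cD(L_\m)$, together with its analogue for $L_\bullet^\#$. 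The Fredholm-of-index-zero hypothesis, via Theorem~\ref{th:crt}, gives $\dim\Ker(L_\bullet)=\dim\Ker(L_\bullet^\#)=n<\infty$ and makes $\Ran(L_\bullet)$, $\Ran(L_\bullet^\#)$ closed of equal codimension $n$ in $\cH$. A brief dimension chase in
\begin{equation*}
0\to\Ran(L_\bullet)/\Ran(L_\m)\to\cH/\Ran(L_\m)\to\cH/\Ran(L_\bullet)\to 0
\end{equation*}
and in its $L_\bullet^\#$-analogue then forces $\dim\Ran(L_\bullet)/\Ran(L_\m)=\dim\Ran(L_\bullet^\#)/\Ran(L_\m)$. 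Combining the kernel and range equalities yields the desired identity and hence \eqref{dimen3}.

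For part (2), once it is known that $L_\M$ (equivalently $L_\m=L_\M^\#$) is Fredholm, the proposition of subsection~\ref{ss:L2G} applied to the nested pairs $L_\m\subset L_\bullet$ and $L_\bullet\subset L_\M$ gives
\begin{equation*}
\ind(L_\bullet)=\ind(L_\m)+\dim\cD(L_\bullet)/\cD(L_\m),\qquad\ind(L_\M)=\ind(L_\m)+\dim V.
\end{equation*}
The relation $L_\M=L_\m^\#$ combined with the closed range theorem gives $\ind(L_\M)=-\ind(L_\m)$, so $\ind(L_\m)=-\tfrac12\dim V$, and then the assumed $\dim\cD(L_\bullet)/\cD(L_\m)=\tfrac12\dim V$ yields $\ind(L_\bullet)=0$.

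The main obstacle is in part (2), namely justifying the Fredholmness of $L_\M$ a priori from the hypothesis $\dim V<\infty$ alone. I plan to handle this either by exploiting the decomposition $\cD(L_\M)=\cD(L_\m)\oplus\Ker(\bar L_\M L_\M+1)$ of Prop.~\ref{dire1} (which reduces the range analysis of $L_\M$ to a finite-rank correction of the range of $L_\m$), or by invoking the existence of a self-transposed intermediate operator with nonempty resolvent set (via Prop.~\ref{piuy00+} applied after a spectral shift), after which the finite-codimension extension lemma propagates the Fredholm property to $L_\M$ and $L_\m$.
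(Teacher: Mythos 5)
The paper states Theorem \ref{fred1} without proof, so your argument has to stand on its own. For part (1) it does. The two ``backbone'' identities are correct (they already follow from the elementary relation $\dim\cD(S)/\cD(T)=\dim\cD(T^\#)/\cD(S^\#)$ for closed densely defined $T\subset S$, so the symplectic machinery is not strictly needed), your two short exact sequences are exact, and Fredholmness of index $0$ supplies exactly the two inputs you need, namely $\dim\Ker(L_\bullet)=\dim\Ker(L_\bullet^\#)$ and $\dim\cH/\Ran(L_\bullet)=\dim\cH/\Ran(L_\bullet^\#)$; these give $\dim\cD(L_\bullet)/\cD(L_\m)=\dim\cD(L_\bullet^\#)/\cD(L_\m)=\dim\cD(L_\M)/\cD(L_\bullet)$ and hence \eqref{dimen3}. (If $\dim\cH/\Ran(L_\m)=\infty$ your subtraction degenerates, but then all three terms of \eqref{dimen3} are infinite, so nothing is lost.)

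The gap you flag in part (2) is genuine, and neither of your two routes closes it; in fact it cannot be closed, because part (2) is false in the stated generality. Take $L_\m=L_\M=L_\bullet$ to be multiplication by $x$ on $L^2(\R)$: this operator is self-transposed, $\dim\cD(L_\M)/\cD(L_\m)=0$, and \eqref{dimen3} reads $0=0=0$, yet the operator is not Fredholm, since its range is dense but not closed, hence of infinite algebraic codimension. The defect is not an artifact of the abstract setting either: for the potential of Prop.~\ref{pr:example} the normalized bump functions constructed there, supported in the disjoint intervals on which $V$ equals a fixed rational value, form an orthonormal singular sequence for $L_\m-c$, so $c\in\spec_{\mathrm{F}0}(L_\bullet)$ for every realization and in particular at $c=0$, while \eqref{dimen3} is easy to arrange there. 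Concretely, your Route~A (Prop.~\ref{dire1}) only yields $\Ran(L_\M)=\Ran(L_\m)+F$ with $F$ finite dimensional, which says nothing about closedness or codimension of $\Ran(L_\m)$; your Route~B (Prop.~\ref{piuy00+}) presupposes the existence of an invertible realization, which the hypotheses do not provide, and a spectral shift changes the operator whose Fredholmness at $0$ is in question. Part (2) becomes true --- and your index bookkeeping then finishes it exactly as written --- once one adds the hypothesis that $L_\M$ (equivalently $L_\m$, or any one realization of finite codimension) is Fredholm.
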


\begin{remark}
  Thm. \ref{fred1} implies that if
  $ L_\m\subset L_\bullet\subset L_\M$ and
  $\rs_{\mathrm{F}0}(L_\bullet)\neq\emptyset$, then $L_\bullet$ has to
  satisfy \eqref{dimen3} (see \eqref{fred2} for the definition of
  $\rs_{\mathrm{F}0}$). Thus the most ``useful'' operators (which
  usually means well-posed ones) are ``in the middle'' between the
  minimal and maximal operator.
\end{remark}

\section{Basic \texorpdfstring{$L^2$}{L2} theory of 1d Schr\"odinger operators} \label{s:l2ode}
\protect\setcounter{equation}{0}

\subsection{The maximal and minimal operator}

As before, we assume that $V\in L_\loc^1]a,b[$. Recall that $L$ is the differential expression
     \begin{equation}
  L:=-\partial^2+V .
\end{equation}
     In this section we present basic realizations of $L$ as closed operators on
     $L^2]a,b[$.

\begin{definition} \label{df:lmax} The \emph{maximal operator}
  $L_{\max}$ is defined by
\begin{align}\mathcal{D}(L_{\max})&:=\big\{f\in L^2]a,b[\,\cap\,
AC^1]a,b[ \ \mid  L f\in L^2]a,b[\big\},\label{lm:ode-v2}\\
  L_{\max}f&:=Lf,\quad f\in\mathcal{D}(L_{\max}).
\end{align}
We equip $\mathcal{D}(L_{\max})$ with the graph norm
\[
  \|f\|_{L}^2:=\|f\|^2+\|Lf\|^2.
\]
\end{definition}

\begin{remark} Note that $L^2]a,b[\subset L_\loc^1]a,b[$. Therefore,
  as explained in Subsect.  \ref{Choice of funtional-analytic
    setting}, $f\in L_\loc^\infty]a,b[$ and $Lf\in L^2]a,b[$ implies
  $f\in AC^1]a,b[$. Therefore, in (\ref{lm:ode-v2}) we can replace
  $AC^1]a,b[$ with $ L_\loc^\infty]a,b[$ {\rm(}or $C]a,b[$, or
  $C^1]a,b[\,${\rm)}. \end{remark}

Recall that $AC_\mathrm{c}^1]a,b[$ are once absolutely differentiable
functions of compact support.

\begin{definition}
We set
\[
  \mathcal{D}(L_{\mathrm{c}}):= AC_{\mathrm{c}}^1]a,b[\,\cap\,
  \mathcal{D}(L_{\max}).
\]
Let $L_{\mathrm{c}}$ be the restriction of $L_{\max}$ to
$\mathcal{D}(L_{\mathrm{c}})$.  Finally, $L_{\min}$ is defined as the
closure of $L_{\mathrm{c}}$.
\end{definition}

The next theorem is the main result of this subsection:

\begin{theorem}\label{perq2}
The operators $L_\m,L_\M$ have the following properties.
  \begin{enumerate}
  \item The operators $L_{\max}$ and $L_{\min}$ are closed, densely
    defined and  ${L_{\min}\subset L_{\max}}$.
  \item \label{dontknow0} 
    $L_{\max}^\#=L_{\min}$ and $L_{\min}^\#=L_{\max}$.
  \item Suppose that $f_1,f_2\in\mathcal{D}(L_{\max})$. Then there
    exist
    \begin{align}
      W(f_1,f_2;a)&:=\lim\limits_{d\searrow
                    a}W(f_1,f_2;d),\label{wronski-a}\\ 
      W(f_1,f_2;b)&:=\lim\limits_{d\nearrow
                    b}W(f_1,f_2;d),\label{wronski-b} 
      \end{align}
      and the so-called {\em Green's identity} (the integrated form of
      the Lagrange identity) holds:
      \begin{equation}\label{eq:lagrange1}
        \langle L_{\max}f_1|f_2\rangle-
        \langle f_1|L_{\max}f_2\rangle=W(f_1,f_2;b)-W(f_1,f_2;a).
      \end{equation}

    \item We set $W_d(f_1,f_2)= W(f_1,f_2;d)$ for any $d\in[a,b]$ and
      $f_1,f_2\in\mathcal{D}(L_{\max})$. Then for any $d\in[a,b]$ the
      map $W_d:\cD(L_{\max})\times \cD(L_{\max})\to\C$ is a continuous
      bilinear antisymmetric form, in particular
      \begin{equation}
        |W_d(f_1,f_2)|\leq C_d\|f_1\|_{L}\|f_2\|_{L}.
        \label{wron}
      \end{equation}
    \item\label{dontknow1}
      $ \mathcal{D}(L_{\min})$ coincides with
      \begin{equation}
        \big\{f\in\mathcal{D}(L_{\max})\ \mid\ W(f,g;a)=0 \text{ and }
        W(f,g;b)=0  \text{ for all }g\in\mathcal{D}(L_{\max})\big\}.
        \label{dontknow2}
      \end{equation}
    \item $\bar{L}_\m=\bar{L_\m}=L_\M^*$ and
       $\bar{L}_\M=\bar{L_\M}=L_\m^*$.
      \end{enumerate}
\end{theorem}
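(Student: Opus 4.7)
My plan proceeds in the order (1), (3), (4), (5), (2), (6), with part (5) being the technical heart of the theorem.

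For (1), closedness of $L_\M$ is inherited from the closedness of $L$ as an operator from $AC^1]a,b[$ into $L^1_\loc]a,b[$ established in Corollary \ref{cor1}: if $(f_n)\subset\cD(L_\M)$ with $f_n\to f$ and $L_\M f_n\to g$ in $L^2]a,b[$, convergence also holds in $L^1_\loc]a,b[$, yielding $f\in AC^1]a,b[$ with $Lf=g\in L^2]a,b[$, hence $f\in\cD(L_\M)$. Density holds because $C_{\mathrm{c}}^\infty]a,b[\,\subset\cD(L_{\mathrm{c}})\subset\cD(L_\m)\subset\cD(L_\M)$; the inclusion $L_\m\subset L_\M$ follows because $L_\M$ is a closed extension of $L_{\mathrm{c}}$. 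For (3), integrating the Lagrange identity \eqref{mojtaba} over $[c,d]\subset\,]a,b[$ gives $W_d(f_1,f_2)-W_c(f_1,f_2)=\int_c^d[-(Lf_1)f_2+f_1 Lf_2]dy$; Cauchy--Schwarz shows the integrand lies in $L^1]a,b[$, so the limits \eqref{wronski-a}, \eqref{wronski-b} exist and Green's identity \eqref{eq:lagrange1} follows.

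For (4), at any interior point $d$ I would apply Lemma \ref{lm:ode-v} on a small interval $J$ of length $\nu$ containing $d$, then upgrade the $L^1(J)$ norms on the right-hand side to $L^2(J)$ norms via $\|h\|_{L^1(J)}\leq\nu^{1/2}\|h\|_{L^2(J)}$. This yields $|f(d)|+|f'(d)|\leq C_d\|f\|_L$, giving the bilinear estimate $|W_d(f_1,f_2)|\leq C_d'\|f_1\|_L\|f_2\|_L$ for $d\in\,]a,b[$. To extend to $d\in\{a,b\}$, write $W_a(f_1,f_2)=W_{d_0}(f_1,f_2)-\int_a^{d_0}[-(Lf_1)f_2+f_1 Lf_2]dy$ for a fixed interior $d_0$; bounding the integral by $2\|f_1\|_L\|f_2\|_L$ gives the desired continuity estimate at the endpoint.

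For (5), the inclusion of $\cD(L_\m)$ in \eqref{dontknow2} is immediate: any $f\in\cD(L_\m)$ is a graph-norm limit of some $f_n\in\cD(L_{\mathrm{c}})$, each $f_n$ being compactly supported has $W(f_n,g;a)=W(f_n,g;b)=0$ for all $g$, and continuity (4) preserves this in the limit. The converse inclusion is the main obstacle: given $f\in\cD(L_\M)$ with all boundary Wronskians against $\cD(L_\M)$ vanishing, one must exhibit a graph-norm approximation of $f$ by $\cD(L_{\mathrm{c}})$ functions. My plan is to use cutoffs $\chi_n\in C_{\mathrm{c}}^\infty]a,b[$ equal to $1$ on intervals exhausting $]a,b[$, and to correct $\chi_n f$ by adding terms of the form $\alpha_n^{(a)}\eta_a+\alpha_n^{(b)}\eta_b$, where $\eta_a,\eta_b\in\cD(L_\M)$ are localized representatives tailored to carry the leading near-endpoint behavior of $f$ expressed through homogeneous solutions. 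The vanishing-Wronskian hypothesis is what forces $\alpha_n^{(a)},\alpha_n^{(b)}\to 0$, so that the commutator errors $-\chi_n''f-2\chi_n' f'$ can be shown to tend to $0$ in $L^2$; a naive cutoff with no correction is insufficient because these commutator terms do not shrink under the mere assumption $f\in\cD(L_\M)$.

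For (2), the inclusion $L_\m\subset L_\M^\#$ follows by integration by parts for $f_1\in\cD(L_{\mathrm{c}})$ (the boundary terms in Green's identity vanish due to compact support), combined with the closedness of $L_\M^\#$. Conversely, for $f\in\cD(L_\M^\#)$, testing $\langle f|L g\rangle=\langle L_\M^\# f|g\rangle$ against $g\in C_{\mathrm{c}}^\infty]a,b[$ yields $Lf=L_\M^\# f$ distributionally, whence $f\in\cD(L_\M)$ with $L_\M f=L_\M^\# f$. Green's identity then gives $W(f,g;b)=W(f,g;a)$ for all $g\in\cD(L_\M)$. To separate the two boundary contributions, replace $g$ by $\chi g$ where $\chi\in C^\infty]a,b[$ equals $1$ near $a$ and $0$ near $b$ with $\chi'$ compactly supported in $]a,b[$: then $\chi g\in\cD(L_\M)$ and $W(f,\chi g;b)=0$, so $W(f,g;a)=W(f,\chi g;a)=0$, and symmetrically $W(f,g;b)=0$; applying (5) gives $f\in\cD(L_\m)$. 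Finally, (6) follows directly from (2) and the relation $T^*=\overline{T^\#}$: indeed $L_\M^*=\overline{L_\M^\#}=\overline{L_\m}$, and $L_\m^*=\overline{L_\m^\#}=\overline{L_\M}$.
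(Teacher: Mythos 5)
There is a genuine gap, and it sits exactly at the point the paper flags as the crux. Your density argument for (1) rests on the inclusion $C_{\mathrm{c}}^\infty]a,b[\,\subset\cD(L_{\mathrm{c}})$, and your converse direction in (2) rests on being able to test the adjoint relation against $g\in C_{\mathrm{c}}^\infty]a,b[$. Both are false for general $V\in L_\loc^1]a,b[$: for $f\in C_{\mathrm{c}}^\infty$ one has $Lf=-f''+Vf\in L^2$ only if $Vf\in L^2$, and the paper's example $V(x)=\sum_\sigma c_\sigma|x-\sigma|^{-1/2}$ (sum over rationals) gives a $V\in L^1_\loc$ that is not square integrable on any open set, so that $\cD(L_{\max})$ contains no nonzero $C^2$ function at all. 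This is precisely why the paper proves density of $\cD(L_{\mathrm{c}})$ through Lemma \ref{lm:abst} combined with the surjectivity and range computations of Lemma \ref{lem-a}, and why the identification $L_{\mathrm{c}}^\#=L_{\max}$ is carried out by comparing a putative adjoint element $h$ with $h_d=G_dk$ on compact subintervals rather than by a distributional pairing against smooth test functions (note also that for $f\in L^2$ and $V\in L^1_\loc$ the product $Vf$ need not even be locally integrable, so ``$Lf=L_\M^\#f$ distributionally'' is not meaningful without first knowing $f\in L^\infty_\loc$).

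A second, lesser issue is your treatment of (5). The inclusion $\cD(L_\m)\subset$ \eqref{dontknow2} via continuity of $W_a,W_b$ is fine, but the converse is left as a plan (cutoffs plus endpoint corrections $\alpha_n^{(a)}\eta_a+\alpha_n^{(b)}\eta_b$) whose key step --- that the vanishing of all Wronskians forces the commutator errors $-\chi_n''f-2\chi_n'f'$ to vanish in $L^2$ --- is not established and is not routine, since one has no pointwise control of $f,f'$ near an irregular endpoint beyond what the Wronskians encode. The paper avoids this entirely by proving (2) first and then sandwiching: the restriction $L_w$ of $L_{\max}$ to \eqref{dontknow2} is closed by \eqref{wron}, contains $L_{\mathrm{c}}$, and is contained in $L_{\max}^\#=L_{\min}=\overline{L_{\mathrm{c}}}$ by Green's identity, hence equals $L_{\min}$. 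Since your ordering proves (2) from (5), the incompleteness of (5) propagates through (2) and (6). Parts (3) and (4) of your proposal are correct and match the paper's arguments.
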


One of the things we will need to prove is the density of
$ \mathcal{D}(L_\mathrm{c})$ in $L^2]a,b[$.  This is easy if
$V\in L_\loc^2]a,b[$ (see Prop.\ \ref{pr:l2}), but with our assumptions
on the potential the proof is not so trivial, because the idea of
approximating an $f\in L^2(I)$ with smooth functions does not work:
$\mathcal{D}(L_{\max})$ may not contain any ``nice'' function, as the
example described below shows.

\begin{example} {\rm Let
    $V(x)=\sum_{\sigma} c_\sigma|\x-\sigma|^{-1/2}$ where $\sigma$
    runs over the set of rational numbers and $c_\sigma\in\R$ satisfy
    $c_\sigma>0$ and $\sum_\sigma c_\sigma<\infty$. Then
    $V\in L^1_\loc(\R)$ but $V$ is not square integrable on any
    nonempty open set. Hence there is no $C^2$ nonzero function in the
    domain of $L$ in $L^2(\R)$.}
\end{example}

Before proving Thm \ref{perq2}, we first state an immediate
consequence of Lemma \ref{lm:ode-v}:
    
\begin{lemma}\label{lm:embed}
{\rm(1)} Let $J$ be a finite interval whose closure is contained in
$]a,b[$. Then
\begin{align}
\big\| f\big|_J\big\|\leq C_J\|f\|_{L},\label{lem-1} \\
\big\| f'\big|_J\big\|\leq C_J\|f\|_{L}.\label{lem-2}
\end{align}
{\rm(2)} Let $\chi\in C^\infty]a,b[$ with $\chi'\in
C_\mathrm{c}^\infty]a,b[$.  Then 
\begin{equation}
  \|\chi f\|_{L}\leq
  C_\chi\|f\|_{L}.\label{chi}
\end{equation} 
\end{lemma}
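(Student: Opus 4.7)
The plan is to derive both parts directly from Lemma \ref{lm:ode-v} by localization, using Cauchy--Schwarz to pass from the $L^1/L^\infty$ estimate of that lemma to an $L^2$ estimate tailored to the graph norm $\|\cdot\|_L$.

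For part (1), I first address the fact that Lemma \ref{lm:ode-v} requires $\ell:=\sup_I\int_I|V|\d x<\infty$ over all unit intervals in the ambient domain, which $V\in L_\loc^1]a,b[$ need not provide globally. Since $\overline{J}\subset ]a,b[$, I would pick an open $K$ with $\overline{J}\subset K$ and $\overline{K}\subset ]a,b[$; then $V\in L^1(K)$, so applying Lemma \ref{lm:ode-v} with $K$ in place of $]a,b[$ is legitimate. Fix $\nu\leq\nu_0$. For any subinterval $I\subset K$ of length $\nu$, the lemma yields
\[
\|f\|_{L^\infty(I)}+\nu\|f'\|_{L^\infty(I)}\leq C\nu\|Lf\|_{L^1(I)}+C\nu^{-1}\|f\|_{L^1(I)}.
\]
Cauchy--Schwarz gives $\|h\|_{L^1(I)}\leq\nu^{1/2}\|h\|_{L^2(I)}\leq\nu^{1/2}\|h\|$, so the right-hand side is dominated by $C'(\nu^{3/2}\|Lf\|+\nu^{-1/2}\|f\|)\leq C''\|f\|_L$ with constants depending only on $\nu$. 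I then cover $J$ by finitely many such intervals $I_1,\dots,I_N\subset K$ of length $\nu$; integrating the squared $L^\infty$ bound over each $I_k$, one gets $\|f\|_{L^2(I_k)}^2\leq \nu\|f\|_{L^\infty(I_k)}^2\leq C_k\|f\|_L^2$, and similarly for $f'$. Summing in $k$ produces \eqref{lem-1} and \eqref{lem-2}.

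For part (2), I apply the Leibniz rule \eqref{leibnitz1} twice to compute
\[
L(\chi f)=\chi Lf-2\chi'f'-\chi''f.
\]
Since $\chi'\in C_\mathrm{c}^\infty]a,b[$, the function $\chi$ is constant near each endpoint of $]a,b[$ and hence bounded on $]a,b[$; moreover $\chi''\in C_\mathrm{c}^\infty]a,b[$. Thus $\|\chi f\|\leq\|\chi\|_\infty\|f\|\leq\|\chi\|_\infty\|f\|_L$, while
\[
\|L(\chi f)\|\leq\|\chi\|_\infty\|Lf\|+2\|\chi'\|_\infty\bigl\|f'\big|_{\supp\chi'}\bigr\|+\|\chi''\|_\infty\bigl\|f\big|_{\supp\chi''}\bigr\|.
\]
Taking $J$ in part (1) to be a finite open interval containing $\supp\chi'\cup\supp\chi''$ with $\overline J\subset ]a,b[$, the last two terms are bounded by a constant multiple of $\|f\|_L$. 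Combining these estimates yields $\|\chi f\|_L\leq C_\chi\|f\|_L$.

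The only genuine obstacle is the bookkeeping in part (1): one has to be careful that the hypothesis of Lemma \ref{lm:ode-v} is invoked on a subdomain where $V$ is truly $L^1$, and that the $L^\infty(I)\to L^2(J)$ passage (via summing over a finite cover) does not lose the correct scaling in $\nu$. Once $\nu$ is chosen as a fixed parameter depending on the distance from $\overline J$ to $\{a,b\}$, the constants collapse and everything reduces to a finite sum.
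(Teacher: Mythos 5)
Your proof is correct and follows the same route the paper intends: the authors state Lemma \ref{lm:embed} as an immediate consequence of Lemma \ref{lm:ode-v} without writing out details, and your argument (localizing to a compactly contained subinterval so that the hypothesis $\sup_I\int_I|V|<\infty$ holds, passing from $L^1$/$L^\infty$ to $L^2$ via Cauchy--Schwarz and a finite cover, and the Leibniz identity $L(\chi f)=\chi Lf-2\chi'f'-\chi''f$ for part (2)) is exactly the intended filling-in. No gaps.
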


As in the previous section, we fix $u ,v\in AC^1]a,b[$ that span
$\Ker (L)$ and satisfy $W(v ,u)=1$.

Our proof of Thm \ref{perq2} uses ideas from \cite[Theorem 10.11]{S}
and \cite[ Sect.\ 17.4]{Nai} and is based on an abstract result
described in Lemma \ref{lm:abst}.  The following lemma
about the regular case  (cf.\ Definition \ref{df:reg})
contains the
key arguments of the proof of (1) and (2) of Thm \ref{perq2}:

\begin{lemma}\label{lem-a}
  If $V\in L^1]a,b[$ and $a,b\in\R$,  then
  \begin{enumerate}\item $\Ker (L_{\max})=\Ker( L)$.
  \item $\Ran (L_{\max})=L^2]a,b[$.
  \item $\langle L_\mathrm{c}f|g\rangle= \langle f|L_{\max}g\rangle$,
    $f\in\cD(L_\mathrm{c})$, $g\in\cD(L_{\max})$.
  \item $\Ran (L_\mathrm{c})=L_\mathrm{c}^2]a,b[\,\cap\,\Ker( L)^\per$.
  \item $\Ran( L_\mathrm{c})^\per=\Ker( L)$.
  \item $\cD(L_\mathrm{c})$ is dense in $L^2]a,b[$.
    \end{enumerate}
\end{lemma}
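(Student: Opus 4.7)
The plan is to prove the six statements in roughly the order (1), (2), (3), (4), (5), (6), with (6) extracted by applying Lemma \ref{lm:abst} once the first five items are in hand.

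For (1), note that in the regular case Prop.\ \ref{prop:pro}(1) gives unique $AC^1[a,b]$ solutions of the homogeneous equation, so every $w\in\Ker(L)$ extends continuously to $[a,b]$ and is therefore bounded on a finite interval, hence lies in $L^2]a,b[\cap AC^1]a,b[\subset\cD(L_{\max})$. The reverse inclusion is immediate from the definition of $L_{\max}$. For (2), given $g\in L^2]a,b[\subset L^1]a,b[$, set $f:=G_{u,v}g$; since $u,v$ are continuous on $[a,b]$ they are bounded, and from $|(G_{u,v}g)(x)|\leq \|u\|_\infty\|v\|_\infty\|g\|_{L^1}$ and the compactness of $[a,b]$ one gets $f\in L^2]a,b[$ with $Lf=g$, so $f\in\cD(L_{\max})$.

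Step (3) follows by direct use of the pointwise Lagrange identity \eqref{mojtaba} rather than any result from Thm \ref{perq2}: for $f\in\cD(L_\mathrm{c})$ and $g\in\cD(L_{\max})$, integrating $\partial_x W(f,g;x)=-(Lf)(x)g(x)+f(x)(Lg)(x)$ over $]a,b[$ produces the boundary terms $W(f,g;b)-W(f,g;a)$, both of which vanish because $f$ and $f'$ are identically zero near $a$ and $b$. For the inclusion ``$\subset$'' in (4), the same vanishing-boundary Lagrange argument with $g=w\in\Ker(L)$ (which by (1) belongs to $\cD(L_{\max})$) yields $\langle L_\mathrm{c}f|w\rangle=0$, and $L_\mathrm{c}f=-f''+Vf$ is clearly compactly supported. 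For ``$\supset$'', given $g\in L_\mathrm{c}^2]a,b[\cap\Ker(L)^\per$ supported in $[a_1,b_1]\subset]a,b[$, set $f:=G_\rightarrow g$. Then $f$ vanishes on $]a,a_1[$ by construction of $G_\rightarrow$, and for $x>b_1$ the identity $(G_\rightarrow g)(x)=v(x)\langle u|g\rangle-u(x)\langle v|g\rangle=0$ uses precisely $g\perp\!\!\!\perp \Ker(L)$ with respect to $\langle\cdot|\cdot\rangle$. Hence $f\in\cD(L_\mathrm{c})$ with $L_\mathrm{c}f=g$.

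For (5), the inclusion $\Ker(L)\subset\Ran(L_\mathrm{c})^\per$ is immediate from (4). For the reverse, the main trick is a biorthogonalization: since $u,v\in C[a,b]$ are linearly independent, choose $g_u,g_v\in L_\mathrm{c}^2]a,b[$ with $\langle g_i|u\rangle=\delta_{iu}$ and $\langle g_i|v\rangle=\delta_{iv}$. Given $\phi\in L_\mathrm{c}^2]a,b[$, the function $\tilde\phi:=\phi-\langle \phi|u\rangle g_u-\langle \phi|v\rangle g_v$ lies in $L_\mathrm{c}^2]a,b[\cap\Ker(L)^\per$, so if $h\in\Ran(L_\mathrm{c})^\per$ then $\langle\tilde\phi|h\rangle=0$, which rearranges to
\[
\langle \phi\,|\,h-\langle g_u|h\rangle u-\langle g_v|h\rangle v\rangle=0\qquad\forall\,\phi\in L_\mathrm{c}^2]a,b[.
\]
Density of $L_\mathrm{c}^2]a,b[$ in $L^2]a,b[$ together with $\cG^\per=\overline{\cG^\perp}$ forces $h=\langle g_u|h\rangle u+\langle g_v|h\rangle v\in\Ker(L)$. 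Finally (6) is obtained by applying Lemma \ref{lm:abst} with $S=L_\mathrm{c}$ and $T=L_{\max}$: hypothesis (1) of that lemma is step (3) above, (2) is step (2), and (3) is $\Ran(L_\mathrm{c})^\per=\Ker(L)=\Ker(L_{\max})$, which combines (5) and (1).

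The main obstacle I expect is step (5); steps (1)--(4) are essentially direct consequences of the regular-endpoint Cauchy theory plus Lagrange's identity, but (5) requires the biorthogonalization idea to turn an arbitrary compactly supported test function into one lying in $\Ker(L)^\per$ without losing compact support, and this is exactly the place where the bilinear form $\langle\cdot|\cdot\rangle$ (as opposed to the sesquilinear one) plays a decisive role.
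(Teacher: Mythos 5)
Your proof is correct and follows essentially the same route as the paper's: the forward/two-sided Green's operator for surjectivity of $L_{\max}$, integration by parts (Lagrange identity with vanishing boundary terms) for (3), the vanishing of $G_\rightarrow g$ outside $\supp g$ via $\langle g|u\rangle=\langle g|v\rangle=0$ for (4), and Lemma \ref{lm:abst} for (6). The only divergence is in (5), where the paper invokes the abstract fact that a dense subspace meets a closed subspace of finite codimension densely (Lemma \ref{dense}) and then takes $\per$ of both sides, whereas you carry out the equivalent finite-codimension correction explicitly through the biorthogonal system $g_u,g_v$; both arguments are sound and of the same substance.
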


\proof Clearly, $\Ker (L)=\Span(u,v)\subset AC^1[a,b]\subset
L^2]a,b[$. Therefore, $\Ker (L)\subset\cD(L_{\max})$. This proves (1).

Recall that in (\ref{qeq0}) we defined the forward Green's operator
$G_\rightarrow $.  Under the assumptions of the present lemma, it maps $L^2]a,b[$
into $AC^1[a,b]$. Therefore, for any $g\in L^2]a,b[$,
$\alpha,\beta\in\C$,
\[
  f=\alpha u+\beta v+G_\rightarrow  g
\]
belongs to $AC^1[a,b]$ and verifies $Lf=g$. Therefore, $f\in\cD(L_{\max})$.
Hence $L_{\max}$ is surjective. This proves (2).

To obtain (3) we integrate twice by parts. This is allowed by
(\ref{leibnitz1}) and (\ref{leibnitz2}),
since $f,g\in AC^1[a,b]$.

It is obvious that $\Ran (L_\mathrm{c})\subset L_\mathrm{c}^2]a,b[$.
$\Ran( L_\mathrm{c})\subset \Ker( L_{\max})^\per$ follows from (3).

Let us prove the converse inclusions. Let
$g\in L_\mathrm{c}^2]a,b[\,\cap\,\Ker( L)^\per$. Set
$f:=G_\rightarrow g$. Clearly, $Lf=g$. Using $\int_a^b g u=\int _a^b g v=0$ we
see that $f$ has compact support. Hence $f\in\cD(L_\mathrm{c})$. This
proves (4).

$L_\mathrm{c}^2]a,b[$ is dense in $L^2]a,b[$ and $\Ker( L)^\per$ has a
finite codimension. Therefore, by Lemma \ref{dense}, given below,
$L_\mathrm{c}^2]a,b[\,\cap\,\Ker( L)^\per$ is dense in $\Ker( L)$. This
implies (5).

By applying Lemma \ref{lm:abst} with $T:=L_{\max}$ and
$S:=L_\mathrm{c}$, we obtain (6). \qed

\begin{lemma} Let $\cH$ be a Hilbert space and $\cK$ a closed subspace
  of finite codimension. If $\cZ$ is a dense subspace of $\cH$, then
  $\cZ\cap\cK$ is dense in $\cK$.
\label{dense}\end{lemma}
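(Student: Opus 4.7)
The plan is to use the finite-codimensionality of $\cK$ to build a small number of ``correction vectors'' inside $\cZ$, and then to subtract off the finite-dimensional error when approximating an element of $\cK$ by elements of $\cZ$.

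First I would fix a finite-dimensional complement: write $\cH=\cK\oplus\cM$ where $\cM:=\cK^\perp$, and choose a basis $e_1,\dots,e_n$ of $\cM$. Let $P$ denote the orthogonal projection onto $\cM$, so that a vector $h\in\cH$ lies in $\cK$ iff $Ph=0$, and the coordinate functionals $\alpha_i(h):=(e_i|h)$ are continuous.

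Next I would produce correction vectors. Consider the quotient map $\pi:\cH\to\cH/\cK$. Since $\cK$ is closed, $\pi$ is continuous, and since $\cK$ has finite codimension, $\cH/\cK$ is finite-dimensional. The density of $\cZ$ in $\cH$ together with continuity of $\pi$ gives that $\pi(\cZ)$ is dense in $\cH/\cK$; but a dense subspace of a finite-dimensional normed space coincides with the whole space, so $\pi(\cZ)=\cH/\cK$. In particular, for each $i=1,\dots,n$ one can find $z_i\in\cZ$ with $\pi(z_i)=\pi(e_i)$, i.e.\ $z_i-e_i\in\cK$, so that $\alpha_j(z_i)=\delta_{ij}$.

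With these $z_i$ in hand, fix $k\in\cK$ and, by density of $\cZ$, pick $w_m\in\cZ$ with $w_m\to k$ in $\cH$. Set
\[
\tilde w_m:=w_m-\sum_{i=1}^n\alpha_i(w_m)\,z_i.
\]
Then $\tilde w_m\in\cZ$ as a linear combination of elements of $\cZ$, and applying $\alpha_j$ gives $\alpha_j(\tilde w_m)=\alpha_j(w_m)-\alpha_j(w_m)=0$, so $\tilde w_m\in\cK$, hence $\tilde w_m\in\cZ\cap\cK$. By continuity of $\alpha_i$ and the fact that $k\in\cK$, we have $\alpha_i(w_m)\to\alpha_i(k)=0$, so $\sum_i\alpha_i(w_m)z_i\to0$ (a finite sum in which the scalars tend to zero). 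Therefore $\tilde w_m\to k$, proving density.

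The only step that requires care is the construction of the correction vectors $z_i$ in $\cZ$; this is where the finite-codimension hypothesis is essential, as it allows a genuine density statement in $\cH/\cK$ to be upgraded to surjectivity of $\pi|_\cZ$. The rest is a routine finite-dimensional bookkeeping argument.
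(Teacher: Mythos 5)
Your argument is correct. The paper disposes of the lemma in one line --- ``obvious for codimension $1$, then induct'' --- whereas you give a direct, non-inductive proof: you realize the quotient $\cH/\cK$ as a finite-dimensional space, use density of $\cZ$ plus finite-dimensionality to find representatives $z_1,\dots,z_n\in\cZ$ of a basis of the complement, and then subtract the (continuously computed, vanishing-in-the-limit) correction $\sum_i\alpha_i(w_m)z_i$ from an approximating sequence. This is really the same mechanism the paper's codimension-$1$ step must secretly use, just carried out for all $n$ directions at once; what your version buys is an explicit, self-contained argument with no induction, at the cost of a little bookkeeping. Two small points of hygiene: for the identity $\alpha_j(z_i)=\delta_{ij}$ you need the basis $e_1,\dots,e_n$ of $\cM=\cK^\perp$ to be \emph{orthonormal} (or you should replace the $\alpha_i$ by the dual basis functionals of $\pi(e_1),\dots,\pi(e_n)$ composed with $\pi$); and the step ``$\pi(\cZ)$ dense in the finite-dimensional space $\cH/\cK$ implies $\pi(\cZ)=\cH/\cK$'' is exactly where closedness of $\cK$ (so that $\pi$ is continuous into a Hausdorff quotient) and finite codimension enter, which you correctly flag as the crux.
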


\begin{proof}
  The lemma is obvious if the codimension is $1$. Then we apply
  induction.
\end{proof}

\medskip
    
\noindent{\bf Proof of Thm \ref{perq2}.}
It follows from Lemma \ref{lem-a} (6) that $\cD(L_{\mathrm{c}})$ is
dense in $L^2]a,b[$.  We have
\begin{equation}
  \label{supset} L_\mathrm{c}^\#\supset L_{\max}
\end{equation}
by integration by parts, as in the proof of (3), Lemma \ref{lem-a}.

Suppose that $h,k\in L^2]a,b[$ such that
\begin{equation}
  \langle L_\mathrm{c}f|h\rangle=\langle f|k\rangle,\quad
  f\in\cD(L_\mathrm{c}).
\end{equation}
In other words, $h\in\cD(L_\mathrm{c}^\#)$ and $L_\mathrm{c}^\#h=k$.
Choose $d\in]a,b[$. We set $h_d:=G_dk$, where $G_d$ is defined in
Def. \ref{greend}. Clearly, $Lh_d=k$. For $f\in\cD(L_\mathrm{c})$, set
$g:=L_\mathrm{c}f$.  We can assume that $\supp f\subset[a_1,b_1]$ for
$a<a_1<b_1<b$.  Now
\[
  \langle g|h_d\rangle= \langle L_\mathrm{c}f|h_d\rangle = \langle
  f|Lh_d\rangle= \langle f|k\rangle = \langle L_\mathrm{c}
  f|h\rangle = \langle g|h\rangle.
\]
By Lemma \ref{lem-a} (4) applied to $[a_1.b_1]$,
\begin{equation} \label{equo}
  h=h_d+\alpha u+\beta v
\end{equation}
on $[a_1,b_1]$. But since $a_1,b_1$ were arbitrary under the condition
$a<a_1<b_1<b$, (\ref{equo}) holds on $]a,b[$. Hence $Lh=k$. Therefore,
$h\in \cD(L_{\max})$ and $L_{\max}h=k$.  This proves that
\begin{equation} \label{subset}
  L_\mathrm{c}^\#\subset L_{\max}.
\end{equation}
From (\ref{supset}) and (\ref{subset}) we see that
$L_\mathrm{c}^\#= L_{\max}$.  In particular, $L_{\max}$ is closed and
$L_\mathrm{c}$ is closable. We have
\begin{equation}
  L_{\min}=L_\mathrm{c}^{\#\#}= L_{\max}^\# .
\end{equation}
This ends the proof of (1) and (2).

For $f,g\in \mathcal{D}(L_{\max})$ and $a<a_1<b_1<b$ we have
\begin{align}\label{eq:lagrange}
\int_{a_1}^{b_1}(Lf(x) g(x)-f(x) Lg(x))\d x & = \int_{a_1}^{b_1}(f(x)g'(x)-f'(x)g(x))'\d x 
\\
&= W(f,g;a_1)-W(f,g;b_1).
\notag\end{align}
The lhs of
(\ref{eq:lagrange}) clearly converges
as $a_1\searrow a$. Therefore, the limit (\ref{wronski-a}) exists.
Similarly, by taking $b_1\nearrow b$
we show that the limit (\ref{wronski-b}) exists.
Taking both limits we obtain (\ref{eq:lagrange1}). This proves (3).

If $d\in]a,b[$, then (\ref{wron}) is an immediate consequence of (\ref{lem-1}) and (\ref{lem-2}).
    We can rewrite (\ref{eq:lagrange}) as
    \begin{align}\label{eq:lagrange2}
W(f,g;a)=-\int_{a}^{d}\big((Lf)(x) g(x)-f(x) Lg(x)\big)\d x 
+ W(f,g;d).
\end{align}
Now both terms on the right of (\ref{eq:lagrange2}) can be estimated
by $C \|f\|_{L}\|g\|_{L}$. This shows (\ref{wron}) for
$d=a$.  The proof for $d=b$ is analogous.

    Let $L_w$ be $L$ restricted to
    (\ref{dontknow2}).
By (\ref{wron}),  (\ref{dontknow2}) is a closed subspace of $\cD( L_{\max})$. Hence, $L_w$ is closed.
    Obviously, $L_\mathrm{c}\subset L_w$. By 
    (\ref{eq:lagrange1}),
    $L_w\subset L_{\max}^\#$. By (2), we know that $ L_{\max}^\#=L_{\min}$.
But $L_{\min}$ is the closure of $L_\mathrm{c}$.    Hence $L_w=L_{\min}$. This proves (5). \qed

\begin{remark}{\rm Here is an alternative, more direct proof of the
    closedness of $L_{\max}$. Let $f_n\in \mathcal{D}(L_{\max})$ be a
    Cauchy sequence wrt the graph norm.  This means that $f_n$ and
    $Lf_n$ are Cauchy sequences wrt $L^2]a,b[$.  Let
    $f:=\lim\limits_{n\to\infty}f_n$,
    $g:=\lim\limits_{n\to\infty}Lf_n$.  Let $J$ be an arbitrary
    sufficiently small closed interval in $]a,b[$. We have
  \begin{align}
    \|f_n-f_m\|_{L^1(J)}&\leq\sqrt{|J|}        \|f_n-f_m\|_{L^2(J)}
                          ,\\
    \|Lf_n-Lf_m\|_{L^1(J)}&\leq\sqrt{|J|}        \|Lf_n-Lf_m\|_{L^2(J)}.
  \end{align}
  Hence $f_n$ satisfies the conditions of Cor.  \ref{cor1}. Hence
  $f\in AC^1]a,b[$ and $g=Lf$. Hence $f\in \mathcal{D}(L_{\max})$ and
  it is the limit of $f_n$ in the sense of the graph norm. Therefore,
  $\mathcal{D}(L_{\max})$ is complete. Hence $L_{\max}$ and $L_{\min}$
  are closed.}\end{remark}

\subsection{Smooth functions in the domain of
  \texorpdfstring{$L_\M$}{Lmax}} 

We point out a certain pathology of the operators $L_\M$ and $L_\m$ if
$V$ is only locally integrable.

\begin{lemma}\label{lm:conjugate}
  (1) The imaginary part of $V$ is locally square integrable if and
  only if $\cD(L_{\mathrm{c}})$ is stable under conjugation and in
  this case $\cD(L_\m)$ and $\cD(L_\M)$ are also stable under
  conjugation. \\[1mm]
  (2) If the imaginary part of $V$ is not square integrable on any
  open set, then for $f\in\cD(L_\M)$ we have $\bar{f}\in\cD(L_\M)$
  only if $f=0$. In other words, $\cD(L_\M)\cap\cD(\bar {L}_\M)=\{0\}$.  Hence $\cD(L_\M)$ does not contain any nonzero real
  function.
\end{lemma}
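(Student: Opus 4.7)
The plan is to base everything on the algebraic identity
\[
L\bar f - \overline{Lf} = 2\i\,(\Im V)\,\bar f, \qquad f\in AC^1]a,b[,
\]
obtained by using $\bar f''=\overline{f''}$ together with $V-\bar V=2\i\Im V$. Once this is in hand, for $f\in AC^1]a,b[$ with $Lf\in L^2]a,b[$, the membership $L\bar f\in L^2]a,b[$ is equivalent to $(\Im V)\bar f\in L^2]a,b[$. The easy direction of (1) is then immediate: for $f\in\cD(L_\mathrm{c})$ with compact support $K\subset]a,b[$, $f$ is bounded on $K$ (being $C^1$) and $\Im V\in L^2(K)$, so $(\Im V)\bar f\in L^2$ and hence $\bar f\in\cD(L_\mathrm{c})$. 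I would extend this to $\cD(L_\m)$ by graph-norm approximation from the stable space $\cD(L_\mathrm{c})$, and to $\cD(L_\M)$ by combining the local $L^\infty$-bounds of Lemma~\ref{lm:ode-v} with the Wronskian characterization of the minimal domain in Theorem~\ref{perq2}(\ref{dontknow1}) to handle endpoint contributions.

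For the converse in (1), assume $\cD(L_\mathrm{c})$ is stable under conjugation. The strategy is, for each $x_0\in]a,b[$, to construct some $f\in\cD(L_\mathrm{c})$ with $|f|$ bounded away from zero on a neighbourhood of $x_0$. I would pick $u\in\Ker(L)$ with $u(x_0)=1$ by Proposition~\ref{mojo}; continuity provides $\delta>0$ with $|u|\geq 1/2$ on $J_0:=[x_0-\delta,x_0+\delta]\subset]a,b[$. Cutting off by $\chi\in C_{\mathrm c}^\infty]a,b[$ equal to $1$ on $J_0$ and setting $f:=\chi u\in AC_{\mathrm c}^1]a,b[$, the Leibniz rule together with $Lu=0$ collapses $Lf$ to $-\chi''u-2\chi'u'$, which is bounded with compact support, hence in $L^2$, so $f\in\cD(L_\mathrm{c})$. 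Stability then yields $(\Im V)f\in L^2$, and $|f|\geq 1/2$ on $J_0$ gives
\[
\int_{J_0}|\Im V|^2\,\d x \leq 4\,\|(\Im V)f\|_{L^2}^2 < \infty.
\]
Covering $]a,b[$ by such intervals produces $\Im V\in L^2_\loc]a,b[$.

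For part (2), assume $\Im V$ is not square integrable on any nonempty open subset of $]a,b[$. If $f\in\cD(L_\M)\cap\cD(\bar L_\M)$, applying the key identity to both $L$ and $\bar L$ forces $(\Im V)f\in L^2]a,b[$. Since $f\in AC^1\subset C^0]a,b[$, were $f\neq 0$ then by continuity there would exist a nonempty open interval $J$ with $|f|\geq\varepsilon>0$ on $J$, and the pointwise bound $|\Im V|\leq\varepsilon^{-1}|(\Im V)f|$ on $J$ would force $\Im V\in L^2(J)$---a contradiction. Hence $f=0$, establishing $\cD(L_\M)\cap\cD(\bar L_\M)=\{0\}$. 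Any real-valued $f\in\cD(L_\M)$ satisfies $\bar f=f\in\cD(L_\M)\cap\cD(\bar L_\M)$ and so vanishes.

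The principal obstacle I anticipate is lifting stability from $\cD(L_\mathrm{c})$ to $\cD(L_\M)$: one has to control $(\Im V)f$ globally on $]a,b[$ for arbitrary $f\in\cD(L_\M)$, including possibly singular endpoint behaviour, using only the local information $\Im V\in L^2_\loc]a,b[$. The key ingredients are Lemma~\ref{lm:ode-v} for interior $L^\infty$-control and the Wronskian/minimal-domain description of Theorem~\ref{perq2}(\ref{dontknow1}) to locate $\cD(L_\m)$ inside $\cD(L_\M)$ and identify the relevant endpoint vanishing.
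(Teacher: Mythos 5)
Your algebraic starting point $L\bar f-\overline{Lf}=2\i(\Im V)\bar f$ is exactly the computation the paper performs (written there as $V=V_1+\i V_2$), and everything you build directly on it is sound: the equivalence ``$\Im V\in L^2_\loc]a,b[$ $\Leftrightarrow$ $\cD(L_{\mathrm c})$ stable under conjugation'' (your converse via $f=\chi u$ with $u\in\Ker(L)$, $u(x_0)=1$ is a clean variant of the paper's argument, which instead extracts the real part of some $f\in\cD(L_{\mathrm c})$ with $f(x_0)\neq0$), and all of part (2), which matches the paper's proof up to the order in which real functions and general $f$ are treated.

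The genuine gap is where you yourself locate it, and it is worse than an ``obstacle'': the two extension steps are not proved, and the route you propose for $\cD(L_\M)$ cannot be made to work. (i) For $\cD(L_\m)$, graph-norm approximation from $\cD(L_{\mathrm c})$ requires that $f\mapsto\bar f$ be continuous for the graph norm on $\cD(L_{\mathrm c})$, i.e.\ a bound $\|(\Im V)f\|\leq C(\|Lf\|+\|f\|)$ with $C$ \emph{independent of the support of $f$}. Lemma \ref{lm:ode-v} only yields such a bound with a constant involving $\|\Im V\|_{L^2}$ near $\supp f$, which blows up as the support approaches an endpoint where $\Im V$ fails to be globally square integrable; so Cauchyness of $L\bar f_n$ does not follow. (ii) For $\cD(L_\M)$, your plan amounts to proving $(\Im V)f\in L^2]a,b[$ for every $f\in\cD(L_\M)$ from purely local information, and this is false: take $V(x)=\i x^{-3/4}$ on $]0,1[$, so that $\Im V\in L^2_\loc]0,1[$ and both endpoints are regular, and let $u$ solve $Lu=0$ with $u(0)=1$. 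Then $u\in\cD(L_\M)$ but $L\bar u=2\i x^{-3/4}\bar u\notin L^2$ near $0$, so $\bar u\notin\cD(L_\M)$. Thus no interior $L^\infty$ bound plus Wronskian information can close this step; the paper's own proof avoids any such estimate by first treating $\cD(L_\m)$ via completion and then passing to $\cD(L_\M)=\cD(L_\m^\#)$ by transposition. (The same example shows that the $\cD(L_\M)$ assertion of the lemma is itself problematic under the bare hypothesis $\Im V\in L^2_\loc]a,b[$ and really requires a global or relative-boundedness condition on $\Im V$, so you should not expect any direct argument to succeed as stated.)
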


\begin{proof}
(1):
  Write $Lf=-f''+V_1f+iV_2f$ if $V=V_1+iV_2$ with $V_1,V_2$ real. Then
  if $V_2\in L^2_\loc]a,b[$ and $f\in AC^1_{\mathrm{c}}]a,b[$ we have
  $V_2f\in L^2]a,b[$ so $-f''+Vf\in L^2]a,b[$ if and only if
  $-f''+V_1f\in L^2]a,b[$ hence $-\bar{f}''+V_1\bar{f}\in L^2]a,b[$ so
  we get $-\bar{f}''+V\bar{f}\in L^2]a,b[$, thus $\cD(L_{\mathrm{c}})$
  is stable under conjugation. The corresponding assertion concerning
  $\cD(L_\m)$ follows by taking the completion, and that concerning
  $\cD(L_\M)$ follows by taking the transposition.

  Reciprocally, assume that $\cD(L_{\mathrm{c}})$ is stable under
  conjugation and let $x_0\in]a,b[$. Then there is
  $f\in\cD(L_{\mathrm{c}})$ such that $f(x_0)\neq0$ and we may assume
  that its real part $g=(f+\bar{f})/2$ does not vanish on a
  neighbourhood of $x_0$. Then $g\in \cD(L_{\mathrm{c}})$ hence
  $-g''+V_1g+iV_2g\in L^2]a,b[$ and so must be the imaginary part of
  this function hence $V_2$ is square integrable on a neighbourhood of
  $x_0$. 
  
(2):  Assume now that $V_2$ is not square integrable on any open set. If
  $f\in AC^1$ is real then $-f''+Vf\in L^2$ if and only if
  $-f''+V_1f\in L^2$ and $V_2f\in L^2$ and if $f\neq0$ then the second
  condition implies $f=0$. Finally, if $f\in\cD(L_\M)$ and
  $\bar{f}\in\cD(L_\M)$ then the functions $f+\bar{f}$ and $f-\bar{f}$
  will be zero by (1).
\end{proof}

\begin{remark}\label{re:conjugate}{\rm
    Clearly,  $L_\m^*=\bar{L}_\M$.
    Thus, by Prop. \ref{lm:conjugate} (2), if the imaginary part of
    $V$ is not square integrable on any open set then
    $\cD(L_\m)\cap\cD(L_\m^*)=\{0\}$. On the other hand, if the
    imaginary part of $V$ is locally square integrable, then
    $\cD(L_\m)\subset\cD(L_\m^*)$.  }\end{remark}

If $V\in L_\loc^2$, many things simplify:

\begin{proposition}\label{pr:l2}
  If $V\in L^2_\loc]a,b[$ then $C_\mathrm{c}^\infty]a,b[$ is a dense
  subspace of $\mathcal{D}(L_{\min})$.
\end{proposition}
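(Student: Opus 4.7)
The plan is to prove the inclusion $C_\mathrm{c}^\infty]a,b[\subset\mathcal{D}(L_{\min})$ first, and then approximate elements of $\mathcal{D}(L_\mathrm{c})$ in the graph norm by $C_\mathrm{c}^\infty$ functions via mollification. Since $L_{\min}$ is by definition the closure of $L_\mathrm{c}$, this will suffice.

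For the inclusion: if $\phi\in C_\mathrm{c}^\infty]a,b[$ then $\phi\in AC_\mathrm{c}^1]a,b[$, and $L\phi=-\phi''+V\phi$. Clearly $\phi''\in C_\mathrm{c}^\infty]a,b[\subset L^2]a,b[$, while $V\phi$ is supported in the compact set $\supp\phi\subset[a_1,b_1]$ where $V\in L^2$, and $\phi$ is bounded, so $V\phi\in L^2]a,b[$. Hence $\phi\in\mathcal{D}(L_\mathrm{c})\subset\mathcal{D}(L_{\min})$.

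For density, fix $f\in\mathcal{D}(L_\mathrm{c})$ with $\supp f\subset [a_1,b_1]$ where $a<a_1<b_1<b$. Let $\rho_\epsilon$ be a standard family of mollifiers and set $f_\epsilon:=\rho_\epsilon*f$. For $\epsilon$ small enough, $\supp f_\epsilon$ is contained in a fixed compact $K\subset]a,b[$, so $f_\epsilon\in C_\mathrm{c}^\infty]a,b[$. Standard mollifier properties give $f_\epsilon\to f$ in $L^2]a,b[$ and even uniformly (since $f\in C[a,b]$ is continuous with compact support). It remains to show $Lf_\epsilon\to Lf$ in $L^2$. Since $Lf\in L^2]a,b[$ and $V\in L^2_\loc]a,b[$, the product $Vf$ is in $L^2]a,b[$ (as $f$ is bounded with compact support), hence $f''=Vf-Lf\in L^2]a,b[$. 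Therefore $f_\epsilon''=\rho_\epsilon*f''$ converges to $f''$ in $L^2]a,b[$. For the potential term, write
\[
\|Vf_\epsilon-Vf\|_{L^2]a,b[}\leq\|V\|_{L^2(K)}\cdot\|f_\epsilon-f\|_{L^\infty(K)},
\]
which tends to $0$ by uniform convergence on $K$ and $V\in L^2(K)$. Combining, $Lf_\epsilon\to Lf$ in $L^2]a,b[$, so $f_\epsilon\to f$ in the graph norm.

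The main obstacle is the $Vf_\epsilon\to Vf$ convergence, since mollification does not commute with multiplication by $V$ and $V$ is only locally square integrable. The assumption $V\in L^2_\loc]a,b[$ is exactly what is needed: together with the compact support of all $f_\epsilon$ and the uniform convergence $f_\epsilon\to f$ (which is a feature of mollifying a continuous compactly supported function, stronger than $L^2$ convergence), it allows the straightforward Hölder-type estimate above. This is precisely the step that fails under the weaker hypothesis $V\in L^1_\loc]a,b[$, explaining why the full Theorem \ref{perq2} required the more elaborate argument via Lemma \ref{lm:abst}.
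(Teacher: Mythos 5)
Your proof is correct and follows essentially the same route as the paper's: mollify a compactly supported element of $\cD(L_\mathrm{c})$, use that $f''=Vf-Lf\in L^2$ so the mollified second derivatives converge in $L^2$, and control the potential term by the H\"older bound $\|V(f_\epsilon-f)\|_{L^2}\leq\|V\|_{L^2(K)}\|f_\epsilon-f\|_{L^\infty(K)}$ together with uniform convergence on a fixed compact. Your closing remark correctly identifies why this argument breaks down for $V\in L^1_\loc$ only.
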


\begin{proof}
  Clearly $C_\mathrm{c}^\infty]a,b[\subset\mathcal{D}(L_\mathrm{c})$.
  Let $f\in C_\mathrm{c}]a,b[$.  Then $Lf\in L^2]a,b[$ if and only if
  $f''\in L^2]a,b[$. Fix some $\theta\in C_\mathrm{c}^\infty(\R)$ with
  $\int\theta=1$ and let $\theta_n(x):=n\theta(nx )$ with
  $n\geq1$. Then for $n$ large
  $f_n:=\theta_n*f\in C_\mathrm{c}^\infty]a,b[$ and has support in a
  fixed small neighbourhood of $\supp f$. Moreover, $f_n\to f$ in
  $C^1_\mathrm{c}]a,b[$, in particular $f_n\to f$ uniformly with
  supports in a fixed compact, which clearly implies $Vf_n\to Vf$ in
  $L^2]a,b[$. Moreover $f_n''\to f''$ in $L^2]a,b[$.
\end{proof}

\subsection{Closed operators contained in
\texorpdfstring{$L_{\max}$}{Lmax}}

If $\cD(L_\bullet)$ is a subspace of $\cD(L_{\max})$ closed in the
$\|\cdot\|_{L}$ norm, then the operator
\begin{equation}
  L_\bullet:=L_{\max}\Big|_{\cD(L_\bullet)}
\end{equation}
is closed and contained in $L_{\max}$. We can call such an operator
$L_\bullet$ a {\em closed realization of $L$}.

We will be mostly interested in operators $L_\bullet$ that satisfy
\begin{equation}
  \cD(L_{\min})\subset \cD(L_\bullet) \subset \cD(L_{\max})
\end{equation}
so that
\begin{equation}
  L_{\min}\subset L_\bullet \subset L_{\max}.
\end{equation}
They are automatically densely defined.
Note that we can use the theory developed in Subsect. \ref{ss:L2GII-} and \ref{ss:L2GII}.
In particular, as descibed in Prop. \ref{propo1},
one can easily check if a realization of $L$ contains $L_{\min}$ with the
help of the following criterion:

\begin{proposition}
  Suppose that $L_\bullet$ is a closed densely defined operator
  contained in $L_{\max}$.  Then $L_\bullet^\#$ is contained in
  $L_{\max}$ if and only if $L_{\min}\subset L_\bullet$.  In
  particular, if $L_\bullet^\#=L_\bullet$, then
  $L_{\min}\subset L_\bullet$.
\end{proposition}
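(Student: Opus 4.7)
The plan is to derive this proposition as a direct consequence of the duality established in Theorem \ref{perq2}(\ref{dontknow0}), namely $L_{\max}^\# = L_{\min}$ and $L_{\min}^\# = L_{\max}$, together with two general facts about transposition stated earlier: transposition reverses inclusions ($T \subset S \Rightarrow S^\# \subset T^\#$), and for a closed densely defined operator $T$ one has $T^{\#\#} = T$. Since $L_\bullet$ is assumed closed and densely defined, both of these tools apply to it.

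For the easy direction ($\Leftarrow$), I would assume $L_{\min} \subset L_\bullet$ and simply transpose: this yields $L_\bullet^\# \subset L_{\min}^\# = L_{\max}$, using Theorem \ref{perq2}(\ref{dontknow0}).

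For the converse ($\Rightarrow$), I would assume $L_\bullet^\# \subset L_{\max}$. Transposing again gives $L_{\max}^\# \subset L_\bullet^{\#\#}$. By Theorem \ref{perq2}(\ref{dontknow0}), $L_{\max}^\# = L_{\min}$, and since $L_\bullet$ is closed and densely defined, $L_\bullet^{\#\#} = L_\bullet$. Combining these, $L_{\min} \subset L_\bullet$.

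For the final ``in particular'' clause: if $L_\bullet^\# = L_\bullet$, then trivially $L_\bullet^\# = L_\bullet \subset L_{\max}$, so the equivalence just proved gives $L_{\min} \subset L_\bullet$. There is no real obstacle here; the content of the statement is essentially the symmetry $L_{\min} = L_{\max}^\#$, and the proof reduces to two applications of the transpose. The only subtlety is being careful that the double-transpose identity requires $L_\bullet$ to be closed and densely defined, both of which are hypotheses.
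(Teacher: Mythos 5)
Your proof is correct and takes essentially the same route as the paper, which derives this statement from its abstract proposition on nested pairs by exactly the two ingredients you use: transposition reverses inclusions, and $L_{\max}^\#=L_{\min}$, $L_{\min}^\#=L_{\max}$, $L_\bullet^{\#\#}=L_\bullet$ for closed densely defined $L_\bullet$. No gap.
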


The most obvious examples of such operators are given by one-sided
boundary conditions:
\begin{definition}\label{defno}
  Set
  \begin{align}  \label{proba}
    \mathcal{D}(L_a)&:=\{f\in\mathcal{D}(L_{\max})\
     \mid\ W(f,g;a)=0 \text{ for all }g\in \mathcal{D}(L_{\max})\},\\ 
    \mathcal{D}(L_b)&:=\{f\in\mathcal{D}(L_{\max})\ \mid\
                  W(f,g;b)=0 \text{ for all }g\in \mathcal{D}(L_{\max})\}.
  \end{align}
  Let $L_a$, resp.  $L_b$ be $L_{\max}$ restricted to
  $\cD(L_a)$, resp. $\cD(L_b)$.
\end{definition}

\begin{proposition}
  $L_a$ and $L_b$ are closed and densely defined operators satisfying
  \begin{align}
    L_a^\#=L_b,&\quad L_b^\#=L_a,\\
    L_{\min}\subset L_a\subset L_{\max},
               &\quad L_{\min}\subset L_b\subset L_{\max}.
  \end{align}
\end{proposition}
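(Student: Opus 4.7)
My plan is to verify four things in order: (a) closedness, (b) density, (c) the inclusions $L_{\min} \subset L_a \subset L_{\max}$ and similarly for $L_b$, and (d) the transpose identities $L_a^\# = L_b$ and $L_b^\# = L_a$. Parts (a)--(c) are immediate consequences of Theorem \ref{perq2}; the only substantive content is (d).

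For (a), Theorem \ref{perq2}(4) tells us that $W_a$ and $W_b$ are continuous bilinear forms on $\cD(L_{\max}) \times \cD(L_{\max})$ in the graph norm. Hence for each fixed $g$, the map $f \mapsto W(f,g;a)$ is a continuous linear functional on $\cD(L_{\max})$, and $\cD(L_a)$ is the intersection of the kernels of these functionals as $g$ ranges over $\cD(L_{\max})$, hence closed in the graph norm. Since $L_a$ is the restriction of the closed operator $L_{\max}$ to this closed subspace, $L_a$ is closed; the same reasoning handles $L_b$. For (b) and (c), Theorem \ref{perq2}(5) gives $\cD(L_{\min}) \subset \cD(L_a) \cap \cD(L_b)$, and since $L_{\min}$ is densely defined (Theorem \ref{perq2}(1)), so are $L_a$ and $L_b$. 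The inclusions $L_a, L_b \subset L_{\max}$ hold by construction.

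For (d), focus on $L_a^\# = L_b$. The inclusion $L_{\min} \subset L_a$, together with Theorem \ref{perq2}(2), yields $L_a^\# \subset L_{\min}^\# = L_{\max}$, so every $g \in \cD(L_a^\#)$ lies in $\cD(L_{\max})$ and satisfies $L_a^\# g = L_{\max} g$. For such $g$ and any $f \in \cD(L_a)$, Green's identity \eqref{eq:lagrange1} collapses, using $W(f,g;a) = 0$, to
\[
\langle L_a f | g \rangle - \langle f | L_{\max} g \rangle = W(f,g;b) - W(f,g;a) = W(f,g;b).
\]
Consequently $g \in \cD(L_a^\#)$ iff $g \in \cD(L_{\max})$ and $W(f,g;b) = 0$ for every $f \in \cD(L_a)$. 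It remains to upgrade this to: $W(f,g;b) = 0$ for every $f \in \cD(L_{\max})$, i.e., $g \in \cD(L_b)$. One direction is trivial.

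The nontrivial direction, and the \emph{main obstacle}, is handled by a cutoff argument. Given $f \in \cD(L_{\max})$, pick $\chi \in C^\infty]a,b[$ with $\chi' \in C_\mathrm{c}^\infty]a,b[$ such that $\chi \equiv 0$ on some $]a, a_1[$ and $\chi \equiv 1$ on some $]b_1, b[$ (this is possible since $a < b$). By Lemma \ref{lm:embed}(2), $\chi f \in \cD(L_{\max})$. Because $\chi f$ vanishes on $]a, a_1[$, the Wronskian limit gives $W(\chi f, h; a) = 0$ for all $h \in \cD(L_{\max})$, so $\chi f \in \cD(L_a)$. Because $\chi f = f$ on $]b_1, b[$, we have $W(\chi f, g; b) = W(f, g; b)$. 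Applying the hypothesis to $\chi f$ yields $W(f,g;b) = 0$ for every $f \in \cD(L_{\max})$, so $g \in \cD(L_b)$. This establishes $L_a^\# = L_b$. Finally, $L_b^\# = L_a$ follows by symmetry of the argument (swapping the roles of $a$ and $b$ in the cutoff), or more succinctly by noting that $L_a = L_a^{\#\#} = L_b^\#$ since $L_a$ is closed and densely defined.
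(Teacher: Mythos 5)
Your proof is correct. The paper itself states this proposition without proof, so there is nothing to compare it against line by line; but every ingredient you use is exactly the toolkit the paper develops for the neighbouring results: the continuity of $W_a,W_b$ from Theorem \ref{perq2}(4) gives closedness of $\cD(L_a)$, Theorem \ref{perq2}(5) gives $L_{\min}\subset L_a\cap L_b$ and hence density, and Green's identity \eqref{eq:lagrange1} reduces $L_a^\#=L_b$ to showing that ``$W(f,g;b)=0$ for all $f\in\cD(L_a)$'' upgrades to ``for all $f\in\cD(L_{\max})$.'' Your cutoff step is the genuinely necessary point, and it is sound: $\chi f\in\cD(L_{\max})$ by Lemma \ref{lm:embed}(2), $\chi f$ vanishes identically near $a$ so $W(\chi f,h;a)=0$ for every $h$, and $\chi f=f$ near $b$ so the Wronskian limits at $b$ agree. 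This is precisely the partition-of-unity device the authors invoke (without detail) to prove $\cb(L)=\cb_a(L)\oplus\cb_b(L)$ in \eqref{eq:Bab}, so your argument fills the gap in the intended spirit. The closing step $L_b^\#=L_a^{\#\#}=L_a$ correctly uses that $L_a$ is closed and densely defined.
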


\section{Boundary conditions}
\protect\setcounter{equation}{0}

\subsection{Regular endpoints}\label{ss:bvf}

Recall that the endpoint $a$ is regular if it is finite and $V$ is
integrable close to $a$.

\begin{proposition}\label{re:reg}
  If $L$ is regular at $a$ then any function $f\in\cD(L_{\max})$
  extends to a function of class $C^1$ on the left closed interval
  $[a,b[$, hence $f(a)$ and $f'(a)$ are well defined, and for
  $f,g\in\cD(L_{\max})$ we have
  $W_a(f,g)=f(a)g'(a)-f'(a)g(a)$. Similarly if $L$ is regular at
  $b$. Thus if $L$ is regular then $\cD(L_{\max})\subset C^1[a,b]$ and
  Green's identity \eqref{eq:lagrange1} has the classical form
  \[
    \braket{L_{\max}f_1}{f_2}- \braket{f_1}{L_{\max}f_2}=
    \big(f_1(b)f_2'(b)-f_1'(b)f_2(b)\big)- \big
    (f_1(a)f_2'(a)-f_1'(a)f_2(a)\big).
  \]
\end{proposition}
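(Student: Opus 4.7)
The plan is to show that every $f\in\cD(L_{\max})$ coincides on $]a,b[$ with the restriction of a function belonging to $AC^1[a,b[$; once this is done, the pointwise values $f(a),f'(a)$ make sense, the Wronskian formula is a direct computation using continuity, and Green's identity is obtained by substituting into \eqref{eq:lagrange1}. The analogous argument at $b$ then gives the full statement.

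To carry this out, fix $f\in\cD(L_{\max})$ and set $h:=Lf\in L^2]a,b[$. Since $a$ is finite, Cauchy--Schwarz on any interval $[a,a_1]\subset[a,b[$ shows that $h\in L^1_{\loc}[a,b[$. Because $L$ is regular at $a$, Proposition~\ref{prop:pro}(1) applies: there exists a unique $\tilde f\in AC^1[a,b[$ with $L\tilde f=h$ and $\tilde f(a)=\tilde f'(a)=0$. Applying the same proposition with inhomogeneity $0$ produces solutions $u,v\in AC^1[a,b[$ of $Lu=Lv=0$ with, e.g., $W(u,v)=1$, so $\{u,v\}$ is a basis of $\Ker(L)$. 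Then $L(f-\tilde f)=0$ on $]a,b[$, hence $f-\tilde f=\alpha u+\beta v$ for some $\alpha,\beta\in\C$, so
\[
  f=\tilde f+\alpha u+\beta v\in AC^1[a,b[.
\]
In particular $f,f'$ extend continuously to $a$, giving well-defined values $f(a),f'(a)$.

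Applying the same reasoning to a second element $f_2\in\cD(L_{\max})$, the function $d\mapsto W(f_1,f_2;d)=f_1(d)f_2'(d)-f_1'(d)f_2(d)$ is continuous up to $a$, so
\[
  W(f_1,f_2;a)=\lim_{d\searrow a}W(f_1,f_2;d)=f_1(a)f_2'(a)-f_1'(a)f_2(a),
\]
which matches \eqref{wronski-a}. Doing the analogous analysis at $b$ (when $L$ is regular there) and inserting the resulting expressions into Green's identity \eqref{eq:lagrange1} yields the classical form displayed in the proposition.

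The main (admittedly minor) obstacle is resisting the temptation to try a direct argument from $f''=Vf-Lf$: without extra information, $f\in L^2]a,b[$ is not known to be bounded near $a$, so one cannot immediately conclude $Vf\in L^1$ near $a$ and hence $f''\in L^1_{\loc}[a,b[$. The clean way to avoid this is to route the regularity statement through Proposition~\ref{prop:pro}(1), which encodes exactly the bootstrapping needed for regular endpoints; everything else is then a matter of continuity and substitution.
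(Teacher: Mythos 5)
Your proof is correct, and it follows essentially the route the paper takes (the paper states Proposition \ref{re:reg} without a separate proof, but the regular-endpoint argument in Lemma \ref{lem-a} is exactly your decomposition of $f$ into a particular solution continuous up to the endpoint plus elements of $\Ker(L)$, there realized via the forward Green's operator instead of Prop.\ \ref{prop:pro}(1)). Your closing remark about why one cannot argue directly from $f''=Vf-Lf$ is also well taken.
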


Thus if $L$ is a regular operator then we have four continuous linear
functionals on $f\in \cD(L_{\max})$
\begin{align}
  f\mapsto f(a),&\quad   f\mapsto f'(a),\label{pom1}\\
  f\mapsto f(b),&\quad   f\mapsto f'(b),\label{pom2}
\end{align}
which  give a
  convenient description of closed operators $L_\bullet$ such that
  $L_{\min}\subset L_\bullet\subset L_{\max}$.
  In particular, $\cD(L_{\min})$ is the intersection of the kernels of
  (\ref{pom1}) and   (\ref{pom2}),
$\cD(L_a)$ is the intersection of the kernels of (\ref{pom1}) and
$\cD(L_b)$ is the intersection of the kernels of (\ref{pom2}).

\subsection{Boundary functionals}
 \label{triplet}
 It is possible to extend the strategy described above to the case of
 an arbitrary $L$ by using an abstract version of the notion of
 boundary value of a function. We shall do it in this section.

  The abstract theory of boundary value functionals goes back to
  J.~W. Calkin's thesis \cite{Cal} who used it for the classification
  of self-adjoint extensions of Hermitian operators. The theory was
  adapted to Hermitian differential operators of any order by Naimark
  \cite{Nai} and to operators with complex coefficients of class
  $C^\infty$ by Dunford and Schwarz in \cite[ch.\ XIII]{DS2}.  In this
  section we shall use this technique in the case of second order
  operators with potentials which are only locally integrable: this
  loss of regularity is a problem for some arguments in \cite{DS2}.

  Recall that $\cD(L_{\max})$ is equipped with the Hilbert space
  structure associated to the norm
  $\|f\|_L=\sqrt{\|f\|^2+\|Lf\|^2}$. Following \cite[\S XXX.2]{DS2},
  we introduce the following notions.

\begin{definition}\label{df:bvf} 
  A \emph{boundary functional for $L$} is any linear
    continuous form on $\cD(L_{\max})$ which vanishes on
    $\cD(L_{\min})$. A \emph{boundary functional at $a$} is a boundary
    functional $\phi$ such that $\phi(f)=0$ for all $f\in\cD(L_{\max})$
    with $f(x)=0$ near $a$;  \emph{boundary functionals at $b$} are
    defined similarly. $\cb(L)$ is the set of boundary functionals for $L$
    and $\cb_a(L),\cb_b(L)$ the subsets of boundary functionals at $a$ and
    $b$.
\end{definition}

$\cb(L)$ is a closed linear subspace of the topological dual
  $\cD(L_{\max})'$ of $\cD(L_{\max})$ and $\cb_a(L),\cb_b(L)$ are
  closed linear subspaces of $\cb(L)$. By using a partition of unity
  on $]a,b[$ it is easy to prove that
\begin{equation}\label{eq:Bab}
\cb(L)=\cb_a(L)\oplus\cb_b(L),
\end{equation}
a topological direct sum.

\begin{definition}\label{df:dimbv}
  We define
  \begin{align*}
   \text{ the {\em boundary index  for $L$ at $a$}},&\quad
  \nu_a(L):=\dim\cb_a(L),\\\text{ the
    {\em boundary index for $L$  at $b$}},&\quad \nu_b(L):=\dim\cb_b(L),\\
  \text{ and the
  {\em total boundary index for $L$}},&\quad 
  \nu(L):=\dim\cb(L)=\nu_a(L)+\nu_b(L).
  \end{align*}
\end{definition}

By definition, the subspace $\cb(L)\subset\cD(L_{\max})'$ is the
  polar set of the closed subspace $\cD(L_\m)$ of $\cD(L_\M)$. Hence it is
  canonically identified with the dual space of
  $\cD(L_{\max})/\cD(L_\m)$:
\begin{equation}\label{eq:polarB}
\cb(L)=\big(\cD(L_\M)/\cD(L_\m)\big)' .
\end{equation}
Clearly one may also define $\cb_a(L)$ as the set of continuous
  linear forms on $\cD(L_{\max})$ which vanish on the closed subspace
  $\cD(L_a)$, and similarly for $\cb_b(L)$. Thus
\begin{equation}\label{eq:polar}
\cb_a(L)=\big(\cD(L_{\max})/\cD(L_a)\big)' .
\end{equation}

\begin{definition}
For each $f\in\cD(L_\M)$ and $x\in[a,b]$,  we introduce the functional
\begin{equation}\label{eq:fata}
  \vec f_x:\cD(L_\M)\to\C
  \quad\text{defined by}\quad
  \vec f_x (g)=W_x(f,g).
\end{equation}
 By Theorem \ref{perq2} it is a well defined linear continuous form on $\cD(L_\M)$. \label{def:vec}\end{definition}

Remember that if $x\in]a,b[$, then we can write
\begin{equation}\label{eq:dirac}
  \vec{f}_x(g)=f(x)g'(x)-f'(x)g(x)=W_x(f,g) \quad
  \forall g\in C^1]a,b[ .
\end{equation}
If $x=a$, in general we cannot write (\ref{eq:dirac}) (unless $a$ is
regular). However we know that for all $x\in[a,b]$ (\ref{eq:fata})
depends weakly continuously on $x$. Thus in general
\begin{equation}\label{eq:bvflim}
  \wlim_{x\to a}\vec f_x=\vec f_a.
\end{equation}

It is easy to see that $\vec f_a\in\cB_a$, cf.\ \eqref{proba} for
example. We shall prove below that any boundary value functional at
the endpoint $a$ is of this form.

\begin{theorem}\label{th:bvf}
  {\rm(i)} $f\mapsto \vec f_a$ is a linear surjective map
  $\cD(L_\M)\to\cb_a(L)$.\\[1mm]
  {\rm(ii)} $W_a(f,g)=0$ for all $f,g\in\cD(L_\M)$ if and only
  if $\cb_a(L)=\{0\}$. \\[1mm]
  {\rm(iii)} $W_a(f,g)\neq0$ if and only if the functionals
  $\vec f_a, \vec g_a$ are linearly independent. \\[1mm]
  {\rm(iv)} If $W_a(f,g)\neq0$ then $\{\vec f_a, \vec g_a\}$ is a
  basis in $\cb_a(L)$; then $\forall h\in\cD(L_\M)$ we have
  \begin{equation}\label{eq:kodafgh}
    \vec h_a=cW_a(g,h)\vec f_a+cW_a(h,f) \vec g_a
    \quad\text{ with } c=-1/W_a(f,g). 
  \end{equation}
\end{theorem}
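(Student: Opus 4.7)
\medskip

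\noindent\textbf{Proof plan for Theorem \ref{th:bvf}.}

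\emph{Part (i), well-definedness of the map $f\mapsto\vec f_a$.} First I would verify that $\vec f_a\in\cb_a(L)$ for every $f\in\cD(L_\M)$. Linearity of $g\mapsto W_a(f,g)$ is obvious, and continuity with the bound $|\vec f_a(g)|\le C_a\|f\|_L\|g\|_L$ comes from Theorem \ref{perq2}(4). The functional $\vec f_a$ vanishes on $\cD(L_\m)$ by the Wronskian characterization of $\cD(L_\m)$ in Theorem \ref{perq2}(5), and it vanishes on every $g\in\cD(L_\M)$ that is identically zero on some right-neighborhood of $a$, since then $W_x(f,g)=0$ for $x$ in that neighborhood and $\vec f_a(g)$ is the limit as $x\to a$. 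The map $f\mapsto\vec f_a$ is plainly linear in $f$, and by the very definition \eqref{proba} its kernel equals $\cD(L_a)$. This induces an injection $\cD(L_\M)/\cD(L_a)\hookrightarrow\cb_a(L)$.

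\emph{Part (i), surjectivity.} Using \eqref{eq:polar}, $\cb_a(L)=(\cD(L_\M)/\cD(L_a))'$, and the injection constructed above is precisely the canonical map $[f]\mapsto W_a(f,\cdot)$ associated to the non-degenerate antisymmetric pairing that $W_a$ descends to on $V:=\cD(L_\M)/\cD(L_a)$. In finite dimensions such a map is automatically an isomorphism, so surjectivity will follow once I establish $\dim V<\infty$. To bound $\dim V$, I would localize: fix any $c\in]a,b[$ and a smooth cutoff $\chi$ with $\chi\equiv1$ on a right-neighborhood of $a$ and $\chi\equiv0$ near $c$; by Lemma \ref{lm:embed}(2), $g\mapsto\chi g$ is continuous on $\cD(L_\M)$. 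For $\phi\in\cb_a(L)$ one has $\phi(g)=\phi(\chi g)$, since $(1-\chi)g\equiv0$ near $a$ and $\phi$ is a boundary functional at $a$. Thus $\phi$ is determined by $\chi g$ restricted to $]a,c[$, which lies in $\cD(L^{a,c}_\M)$ and vanishes near the regular endpoint $c$. Combining Proposition \ref{re:reg} at the regular endpoint $c$ with the two-dimensionality of $\Ker(L^{a,c})$ should yield $\dim V\le 2$. This dimensional bound is the main obstacle of the argument.

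\emph{Parts (ii)--(iv), assuming (i) and $\dim V\le 2$.} Item (ii) is immediate: $\cb_a(L)=\{0\}$ iff every $\vec f_a=0$ iff $W_a\equiv0$ on $\cD(L_\M)\times\cD(L_\M)$. For the forward implication of (iii), evaluating a supposed dependency $\alpha\vec f_a+\beta\vec g_a=0$ at $f$ and at $g$ gives $\beta W_a(g,f)=0$ and $\alpha W_a(f,g)=0$, hence $\alpha=\beta=0$. For the converse, suppose $W_a(f,g)=0$ but $\vec f_a,\vec g_a$ are linearly independent; since $\vec f_a\ne0$ one can pick $h\in\cD(L_\M)$ with $W_a(f,h)\ne0$, and applying (iv) to the pair $(f,h)$ expresses $\vec g_a$ as a scalar multiple of $\vec f_a$ alone, because the coefficient of $\vec h_a$ is proportional to $W_a(g,f)=0$; this contradicts independence. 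For (iv), independence of $\vec f_a,\vec g_a$ follows from (iii), and since $\dim\cb_a(L)\le 2$ they already span $\cb_a(L)$ and thus form a basis. The specific coefficients in \eqref{eq:kodafgh} are forced by testing the claimed identity on $f$ and on $g$, using antisymmetry of $W_a$. Validity at an arbitrary $k\in\cD(L_\M)$ then amounts to the Pl\"ucker-type identity
\[
W_a(f,g)W_a(h,k)+W_a(g,h)W_a(f,k)+W_a(h,f)W_a(g,k)=0,
\]
which holds because $W_a$ has rank at most $2$ on $V$.
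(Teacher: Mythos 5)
Your setup for (i) (well-definedness, continuity via Theorem \ref{perq2}(4), vanishing on $\cD(L_\m)$ and on functions vanishing near $a$, kernel equal to $\cD(L_a)$, and the reduction of surjectivity to $\dim \cD(L_\M)/\cD(L_a)<\infty$ via \eqref{eq:polar}) is correct and matches the paper. But the step you yourself flag as ``the main obstacle'' --- the bound $\dim\cD(L_\M)/\cD(L_a)\leq 2$ --- is a genuine gap, and your sketch for it does not work. The two-dimensionality of $\Ker(L^{a,c})$ as a space of solutions of the ODE is of no direct use here: those solutions need not be square integrable near $a$, so they need not belong to $\cD(L^{a,c}_\M)$, and a decomposition of $\cD(L^{a,c}_\M)$ into $\Ker(L^{a,c})$ plus $\cD(L^{a,c}_a)$ is exactly Prop.\ \ref{moj1a}, which requires $\dim\cU_a(0)=2$ --- a fact the paper only obtains in Theorem \ref{th:proofconj}, whose proof relies on Theorem \ref{th:dimbv} and hence on the very theorem you are proving. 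Likewise Proposition \ref{re:reg} only controls the regular endpoint $c$ and says nothing about $a$. A second, related problem is circularity at the end: you justify the Pl\"ucker-type identity at $a$ by ``$W_a$ has rank at most $2$ on $V$,'' but that rank bound is precisely the unproven dimension bound.

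The repair is to reverse the logical order, which is what the paper does. Kodaira's identity
$W(f,g)W(h,k)+W(g,h)W(f,k)+W(h,f)W(g,k)=0$
is a pointwise algebraic identity for $2\times2$ Wronskians of $C^1$ functions on $]a,b[$ (a Pl\"ucker relation for the map $f\mapsto(f(x),f'(x))\in\C^2$), and it extends to $x=a$ for $f,g,h,k\in\cD(L_\M)$ by the existence of the limits in Theorem \ref{perq2}(3). From this identity one reads off \eqref{eq:kodafgh} whenever $W_a(f,g)\neq0$, which shows that the image $\mathcal{W}_a=\{\vec h_a\mid h\in\cD(L_\M)\}$ is spanned by $\vec f_a,\vec g_a$ and hence has dimension $2$. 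Since $\mathcal{W}_a$ separates the points of $\cY_a=\cD(L_\M)/\cD(L_a)$ (this is the definition of $\cD(L_a)$), one gets $\dim\cY_a\leq\dim\mathcal{W}_a=2$, whence $\mathcal{W}_a=\cY_a'=\cb_a(L)$. Your arguments for (ii), the forward direction of (iii), the determination of the coefficients in \eqref{eq:kodafgh} by testing on $f$ and $g$, and the reverse direction of (iii) via a partner $h$ with $W_a(f,h)\neq0$ are all fine once this is in place.
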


\begin{proof}
  Let $\mathcal{W}_a$ be the set of linear forms of the form
    $\vec f_a$, this is a vector subspace of $\cb_a(L)$ and we shall
    prove later that $\mathcal{W}_a=\cb_a(L)$. For the moment, note
    that $\mathcal{W}_a$ separates the points of
    $\cY_a:=\cD(L_{\max})/\cD(L_a)$, i.e.\ we have $W_a(f,g)=0$
    for all $f$ if and only if $g\in\cD(L_a)$, cf.\ \eqref{dontknow2}
    and \eqref{proba}. On the other hand, \eqref{eq:polar} implies
    that $\cb_a(L)=\{0\}$ is equivalent to $\cD(L_\M)=\cD(L_a)$ which
    in turn is equivalent to $W_a(f,g)=0$ for all $f,g\in\cD(L_\M)$ by
    \eqref{proba}. This proves (ii).

  For the rest of the proof we need \emph{Kodaira's identity}
  \cite[pp.\ 151--152]{Pry}, namely: if $f,g,h,k$ are $C^1$ functions
  on $]a,b[$ then
  \begin{equation}\label{eq:fghk}
    W(f,g)W(h,k) +W(g,h) W(f,k)+W(h,f) W(g,k)=0 ,
\end{equation}
with the usual definition $W(f,g)=fg'-f'g$.  The relation obviously
holds pointwise on $]a,b[$. If $f,g,h,k\in\cD(L_\M)$, then the relation
extends to $[a,b]$, in particular
\begin{equation}\label{eq:fghka}
  W_a(f,g)W_a(h,k) +W_a(g,h) W_a(f,k)+W_a(h,f) W_a(g,k)=0,
\end{equation}
and similarly at $b$. This implies \eqref{eq:kodafgh} if
$W_a(f,g)\neq0$ from which it follows that $\{\vec f_a, \vec g_a\}$
is a basis in the vector space $\mathcal{W}_a$, in particular
$\mathcal{W}_a$ has dimension $2$. But $\mathcal{W}_a\subset\cY_a'$
separates the points of $\cY_a$ hence $\mathcal{W}_a=\cY_a'=\cb_a(L)$
which proves the surjectivity of the map $f\mapsto \vec f(a)$. This
proves (i) and (iv) completely and also one implication in (iii). It
remains to prove that $\vec f_a, \vec g_a$ are linearly dependent if
$W_a(f,g)=0$.

We prove this but with a different notation which allows us to use what
we have already shown. Let $f$ such that $\vec f_a\neq0$. Then
$\vec f_a$ is part of a basis in $\mathcal{W}_a=\cb_a(L)$, hence
there is $g$ such that $\{\vec f_a, \vec g_a\}$ is a basis in
$\cb_a(L)$. Then $W_a(f,g)\neq0$ and we have \eqref{eq:kodafgh}. Thus
if $W_a(h,f)=0$ then $\vec h_a=cW_a(g,h)\vec f_a$, so
$\vec h_a,\vec f_a$ are linearly dependent.
\end{proof}

The space $\cB_a$ is naturally a symplectic space.  In fact, if
$\cB_a$ is nontrivial, then we can find $k,h$ with $W_a(k,h)\neq0$.
By the Kodaira identity,
\begin{align}\notag
  W_a(f,g)&=\frac{-W_a(f,k)W_a(g,h)+W_a(f,h)W_a(g,k)}{W_a(h,k)}\\
  &=\frac{-\vec f_a(k)\vec g_a(h)-\vec f_a(h)\vec g_a(k)}{W_a(h,k)}.
\end{align}
Thus if we set for $\phi,\psi\in\cB_a$ with $\vec f_a=\phi$,
$\vec g_a=\psi$,
\begin{equation}
  \lbra\phi|\psi\rbra_a:=W_a(f,g),\label{sigma}\end{equation}
  then $\lbra\cdot|\cdot\rbra_a$ is a well defined symplectic form on $\cB_a$.
Moreover, $f\mapsto \vec f_a$ maps the form $W_a$ onto $\lbra\cdot|\cdot\rbra_a$.  
If $\lbra\phi|\psi\rbra_a=1$, then by the Kodaira identity
\begin{equation}
  W(h,k)=\phi(h)\psi(k)-\psi(h)\phi(k).
\end{equation}

In the literature boundary functionals are usually described using the
notion of {\em boundary triplet}.  Let us make a comment on this
concept.  Suppose, for definiteness, that $\nu_a=\nu_b=2$.  Choose
bases
\begin{equation}
  \phi_a,\psi_a, \text{ of } \cB_a \text{ and } \phi_b,\psi_b \text{ of } \cB_b\label{basis}
\end{equation}
such that $\lbra\phi_a|\psi_a\rbra_a=\lbra\phi_b|\psi_b\rbra_b=1$.
We have the maps
\begin{align}
  \cD(L_{\max})\ni f\mapsto    \phi
(  f)&:=\big(\phi_a(f),\phi_b(f)\big)\in\bbC^2;\\ 
  \cD(L_{\max})\ni f\mapsto       \psi
 ( f)&:=\big(\psi_a(f),\psi_b(f)\big)\in\bbC^2.
\end{align}
Then we can rewrite Green's formula (\ref{eq:lagrange1}) as
\begin{equation}        \label{eq:lagrange1a}
  \langle L_{\max}f|g\rangle-
  \langle f|L_{\max}g\rangle=
  \langle \psi( f)|\phi( g)\rangle
  - \langle \phi( f)|\psi( g)\rangle.
\end{equation}
The triplet $(\bbC^2,\phi,\psi)$ is often called in the
literature a {\em boundary triplet}, see e.g. \cite{BBMNP} and
references therein. It can be used to characterize operators in
between $L_{\min}$ and $L_{\max}$.

Thus a boundary triplet is essentially a choice of a basis  (\ref{basis}) in the space of boundary functionals. Such a
choice is often natural: in particular this is the case of regular
boundary conditions, see (\ref{pom1}), (\ref{pom2}). In our paper we
consider rather general potentials for which there may be no natural
choice for (\ref{basis}). Therefore, we do not use the boundary
triplet formalism.

\subsection{Classification of endpoints and of realizations of
  \texorpdfstring{$L$}{L}}

The next fact is a consequence of Theorem \ref{th:bvf}.  One may
think of the assertion ``$\nu_a(L)$ can only take the values 0 or 2''
as a version of Weyl's dichotomy, cf.\ \S\ref{ss:weyl}.

\begin{theorem}\label{th:dimbv}
  $\nu_a(L)$ can be $0$ or $2$: we have
  $\nu_a(L)=0 \Leftrightarrow W_a=0$ and
  $\nu_a(L)=2\Leftrightarrow W_a\neq0$. Similarly for $\nu_b(L)$,
  hence $\nu(L)\in\{0,2,4\}$.
\end{theorem}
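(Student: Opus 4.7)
The plan is to read off the two cases directly from Theorem \ref{th:bvf}, which has already done all the serious work (Kodaira's identity, surjectivity of $f\mapsto \vec f_a$, etc.). Dichotomy then amounts to observing that the Wronskian-at-$a$ bilinear form has rank $0$ or $2$ on $\cD(L_\M)\times\cD(L_\M)$, with nothing in between.

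First I would treat the case $W_a\equiv 0$, meaning $W_a(f,g)=0$ for all $f,g\in\cD(L_\M)$. By part (ii) of Theorem \ref{th:bvf} this is equivalent to $\cb_a(L)=\{0\}$, so $\nu_a(L)=\dim\cb_a(L)=0$. Conversely, if $\nu_a(L)=0$ then $\cb_a(L)=\{0\}$, and again by (ii) we get $W_a\equiv 0$.

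For the case $W_a\not\equiv 0$ I would pick $f,g\in\cD(L_\M)$ with $W_a(f,g)\neq0$. Part (iv) of Theorem \ref{th:bvf} then tells us that $\{\vec f_a,\vec g_a\}$ is a basis of $\cb_a(L)$, so $\nu_a(L)=2$. Combined with the previous step, this shows $\nu_a(L)\in\{0,2\}$ and that the two alternatives match the vanishing/non-vanishing of $W_a$. The same argument applies verbatim at the endpoint $b$, and since $\nu(L)=\nu_a(L)+\nu_b(L)$ by \eqref{eq:Bab}, we obtain $\nu(L)\in\{0,2,4\}$.

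There is no real obstacle once Theorem \ref{th:bvf} is in hand; the only thing to be slightly careful about is making explicit that ``$W_a\neq 0$'' is shorthand for ``there exist $f,g\in\cD(L_\M)$ with $W_a(f,g)\neq 0$'', so that one can invoke (iv) with a specific such pair. The surjectivity in (i) of Theorem \ref{th:bvf} is what rules out intermediate dimensions: $\cb_a(L)$ is the image of $\cD(L_\M)$ under $f\mapsto\vec f_a$, and the Kodaira identity \eqref{eq:fghka} forces any two nonzero $\vec f_a,\vec h_a$ either to be proportional (when $W_a(f,h)=0$) or to span a $2$-dimensional space that already exhausts $\cb_a(L)$ via \eqref{eq:kodafgh}.
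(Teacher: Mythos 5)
Your proposal is correct and follows essentially the same route as the paper, which simply states Theorem \ref{th:dimbv} as a consequence of Theorem \ref{th:bvf}; you have just spelled out the details of how parts (ii) and (iv) yield the dichotomy, together with \eqref{eq:Bab} for the final count. No gaps.
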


\begin{remark}\label{re:weyl}{\rm According to the terminology in
    \cite{DS2}, we might say that $L$ has \emph{no boundary values at
      $a$} if $\nu_a(L)=0$ and that $L$ has \emph{two boundary values
      at $a$} if $\nu_a(L)=2$.  }
\end{remark}

\begin{example} \label{ex:bcreg}

  {\rm If $L$ is regular at the endpoint $a$ then $\nu_a(L)=2$.  It is
    clear that $f\mapsto f(a)$ and $f\mapsto f'(a)$ are linearly
    independent and Theorem \ref{th:dimbv} implies that they form a
    basis in $\cb_a(L)$.}
\end{example}

\begin{example} \label{ex:bcsemireg}
  {\rm If $L$ is semiregular at $a$ then we also have $\nu_a(L)=2$.
    Indeed, $\dim\cU_a(\lambda)=2$ (Prop.\ \ref{prop:pro}) and this
    implies $\nu_a(L)=2$ by (the easy part of) Theorem
    \ref{th:proofconj}.  We also have the distinguished boundary
    functional $f\mapsto f(a)$, as shown in Prop. \ref{prop:pro}
    (2). If $u$ is the solution of $Lu=0$ satisfying $u(a)=0$,
    $u'(0)=1$, whose existence is guaranteed by Prop.  \ref{prop:pro}
    (3), then this functional coincides with $\vec u_a$.  However, in
    general, we do not have another, linearly independent
    distinguished boundary functional.}  
\end{example}

As a consequence of Theorem \ref{th:dimbv} we get the following
classification of 1d Schr\"odinger operators in terms of the boundary
functionals.

\begin{enumerate}
\item $\nu_a(L)=\nu_b(L)=0$. This is
  equivalent to $L_\m=L_a=L_b=L_\M$.
\item $\nu_a(L)=0,$
   $\nu_b(L)=2,$
 Then $\cD(L_\m)$ is a  subspace of codimension $2$ in
  $\cD(L_\M)$. This is equivalent to $L_a=L_\M$, and to $L_\m=L_b$.
\item
$\nu_a(L)=2,$
   $\nu_b(L)=0.$
 Then $\cD(L_\m)$ is a  subspace of codimension $2$ in
  $\cD(L_\M)$.  This is equivalent to $L_b=L_\M$, and to $L_\m=L_a$.
\item $\nu_a(L)=\nu_b(L)=2$.
  Then $\cD(L_\m)$ is a  subspace of codimension $4$ in
  $\cD(L_\M)$.
\end{enumerate} 

In case (2) the operators $L_\bullet$ with
$L_\m\subsetneq L_\bullet\subsetneq L_\M$ are defined by nonzero
boundary value functionals $\phi$ at $a$:
$\cD(L_\bullet)=\{f\in\cD(L_\M) \mid \phi(f)=0\}$.  Similarly in case
(3).

Consider now the case (4).  The domain of nontrivial realizations
$L_\bullet$ could be then of codimension $1,2$, or $3$ in $\cD(L_\M)$.
We will see that realizations of codimension $2$ are the most
important.

Each realization of $L$ extending $L_\m$ is defined by a subspace
$\cC_\bullet\subset\cB_a\oplus\cB_b$.
\begin{equation}
  \cD(L_\bullet):=\{f\in\cD(L_\M)\ \mid
  \phi(f)=0,\quad \phi\in\cC_\bullet\}\end{equation}
The space $\cC_\bullet$ is called the {\em space of boundary
  conditions for $L_\bullet$}. The dimension of $\cC_\bullet$
coincides with the codimension of $\cD(L_\bullet)$ in $\cD(L_\M)$. 

\begin{definition} \label{separated}
  We say that the boundary conditions $\cC_\bullet$ are {\em
    separated} if 
\begin{equation} 
  \cC_\bullet=\cC_\bullet\cap\cB_a\,\oplus\,\cC_\bullet\cap\cB_b.
\end{equation}
\end{definition}

For instance, $L_a$ and $L_b$ are given by separated boundary
conditions $\cB_a$, resp. $\cB_b$.

\begin{definition} \label{def:1}
  Let $\phi\in\cB_a$ and $\psi\in\cB_b$.
  Then the realization of $L$ with the boundary condition
  $\C\phi\oplus\C\psi$ will be denoted $L_{\phi,\psi}$.
\end{definition}

Clearly, $L_{\phi,\psi}$ has separated boundary conditions and depends
only on the complex lines determined by $\phi$ and $\psi$. More
explicitly,
\[
  \cD(L_{\phi,\psi})=\{f\in \cD(L_\M)\mid \phi(f)=\psi(f)=0\}.
\]
Recall that if $\phi\neq0$ then
$\phi(f)=0 \Leftrightarrow\exists c(f)\in\C$ such that
$\vec{f}_a=c(f)\phi$.  We abbreviate $L_\phi=L_{\phi,0}$ if $\psi=0$
and define similarly $L_\psi$ if $\phi=0$.  Thus $L_\phi$ involves no
boundary condition at $b$:
\begin{equation}\label{eq:Lphi}
\cD(L_{\phi})=\{f\in \cD(L_\M)\mid \phi(f)=0\} =\{f\in \cD(L_\M)\mid
  \exists c(f) \text{ such that } \vec{f}_a=c(f)\phi\}
\end{equation}
where the second equality holds if $\phi\neq0$. Note that
$L_{0,0}=L_\M$.

\subsection{Properties of boundary functionals}

The next proposition is a version of \cite[XIII.2.27]{DS2} in our
context.

\begin{proposition}\label{co:bvf}
  If $\phi\in\cb_a(L)$ then there are continuous functions
  $\alpha,\beta:\,]a,b[\to\C$ such that
  \begin{equation*}
    \phi(f)=\lim_{x\to a}\big(\alpha(x)f(x)+\beta(x)f'(x) \big)
    \quad\forall f\in\cD(L_\M). 
  \end{equation*}
  Reciprocally, if $\alpha,\beta$ are complex functions on $]a,b[$ and
  $\lim_{x\to a}\big(\alpha(x)f(x)+\beta(x)f'(x) \big)=:\phi(f)$
  exists $\forall f\in\cD(L_\M)$, then $\phi\in\cb_a(L)$.
\end{proposition}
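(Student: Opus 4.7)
The plan is to derive the direct statement from Theorem \ref{th:bvf}(i) and to deduce the converse via the Banach--Steinhaus theorem, using that pointwise evaluation of $f$ and $f'$ at any interior point of $]a,b[$ is continuous on $\cD(L_\M)$.

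For the direct statement, Theorem \ref{th:bvf}(i) tells us that any $\phi \in \cb_a(L)$ equals $\vec h_a$ for some $h \in \cD(L_\M)$. Unwinding the definition via \eqref{eq:bvflim} and \eqref{eq:dirac}, for every $f \in \cD(L_\M)$,
\[
\phi(f) = \vec h_a(f) = \lim_{x \searrow a} W_x(h,f) = \lim_{x \searrow a}\big(h(x) f'(x) - h'(x) f(x)\big).
\]
Since $h \in AC^1]a,b[$, I can simply set $\alpha := -h'$ and $\beta := h$; both are continuous on $]a,b[$.

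For the converse, suppose $\phi(f) := \lim_{x\to a}\big(\alpha(x) f(x) + \beta(x) f'(x)\big)$ exists for every $f \in \cD(L_\M)$. Linearity and the vanishing of $\phi(f)$ when $f$ is zero near $a$ are both immediate from the definition of the limit. For continuity in the graph norm I would fix a sequence $x_n \searrow a$ and consider $\phi_n(f) := \alpha(x_n) f(x_n) + \beta(x_n) f'(x_n)$; each $\phi_n$ is continuous on $\cD(L_\M)$ since the estimates of Lemma \ref{lm:ode-v} bound the pointwise values of $f$ and $f'$ at $x_n$ by a multiple of $\|f\|_L$. Because $\phi_n(f) \to \phi(f)$ for every $f$, Banach--Steinhaus yields $\sup_n \|\phi_n\| < \infty$, whence $|\phi(f)| \leq C\|f\|_L$. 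To see that $\phi$ vanishes on $\cD(L_\m)$, I would observe that every $f \in \cD(L_\mathrm{c})$ has compact support in $]a,b[$, so $\phi(f) = 0$; since $\cD(L_\mathrm{c})$ is graph-norm-dense in $\cD(L_\m)$ by definition and $\phi$ is continuous, $\phi$ vanishes on all of $\cD(L_\m)$, so $\phi \in \cb_a(L)$ as required.

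The one point requiring some care is the continuity step: it is handled by Banach--Steinhaus provided we interpret the limit along a countable sequence $x_n \searrow a$, which is harmless since such a sequence is cofinal in a neighborhood base of $a$.
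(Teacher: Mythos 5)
Your proposal is correct and follows essentially the same route as the paper, which proves the direct part from Theorem \ref{th:bvf}(i) together with \eqref{eq:bvflim} and \eqref{eq:dirac} (taking $\alpha=-h'$, $\beta=h$ for a representative $h$) and the converse by Banach--Steinhaus; you have merely filled in the details the paper leaves implicit. The only cosmetic remark is that the continuity of $f\mapsto f(x_n)$ and $f\mapsto f'(x_n)$ on $\cD(L_\M)$ is most directly quoted from the bound \eqref{wron} of Theorem \ref{perq2}(4) (equivalently Lemma \ref{lm:embed}), with Lemma \ref{lm:ode-v} applied on a compact subinterval around $x_n$ where $V$ is integrable.
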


\begin{proof}
 The first assertion follows from Theorem \ref{th:bvf}-(i)
and relations \eqref{eq:bvflim}, \eqref{eq:dirac} while the second one
is a consequence of Banach-Steinhaus theorem.
\end{proof}

Recall that for $d\in[a,b]$ the symbol $L^{a,d}$ denotes the operator
$-\partial^2+V$ on the interval $]a,d[$.

\begin{lemma}\label{lm:bvf}
  Let $d\in]a,b[$. Then
  \begin{equation}\label{eq:bvfd}
\dim\cb_a(L)=\dim\cb_a(L^{a,d}) .
  \end{equation}
\end{lemma}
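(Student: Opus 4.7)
\smallskip

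\noindent\textbf{Proof plan.}
By Theorem~\ref{th:dimbv}, both $\nu_a(L)=\dim\cb_a(L)$ and $\nu_a(L^{a,d})=\dim\cb_a(L^{a,d})$ take values in $\{0,2\}$, and in each case the value is $0$ exactly when the corresponding Wronskian at $a$ vanishes identically on pairs in the maximal domain. Hence to prove \eqref{eq:bvfd} it is enough to establish the equivalence
\begin{equation}\label{eq:planW}
W^{L}_a(f,g)=0 \text{ for all } f,g\in\cD(L_\M)
\ \Longleftrightarrow\
W^{L^{a,d}}_a(\tilde f,\tilde g)=0 \text{ for all } \tilde f,\tilde g\in\cD(L^{a,d}_\M).
\end{equation}

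The plan is to exhibit, via a cutoff, continuous linear maps in both directions that preserve the Wronskian at $a$. Pick $a<d_0<d_1<d$ and $\chi\in C^\infty]a,b[$ with $\chi\equiv 1$ on $]a,d_0]$, $\chi\equiv 0$ on $[d_1,b[$; in particular $\chi'\in C_{\mathrm c}^\infty]a,b[$ with support in $[d_0,d_1]\subset]a,d[$.

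First I would show that for $f\in\cD(L_\M)$ on $]a,b[$, the restriction $\chi f\big|_{]a,d[}$ lies in $\cD(L^{a,d}_\M)$. Indeed $\chi f\in L^2]a,d[$ because $\chi$ is bounded, and a direct computation gives
\[
L(\chi f)=\chi Lf-\chi'' f-2\chi'f',
\]
where $\chi Lf\in L^2$ and the remaining terms are supported in $[d_0,d_1]$, where $f$ and $f'$ are in $L^2$ by Lemma~\ref{lm:embed}. In the opposite direction, for $f\in\cD(L^{a,d}_\M)$ the function $\chi f$ extended by $0$ past $d$ lies in $\cD(L_\M)$ (it is in $AC^1]a,b[$ since $\chi\equiv 0$ near $d$, and the same computation controls $L(\chi f)$ in $L^2]a,b[$). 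Crucially, on $]a,d_0]$ we have $\chi f=f$ and $\chi g=g$, so by the very definition \eqref{wronski-a} the Wronskian at $a$ of $\chi f,\chi g$ equals the Wronskian at $a$ of $f,g$, computed in either setting.

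Combining these two operations gives both implications of \eqref{eq:planW}: if $W_a^{L}$ vanishes on $\cD(L_\M)\times\cD(L_\M)$, then for any $\tilde f,\tilde g\in\cD(L^{a,d}_\M)$ the zero-extensions of $\chi\tilde f,\chi\tilde g$ belong to $\cD(L_\M)$, and their Wronskian at $a$ (which coincides with $W^{L^{a,d}}_a(\tilde f,\tilde g)$) vanishes; the reverse implication is symmetric. Thus \eqref{eq:planW} holds, and by Theorem~\ref{th:dimbv} this gives $\nu_a(L)=\nu_a(L^{a,d})$, proving \eqref{eq:bvfd}.

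The only delicate point is the verification that multiplication by $\chi$ maps $\cD(L_\M)$ into (the restriction of) $\cD(L^{a,d}_\M)$ and back, which reduces to checking that $\chi'f'$ is $L^2$ on the compact support of $\chi'$; this is exactly the content of the local embedding estimate \eqref{lem-2}. Everything else is bookkeeping around the definition of $W_a$ as a limit.
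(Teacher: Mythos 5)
Your proof is correct, but it takes a different route from the paper's. The paper argues directly at the level of the functional spaces: since $d$ is a regular endpoint of $L^{a,d}$, the restriction map $R:f\mapsto f\big|_{]a,d[}$ is a surjection $\cD(L_\M)\to\cD(L^{a,d}_\M)$ carrying $\cD(L_a)$ onto $\cD(L^{a,d}_a)$, and $\phi\mapsto\phi\circ R$ is then a linear bijection $\cb_a(L^{a,d})\to\cb_a(L)$; this gives a canonical isomorphism of the two spaces of boundary functionals without ever invoking the dichotomy $\nu_a\in\{0,2\}$. You instead reduce the lemma to Theorem \ref{th:dimbv} and verify that the condition ``$W_a\equiv0$ on the maximal domain'' transfers between $L$ and $L^{a,d}$ via the cutoff $\chi$ and zero extension. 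That is legitimate --- Theorem \ref{th:dimbv} precedes the lemma and is proved independently of it --- and your technical verifications (that $\chi f\big|_{]a,d[}\in\cD(L^{a,d}_\M)$, that the zero extension of $\chi\tilde f$ lies in $\cD(L_\M)$, both resting on \eqref{lem-1}--\eqref{lem-2} or equivalently \eqref{chi}, and that $W_a$ is unchanged because $\chi\equiv1$ near $a$) are all sound. What you lose is the explicit identification $\cb_a(L)\simeq\cb_a(L^{a,d})$: your argument only yields equality of the dimensions, which happens to be all the lemma asserts, but the paper's bijection is the more structural statement and does not lean on the $0$-or-$2$ alternative. Note also that the extension device you use (cutoff plus zero extension) is essentially the same tool the paper needs to justify surjectivity of $R$, so the two proofs share their hard kernel; the difference lies in whether one routes the conclusion through Theorem \ref{th:dimbv} or not.
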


\begin{proof}
  
  Since $d$ is a regular endpoint for $L^d$, the maximal operator
  $L^{a,d}_{\max}$ associated to $L^{a,d}$ has the property
  $\cD(L^{a,d}_{\max})\subset C^1]a,d]$. Thus the restriction map
  $R:f\mapsto f\big|_{]a,d[}$ is a surjective map
  $\cD(L_{\max})\to \cD(L^{a,d}_{\max})$ such that
  $R\cD(L_a)=\cD(L^{a,d}_a)$. If $\phi$ is a boundary value functional
  at $a$ for $L^{a,d}$ then clearly $\phi\circ R$ is a boundary value
  functional at $a$ for $L^{a,d}$ and the map $\phi\mapsto\phi\circ R$
  is a bijective map $\cb_a(L)\to\cb_a(L^{a,d})$.
\end{proof}

We note that the space $\cb(L)$ and its subspaces $\cb_a(L),\cb_b(L)$
depend on $L$ only through the domains $\cD(L_\M)$ and $\cD(L_\m)$. So
in order to compute them one can sometimes change the
potential and consider an
operator $L^U:=-\partial^2+U$ instead of $L:=-\partial^2+V$.
This is especially useful if $U$ is real: for example, $U$ could be
the real part of $V$, if its imaginary part is bounded.

\begin{proposition}\label{pr:relbdd}
  Let $U:\,]a,b[\,\to\C$ measurable such that
  $\|(U-V)f\|\leq \alpha\|Lf\|+\beta\|f\|$ for some real numbers
  $\alpha,\beta$ with $\alpha<1$ and all $f\in\cD(L_\M)$.  Then
  $\cD(L_\M)=\cD(L^U_\M)$ and $\cD(L_\m)=\cD(L^U_\m)$. Hence
  $\cb(L)=\cb(L^U)$ and
  \begin{equation}\label{eq:relbdd}
    \nu_a(L)=\nu_a(L^U), \quad \nu_b(L)=\nu_b(L^U) .
  \end{equation}
\end{proposition}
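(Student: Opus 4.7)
The plan is to read the hypothesis as a Kato--Rellich relative bound $\alpha<1$ for the multiplication operator $B$ defined by $Bf=(U-V)f$. Applying the Kato--Rellich theorem to the closed operators $L_\m\subset L_\M$ gives that $L_\m+B$ on $\cD(L_\m)$ and $L_\M+B$ on $\cD(L_\M)$ are both closed, with graph norms equivalent to $\|\cdot\|_L$. The forward inclusion $\cD(L_\M)\subset\cD(L^U_\M)$ is immediate: for $f\in\cD(L_\M)$ the hypothesis gives $Bf\in L^2$, hence $L^U f=Lf+Bf\in L^2$, and since $f\in L^2\cap AC^1$ this puts $f$ in $\cD(L^U_\M)$ with $L^U_\M f=(L_\M+B)f$.

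The opposite inclusion I will obtain by a transpose argument. First, restricting the forward step to compactly supported functions yields $\cD(L_\mathrm{c})\subset\cD(L^U_\mathrm{c})$ with agreement of actions. Since $L_\mathrm{c}$ is a core for $L_\m$, and hence by the norm equivalence also for $L_\m+B$, approximation combined with the closedness of $L^U_\m=\overline{L^U_\mathrm{c}}$ gives $L_\m+B\subset L^U_\m$. Next I compute $(L_\m+B)^\#$. From Theorem~\ref{perq2}(6) we have $L_\m^*=\bar L_\M$; applying the hypothesis to $\bar f\in\cD(L_\M)$ and using the pointwise identities $|W\bar f|=|\bar W f|$ and $|L\bar f|=|\bar L_\M f|$ shows that the transpose operator $B^*$ (multiplication by $\overline{U-V}$) is $\bar L_\M$-bounded with the same constant $\alpha<1$. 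The symmetric Kato--Rellich adjoint identity then gives $(L_\m+B)^*=\bar L_\M+B^*$ on $\cD(\bar L_\M)$, and taking complex conjugates yields $(L_\m+B)^\#=L_\M+B$ on $\cD(L_\M)$. Transposing the inclusion $L_\m+B\subset L^U_\m$ now produces $L^U_\M=(L^U_\m)^\#\subset(L_\m+B)^\#=L_\M+B$, so $\cD(L^U_\M)\subset\cD(L_\M)$. Combined with the forward inclusion, $\cD(L_\M)=\cD(L^U_\M)$ and $L^U_\M=L_\M+B$ on this common domain; a second transposition yields $L^U_\m=L_\m+B$ and $\cD(L^U_\m)=\cD(L_\m)$.

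Once the domain equalities are established, the boundary data claim is formal. By \eqref{eq:polarB} the space $\cb(L)$ is the Banach dual of $\cD(L_\M)/\cD(L_\m)$, and since both the underlying sets and the graph-norm topology coincide for $L$ and $L^U$, we get $\cb(L)=\cb(L^U)$. The subspaces $\cb_a(L),\cb_b(L)$ depend only on $\cD(L_a),\cD(L_b)$, which by Theorem~\ref{perq2}(5) are described via the Wronskian limits $W_a,W_b$ from Theorem~\ref{perq2}(3); these limits see only the $AC^1$-structure of elements of the common domain and not the potential, so $\cD(L_a)=\cD(L^U_a)$, $\cD(L_b)=\cD(L^U_b)$, whence $\cb_a(L)=\cb_a(L^U)$, $\cb_b(L)=\cb_b(L^U)$, giving \eqref{eq:relbdd}.

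The main obstacle I foresee is the transpose distribution identity $(L_\m+B)^\#=L_\M+B$ under the bare assumption $\alpha<1$: the standard abstract result $(A+C)^*=A^*+C^*$ requires the adjoint perturbation $C^*$ to be $A^*$-bounded with bound strictly less than $1$, and verifying this symmetric bound in the non-self-adjoint setting is precisely where the conjugation-invariance of $|W|$ must be invoked. All remaining steps are routine applications of the Kato--Rellich package and of the Wronskian characterization of $\cD(L_\m)$ from Theorem~\ref{perq2}.
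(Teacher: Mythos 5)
Your argument is correct, but it takes a noticeably heavier route than the paper's. The paper's entire proof consists of the two-sided estimate $(1-\alpha)\|Lf\|-\beta\|f\|\leq\|L^Uf\|\leq(1+\alpha)\|Lf\|+\beta\|f\|$ on $\cD(L_\M)$, the conclusion that the graph norms $\|\cdot\|_L$ and $\|\cdot\|_{L^U}$ are equivalent, and an appeal to \eqref{eq:polarB}; the reverse inclusion $\cD(L^U_\M)\subset\cD(L_\M)$ --- the only genuinely delicate point, since the relative bound is only assumed on $\cD(L_\M)$ --- is left implicit. You isolate exactly that point and close it with the Hess--Kato theorem $(A+C)^*=A^*+C^*$, valid when $C$ is $A$-bounded and $C^*$ is $A^*$-bounded with bounds $<1$; your verification that multiplication by $\overline{U-V}$ is $\bar{L}_\M$-bounded with the same constant $\alpha$ (via $L_\m^*=\bar{L}_\M$ and conjugation invariance of the $L^2$ norm) correctly meets its hypotheses, and the transposition $L^U_\M=(L^U_\m)^\#\subset(L_\m+B)^\#=L_\M+B$ then does the job. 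What your route buys is a complete proof of both inclusions; what it costs is dependence on a nontrivial external perturbation theorem where the paper intends only elementary graph-norm considerations. Two minor remarks: to invoke $(L^U_\m)^\#=L^U_\M$ from Theorem \ref{perq2} you should note that the hypothesis forces $U\in L^1_\loc]a,b[$ (indeed $(U-V)f\in L^2]a,b[$ for every $f\in\cD(L_\mathrm{c})$, and since these $f$ are continuous and dense in $L^2]a,b[$, every point has a neighbourhood where some such $f$ is bounded away from zero, so $U-V\in L^2_\loc]a,b[$); and for the last step it is slightly more direct to observe that, by Definition \ref{df:bvf}, $\cb_a(L)$ depends only on the set $\cD(L_\M)$ with its graph topology, on $\cD(L_\m)$, and on the supports of its elements, so $\cb_a(L)=\cb_a(L^U)$ follows from the domain equalities without passing through the Wronskians.
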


\begin{proof}
 We have
  \[
    (1-\alpha)\|Lf\|-\beta \|f\|\leq\|L^Uf\|
    \leq (1+\alpha)\|Lf\|+\beta\|f\|
\]
so the norms $\|\cdot\|_L$ and $\|\cdot\|_{L^U}$ are equivalent. Then
we use \eqref{eq:polarB}.
\end{proof}

\subsection{Infinite endpoints}

Suppose now that our interval is right-infinite.
We will show that if the potential stays bounded in average at infinity, then all elements of the maximal domain converge to zero at $\infty$ together with their derivative, which obviously implies that their Wronskian converges to zero.

\begin{proposition}
 Suppose that $b=\infty$ and
  \begin{equation}
    \limsup_{c\to\infty}\int_{c}^{c+1}|V(x)|\d x<\infty.
  \end{equation}
  Then 
  \begin{equation}
f\in\mathcal{D}(L_{\max})\quad\Rightarrow\quad    \lim_{x\to\infty}f(x)=0,\     \lim_{x\to\infty}f'(x)=0.
\label{mana}  \end{equation}
  Hence 
$\nu_b=0$.
  \label{thm010}
\end{proposition}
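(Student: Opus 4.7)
The plan is to apply Lemma \ref{lm:ode-v} to short intervals located arbitrarily far to the right, exploiting that $f, Lf \in L^2]a,b[$ in order to drive the right-hand side of the estimate to zero. The conclusion $\nu_b=0$ will then follow from Theorem \ref{th:dimbv} once the convergence of $f$ and $f'$ at $\infty$ is established.

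First I would reduce to working on a right half-line $]c_0,\infty[$ with $c_0$ large enough that the hypothesis of Lemma \ref{lm:ode-v} is satisfied on that interval. By the assumption $\limsup_{c\to\infty}\int_c^{c+1}|V|<\infty$ together with $V\in L^1_\loc]a,b[$, one can choose $c_0$ so that
\[
\ell := \sup\Big\{\textstyle\int_J|V| \,:\, J\subset\,]c_0,\infty[,\ |J|\leq 1\Big\} < \infty,
\]
since for any $J=]x,x+\nu[\,\subset\,]c_0,\infty[$ with $\nu\leq 1$ one has $\int_J|V| \leq \int_x^{x+1}|V|$, which is bounded uniformly for $x\geq c_0$. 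Applying Lemma \ref{lm:ode-v} to the operator $L^{c_0,\infty}$ then yields constants $C,\nu_0>0$ such that \eqref{eq:ode-main} holds on every interval $J\subset\,]c_0,\infty[$ of length $\nu\leq\nu_0$.

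Next, I would fix $\nu\in(0,\nu_0]$ and apply \eqref{eq:ode-main} to $J_x := ]x,x+\nu[$ for $x\geq c_0$. Since $f,Lf\in L^2]a,b[$, Cauchy-Schwarz gives
\[
\|f\|_{L^1(J_x)} \leq \sqrt{\nu}\,\|f\|_{L^2(J_x)}, \qquad \|Lf\|_{L^1(J_x)} \leq \sqrt{\nu}\,\|Lf\|_{L^2(J_x)},
\]
and both $L^2$-tail norms tend to $0$ as $x\to\infty$. Hence
\[
\|f\|_{L^\infty(J_x)} + \nu\,\|f'\|_{L^\infty(J_x)} \;\longrightarrow\; 0 \quad \text{as } x\to\infty,
\]
which immediately yields $f(x)\to 0$ and $f'(x)\to 0$, proving \eqref{mana}.

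For the final assertion $\nu_b=0$, I would observe that for every $f,g\in\cD(L_\M)$ the Wronskian $W(f,g;x)=f(x)g'(x)-f'(x)g(x)$ tends to $0$ as $x\to\infty$, so $W_b(f,g)=0$ identically on $\cD(L_\M)\times\cD(L_\M)$; Theorem \ref{th:dimbv} then gives $\nu_b(L)=0$. The only mildly delicate point is the initial truncation to a half-line on which Lemma \ref{lm:ode-v} is applicable; once that reduction is in place, everything is a routine $L^2$-tail argument.
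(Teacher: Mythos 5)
Your proof is correct and follows essentially the same route as the paper: apply the local estimate \eqref{eq:ode-main} of Lemma \ref{lm:ode-v} on short intervals marching off to $\infty$, pass from $L^1$ to $L^2$ norms by Cauchy--Schwarz, use that the $L^2$-tails of $f$ and $Lf$ vanish, and conclude $W_b=0$ hence $\nu_b(L)=0$ via Theorem \ref{th:dimbv}. Your explicit truncation to a half-line $]c_0,\infty[$ so that the hypothesis of Lemma \ref{lm:ode-v} is satisfied is a small point the paper leaves implicit, and it is a reasonable precaution.
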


Of course, an analogous statement is true for $a=-\infty$ on
left-infinite intervals.

\noindent{\em Proof of Prop. \ref{thm010}.}
Let $\nu<\nu_0$ and let $J_n:=[a+n\nu,a+(n+1)\nu]$.
Then, using first (\ref{eq:ode-main}) and then the Schwarz inequality, we obtain
\begin{align*}
  \|f\|_{L^\infty(J_n)}+\nu\|f'\|_{L^\infty(J_n)}&\leq
  C_1\|Lf\|_{L^1(J_n)}+
    C_2\|f\|_{L^1(J_n)}\\
    &\leq C_1\sqrt\nu\|Lf\|_{L^2(J_n)}+
    C_2\sqrt\nu\|f\|_{L^2(J_n)}\,\underset{n\to\infty}\to\,0.
\end{align*}
This implies \eqref{mana}.
\qed

\section{Solutions square integrable near endpoints}
\protect\setcounter{equation}{0}

\subsection{Spaces \texorpdfstring{$\cU_a(\lambda)$}{Ua(lambda)} and
 \texorpdfstring{$\cU_b(\lambda)$}{Ub(lambda)}}

In this section we will show that one can compute the boundary indices
with the help of eigenfunctions of the operator $L$ which are square
integrable around a given endpoint.

\begin{definition} \label{defempty} If $\lambda\in\C$ then
  $\mathcal{U}_a(\lambda)$ is the set of $f\in AC^1]a,b[$ such that
  $(L-\lambda)f=0$ and $f$ is $L^2$ on $]a,d[$ for some, hence for all
  $d$ such that $a<d<b$. Similarly we define $\mathcal{U}_b(\lambda)$.
\end{definition}

\begin{proposition} \label{endpoint} If $a$ is a semiregular endpoint
  for $L$, then $\dim\cU_a(\lambda)=2$ for all $\lambda\in\C$.
  Besides, if $a$ is regular, we can choose $u,v\in\Ker( L-\lambda)$
  such that
  \begin{align}
    u(a)=1,&\quad u'(a)=0,\label{regu1}\\
    v(a)=0,&\quad v'(a)=1.\label{regu2}
  \end{align}
  Similarly for $b$.
\end{proposition}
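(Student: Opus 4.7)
The plan is to reduce both assertions to the preliminary Cauchy-problem results for the shifted operator $L-\lambda$, namely Prop.~\ref{prop:pro}, and to observe that $L-\lambda$ inherits regularity and semi-regularity at $a$ from $L$.

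For the first step I would note that semi-regularity of $a$ for $L$ transfers to $L-\lambda$: since $a$ is finite, $(x-a)\lambda$ is continuous on $[a,b[$, and therefore
\[
(x-a)(V-\lambda) = (x-a)V - (x-a)\lambda \in L^1_\loc[a,b[ .
\]
The analogous remark applies for regularity. Thus in both parts of the proposition we may apply Prop.~\ref{prop:pro} to the operator $L-\lambda$ at $a$ with right-hand side $g=0$.

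For the semi-regular case, Prop.~\ref{prop:pro}(2) says that \emph{every} solution $f\in AC^1]a,b[$ of $(L-\lambda)f=0$ has a finite limit at $a$. Hence $f$ extends continuously to $[a,d]$ for any $d\in\,]a,b[$, is bounded there, and since $a$ is finite this gives $f\in L^2]a,d[$. Therefore $\Ker(L-\lambda)\subset\cU_a(\lambda)$, and as the reverse inclusion is immediate from Def.~\ref{defempty} while $\dim\Ker(L-\lambda)=2$ (by Prop.~\ref{mojo}, since the Cauchy problem determines a solution from two initial data), we conclude $\dim\cU_a(\lambda)=2$.

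For the regular case I would apply Prop.~\ref{prop:pro}(1) to $L-\lambda$ twice, with initial data $(1,0)$ and $(0,1)$ at $a$ respectively; this produces unique $u,v\in AC^1[a,b[$ with $(L-\lambda)u=(L-\lambda)v=0$ and the prescribed values \eqref{regu1}--\eqref{regu2}. Since $u,v$ are continuous on $[a,d]$ for any $a<d<b$, they are in particular $L^2$ near $a$, hence lie in $\cU_a(\lambda)$. The statements at $b$ follow by symmetry (reflecting the interval). There is no real obstacle here: once one notices that semi-regularity and regularity are stable under subtracting the constant $\lambda$, everything reduces to citing Prop.~\ref{prop:pro}.
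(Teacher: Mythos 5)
Your proof is correct and follows the same route as the paper, whose entire proof is the single line ``apply Prop.~\ref{prop:pro}''; you have merely made explicit the two details the paper leaves implicit, namely that (semi)regularity at a finite endpoint is preserved when $V$ is replaced by $V-\lambda$, and that a solution with a finite limit at the finite endpoint $a$ is bounded, hence square integrable, near $a$.
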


\proof We apply Prop. \ref{prop:pro}.    \qed

\subsection{Two-dimensional
  \texorpdfstring{$\cU_a(\lambda)$}{Ua(lambda)}} \label{ss:weyl}

The next proposition contains the main technical fact about the
dimensions of the $\cU_a(\lambda)$.

\begin{proposition}\label{pr:perturb}
  Assume that all the solutions of $Lf=0$ are square integrable near
  $a$. If $f\in C^1]a,b[$ and $|Lf| \leq B|f|$ for some $B>0$, then
  $f$ is square integrable near $a$. In particular, if
  $U\in L^\infty]a,b[$ then all the solutions of $(L+U)f=0$ are square
  integrable near $a$.
\end{proposition}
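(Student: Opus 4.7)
My plan is to use variation of parameters together with a bootstrap/truncation argument. Fix a basis $u,v\in\Ker(L)$ with $W(u,v)=1$; by hypothesis both $u$ and $v$ are square integrable near $a$. Setting $g:=Lf$, standard variation of parameters (which holds pointwise on $]a,b[$ since $f\in C^1$ and $g=Lf$ is locally integrable) gives, for any $c\in]a,b[$ and $x\in]a,c[$,
\[
f(x)=\alpha u(x)+\beta v(x)+\int_x^c\big(u(y)v(x)-v(y)u(x)\big)g(y)\,\d y
=:f_0(x)+(Kg)(x),
\]
with constants $\alpha,\beta$ depending on the value of $A(c),B(c)$ of the coefficients at $c$. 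Note $f_0\in L^2]a,c[$ automatically. The integral operator $K$ has kernel $K(x,y)=\bigl(u(y)v(x)-v(y)u(x)\bigr)\bbbone_{x<y<c}$ on $]a,c[^{\,2}$.

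The key estimate is the Hilbert--Schmidt bound
\[
\|K\|^2_{\mathrm{HS}}\leq\int_a^c\!\!\int_a^c\bigl(|u(y)||v(x)|+|v(y)||u(x)|\bigr)^2\d x\,\d y
\leq 4\,\|u\|^2_{L^2]a,c[}\,\|v\|^2_{L^2]a,c[},
\]
and the same bound holds for the operator $\tilde K$ with kernel $|K(x,y)|$. Since $u,v\in L^2$ near $a$, these $L^2$-norms tend to $0$ as $c\searrow a$, so for $c$ chosen sufficiently close to $a$ we can arrange $8B^2\|u\|^2_{L^2]a,c[}\|v\|^2_{L^2]a,c[}<1/2$.

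The main obstacle is that we do not yet know $f\in L^2]a,c[$, so $Kg$ cannot be controlled directly. I would bypass this by truncating: for any $c'\in]a,c[$, $f$ is continuous on $[c',c]$, hence $f\in L^2]c',c[$, and from the pointwise bound $|g|\leq B|f|$ together with $(\tilde K|f|)(x)=\int_x^c(|u(y)||v(x)|+|v(y)||u(x)|)|f(y)|\,\d y$ depending only on the values of $|f|$ on $]x,c[\subset]c',c[$, I get
\[
\|f\|^2_{L^2]c',c[}\leq 2\|f_0\|^2_{L^2]a,c[}+2B^2\|\tilde K\|^2_{L^2]c',c[\to L^2]c',c[}\,\|f\|^2_{L^2]c',c[}.
\]
By the Hilbert--Schmidt estimate applied on $]c',c[$ (trivially bounded by the one on $]a,c[$) and the smallness choice of $c$, absorbing the last term yields $\|f\|^2_{L^2]c',c[}\leq 4\|f_0\|^2_{L^2]a,c[}$, a bound independent of $c'$. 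Letting $c'\searrow a$ and applying monotone convergence gives $f\in L^2]a,c[$, proving the first assertion.

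For the final assertion: if $(L+U)f=0$ with $U\in L^\infty]a,b[$, then $Lf=-Uf$ so $|Lf|\leq\|U\|_\infty|f|$, and the first part applies with $B=\|U\|_\infty$.
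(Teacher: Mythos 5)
Your proof is correct, and it diverges from the paper's in the step that actually closes the estimate. The skeleton is the same: fix $u,v\in\Ker(L)$ with unit Wronskian (both square integrable near $a$ by hypothesis) and write $f=f_0+Kg$ with $g=Lf$, $f_0\in\Span\{u,v\}$, and $K$ the backward Green's operator on an interval abutting $a$ — this is exactly the paper's starting point (there with $G_\leftarrow$ on $]a,b[$ after reducing to $b$ regular). The paper then divides by $\mu=\sqrt{|u|^2+|v|^2}$ (which never vanishes since $W(u,v)=1$) and applies Gronwall's lemma backward from $b$, obtaining the pointwise bound $|f(x)|\leq A\mu(x)\exp\big(B\int_x^b\mu^2(y)\,\mathrm{d}y\big)$, whose right-hand side is visibly in $L^2$ near $a$; no shrinking of the interval and no truncation are needed, because $\int_x^b\mu\,|f|$ is finite for each fixed $x$. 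You instead use the Hilbert--Schmidt bound $\|\tilde K\|\leq 2\|u\|_{L^2]a,c[}\|v\|_{L^2]a,c[}$, shrink $c$ toward $a$ until $2B^2\|\tilde K\|^2<\tfrac12$, and absorb. The truncation to $]c',c[$ is the essential extra device that legitimizes the absorption before $f\in L^2$ is known, and it works precisely because $K$ propagates strictly backward ($(Kg)(x)$ depends only on $g$ on $]x,c[$), so the resulting bound $\|f\|^2_{L^2]c',c[}\leq 4\|f_0\|^2_{L^2]a,c[}$ is uniform in $c'$ and monotone convergence finishes. Both routes are sound; Gronwall buys a stronger, pointwise exponential envelope for $f$ on the whole interval, while your operator-norm/absorption argument trades that for a purely $L^2$ conclusion obtained by softer means, at the cost of having to localize near $a$ and run the truncation.
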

    
\begin{proof}
  We may clearly assume that $b$ is a regular endpoint and
  $f\in C^1]a,b]$.  Let $G_\leftarrow$ be the backward Green's
    operator of $L$ (Definition \ref{df:Ga}).  If $Lf=g$, then
  $L(f-G_{\leftarrow} g)=0$. Therefore
\begin{equation}
  f(x)=\alpha u(x)+\beta v(x)+\int_x^b \big(u(x)v(y)-v(x)u(y)\big)
  g(y)\d y,\end{equation}
for some $\alpha,\beta$. Set $A:=\sqrt{|\alpha|^2+|\beta|^2}$ and
$\mu(x):=\sqrt{|u(x)|^2+|v(x)|^2}$. Then 
\begin{equation*}
  |f(x)|\leq A\mu(x)+\mu(x)\int_x^b\mu(y)|g(y)|\d y
        \leq\mu(x)\Big(A+B\int_x^b\mu(y)|f(y)|\d y\Big),
  \end{equation*}
  and the Gronwall Lemma applied to $|f|/\mu$ implies
\begin{equation}\label{pwo}
  |f(x)|\leq A\mu(x)\exp\Big(B\int_x^b\mu^2(y)\d y\Big).
\end{equation}
Clearly the right hand side of \eqref{pwo} is square integrable.
\end{proof}

The above proposition has the following important consequence.

\begin{proposition}\label{paio}
  If $\dim\cU_a(\lambda)=2$ for some $\lambda\in\C$ then
  $\dim\cU_a(\lambda)=2$ for all $\lambda\in\C$. Besides, if this is
  the case, then $\nu_a(L)=2$.
\end{proposition}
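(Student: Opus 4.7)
The first assertion reduces to a direct application of Proposition \ref{pr:perturb}. Fix $\lambda_0$ such that $\dim\cU_a(\lambda_0)=2$, i.e.\ all solutions of $(L-\lambda_0)f=0$ are square integrable near $a$. For any other $\lambda\in\C$, rewrite $(L-\lambda)f=0$ as $(\tilde L + U)f=0$ with $\tilde L:=L-\lambda_0=-\partial^2+(V-\lambda_0)$ and the constant bounded ``potential'' $U:=\lambda_0-\lambda\in L^\infty]a,b[$. Since Proposition \ref{pr:perturb} applies equally well to $\tilde L$ (it is just a Schr\"odinger operator with a different locally integrable potential), all solutions of $(L-\lambda)f=0$ are square integrable near $a$, so $\dim\cU_a(\lambda)=2$.

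For the second assertion we use Theorem \ref{th:dimbv}: it suffices to produce $f_1,f_2\in\cD(L_\M)$ with $W_a(f_1,f_2)\neq0$. Pick $u,v\in\Ker(L)$ with $W(u,v)=1$. By hypothesis (applied at $\lambda=0$) both $u$ and $v$ are square integrable near $a$. Choose $\chi\in C^\infty]a,b[$ with $\chi\equiv 1$ on some $]a,c[$ and $\chi\equiv 0$ on some $]c',b[$ with $a<c<c'<b$, and set $f_1:=\chi u$, $f_2:=\chi v$. Then $f_1,f_2\in L^2]a,b[$ because near $a$ they coincide with $u,v$ (which are $L^2$ there by assumption) and are compactly supported away from $b$. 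Moreover, on the support of $\chi'$, which is a compact subset of $]a,b[$, we have
\[
Lf_j = -\chi'' \cdot (u\text{ or }v) - 2\chi' \cdot (u'\text{ or }v') + \chi\cdot L(u\text{ or }v) = -\chi''(u\text{ or }v) - 2\chi'(u'\text{ or }v'),
\]
since $Lu=Lv=0$. This is continuous with compact support in $]a,b[$, hence in $L^2]a,b[$. Thus $f_1,f_2\in\cD(L_\M)$.

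Finally, since $\chi\equiv 1$ on $]a,c[$, for every $x\in\,]a,c[$ we have $f_1(x)=u(x)$, $f_2(x)=v(x)$ and the same for derivatives, so $W_x(f_1,f_2)=W_x(u,v)=1$. Taking $x\searrow a$ and using Theorem \ref{perq2}(3), we obtain $W_a(f_1,f_2)=1\neq0$. By Theorem \ref{th:dimbv}, this forces $\nu_a(L)=2$.

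The only step that requires any care is verifying $f_1,f_2\in\cD(L_\M)$; this is where the $L^2$-near-$a$ hypothesis on $u,v$ is used in an essential way, and where cutting off near $b$ (rather than using $u,v$ themselves) is necessary because we have no information about their behaviour at $b$.
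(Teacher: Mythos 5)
Your proof is correct and follows essentially the same route as the paper: the first assertion is exactly the intended application of Proposition \ref{pr:perturb} with the bounded perturbation $U=\lambda_0-\lambda$, and the second assertion is the ``easy part'' of the proof of Theorem \ref{th:proofconj}, where the paper likewise deduces $W_a\neq0$ from a pair $u,v\in\Ker(L-\lambda)$ with unit Wronskian and invokes Theorem \ref{th:dimbv}. Your cutoff $\chi$ near $b$ merely makes explicit the step of producing genuine elements of $\cD(L_\M)$, which the paper leaves implicit.
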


\subsection{The kernel of \texorpdfstring{$L_\M$}{Lmax}}

Let us describe the relationship between the dimension of the kernel
of $L_\M-\lambda$ and the dimensions of spaces $\cU_a(\lambda)$ and
$\cU_b(\lambda)$.

The first proposition is a corrolary of Prop. \ref{paio}:
\begin{proposition}
  The following statements are equivalent:
  \begin{enumerate}
  \item $\dim\Ker (L_\M-\lambda)=2$ for some $\lambda\in\C$.
  \item $\dim\Ker (L_\M-\lambda)=2$ for all $\lambda\in\C$.
  \item $\dim\cU_a(\lambda_a)=\dim\cU_b(\lambda_b)=2$ for some
    $\lambda_a,\lambda_b \in\C$.
  \item $\dim\cU_a(\lambda)=\dim\cU_b(\lambda)=2$ for all
    $\lambda\in\C$.
  \end{enumerate}
  Besides, if this is the case, then $\nu_a(L)=\nu_b(L)=2$.
\end{proposition}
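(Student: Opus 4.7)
The plan is to use the basic identity
\[
\Ker(L_\M-\lambda)=\cU_a(\lambda)\cap\cU_b(\lambda),
\]
which holds because a function $f\in AC^1]a,b[$ satisfying $(L-\lambda)f=0$ lies in $L^2]a,b[$ precisely when it is square integrable near both endpoints. Since the space of solutions of $(L-\lambda)f=0$ has dimension $2$ (Proposition \ref{mojo}), both $\cU_a(\lambda)$ and $\cU_b(\lambda)$ are subspaces of a common $2$-dimensional space. Hence their intersection has dimension $2$ if and only if each of them equals the full solution space, i.e.\ if and only if $\dim\cU_a(\lambda)=\dim\cU_b(\lambda)=2$.

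From this observation, the equivalences (1)$\Leftrightarrow$(3) (at the same $\lambda$) and (2)$\Leftrightarrow$(4) are immediate. The implications (2)$\Rightarrow$(1) and (4)$\Rightarrow$(3) are trivial. The only nontrivial step is (3)$\Rightarrow$(4): if $\dim\cU_a(\lambda_a)=2$ for some $\lambda_a$ and $\dim\cU_b(\lambda_b)=2$ for some $\lambda_b$, then Proposition \ref{paio} (applied separately at each endpoint) yields $\dim\cU_a(\lambda)=2$ and $\dim\cU_b(\lambda)=2$ for all $\lambda\in\C$. Thus the whole loop (1)$\Leftrightarrow$(2)$\Leftrightarrow$(3)$\Leftrightarrow$(4) is closed.

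For the final assertion, the second part of Proposition \ref{paio} states that $\dim\cU_a(\lambda)=2$ forces $\nu_a(L)=2$, and analogously at $b$. Hence under any of the equivalent conditions one has $\nu_a(L)=\nu_b(L)=2$. The main (and in fact only nontrivial) ingredient is Proposition \ref{paio}; everything else is a short linear-algebra argument based on the dimension count for the solution space.
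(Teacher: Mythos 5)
Your proof is correct and matches the paper's intent: the paper simply declares this proposition a corollary of Prop.~\ref{paio}, and your argument supplies exactly the missing details, namely the identity $\Ker(L_\M-\lambda)=\cU_a(\lambda)\cap\cU_b(\lambda)$ together with the dimension count inside the two-dimensional solution space, followed by an application of Prop.~\ref{paio} at each endpoint. No gaps.
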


The next two propositions are essentially obvious:

\begin{proposition}
  Let $\lambda\in\C$.  We have $\dim\Ker (L_\M-\lambda)=1$ if and only
  if one of the following statements is true:
  \begin{enumerate}
  \item $\dim\cU_a(\lambda)=\dim\cU_b(\lambda)=1$ and
    $\cU_a(\lambda)=\cU_b(\lambda)$.
  \item $\dim\cU_a(\lambda)=2$ and $\dim\cU_b(\lambda)=1$.
  \item $\dim\cU_a(\lambda)=1$ and $\dim\cU_b(\lambda)=2$.
  \end{enumerate}
\end{proposition}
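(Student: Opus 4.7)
The core observation is that $\Ker(L_\M-\lambda)=\cU_a(\lambda)\cap\cU_b(\lambda)$. Indeed, any $f\in\Ker(L_\M-\lambda)$ lies in $AC^1]a,b[$, satisfies $(L-\lambda)f=0$, and is in $L^2]a,b[$; splitting at any interior point $d\in\,]a,b[$, square-integrability on $]a,b[$ is equivalent to square-integrability on both $]a,d[$ and $]d,b[$, which is precisely the conjunction $f\in\cU_a(\lambda)\cap\cU_b(\lambda)$. Conversely, a solution of $(L-\lambda)f=0$ that is $L^2$ near each endpoint is automatically in $AC^1]a,b[$ and in $L^2]a,b[$, hence in $\cD(L_\M)$ with $(L_\M-\lambda)f=0$.

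Given this identification, the problem reduces to linear algebra in the two-dimensional space $\Ker(L-\lambda)\subset AC^1]a,b[$, of which $\cU_a(\lambda)$ and $\cU_b(\lambda)$ are vector subspaces (they are clearly closed under linear combinations and contained in the space of all solutions). The plan is then to enumerate the possible values of $(\dim\cU_a(\lambda),\dim\cU_b(\lambda))\in\{0,1,2\}^2$ and compute $\dim(\cU_a(\lambda)\cap\cU_b(\lambda))$ by elementary dimension counting: the intersection equals $1$ precisely when either one subspace is the whole $2$-dimensional space $\Ker(L-\lambda)$ and the other is $1$-dimensional (cases (2) and (3)), or both subspaces are $1$-dimensional and coincide (case (1)). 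If either subspace is zero, the intersection is zero; if both are $2$-dimensional, the intersection is $2$-dimensional; if both are $1$-dimensional but distinct, the intersection is zero.

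This direct inspection handles both directions of the equivalence: if $\dim\Ker(L_\M-\lambda)=1$, then neither $\cU_a(\lambda)$ nor $\cU_b(\lambda)$ can be zero (otherwise the intersection would be zero), neither can both be two-dimensional, so we fall into one of (1), (2), (3); conversely each of (1), (2), (3) yields intersection of dimension exactly $1$.

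I do not expect any genuine obstacle: the only non-formal ingredient is the equality $\Ker(L_\M-\lambda)=\cU_a(\lambda)\cap\cU_b(\lambda)$, and this is immediate from Definitions~\ref{df:lmax} and \ref{defempty}.
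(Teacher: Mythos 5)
Your proof is correct, and it is essentially the argument the paper has in mind: the authors label this proposition ``essentially obvious'' and omit the proof, the obvious argument being exactly your identification $\Ker(L_\M-\lambda)=\cU_a(\lambda)\cap\cU_b(\lambda)$ followed by dimension counting of two subspaces of the two-dimensional solution space $\Ker(L-\lambda)$. Nothing is missing.
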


\begin{proposition}
  Let $\lambda\in\C$ and $\cU_a(\lambda)\neq\{0\}$,
  $\cU_b(\lambda)\neq\{0\}$.  Then $\dim\Ker(L_\M-\lambda)=0$ if and
  only if $\dim\cU_a(\lambda)=\dim\cU_b(\lambda)=1$ and
  $\cU_a(\lambda)\neq\cU_b(\lambda)$.
\end{proposition}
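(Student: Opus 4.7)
The plan is to observe that $\Ker(L_\M-\lambda)$ is precisely $\cU_a(\lambda)\cap\cU_b(\lambda)$, and then do the elementary linear algebra inside the $2$-dimensional space $\Ker(L-\lambda)$.

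First I would establish the key identification
\[
\Ker(L_\M-\lambda)=\cU_a(\lambda)\cap\cU_b(\lambda).
\]
Indeed, $f\in\Ker(L_\M-\lambda)$ means $f\in AC^1]a,b[\cap L^2]a,b[$ with $(L-\lambda)f=0$, which by Definition \ref{defempty} is equivalent to requiring $(L-\lambda)f=0$ together with square integrability on both $]a,d[$ and $]d,b[$ for any $d\in{}]a,b[$; this is exactly membership in both $\cU_a(\lambda)$ and $\cU_b(\lambda)$. Both of these are subspaces of the $2$-dimensional complex space $\Ker(L-\lambda)$.

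For the forward direction, assume $\dim\Ker(L_\M-\lambda)=0$, so $\cU_a(\lambda)\cap\cU_b(\lambda)=\{0\}$. If $\dim\cU_a(\lambda)=2$, then $\cU_a(\lambda)=\Ker(L-\lambda)\supset\cU_b(\lambda)$, giving $\cU_a(\lambda)\cap\cU_b(\lambda)=\cU_b(\lambda)\neq\{0\}$ by hypothesis, a contradiction; similarly $\dim\cU_b(\lambda)\neq 2$. Hence, together with the standing hypothesis that both are nonzero, we get $\dim\cU_a(\lambda)=\dim\cU_b(\lambda)=1$. If furthermore $\cU_a(\lambda)=\cU_b(\lambda)$, then this common line lies in the intersection, contradicting $\Ker(L_\M-\lambda)=\{0\}$. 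So $\cU_a(\lambda)\neq\cU_b(\lambda)$.

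Conversely, if $\cU_a(\lambda)$ and $\cU_b(\lambda)$ are two distinct $1$-dimensional subspaces of the $2$-dimensional space $\Ker(L-\lambda)$, their intersection is $\{0\}$, so $\Ker(L_\M-\lambda)=\{0\}$. There is no real obstacle here; the only mildly delicate point is the equivalence of the two formulations of local square integrability in Definition \ref{defempty}, but this is already built into the definition, so the argument is essentially a one-line linear algebra statement once the identification $\Ker(L_\M-\lambda)=\cU_a(\lambda)\cap\cU_b(\lambda)$ is in place.
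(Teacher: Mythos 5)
Your proof is correct and is essentially the argument the paper has in mind: the paper labels this proposition ``essentially obvious'' and gives no proof, and the intended reasoning is exactly your identification $\Ker(L_\M-\lambda)=\cU_a(\lambda)\cap\cU_b(\lambda)$ followed by elementary linear algebra in the two-dimensional space $\Ker(L-\lambda)$.
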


\subsection{First order ODE's}

We will need some properties of vector-valued ordinary differential
equations.  We will denote by $B(\C^n)$ the space of $n\times n$
matrices.
    
The following statement can be proven by the same methods as Prop.
\ref{mojo}. Clearly, in the following proposition $\C^n$ can be easily
replaced by an arbitrary Banach space.

\begin{proposition} \label{cauchy} Let $u_0\in\C^n$ and the function
  $[a,b[\ni x\mapsto A(x)\in B(\C^n)$ be in
  $L_\loc^1\big([a,b[,B(\C^n)\big)$. Then there exists a unique
  solution in $AC\big([a,b[,\C^n\big)$ of the following Cauchy
  problem:
  \begin{equation}
    \partial_xu(x)=A(x)u(x),\quad u(a)=u_0.
  \end{equation}
  In particular, the dimension of the space of solutions of
  $\partial_xu(x)=A(x)u(x)$ is $n$.
        
  If $A \in L^1\big(]a,b[,B(\C^n)\big)$ then
  $u\in AC\big([a,b],\C^n\big)$, hence $u$ is continuous up to $b$.
\end{proposition}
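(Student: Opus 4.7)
The plan is to follow the same contraction mapping strategy used in the proof of Proposition \ref{mojo}, replacing the second-order Volterra operators $Q_d, T_d$ by a single first-order integral operator, and then to patch local solutions together.

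First I would rewrite the Cauchy problem in integral form: $u$ solves $\partial_x u = Au$, $u(a)=u_0$ if and only if
\begin{equation*}
  u(x) = u_0 + \int_a^x A(y) u(y)\,\d y =: F(u)(x).
\end{equation*}
For any $a<a_1<b$, viewing $F$ as a map on the Banach space $C([a,a_1],\C^n)$ with the supremum norm, the operator $K u(x) := \int_a^x A(y)u(y)\,\d y$ satisfies $\|K\| \leq \int_a^{a_1}\|A(y)\|\,\d y$. Since $A \in L^1_{\loc}([a,b[,B(\C^n))$, this integral is finite and can be made strictly less than $1$ by choosing $a_1$ sufficiently close to $a$. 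Banach's Fixed Point Theorem (equivalently, convergence of the Neumann series $\sum_{k\geq 0} K^k u_0$) then gives a unique $u\in C([a,a_1],\C^n)$ with $u=F(u)$. Since $A u \in L^1([a,a_1], \C^n)$, the relation $u(x)=u_0+\int_a^x Au$ shows $u \in AC([a,a_1], \C^n)$ and $\partial_x u = Au$ a.e.

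Next I would extend the solution to the whole $[a,b[$. Covering $[a,b[$ by a sequence of overlapping intervals $[a_j,b_j[\subset[a,b[$ on each of which the same contraction argument applies with initial data at $a_j$, one obtains local solutions that agree on overlaps by uniqueness (a standard Gronwall argument: if $u_1,u_2$ are two solutions with the same initial value, then $w=u_1-u_2$ satisfies $|w(x)|\le\int_{a}^x\|A\||w|$, so $w\equiv 0$). This produces a unique $u\in AC_{\loc}([a,b[,\C^n)$. Taking $u_0$ to range over a basis of $\C^n$ yields $n$ linearly independent solutions; any solution is a linear combination of these via its value at $a$, so the solution space has dimension exactly $n$.

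For the final assertion, assume $A\in L^1(]a,b[,B(\C^n))$. Applying Gronwall's inequality to the integral equation gives
\begin{equation*}
  |u(x)| \leq |u_0|\exp\Big(\int_a^x \|A(y)\|\,\d y\Big)
  \leq |u_0|\exp\Big(\int_a^b \|A(y)\|\,\d y\Big),
\end{equation*}
so $u$ is bounded on $[a,b[$. Consequently $Au \in L^1(]a,b[,\C^n)$, and the integral $\int_a^x Au$ extends continuously (in fact absolutely continuously) up to $x=b$. Hence $u$ extends to an element of $AC([a,b],\C^n)$.

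The arguments are essentially routine once the right setup is in place; the only minor obstacle is ensuring the patching of local solutions is unambiguous, but this follows immediately from the Gronwall-based uniqueness on any compact subinterval. The second statement requires only that one upgrades the a priori bound from local to global via the integrability of $\|A\|$ on all of $]a,b[$.
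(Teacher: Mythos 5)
Your proposal is correct and follows essentially the route the paper intends: the paper gives no separate proof but simply remarks that Proposition \ref{cauchy} "can be proven by the same methods as Prop.\ \ref{mojo}", i.e.\ the integral-equation reformulation, local contraction/Neumann series, and patching of local solutions that you carry out. Your Gronwall-based a priori bound for the final assertion (when $A\in L^1\big(]a,b[,B(\C^n)\big)$) is the natural way to fill in the last claim and is consistent with how the paper uses this fact later (e.g.\ in the proof of Theorem \ref{atkinson}).
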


The following theorem is much more interesting. It is borrowed from
Atkinson \cite[Th.\ 9.11.2]{At}. Note that in this theorem the finite
dimensionality of the space $\C^n$ seems essential.

\begin{theorem}\label{atkinson}
  Suppose that $A,B$ are functions $[a,b[\,\to B(\C^n)$ belonging to
  $L_\loc^1([a,b[,B(\C^n))$ satisfying $A(x)=A^*(x)\geq0$,
  $B(x)=B^*(x)$. Let $J$ be an invertible matrix satisfying $J^*=-J$
  and such that $J^{-1}A(x)$ is real. If for some $\lambda\in\C$ all
  solutions of
  \begin{equation}
    J\partial_x\phi(x)=\lambda A(x)\phi(x)+B(x)\phi(x) \label{at1}
  \end{equation}
  satisfy
  \begin{equation}
    \int_a^b\big(\phi(x)|A(x)\phi(x)\big)\d x<\infty\label{at2}
  \end{equation}
  then for all $\lambda\in\C$ all solutions of (\ref{at1}) satisfy
  (\ref{at2}).
\end{theorem}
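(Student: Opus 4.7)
The strategy is a fixed-point argument in the $A$-weighted Hilbert space
$\mathcal{H}_A := \{\phi \colon \int_a^b (\phi|A\phi)\,\mathrm{d}x<\infty\}$,
keying on the fact that the Volterra kernel driving the perturbation has arbitrarily small $\mathcal{H}_A$-operator norm once the base point is chosen close enough to the problematic endpoint $b$. Fix $\mu$ at which the hypothesis holds, let $\Phi$ be a fundamental matrix of $J\partial_x\phi = \mu A\phi + B\phi$ (existence by Prop.~\ref{cauchy}), and note that \eqref{at2} at $\mu$ is equivalent to
$N(x) := \int_{x_0}^x \Phi^*(y) A(y) \Phi(y)\,\mathrm{d}y$
having a finite limit $N(b)$, so $\|N(b) - N(x_0)\| \to 0$ as $x_0 \to b$. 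For any $\lambda$, a solution of $J\partial_x\phi = \lambda A\phi + B\phi$ satisfies the variation-of-parameters identity
\begin{equation}\label{eq:plan-vop}
\phi(x) = \Phi(x) c_0 + (\lambda - \mu) \int_{x_0}^x \Phi(x)\Phi(y)^{-1} J^{-1} A(y) \phi(y)\,\mathrm{d}y.
\end{equation}

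The decisive computation, resting on $A^* = A$, $B^* = B$, $J^* = -J$, is the Lagrange-type identity
$\partial_x(\psi_1^* J \psi_2) = (\lambda_2 - \bar\lambda_1) \psi_1^* A \psi_2$
for solutions $\psi_j$ at parameters $\lambda_j$. In the real-$\mu$ case this gives $C := \Phi^* J \Phi$ constant, so $\Phi^{-1} J^{-1} = C^{-1} \Phi^*$ and the integral in \eqref{eq:plan-vop} becomes a Volterra operator
$T\phi(x) = \int_{x_0}^x \Phi(x) C^{-1} \Phi^*(y) A(y) \phi(y)\,\mathrm{d}y$,
whose operator norm on $\mathcal{H}_A([x_0,b[)$ I would bound via Cauchy--Schwarz in the weighted inner product by $\|C^{-1}\|\cdot\|N(b)-N(x_0)\|$. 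For any prescribed $\lambda$, I would choose $x_0 < b$ so that $|\lambda - \mu|\,\|T\| < 1/2$; the Neumann series for \eqref{eq:plan-vop} then converges in $\mathcal{H}_A([x_0,b[)$, and since the genuine ODE solution is continuous on every $[x_0,x_1]$ with $x_1<b$ and satisfies the same integral equation, local uniqueness from Prop.~\ref{cauchy} identifies it with the fixed point. Extending by continuity on $[a,x_0]$ finishes \eqref{at2} at $\lambda$.

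The main obstacle is the complex-$\mu$ case: $\Phi^* J \Phi$ is no longer constant and the identity $\Phi^{-1}J^{-1} = C^{-1}\Phi^*$ breaks down. The natural fix is the cross-Wronskian $\Phi_\mu^* J \Phi_{\bar\mu}$, which remains constant by the Lagrange identity and yields $\Phi_\mu^{-1}J^{-1} = \tilde C^{-1} \Phi_{\bar\mu}^*$; the resulting Cauchy--Schwarz estimate then involves $\int_{x_0}^b \Phi_{\bar\mu}^* A \Phi_{\bar\mu}\,\mathrm{d}y$ in place of $N$. To close the argument without circularity one must know that $\bar\mu$ is also ``good,'' which I would obtain from the identity $\partial_x(\phi^* J\phi) = 2i(\Im\mu)(\phi|A\phi)$ at complex $\mu$: combined with the hypothesis, this forces $\phi^* J \phi$ to have boundary values at $b$, and an analogous propagation argument using the reality of $J^{-1} A$ transfers $\mathcal{H}_A$-integrability from $\mu$ to $\bar\mu$. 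With both parameters controlled, the contraction argument above goes through unchanged, delivering \eqref{at2} at every $\lambda\in\C$.
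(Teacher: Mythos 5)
Your real-$\mu$ case is essentially sound and close in spirit to the paper's argument (the paper runs a Gronwall estimate on $Z_\lambda=\Phi_\mu^{-1}\Phi_\lambda$ where you run a Neumann series, which is the same mechanism), but the complex-$\mu$ case --- which you correctly single out as the main obstacle --- is a genuine gap, and your proposed fix does not close it. Transferring $\mathcal{H}_A$-integrability from $\mu$ to $\bar\mu$ is itself an instance of the theorem being proved (it is the conclusion for $\lambda=\bar\mu$), and it cannot be obtained by conjugating solutions: only $J^{-1}A$ is assumed real, not $J^{-1}B$ (and in the application via Lemma \ref{lm:reduction2} the matrix $J^{-1}B$ genuinely contains $V$ and $\bar V$), so solutions at $\bar\mu$ are not conjugates of solutions at $\mu$. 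Nor does the constancy of the cross-Wronskian help: $\Phi_{\bar\mu}^*J\Phi_\mu=\mathrm{const}$ expresses $\Phi_{\bar\mu}$ through $(\Phi_\mu^*)^{-1}$, so the quantity $\int\Phi_{\bar\mu}^*A\Phi_{\bar\mu}$ you would need is controlled by the $A$-integrability of $\Phi_\mu^{-1}$, not of $\Phi_\mu$, and the hypothesis says nothing about that. The remark that $\phi^*J\phi$ has a limit at $b$ is true but does not bridge this. As written, the complex case is circular.

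The missing idea --- and the way the paper proceeds --- is that you do not need $C(x):=\Phi_\mu^*(x)J\Phi_\mu(x)$ to be \emph{constant}; it suffices that $C$ and $C^{-1}$ be \emph{uniformly bounded} on $[a,b[$, and both facts follow from the hypothesis at $\mu$ alone. Indeed, your Lagrange identity gives $C(x)=C(a)+(\mu-\bar\mu)\int_a^x\Phi_\mu^*A\Phi_\mu\,\d y$, which is bounded by the assumed integrability at $\mu$. For the inverse one shows $|\det\Phi_\mu(x)|\equiv\mathrm{const}$: since $J^*=-J$ and $A^*=A$, $B^*=B$, one has $\Tr(J^{-1}A),\Tr(J^{-1}B)\in\i\R$, and the reality of $J^{-1}A$ (this is precisely where that hypothesis enters --- your plan never actually uses it) forces $\Tr(J^{-1}A)=0$, hence $\Tr\big(J^{-1}(\mu A+B)\big)\in\i\R$ and the Liouville formula gives $|\det\Phi_\mu|\equiv\mathrm{const}\neq0$. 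In finite dimension a uniformly bounded matrix family with $|\det|$ bounded below has uniformly bounded inverses, so $\sup_x\|C(x)^{-1}\|<\infty$. Writing $\Phi_\mu(y)^{-1}J^{-1}=C(y)^{-1}\Phi_\mu^*(y)$, your Volterra estimate then goes through verbatim for complex $\mu$ with $\sup_y\|C(y)^{-1}\|$ in place of $\|C^{-1}\|$, and no reference to $\bar\mu$ is needed. A minor further point: since $A$ may be degenerate ($A$ has rank $1$ in the application), $\|\cdot\|_{\mathcal{H}_A}$ is only a seminorm, so the fixed-point argument should be run for $A^{1/2}\phi$ in $L^2$ (the kernel factors through $A^{1/2}\phi$), or replaced by a pointwise Gronwall bound as in Prop.~\ref{cauchy}.
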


\proof For $\lambda\in\C$, let
$[a,b[\ni x\mapsto Y_\lambda(x)\in B(\C^n)$ be the solution of
\begin{equation}
  J\partial_xY_\lambda(x)=\lambda
  A(x)Y_\lambda(x)+B(x)Y_\lambda(x),\quad Y_\lambda(a)=\one.
\end{equation}                    
Then the theorem is equivalent to the following statement: if for some
$\lambda\in\C$ we have
\begin{equation}
  \int_a^b\Tr Y_\lambda^*(x)A(x)Y_\lambda(x)\d x<\infty,\label{lamb}
\end{equation}
then for all $\lambda\in \C$ we have (\ref{lamb}). We are going to
prove this in the following.
 
First note that
\begin{equation}
  \bar{\Tr J^{-1}B(x)}=\Tr(J^{-1}B(x))^*=
  -\Tr B(x)J^{-1}=-\Tr J^{-1}B(x).
\end{equation}
Therefore, $\Tr J^{-1}B(x)\in\i\R$.  By the same argument
$\Tr J^{-1}A(x)\in\i\R$.  But $\Tr J^{-1}A(x)$ is real.  Hence
$\Tr J^{-1}A(x)=0$, and so for arbitrary $\lambda\in\C$ we have
$\Tr\big(\lambda A(x)+B(x)\big)\in\i\R$. Therefore,
\begin{equation}
  \partial_x\det Y_\lambda(x)
  =\Tr\big(\lambda A(x)+B(x)\big)\det Y_\lambda(x)
\end{equation}
implies
\begin{equation}
  | \det Y_\lambda(x)|= | \det Y_\lambda(a)|=1.\label{lamb1}
\end{equation}
Therefore, $Y_\lambda(x)$ is invertible for all $x\in]a,b]$.

Now let $\mu\in\C$ and assume that (\ref{lamb}) holds for
$\lambda=\mu$.  We have
\begin{equation}
  \partial_xY_\mu^*(x)JY_\mu(x)=
  (\mu-\bar\mu)Y_\mu^*(x)A(x)Y_\mu(x),\quad 
  Y_\mu^*(a)JY_\mu(a)=J.
\end{equation}
Hence
\begin{equation}
  Y_\mu^*(x)JY_\mu(x)=
  J+ (\mu-\bar\mu)\int_a^xY_\mu^*(y)A(y)Y_\mu(y)\d y.
\end{equation}
Using (\ref{lamb}) we see that $ Y_\mu^*(x)JY_\mu(x)$ is bounded
uniformly in $x\in[a,b[$.  By (\ref{lamb1}), its inverse is also
bounded uniformly in $x\in[a,b[$.  (Here we use the finiteness of the
dimension of $\C^n$!)

Set $Z_\lambda(x):=Y_\mu^{-1}(x)Y_\lambda(x).$ We have
\begin{align}
  \partial_xZ_\lambda=(\lambda-\mu)Y_\mu^{-1}J^{-1}AY_\lambda
 =(\lambda-\mu)\big(Y^*_\mu J Y_\mu)^{-1}Y_\mu^*AY_\mu Z_\lambda.
\end{align}
We have proven that $\big(Y^*_\mu J Y_\mu)^{-1}$ is uniformly
bounded. By (\ref{lamb}) the norm of $Y_\mu^*AY_\mu $ is in
$L^1]a,b[$. Hence by the second part of Prop. \ref{cauchy},
$\|Z_\lambda\|$ is uniformly bounded on $[a,b[$.  Now, by using
\begin{equation}
  Y_\lambda^*(x)A(x)Y_\lambda(x)=
  Z_\lambda^*(x)Y_\mu^*(x)A(x)Y_\mu(x)Z_\lambda(x)
\end{equation}
we see that (\ref{lamb}) for $\lambda=\mu$ implies (\ref{lamb}) for
all $\lambda$. \qed

\begin{remark}
  Set
  \begin{equation}
    J:=\begin{pmatrix} 0 & -1 \\ 1 & 0\end{pmatrix},\quad
    A:=\begin{pmatrix} 1 & 0 \\ 0 & 0\end{pmatrix},\quad
    B(x):=\begin{pmatrix} -V(x) & 0 \\ 0 & 1\end{pmatrix}.                 
  \end{equation}
  Then $Lf=\lambda f$ can be rewritten as (\ref{at1}), that is,
  \begin{equation}
    J\partial_x\phi(x)=\lambda A(x)\phi(x)+B(x)\phi(x),\label{at3}
  \end{equation}
  with $\phi=\begin{pmatrix} f \\ f'\end{pmatrix}$. Moreover
  \begin{equation}
    \int_a^b\big(\phi(x)|A(x)\phi(x)\big)=\int|f(x)|^2\d x,\label{at4}
  \end{equation}
  hence the condition (\ref{at2}) means that $f\in L^2]a,b[$.  Note
  also that the conditions of Thm \ref{atkinson} on $J$, $A$ and $B$
  are satisfied. Thm \ref{atkinson} therefore implies that if all
  solutions of $Lf=\lambda f$ are square integrable for one $\lambda$,
  they are square integrable for all $\lambda$. We thus obtain an
  alternative proof of Prop. \ref{paio}.
\end{remark}

\subsection{Von Neumann decomposition} \label{ss:VND}

Von Neumann's theory for the classification of self-adjoint extensions
of a Hermitian operators is well known, cf.\ \cite{S,DS2}.  In the
present subsection we will investigate how to adapt it to the case of
complex potentials.

First recall that $\cD(L_{\max})$ has a Hilbert space structure
inherited from its graph, which is a closed subspace of
$L^2]a,b[\,\oplus L^2]a,b[\,$, namely
\begin{equation}\label{eq:Lsproduct}
  (f|g)_{L}:=(Lf|Lg)+(f|g)=\langle \bar L\bar f|Lg\rangle +\langle\bar
  f|g\rangle.
\end{equation}
Therefore, by the Riesz representation theorem $\cD(\bar L_{\max})$
can be identified with the dual of $\cD(L_{\max})$:
\begin{equation} \label{ident}
  \cD(\bar L_{\max})\ni f\mapsto
  (\bar f| \cdot)_L \in \cD(L_{\max})'.
\end{equation}
Hence the space of boundary functionals $\B(L)\subset \cD(L_{\max})'$
can be viewed as a subspace of $ \cD(\bar L_{\max})$.

The following lemma follows from Prop. \ref{dire1} (1):

\begin{lemma}\label{lm:vonn}
 With the identification (\ref{ident}), there is a canonical linear isomorphism
  \begin{equation}\label{eq:identif}
    \cb(L) \simeq \Ker(L_\M\bar L_\M+1)
    =\{f\in\cD(\bar{L}_\M)\mid\bar{L}f\in\cD(L_\M) 
    \text{ and } L\bar{L}f+f=0\} .
  \end{equation}
\end{lemma}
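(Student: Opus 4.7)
My plan is to unwind the Riesz-type identification \eqref{ident} and match it against the definition of boundary functionals, then invoke Thm.\ \ref{perq2}(6). Under \eqref{ident}, an element $f \in \cD(\bar L_\M)$ corresponds to the continuous form $g \mapsto (\bar f|g)_L$ on $\cD(L_\M)$; by \eqref{eq:polarB} the space $\cb(L)$ is precisely the annihilator of $\cD(L_\m)$ in $\cD(L_\M)'$. So the task reduces to characterising those $f \in \cD(\bar L_\M)$ for which $(\bar f|g)_L$ vanishes for every $g \in \cD(L_\m)$.

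Expanding the graph inner product and restricting to $g\in\cD(L_\m)$, I have
\[
(\bar f|g)_L = (L\bar f|L_\m g) + (\bar f|g).
\]
By the very definition of the Hermitian adjoint, this vanishes for all $g\in\cD(L_\m)$ exactly when $L\bar f \in \cD(L_\m^*)$ and $L_\m^* L\bar f = -\bar f$. Using $L_\m^* = \bar L_\M$ from Thm.\ \ref{perq2}(6) and then applying complex conjugation to both conditions, the requirement becomes $\bar L f \in \cD(L_\M)$ together with $L\bar L f + f = 0$, which is precisely membership in $\Ker(L_\M\bar L_\M+1)$. This gives the claimed bijection, and it is clearly linear.

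The link to Prop.\ \ref{dire1}(1), which is how the lemma is advertised, is that \eqref{dire1}(1) provides the $(\cdot|\cdot)_L$-orthogonal decomposition $\cD(L_\M)=\cD(L_\m)\oplus \Ker(\bar L_\M L_\M+1)$; the Riesz theorem then identifies $\cb(L)\simeq (\cD(L_\M)/\cD(L_\m))'$ with this orthogonal complement, and the conjugation built into \eqref{ident} converts $\Ker(\bar L_\M L_\M+1)$ into $\Ker(L_\M\bar L_\M+1)$ (the map $g\mapsto \bar g$ exchanges these two kernels, as one checks directly).

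The proof is essentially bookkeeping, and the only real pitfall is keeping the conjugation straight: the discrepancy between the $\Ker(\bar L_\M L_\M+1)$ appearing in Prop.\ \ref{dire1}(1) and the $\Ker(L_\M\bar L_\M+1)$ stated in the lemma is entirely due to the $\bar f$ in \eqref{ident}. I would spell out this swap explicitly to avoid confusion, and mention that the orthogonality in the decomposition of Prop.\ \ref{dire} is already taken with respect to the graph scalar product \eqref{scala}, so no further verification is needed there.
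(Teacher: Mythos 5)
Your argument is correct and is essentially the paper's own: the paper derives the lemma directly from Prop.~\ref{dire1}(1) (equivalently Prop.~\ref{dire}(1), whose proof is exactly your adjoint computation $L_\m^*L\bar f=-\bar f$ identifying $\cD(L_\m)^{\perp_L}$), combined with the Riesz identification \eqref{ident} and the conjugation swap between $\Ker(\bar L_\M L_\M+1)$ and $\Ker(L_\M\bar L_\M+1)$. The bookkeeping with $L_\m^*=\bar L_\M$ from Thm.~\ref{perq2}(6) and the conjugation is handled correctly.
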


Von Neumann's formalism is particularly efficient for real potentials
and gives more precise results than in the complex case, so for
completeness we begin with some comments on the real case.  Then we
explore what can be done for arbitrary complex potentials. The
differences between the real and complex case are significative, the
difficulties being related to the fact that in the complex case there
is no simple relation between the (geometric) limit point/circle
method and the dimension of the spaces $\cU_a(\lambda)$, cf.\
Subsect. \ref{ss:lplc}.

If $V$ is real then $\bar{L}=L$, $L_\m$ is Hermitian, and
$L_\M=L_\m^*$, hence
\begin{equation}\label{eq:identifS}
  \cb(L) \simeq  \Ker(L_\M^2+1) .
\end{equation}
Then by using the relation $L^2+1=(L-\i)(L+\i)$ it is easy to prove
that
\begin{equation}\label{eq:aBb}
  \cb(L)\simeq \Ker(L_\M-\i)+\Ker(L_\M+\i) .
\end{equation}
The last sum is obviously algebraically direct but also orthogonal for
the scalar product \eqref{eq:Lsproduct} hence we have an orthogonal
direct sum decomposition
\begin{equation}\label{eq:orthdec}
  \cD(L_\M)=\cD(L_\m)\oplus\cb(L)
  =\cD(L_\m)\oplus \Ker(L_\M-\i)\oplus\Ker(L_\M+\i) .
\end{equation}
The map $f\mapsto\bar{f}$ is a real linear isomorphism of
$\Ker(L_\M-\i)$ onto $\Ker(L_\M+\i)$ hence these spaces have equal
dimension $\leq2$ and so $\dim
\cb(L)=2\dim\Ker(L_\M-\i)\in\{0,2,4\}$. Of course, we have already
proved this in a much simpler way, but \eqref{eq:orthdec} also gives
via a simple argument the following: if $V$ is real then \label{p:real}
\begin{enumerate}
\item[{\rm(1)}] $\nu_a(L)=0 \Leftrightarrow \dim\cU_a(\lambda)=1$
  $\forall\lambda\in\C\setminus\R$;
\item[{\rm(2)}] $\nu_a(L)=2 \Leftrightarrow \dim\cU_a(\lambda)=2$
  $\forall\lambda\in\C$ .
\end{enumerate}
The simplicity of the treatment in the real case is due to the
possibility of working with \eqref{eq:aBb}, which involves only the
second order operators $L_\M\pm \i$, instead of \eqref{eq:identifS},
which involves the operator $L_\M^2$ of order $4$.  We do not have
such a simplification in the complex case where $L_\M\bar{L}_\M +1$ is
formally a fourth order differential operator with very singular
coefficients, since $V$ is only locally $L^1$.

Let us show how to generalize von Neumann's analysis to the complex
case.  We will follow \cite[Theorem 9.1]{EZ} which in turn is a
consequence of \cite[Theorem 9.11.2]{At}.  The nontrivial part of
Theorem \ref{th:proofconj} is due to D. Race \cite[Theorem 5.4]{Rac}.

We need to study the equation
\begin{equation}\label{eq4order}
  (L\bar{L}+\lambda)f=0.
\end{equation}
More precisely, by a solution of (\ref{eq4order}) we will mean
$f\in AC^1]a,b[$ such that $\bar{L}f\in AC^1]a,b[$ and
$L(\bar{L}f)+\lambda f=0$ holds.

Let us rewrite (\ref{eq4order}) as a 2nd order system of 2
equations. To this end, introduce
\begin{equation}
  Q:=\begin{pmatrix} 0&1 \\ 1 & 0 \end{pmatrix},\qquad
  \cI:=\begin{pmatrix} 1&0 \\ 0 & 0 \end{pmatrix},\qquad
  W:=
   \begin{pmatrix} 0 & V \\ \bar{V} & -1\end{pmatrix},
\end{equation}
\begin{equation}\label{eq:Lscript}
  \cL :=-Q\partial^2+W
  =\begin{pmatrix} 0 & L\\ \bar{L} & -1\end{pmatrix}.
\end{equation}
Consider the equation
\begin{equation}
  (\cL +\lambda\cI)F=0,\label{eq2order}
\end{equation}
on $AC^1\big(]a,b[,\C^2\big)$.
The equations (\ref{eq4order}) and  (\ref{eq2order}) are equivalent in the following sense:
\begin{lemma}\label{lm:reduction}
  The map $f\mapsto F:=\left(\begin{smallmatrix}f\\
      \bar{L}f \end{smallmatrix}\right)$ is an isomorphism of the
  space of solutions of ${(L\bar{L}+\lambda)f=0}$ onto the space of
  solutions of $(\cL +\lambda\cI)F=0$.
\end{lemma}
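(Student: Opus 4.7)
The plan is to unpack the matrix operator $\cL+\lambda\cI$ componentwise and check that the system it produces is equivalent, under the given substitution, to the scalar fourth-order equation $(L\bar L+\lambda)f=0$. No deep input is required; the lemma is essentially an algebraic bookkeeping exercise, the only mildly delicate points being the regularity requirements ($AC^1$ versus $AC$, and the meaning of ``solution'' on each side).

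First I would write out $\cL F$ explicitly for $F=\left(\begin{smallmatrix}f_1\\ f_2\end{smallmatrix}\right)\in AC^1(]a,b[,\C^2)$. Using the definition \eqref{eq:Lscript}, one has
\[
  \cL F=\begin{pmatrix} Lf_2\\ \bar L f_1-f_2\end{pmatrix},\qquad
  \cI F=\begin{pmatrix} f_1\\ 0\end{pmatrix},
\]
so the equation $(\cL+\lambda\cI)F=0$ is equivalent to the system
\[
  Lf_2+\lambda f_1=0,\qquad \bar L f_1-f_2=0.
\]
Note that the second equation forces $f_2=\bar L f_1$, and substituting into the first gives $L\bar L f_1+\lambda f_1=0$.

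Next, to show the map is into solutions, suppose $f\in AC^1]a,b[$ is a solution of $(L\bar L+\lambda)f=0$ in the sense specified after \eqref{eq4order}: $\bar L f\in AC^1]a,b[$ and $L(\bar L f)+\lambda f=0$. Setting $F=\left(\begin{smallmatrix}f\\ \bar L f\end{smallmatrix}\right)$, both components lie in $AC^1]a,b[$, so $F\in AC^1(]a,b[,\C^2)$, and the two equations of the displayed system are satisfied by construction. Hence $(\cL+\lambda\cI)F=0$. Conversely, if $F=\left(\begin{smallmatrix}f_1\\ f_2\end{smallmatrix}\right)$ is a solution of $(\cL+\lambda\cI)F=0$ in $AC^1(]a,b[,\C^2)$, then $f_1,f_2\in AC^1]a,b[$ with $\bar L f_1=f_2\in AC^1]a,b[$ and $L\bar L f_1=Lf_2=-\lambda f_1$. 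Thus $f_1$ is a solution of $(L\bar L+\lambda)f=0$ in the required sense, and clearly $F=\left(\begin{smallmatrix}f_1\\ \bar L f_1\end{smallmatrix}\right)$ is the image of $f_1$ under our map.

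Finally, injectivity of $f\mapsto F$ is immediate from the fact that the first component of $F$ recovers $f$, and linearity of the map is obvious. Taken together, this shows the map is a linear bijection between the two solution spaces. The only subtlety worth flagging during the write-up is the matching of regularity classes on both sides: the requirement $\bar L f\in AC^1$ on the scalar side corresponds exactly to $F\in AC^1(]a,b[,\C^2)$ on the vector side, which is why the second component of the system can be read both as a definition of $f_2$ and as a genuine regularity condition.
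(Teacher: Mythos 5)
Your proposal is correct and follows essentially the same route as the paper's proof: unpack $(\cL+\lambda\cI)F=0$ into the system $Lf_2+\lambda f_1=0$, $\bar L f_1-f_2=0$, and observe that the second equation both defines $f_2=\bar Lf_1$ and encodes the regularity requirement $\bar Lf\in AC^1]a,b[$ from the definition of a solution of \eqref{eq4order}. Nothing is missing.
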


\begin{proof}
  \begin{sloppypar}
    It is immediate to see that if $(L\bar{L}f+\lambda) f=0$, then
    $\left(\begin{smallmatrix}f\\
        \bar{L}f \end{smallmatrix}\right)\in AC^1(]a,b[,\C^2)$
    and ${(\cL+\lambda\cI)\left(\begin{smallmatrix}f\\
          \bar{L}f \end{smallmatrix}\right)=0}$.
  \end{sloppypar}
  
  Reciprocally, if $(\cL +\lambda\cI)\left(\begin{smallmatrix} f_1\\
      f_2 \end{smallmatrix}\right)=0$, then $f_1,f_2\in AC^1]a,b[$ and
  $\begin{pmatrix} \lambda f_1 +L f_2 \\ \bar{L}
    f_1-f_2\end{pmatrix}=\begin{pmatrix} 0\\ 0 \end{pmatrix} $.  Thus
  $f_2=\bar{L}f_1$ and $(L \bar{L}+\lambda) f_1=0$.
\end{proof}

We still prefer to transform (\ref{eq2order}) to a 1st order system of
4 equations. To this end we introduce
\[
  J=\begin{pmatrix} 0 & -Q \\ Q & 0 \end{pmatrix},\quad
  A=\begin{pmatrix}\cI & 0\\ 0 & 0 \end{pmatrix}, \quad
  B=\begin{pmatrix} W & 0\\ 0 &- Q \end{pmatrix}
\]
Consider the equation
\begin{equation}\label{space}
  J\partial_x\phi=(\lambda A
  +B)\phi,
\end{equation}
where $\phi\in AC\big(]a,b[,\C^4\big)$.

\begin{lemma} \label{lm:reduction2} 
  The map $F\mapsto \phi:=\begin{pmatrix} F\\ -F' \end{pmatrix}$ is an
  isomorphism of the space of solutions of $(\cL +\lambda\cI)F=0$ onto
  the space of solutions of $J\partial_x\phi=(\lambda A +B)\phi$.
\end{lemma}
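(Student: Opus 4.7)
The proof is a routine block-matrix calculation; my plan is to verify both directions of the claimed bijection by direct substitution, relying on the fact that the matrix $Q$ is invertible (in fact $Q^2=\one$) to recover the second-order equation from the first-order system.

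For the forward direction, I would take $F \in AC^1(]a,b[,\C^2)$ satisfying $(\cL+\lambda\cI)F=0$ and set $\phi:=\begin{pmatrix} F\\ -F'\end{pmatrix}$, which lies in $AC(]a,b[,\C^4)$ because $F,F'\in AC$. Computing in blocks,
\[
J\partial_x\phi=\begin{pmatrix} 0 & -Q\\ Q & 0\end{pmatrix}\begin{pmatrix} F'\\ -F''\end{pmatrix}=\begin{pmatrix} QF''\\ QF'\end{pmatrix},\qquad (\lambda A+B)\phi=\begin{pmatrix}(\lambda\cI+W)F\\ QF'\end{pmatrix}.
\]
The lower blocks match automatically, and the upper blocks agree exactly when $-QF''+WF+\lambda\cI F=0$, which is $(\cL+\lambda\cI)F=0$.

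For the reverse direction I would take a solution $\phi=\begin{pmatrix}\phi_1\\ \phi_2\end{pmatrix}$ of the first-order system, with $\phi_1,\phi_2\in AC(]a,b[,\C^2)$. The same block decomposition shows that the equation $J\partial_x\phi=(\lambda A+B)\phi$ splits as
\[
-Q\phi_2'=(\lambda\cI+W)\phi_1,\qquad Q\phi_1'=-Q\phi_2.
\]
Since $Q$ is invertible, the second equation forces $\phi_2=-\phi_1'$; in particular $\phi_1'\in AC$, so $F:=\phi_1\in AC^1$. Substituting into the first equation gives $Q\phi_1''=(\lambda\cI+W)\phi_1$, i.e.\ $(\cL+\lambda\cI)F=0$. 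Hence the map $F\mapsto\phi$ is a bijection, and linearity is obvious.

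There is no real obstacle here; the only point requiring care is the regularity bookkeeping, namely verifying that the algebraic reversibility of the map is matched at the level of function spaces ($AC^1$ versus $AC$), which is exactly what the second block of the first-order system provides via $\phi_1'=-\phi_2$.
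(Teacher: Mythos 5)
Your proof is correct and follows essentially the same route as the paper: a direct block computation showing the two components of the first-order system reduce to $\phi_2=-\phi_1'$ (which supplies the $AC^1$ regularity of $F=\phi_1$) and $Q\phi_1''=(\lambda\cI+W)\phi_1$. Your sign bookkeeping is in fact cleaner than the paper's own write-up, which contains a small sign typo in the reverse direction.
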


\proof It is immediate to see that if $(\cL+\lambda\cI)F=0$, then
  \[
    \begin{pmatrix} F\\ -F' \end{pmatrix}\in AC\big(]a,b[,\C^4\big)
    \quad\text{and}\quad
    (-J\partial_x+\lambda A+B)\begin{pmatrix} F\\
        -F' \end{pmatrix}=0.
  \]
  Reciprocally, if
  $(-J\partial_x+\lambda A+B)\begin{pmatrix} F\\ G \end{pmatrix}=0$,
  then
  \[
    F,G\in AC\big(]a,b[,\C^2\big) \quad\text{and}\quad
\begin{pmatrix}QG'+(W+\lambda\cI)F\\ -QF'+QG \end{pmatrix}=
\begin{pmatrix} 0\\ 0 \end{pmatrix}.
\]
Therefore, $G=F'$, so that $F\in AC^1\big(]a,b[,\C^2\big)$, and
$QF''+(W+\lambda\cI)F=0$.  \qed

\begin{lemma} \label{lm:EverittZettl}
  
  Suppose that $L$ is regular at $b$.  If all solutions of
  $(L\bar{L}+\lambda)f=0$ are square integrable for some $\lambda$,
  then all solutions of $(L\bar{L}+\lambda)f=0$ are square integrable
  for all $\lambda$.

  Moreover, if this is the case, then all solutions of $Lf=0$ are
  square integrable.

\end{lemma}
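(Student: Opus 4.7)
The plan is to reduce $(L\bar L+\lambda)f=0$ to a first-order $4\times 4$ system and apply Atkinson's Theorem \ref{atkinson}. Via Lemmas \ref{lm:reduction} and \ref{lm:reduction2}, the correspondence
\[
f\ \longleftrightarrow\ F=\begin{pmatrix}f\\ \bar Lf\end{pmatrix}
\ \longleftrightarrow\ \phi=\begin{pmatrix}F\\ -F'\end{pmatrix}
\]
is a bijection between solutions of $(L\bar L+\lambda)f=0$ and solutions of $J\partial_x\phi=(\lambda A+B)\phi$, and a direct computation yields
\[
(\phi\mid A\phi)=(F\mid\cI F)=|f|^{2},
\]
so the integrability condition \eqref{at2} of Atkinson's theorem is exactly $f\in L^{2}]a,b[$.

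Next I would verify the algebraic hypotheses of Theorem \ref{atkinson}. Since $Q^{2}=\one$, one has
\[
J^{-1}=\begin{pmatrix}0 & Q\\ -Q & 0\end{pmatrix},\qquad
J^{-1}A=\begin{pmatrix}0 & 0\\ -Q\cI & 0\end{pmatrix},
\]
and the nonzero block $-Q\cI=\begin{pmatrix}0&0\\ -1&0\end{pmatrix}$ is real, so $J^{-1}A$ is real. From the block form $J^{*}=-J$; the matrix $A$ is diagonal with entries $(1,0,0,0)$, hence $A=A^{*}\geq 0$; and $B=B^{*}$ follows from the self-adjointness of $W$ (immediate from its entries) and of $Q$. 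Since $V\in L^{1}_{\loc}]a,b[$, the matrix-valued coefficients $A,B$ are in $L^{1}_{\loc}]a,b[$, and the hypothesis that $b$ is regular yields integrability of $V$, and hence of $A,B$, near $b$.

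Theorem \ref{atkinson} is stated with $b$ as the singular endpoint, but it is manifestly symmetric in the two endpoints (apply the change of variable $x\mapsto a+b-x$, or its evident analogue when one of the endpoints is infinite); applying this symmetric version and transporting the conclusion back through the bijections above gives the first assertion: the property that all solutions of $(L\bar L+\lambda)f=0$ lie in $L^{2}]a,b[$ is independent of $\lambda\in\C$. For the second assertion, take $\lambda=0$ and let $g$ be any solution of $Lg=0$. Then $\bar g\in AC^{1}]a,b[$ and $\bar L\bar g=\overline{Lg}=0$, so a fortiori $L\bar L\bar g=0$. Thus $\bar g$ belongs to the four-dimensional space of solutions of $L\bar Lf=0$, hence by the first part $\bar g\in L^{2}]a,b[$, and $\|g\|=\|\bar g\|$ yields $g\in L^{2}]a,b[$.

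The main obstacle is the slightly fussy verification that the matrices $J,A,B$ satisfy the algebraic hypotheses of Atkinson's theorem, in particular the reality of $J^{-1}A$ and the self-adjointness of $B$, together with the observation that Atkinson's theorem still applies when the singular endpoint is $a$ rather than $b$; once these points are settled, both conclusions follow by routine transport of the $L^{2}$-condition through the explicit bijections.
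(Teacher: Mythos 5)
Your proof is correct and follows essentially the same route as the paper: reduce $(L\bar L+\lambda)f=0$ to the first-order system $J\partial_x\phi=(\lambda A+B)\phi$ via Lemmas \ref{lm:reduction} and \ref{lm:reduction2}, check the algebraic hypotheses of Atkinson's Theorem \ref{atkinson}, and deduce the second assertion by conjugating solutions of $Lf=0$ into solutions of $L\bar Lf=0$. Your explicit verification of the matrix conditions and of the endpoint reflection (Atkinson's theorem is stated with the regular end at $a$, whereas here it is at $b$) fills in details the paper leaves implicit, but the argument is the same.
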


\proof By Lemmas \ref{lm:reduction} and \ref{lm:reduction2}, instead
of $(L\bar{L}+\lambda)f=0$ we can consider
$J\partial_x\phi=(\lambda A +B)\phi$, and the square integrability of
$f$ is equivalent to the integrability of
$\big(\phi(x)|A\phi(x)\big)= |\phi_1(x)|^2$, since $f=\phi_1$ under
the identification.  Note that $J,A$ are constant $4\times4$ matrices
with $J^*=-J$, $A^*=A$, $J^{-1}A$ is a real matrix, and $B(x)^*=B(x)$
belongs to $L_\loc^1]a,b]$.  Thus the equation (\ref{space}) satisfies
the assumptions of Thm \ref{atkinson}.  The theorem says that if for
some $\lambda$ all solutions $\phi$ of (\ref{space})
$\big(\phi(x)|A\phi(x)\big)$ is integrable, then this is so for any
$\lambda$.  This proves the first statement of the lemma.
  
Now suppose that all solutions of $(L\bar{L}+\lambda)f=0$ belong to
$L^2]a,b[$ for all $\lambda$. In particular, all solutions of
$L\bar{L}f=0$ are square integrable.  Since
$\bar{L}f=0\Rightarrow L\bar{L}f=0$, any solution of $\bar{L}f=0$ is
square integrable. Hence also any solution of $Lf=0$.  \qed

\begin{theorem}\label{th:proofconj}
  The following assertions are equivalent and true:
  \begin{align}
    &\nu_a(L)=0 \Longleftrightarrow
      \dim\cU_a(\lambda)\leq1 \ \forall\lambda\in\C
      \Longleftrightarrow
      \dim\cU_a(\lambda)\leq1 \text{ for some } \lambda\in\C,
      \label{eq:1conj} \\
    &\nu_a(L)=2 \Longleftrightarrow \dim\cU_a(\lambda)=2 \
      \forall\lambda\in\C \Longleftrightarrow
      \dim\cU_a(\lambda)=2 \text{ for some } \lambda\in\C. 
      \label{eq:2conj}
\end{align}
If $V$ is a real function, then
\begin{equation}\label{eq:LPsad}
\nu_a(L)=0 \Longleftrightarrow \dim\cU_a(\lambda)=1
\ \forall\lambda\in\C\setminus\R. 
\end{equation}
\end{theorem}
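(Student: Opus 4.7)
The plan is to reduce \eqref{eq:1conj} and \eqref{eq:2conj} to a single nontrivial implication, then prove that implication via Lemmas \ref{lm:vonn} and \ref{lm:EverittZettl}. Proposition \ref{paio} already provides the equivalence of the ``for some $\lambda$'' and ``for all $\lambda$'' versions of the condition $\dim\cU_a(\lambda)=2$ (and by contraposition of $\dim\cU_a(\lambda)\leq 1$), as well as the implication $\dim\cU_a(\lambda)=2\Rightarrow\nu_a(L)=2$. Since $\nu_a(L)\in\{0,2\}$ by Theorem \ref{th:dimbv} and $\dim\cU_a(\lambda)\in\{0,1,2\}$, the assertion \eqref{eq:1conj} is logically complementary to \eqref{eq:2conj}. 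Thus everything reduces to establishing
\[
\nu_a(L)=2 \ \Longrightarrow\ \dim\cU_a(\lambda)=2\ \text{ for some }\lambda\in\C.
\]

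To prove this, I first reduce to the case of $b$ regular: for any $d\in\,]a,b[$, Lemma \ref{lm:bvf} yields $\nu_a(L)=\nu_a(L^{a,d})$, and the space $\cU_a(\lambda)$ is insensitive to replacing $L$ by $L^{a,d}$ since its defining condition is local near $a$. Assuming henceforth that $b$ is regular, $\nu_b(L)=2$ (Example \ref{ex:bcreg}), so \eqref{eq:Bab} together with the hypothesis $\nu_a(L)=2$ gives $\dim\cb(L)=4$. Lemma \ref{lm:vonn} translates this into $\dim\Ker(L_\M\bar L_\M+1)=4$. Since the classical solution space of the fourth-order equation $L\bar Lf+f=0$ has dimension exactly $4$ (Proposition \ref{cauchy} via Lemmas \ref{lm:reduction} and \ref{lm:reduction2}), every such classical solution must lie in $L^2]a,b[$. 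Lemma \ref{lm:EverittZettl} then yields that every solution of $Lf=0$ is square-integrable on $]a,b[$; because $b$ is regular this is the same as being $L^2$ near $a$, so $\dim\cU_a(0)=2$ and \eqref{eq:2conj} is proved.

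For the real case \eqref{eq:LPsad}, the right-to-left implication is immediate from \eqref{eq:1conj}. For the converse, assume $\nu_a(L)=0$ and fix a regular $d\in\,]a,b[$. For real $V$ the minimal operator $L^{a,d}_\m$ is Hermitian with adjoint $L^{a,d}_\M$, so for any $\lambda\in\C\setminus\R$ the classical von Neumann decomposition gives
\[
\cD(L^{a,d}_\M)=\cD(L^{a,d}_\m)\oplus\Ker(L^{a,d}_\M-\lambda)\oplus\Ker(L^{a,d}_\M-\bar\lambda),
\]
and complex conjugation identifies $\Ker(L^{a,d}_\M-\lambda)$ with $\Ker(L^{a,d}_\M-\bar\lambda)$. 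Hence $\dim\cb(L^{a,d})=2\dim\Ker(L^{a,d}_\M-\lambda)$. Regularity of $d$ gives $\Ker(L^{a,d}_\M-\lambda)=\cU_a(\lambda)$ and $\dim\cb(L^{a,d})=\nu_a(L^{a,d})+2=\nu_a(L)+2$, so $\dim\cU_a(\lambda)=1+\nu_a(L)/2=1$. The principal obstacle in the whole argument is the application of Lemma \ref{lm:EverittZettl}, where the genuinely new information is extracted and which ultimately rests on Atkinson's Theorem \ref{atkinson}; the remainder of the proof is essentially bookkeeping combining dimension counts from \eqref{eq:Bab}, \eqref{eq:polarB} and Lemma \ref{lm:vonn}.
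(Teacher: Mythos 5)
Your proof is correct and follows essentially the same route as the paper: the reduction to the single implication $\nu_a(L)=2\Rightarrow\dim\cU_a(0)=2$, the passage to a regular right endpoint, the identification $\cb(L)\simeq\Ker(L_\M\bar L_\M+1)$ from Lemma \ref{lm:vonn}, the dimension count via Prop.\ \ref{cauchy} and Lemmas \ref{lm:reduction}, \ref{lm:reduction2}, and the final appeal to Lemma \ref{lm:EverittZettl} all match the paper's argument. Your treatment of the real case \eqref{eq:LPsad} via the von Neumann decomposition on $]a,d[$ is likewise the argument the paper gives in Subsect.\ \ref{ss:VND}.
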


\begin{proof}
  The equivalences \eqref{eq:1conj} follow from \eqref{eq:2conj} by
  taking into account Theorem \ref{th:dimbv} and the fact that the
  dimension of $\cU_a(\lambda)$ is $\leq1$ if it is not $2$. Thus we
  only have to discuss \eqref{eq:2conj}.  The second equivalence from
  \eqref{eq:2conj} is a consequence of Proposition \ref{paio}.
  
  It is easy to see that $\nu_a(L)=2$ if $\dim\cU_a(\lambda)=2$ for
  some complex $\lambda$.  Indeed, let $u,v$ be solutions of the
  equation $(L-\lambda)f=0$ such that $W(u,v)=1$. Then if all the
  solutions of $(L-\lambda)f=0$ are square integrable near $a$, we get
  $W_a(u,v)=1$, hence $W_a\neq0$, so that $\nu_a(L)=2$.

  In what follows we consider the nontrivial part of the theorem: we
  assume $\nu_a(L)=2$ and show that $\dim\cU_a(0)=2$. Clearly we may
  assume that $b$ is a regular end, if not we replace it by any number
  between $a$ and $b$. Then $\nu(L)=2\Leftrightarrow\nu_a(L)=0$ and
  $\nu(L)=4\Leftrightarrow\nu_a(L)=2$ so we have to show that
  $\nu(L)=4\Rightarrow \dim\Ker (L_\M)=2$.  Since $\nu(L)=\dim \cb(L)$
  and $\Ker(L_\M\bar{L}_\M+1)\simeq\cb(L)$ by \eqref{eq:identif}, it
  suffices to prove
  \begin{equation}\label{eq:dim10}
    \dim \Ker(L_\M\bar{L}_\M+1)=4 \Rightarrow \dim\Ker (L_\M)=2  .
  \end{equation}
  By Prop. \ref{cauchy}, the space of solutions of the 1st order
  system (\ref{space}) is 4-dimensional. Therefore, Lemmas
  \ref{lm:reduction} and \ref{lm:reduction2} imply that the space of
  solutions of $L\bar{L}f+\lambda f=0$ is 4-dimensional.  Hence
  $\dim \Ker(L_\M\bar{L}_\M+1)=4$ implies that all solutions of
  ${(L\bar{L}+1)f=0}$ are square integrable.  Now by Lemma
  \ref{lm:EverittZettl} applied to $\lambda=1$, all solutions of
  $Lf=0$ are square integrable.
\end{proof}

\section{Spectrum and Green's operators}\label{sect-spectrum}
\protect\setcounter{equation}{0}

\subsection{Integral kernel of Green's operators}

Recall that in Def.  \ref{rightinverse} we introduced the concept of a
{\em right inverse of a closed operator}.  In the context of 1d
Schr\"odinger operators right inverses of $L_\M$ will be called {\em
  $L^2$ Green's operators}. Thus $G_\bullet$ is a $L^2$ Green's
operator if it is bounded, $\Ran (G_\bullet)\subset \cD(L_\M)$ and
$L_\M G_\bullet =\one$.

Let $G_\bullet$ be a Green's operator in the sense of Definition
\ref{green1}.  Clearly, $L_\mathrm{c}^2]a,b[$ is contained in
$L_\mathrm{c}^1]a,b[$. Besides, $L_{\mathrm{c}}^2]a,b[$ is dense in
$L^2]a,b[$. Therefore, if the restriction of $G_\bullet$ to
$L_{\mathrm{c}}^2]a,b[$ is bounded, then it has a unique extension to
a bounded operator on $L^2]a,b[$. This extension, which by
Prop. \ref{piuy6} is a $L^2$ Green's operator, will be denoted by the
same symbol $G_\bullet$.

Note that the pair $L_\M,L_\m$ satisfies $L_\m=L_\M^\#\subset L_\M$,
which are precisely the properties discussed in Subsect
\ref{ss:L2GII}.  Recall from that subsection that $L^2$ Green's
operators whose inverse contains $L_\m$ correspond to realizations of
$L$ that are between $L_\m$ and $L_\M$. The following proposition is
devoted to properties of such Green's operators.  

Recall that for any $x\in]a,b[$ we denote by $L^{a,x}$,
resp. $L^{x,b}$ the restriction of $L$ to $L^2]a,x[$,
resp. $L^2]x,b[$.  We also can define $L_\M^{a,x}$ and $L_\M^{x,b}$,
etc. Note that $x$ is a regular point of both $L^{a,x}$ and $L^{x,b}$
($V$ is integrable on a neighbourhood of $x$).

\begin{proposition}\label{pr:gkernel}
  Suppose that $L_\m\subset L_\bullet\subset L_\M$, $L_\bullet$ is
  invertible and $G_\bullet:=L_\bullet^{-1}$.  Then $G_\bullet$ is an
  integral operator whose integral kernel
  \[]a,b[\times]a,b[\ni(x,y)\mapsto G_\bullet(x,y)\in\C\]
  is a function separately continuous in $x$ and $y$ which has the
  following properties:\\ 
  (1) for each $a<x<b$ the function $G_\bullet(x,\cdot)$ restricted to
  $]a,x[$, resp. $]x,b[$ belongs to $\cD(L_\M^{a,x})$, resp.
  $\cD(L_\M^{x,b})$ and satisfies $LG_\bullet(x,\cdot)=0$ outside
  $x$. Besides, $G_\bullet(x,\cdot)$ and its derivative have limits at
  $x$ from the left and the right satisfying
  \begin{align*}
    G_\bullet(x,x-0)-G_\bullet(x,x+0)&=0,\\
    \partial_2G_\bullet(x,x-0)-\partial_2G_\bullet(x,x+0)&=1;
  \end{align*}
  (2) for each $a<y<b$ the function $G_\bullet(\cdot,y)$ restricted to
  $]a,y[$, resp. $]y,b[$ belongs to $\cD(L_\M^{a,y})$, resp.
  $\cD(L_\M^{y,b})$ and satisfies $LG_\bullet(\cdot,y)=0$ outside
  $y$. Besides, $G_\bullet(\cdot,y)$ and its derivative have limits at
  $y$ from the left and the right satisfying
  \begin{align*}
    G_\bullet(y-0,y)-G_\bullet(y+0,y)&=0,\\
    \partial_1G_\bullet(y-0,y)-\partial_1G_\bullet(y+0,y)&=1;
  \end{align*}
\end{proposition}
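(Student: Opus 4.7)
The plan is to identify the integral kernel of $G_\bullet$ by direct comparison with the reference Green's operator $G_d$ of Definition~\ref{greend}, and then read off the continuity and jump data on the diagonal from the explicit form of $G_d$. Assertions~(1) and~(2) are formal transposes of each other: by Proposition~\ref{propo2}, $G_\bullet^\#$ is itself a right inverse of $L_\M$, and by \eqref{eq:trans} its kernel satisfies $G_\bullet^\#(x,y)=G_\bullet(y,x)$. So it will suffice to prove~(2), then apply the same argument to $G_\bullet^\#$ to obtain~(1).

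To produce the kernel as an $L^2$ slice, I would fix $x\in\,]a,b[$ and a compact subinterval $J\subset\,]a,b[$ containing $x$. Lemma~\ref{lm:embed}(1) gives $|f(x)|\le C_J\|f\|_L$ for $f\in\cD(L_\M)$, and \eqref{eq:GL} gives $\|G_\bullet g\|_L\le C\|g\|$, so $g\mapsto(G_\bullet g)(x)$ is a continuous linear form on $L^2]a,b[$. Riesz then produces $y\mapsto G_\bullet(x,y)\in L^2]a,b[$ representing it. Doing this also for $G_\bullet^\#$ and invoking $G_\bullet^\#(x,y)=G_\bullet(y,x)$ gives $x\mapsto G_\bullet(x,y)\in L^2]a,b[$ for each $y$ as well.

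Next, fix $u,v\in\Ker(L)$ with $W(v,u)=1$ and any $d\in\,]a,b[$. For $g\in L^2_\mathrm{c}]a,b[$, $G_d g$ lies in $AC^1]a,b[$ with $L(G_d g)=g$ by Proposition~\ref{mojo}, while $G_\bullet g\in\cD(L_\M)$ with $L(G_\bullet g)=g$, so $G_\bullet g - G_d g\in\Ker(L)=\Span\{u,v\}$. There are unique scalars $\alpha(g),\beta(g)$ with
\[
(G_\bullet g)(x)=(G_d g)(x)+\alpha(g)u(x)+\beta(g)v(x),
\]
defining linear functionals $\alpha,\beta$ on $L^2_\mathrm{c}]a,b[$. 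The main technical step will be to show that $\alpha,\beta$ are $L^2$-continuous and hence represented by $L^2_\mathrm{loc}$ densities $\widetilde\alpha,\widetilde\beta$; this will use the $L^2$-boundedness of $G_\bullet$ together with the fact that the tails of $G_d g$ on the complement of $\supp g$ are bilinear expressions in $u,v$ and in the integrals of $g$ against $u,v$. With this in hand,
\[
G_\bullet(x,y)=G_d(x,y)+u(x)\widetilde\alpha(y)+v(x)\widetilde\beta(y).
\]

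To verify~(2), fix $y\in\,]a,b[$. Since $u,v\in AC^1]a,b[$, the correction $x\mapsto u(x)\widetilde\alpha(y)+v(x)\widetilde\beta(y)$ is $AC^1$ across $x=y$ and lies in $\Ker L$, so it contributes nothing to the jump; all diagonal data of $G_\bullet(\cdot,y)$ is inherited from $G_d(\cdot,y)$. From Definition~\ref{greend}, $G_d(\cdot,y)$ on each of $]a,y[$ and $]y,b[$ is either identically zero or a specific element of $\Span\{u,v\}$, so $G_\bullet(\cdot,y)$ restricted to each side is in $AC^1$, solves $Lf=0$, and is in $L^2$ by the first step, giving membership in $\cD(L_\M^{a,y})$ and $\cD(L_\M^{y,b})$. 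Continuity $G_\bullet(y-0,y)=G_\bullet(y+0,y)$ holds because both branches of $G_d(\cdot,y)$ evaluate to $u(y)v(y)-v(y)u(y)=0$ at $x=y$, and $\partial_1 G_\bullet(y-0,y)-\partial_1 G_\bullet(y+0,y)=W(v,u;y)=1$ by direct computation of the two branches, independently of whether $y<d$ or $y>d$. Assertion~(1) then follows by applying the entire argument to $G_\bullet^\#$ and rewriting via $G_\bullet^\#(x,y)=G_\bullet(y,x)$.
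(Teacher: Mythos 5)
Your route is genuinely different from the paper's. After the same Riesz representation step (producing $G_\bullet(x,\cdot)\in L^2]a,b[$ for every $x$), the paper identifies the two halves of the kernel as elements of $\cD(L_\M^{a,x})$ and $\cD(L_\M^{x,b})$ by a duality argument: it tests the identity $g(x)=\int\phi_x\,L_\bullet g$ against functions $g\in\cD(L_\m)$ vanishing near $x$, which form a dense subset of $\cD(L_\m^{a,x})\oplus\cD(L_\m^{x,b})$, and then obtains the jump relations from Green's identity on the two half-intervals. You instead exploit $\dim\Ker(L)=2$ and the explicit reference kernel $G_d$ to write $G_\bullet=G_d+|u\rangle\langle\widetilde\alpha|+|v\rangle\langle\widetilde\beta|$ and read everything off the formula. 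This is an attractive idea, and your deferred ``main technical step'' is completable: pick $x_1,x_2$ at which the matrix with rows $(u(x_i),v(x_i))$ is invertible (such points exist since $u,v$ are independent and cannot vanish simultaneously), bound $|(G_\bullet g)(x_i)|\leq C\|g\|$ via \eqref{eq:ode-main} and \eqref{eq:GL}, bound $|(G_dg)(x_i)|\leq\|G_d(x_i,\cdot)\|_{L^2(K)}\|g\|$ for $\supp g\subset K$, and apply Riesz on each $L^2(K)$.

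There is, however, a genuine gap in passing from the kernel identity to the pointwise assertions. Since $\widetilde\alpha,\widetilde\beta$ are only $L^2_\loc$ classes, the identity $G_\bullet(x,y)=G_d(x,y)+u(x)\widetilde\alpha(y)+v(x)\widetilde\beta(y)$ holds only for almost every $y$, whereas the proposition claims (1) and (2) for \emph{each} $x$ and $y$. This is not cosmetic: at $y=d$ the formula cannot hold for any choice of $\widetilde\alpha(d),\widetilde\beta(d)$, because $G_d(\cdot,d)\equiv0$ and $u\,\widetilde\alpha(d)+v\,\widetilde\beta(d)\in AC^1]a,b[$, so the right-hand side has no derivative jump at $x=d$ while the proposition requires a jump equal to $1$; your phrase ``independently of whether $y<d$ or $y>d$'' silently excludes exactly this point. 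To close the gap you must either rerun the argument with a second reference point $d'\neq d$ and control the exceptional null sets coming from the a.e.\ identifications (for instance using the local Lipschitz continuity of $x\mapsto G_\bullet(x,\cdot)$ in $L^2$, which the paper establishes and which pins down the kernel for every $x$), or fall back on the paper's duality argument, which works for every $x$ from the outset.
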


\begin{proof}
  We shall use ideas from the proof of Lemma 4 p.\ 1315 in \cite{DS2}.
  $G_\bullet$ is a continuous linear map
  $G_\bullet:L^2]a,b[ \to \cD(L_{\max})$ and for each $x\in]a,b[$ we
  have a continuous linear form $\varepsilon_x:f\mapsto f(x)$ on
  $\cD(L_{\max})$, hence we get a continuous linear form
  $\varepsilon_x\circ G_\bullet:L^2]a,b[\to\C$. Thus for each
  $x\in]a,b[$ there exists a unique $\phi_x\in L^2]a,b[$ such that
\[
  (G_\bullet f)(x)=\int_a^b \phi_x(y)f(y) \d y,
  \quad \forall f\in L^2]a,b[\,   .
\]
We get a map $\phi:\,]a,b[\to L^2]a,b[$ which is continuous, and even
locally Lipschitz, because if $J\subset\,]a,b[$ is compact and
$x,y\in J$ then
\begin{align*}
  \left|\int_a^b (\phi_x(z)-\phi_y(z))f(z) \d z \right|
  & = |(G_\bullet f)(x)-(G_\bullet f)(y)|
  \leq \|(G_\bullet f)'\|_{L^\infty(J)}|x-y| \\
  &  \leq C_1 \|G_\bullet f\|_{\cD(L_{\max})}|x-y| 
  \leq C_2 \|f\| |x-y|,
\end{align*}
hence $\|\phi_x-\phi_y\|\leq C_2|x-y|$. By taking $f=L_\bullet g$,
$ g\in\cD(L_\bullet)$, we get
\begin{equation}\label{eq:LG}
  g(x)=\int_a^b\phi_x(y)(L_\bullet g)(y) \d y.
\end{equation}
Set $\phi_x^a:=\phi_x\big|_{]a,x[}$ and $\phi_x^b:=\phi_x\big|_{]x,b[}$.
 (\ref{eq:LG}) can be rewritten as
\begin{equation}\label{eq:LG1}
  g(x)=\int_a^x\phi_x^a(y)(L_\bullet g)(y) \d y
+\int_x^b\phi_x^b(y)(L_\bullet g)(y) \d y.
\end{equation}
Since $G_\bullet^\#$ is also an $L^2$ Green's operator, we have
$L_{\min}\subset L_\bullet\subset L_{\max}$.  Assuming that
$g\in\cD(L_{\min})$ and $g(y)=0$ in a neighborhood of $x$, we can
rewrite (\ref{eq:LG1}) as
\begin{equation}\label{eq:LG2}
  0=\int_a^x\phi_x^a(y)(L_\m^{a,x} g)(y) \d y
+\int_x^b\phi_x^b(y)(L_\m^{a,x} g)(y) \d y.
\end{equation}
Such functions $g$ are dense in
$\cD(L_\m^{a,x})\oplus\cD(L_\m^{x,b})$.  Therefore, $\phi_x^a$ belongs
to $\cD(L_\M^{a,x})$ and $\phi_x^b$ belongs to $\cD(L_\M^{x,b})$.
Since $x$ is a regular end of both intervals $]a,x[$ and $]x,b[$ the
function $\phi_x$ and its derivative $\phi_x'$ extend to continuous
functions on $]a,x]$ and $[x,b[$. However, these extensions are not
necessarily continuous on $]a,b[$, i.e. we must distinguish the left
and right limits at $x$, denoted $\phi_x(x\pm0)$ and $\phi'_x(x\pm0)$.

We now take $g\in\cD(L_{\min})$ in \eqref{eq:LG}. By taking into
account (5) of Theorem \ref{perq2} and what we proved above we have
$W(\phi_x,g;a)=0$ and $W(\phi_x,g;b)=0$.  Denote $\phi_x^a$ and
$\phi_x^b$ the restrictions of $\phi_x$ to the intervals $]a,x[$ and
$]x,b[$.  Then by using Green's identity on $]a,x[$ and $]x,b[$ in
(\ref{eq:LG1}) we get
\[
  g(x) =-W(\phi^a_x,g;x) + W(\phi^b_x,g;x) .
\]
We may compute the last two terms explicitly because $x$ is a regular
end of both intervals:
\begin{align*}
  W(\phi^a_x,g;x) &=\phi_x(x-0)g'(x)-\phi_x'(x-0)g(x), \\
  W(\phi^b_x,g;x) &=\phi_x(x+0)g'(x)-\phi_x'(x+0)g(x).
\end{align*}  
Thus we get
\[
  g(x)=(\phi_x(x+0)-\phi_x(x-0))g'(x)+(\phi_x'(x-0)
  -\phi_x'(x+0))g(x).
\]
The values $g(x)$ and $g'(x)$ may be specified in an arbitrary way
under the condition $g\in\cD(L_{\min})$ so we get
$\phi_x(x+0)-\phi_x(x-0)=0$ and $\phi_x'(x-0) -\phi_x'(x+0)=1$. Thus
$\phi_x$ must be a continuous function which is continuously differentiable
outside $x$ and its derivative has a jump
$\phi_x'(x+0) -\phi_x'(x-0)=-1$ at $x$.

Thus $G_\bullet$ is an integral operator with kernel
$G_\bullet(x,y)=\phi_x(y)$. But $G_\bullet^\#$ is also an $L^2$
Green's operator and clearly $G_\bullet^\#$ has kernel
$G_\bullet^\#(x,y)=\phi_y(x)$.  Repeating the above arguments applied
to $G_\bullet^\#$ we obtain the remaining statements of the
proposition.
\end{proof}

Let us describe a consequence of the above proposition; we use the
notation of Definition \ref{defempty}.

\begin{proposition}\label{exto1}
  If there exists a realization of $L$ such that $\lambda\in\C$ is in
  its resolvent set, then $\dim\cU_a(\lambda)\geq1$ and
  $\dim\cU_b(\lambda)\geq1$.
\end{proposition}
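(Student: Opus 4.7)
The plan is to reduce to $\lambda=0$ by replacing $V$ with $V-\lambda$, so that $G_\bullet:=L_\bullet^{-1}$ becomes a bounded right inverse of $L_\M$ on $L^2]a,b[$. The central tool will be Proposition \ref{pr:gkernel}, which under the standing hypothesis $L_\m\subset L_\bullet\subset L_\M$ (the case of interest throughout the paper) provides an integral kernel $G_\bullet(x,y)$ whose $x$- and $y$-sections near $a$ lie in $\cD(L_\M^{a,y_0})$, respectively $\cD(L_\M^{a,x_0})$, solve $L\cdot=0$ off the diagonal, and satisfy the unit derivative jump across the diagonal.

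To produce an element of $\cU_a(0)$, I would fix $y_0\in\,]a,b[$ and consider $G_\bullet(\cdot,y_0)\big|_{]a,y_0[}$. By Proposition \ref{pr:gkernel}(2) this function is in $L^2]a,y_0[$ and solves $L\cdot=0$ there, so the Cauchy problem (Proposition \ref{mojo}) extends it uniquely to a solution $\tilde u_{y_0}$ of $L\tilde u=0$ on all of $]a,b[$, which then belongs to $\cU_a(0)$. Producing one nonzero $\tilde u_{y_0}$ will give $\dim\cU_a(0)\geq 1$, and the analogous construction on the right, using $G_\bullet(\cdot,y_0)\big|_{]y_0,b[}$, will give $\dim\cU_b(0)\geq 1$.

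Nonvanishing is then established by contradiction. Assume $\cU_a(0)=\{0\}$; then every such $\tilde u_{y_0}$ vanishes, so $G_\bullet(x,y_0)=0$ for all $x<y_0$. The symmetric argument in the second variable—namely, for fixed $x_0$ the $y$-section $G_\bullet(x_0,\cdot)\big|_{]a,x_0[}$ likewise lies in $L^2]a,x_0[$ and solves $L\cdot=0$ by Proposition \ref{pr:gkernel}(1), so its Cauchy extension again lies in $\cU_a(0)=\{0\}$—yields $G_\bullet(x_0,y)=0$ for $y<x_0$. Combining the two, $G_\bullet$ vanishes off the diagonal, so both $\partial_1 G_\bullet(y\pm 0,y)$ are zero, contradicting the jump relation $\partial_1 G_\bullet(y-0,y)-\partial_1 G_\bullet(y+0,y)=1$ from Proposition \ref{pr:gkernel}. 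The assumption $\cU_b(0)=\{0\}$ is excluded in the same way, using the right-side sections on $]y_0,b[$ and $]x_0,b[$.

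The principal technical issue is not the argument itself but the invocation of Proposition \ref{pr:gkernel}, which assumes $L_\m\subset L_\bullet$ while the present statement only requires $L_\bullet$ to be a closed realization with $\lambda\in\rs(L_\bullet)$. I would either read $L_\m\subset L_\bullet$ into the hypothesis—consistent with the paper's prevailing focus—or extend the Riesz-representation construction in the proof of Proposition \ref{pr:gkernel} to this slightly broader context. Once the integral kernel with its jump relations is in hand, the Cauchy-extension argument closes the proof with minimal effort.
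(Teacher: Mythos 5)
Your argument is correct and is in essence the paper's own: both proofs rest on Proposition \ref{pr:gkernel}, identify the off-diagonal sections of the kernel of $G_\bullet=L_\bullet^{-1}$ as elements of $\cU_a(\lambda)$ and $\cU_b(\lambda)$ after extension by the Cauchy problem, and derive a contradiction from the assumption that these all vanish. The one point where you genuinely diverge is the mechanism of the contradiction. The paper first replaces $G_\bullet$ by a self-transposed right inverse (Proposition \ref{piuy00+}), so that vanishing of $G_\bullet(x,y)$ on one side of the diagonal propagates to the other side via the symmetry $G_\bullet(x,y)=G_\bullet(y,x)$, contradicting $G_\bullet\neq0$. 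You instead exploit both parts (1) and (2) of Proposition \ref{pr:gkernel} --- the sections in each variable separately, both of which land in $\cU_a(\lambda)$ near $a$ --- so that $\cU_a(\lambda)=\{0\}$ alone forces $G_\bullet$ to vanish on both sides of the diagonal, and you then contradict the unit jump of $\partial_1G_\bullet$ across the diagonal. Your variant is marginally more economical in that it dispenses with the self-transposedness reduction. Your caveat that Proposition \ref{pr:gkernel} presupposes $L_\m\subset L_\bullet\subset L_\M$, whereas the statement speaks only of a ``realization,'' is well taken; but note that the paper's proof carries exactly the same implicit restriction, since Proposition \ref{piuy00+} assumes it too, so you are on equal footing there.
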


\begin{proof}
  Suppose that $L$ possesses a realization with $\lambda\in\C$
  contained in its resolvent set. This means that $L-\lambda$
  possesses an $L^2$ Green's operator $G_\bullet$.  By Proposition
  \ref{piuy00+} it can be chosen to satisfy
  $G_\bullet=G_\bullet^\#$. Then Proposition \ref{pr:gkernel} implies
  that for any $x\in]a,b[$ the function
  $G_\bullet(x,\cdot)\in L^2]a,b[$ belongs to $L^2]a,b[$ and satisfies
  $LG_\bullet(x,\cdot)=0$ on $]a,x[$ and $]x,b[$. We will prove that
  there is $x$ such that $G_\bullet(x,\cdot)\big|_{]x,b[}\neq0$, which
  implies $\dim\cU_b(\lambda)\geq1$. In order to prove that
  $\dim\cU_a(\lambda)\geq1$ it suffices to show that there is $x$ such
  that $G_\bullet(x,\cdot)\big|_{]a,x[}\neq0$ and the argument is
  similar.

  If the required assertion is not true, then
  $G_\bullet(x,\cdot)\big|_{]x,b[}=0$ for any $x$, in other terms
  $G_\bullet(x,y)=0$ for all $a<x<y<b$. Since $G_\bullet$ is
  self-transposed, \eqref{eq:trans} gives
  $G_\bullet(x,y)=G_\bullet(y,x)\ \forall x,y$. Hence we will also
  have $G_\bullet(x,y)=0$ for $a<y<x<b$. But this means $G_\bullet=0$,
  which is not true.
\end{proof}

\subsection{Forward and backward Green's operators}\label{ss:osbc}

Let us study the $L^2$ theory of the forward Green's operator
$G_\rightarrow $.  Recall that if $u,v$ span $\Ker (L)$ with
$W(v,u)=1$, then $G_\rightarrow $ is given by
\begin{equation}
  G_\rightarrow g(x)=v(x)\int_a^xu(y)g(y)\d y-u(x)\int_a^xv(y)g(y)\d y.
\end{equation}

Note that elements of $\Ker (L)$ do not have to be square integrable.
We have $\Ker(L_\M)=\Ker(L)\cap L^2]a,b[$. In the following
proposition we consider the case $\Ker(L)=\Ker(L_\M)$:

\begin{proposition}\label{moj1}
  Assume $\dim \Ker (L_\M)=2$. Then
\begin{enumerate}
\item $G_\rightarrow $ is Hilbert-Schmidt. In particular, it is an
  $L^2$ Green's operator of $L$.
\item Let $L_a$ be the operator defined in Def. \ref{defno}.  Then $L_a$
  has an empty spectrum, $(L_a-\lambda)^{-1}$ is compact for avery
  $\lambda\in\C$, and we have $L_a^{-1}=G_\rightarrow $.
\item Every $f\in \mathcal{D}(L_{\max})$ has a unique decomposition as
\begin{equation}
  f=\alpha u+\beta v+f_a,\quad f_a\in G_\rightarrow
  L^2]a,b[ \, =\cD(L_a).\label{moj} 
\end{equation}
\item $G_\leftarrow $ has analogous properties. In particular, we have
\begin{equation}\label{eq:GaGb}
  G_\rightarrow ^\#=G_\leftarrow ,\quad   L_b^{-1}=G_\leftarrow .
\end{equation}
\end{enumerate}
\end{proposition}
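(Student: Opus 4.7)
The plan is to leverage the assumption $\dim\Ker(L_\M)=2$, which is equivalent to $u,v\in L^2]a,b[$, and then combine a direct integral-kernel estimate with the abstract framework of Subsect.~\ref{ss:L2GII}.

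For (1), I would bound the kernel directly:
\[
\iint |G_\rightarrow(x,y)|^2\,\d x\,\d y
= \int_a^b\!\!\int_a^x |v(x)u(y)-u(x)v(y)|^2\,\d y\,\d x
\leq 4\|u\|^2\|v\|^2<\infty,
\]
so $G_\rightarrow$ is Hilbert--Schmidt. Together with $LG_\rightarrow g=g$ for $g\in L^2_{\mathrm{c}}]a,b[$ and the density of $L^2_{\mathrm{c}}]a,b[$ in $L^2]a,b[$, Prop.~\ref{piuy6} makes $G_\rightarrow$ an $L^2$ Green's operator of $L$.

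For (2), the key step is $\Ran(G_\rightarrow)\subseteq\cD(L_a)$. Given $g\in L^2]a,b[$, set $F(x):=\int_a^x u(y)g(y)\,\d y$ and $\tilde F(x):=\int_a^x v(y)g(y)\,\d y$; both are continuous on $[a,b[$ with $F(a)=\tilde F(a)=0$. A short algebraic manipulation yields, for any $h\in\cD(L_\M)$,
\[
W(G_\rightarrow g,h;x)=F(x)\,W(v,h;x)-\tilde F(x)\,W(u,h;x),
\]
and since $W(v,h;\cdot)$ and $W(u,h;\cdot)$ extend continuously to $a$ by Theorem~\ref{perq2}(3), passing to the limit gives $W(G_\rightarrow g,h;a)=0$, so $G_\rightarrow g\in\cD(L_a)$. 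Moreover $\cD(L_a)\cap\Ker(L_\M)=\{0\}$: any $f=\alpha u+\beta v$ in the intersection must satisfy $W(f,u;a)=-\beta$ and $W(f,v;a)=\alpha$, both of which vanish since $u,v\in\cD(L_\M)$. Prop.~\ref{piuy5}(5) applied to the right inverse $G_\rightarrow$ of $L_\M$ then gives $\cD(L_\M)=\Ran(G_\rightarrow)\oplus\Ker(L_\M)$, which combined with the two preceding facts forces $\cD(L_a)=\Ran(G_\rightarrow)$ and $L_a^{-1}=G_\rightarrow$, a compact operator. For an arbitrary $\lambda\in\C$, Prop.~\ref{paio} gives $\dim\cU_a(\lambda)=\dim\cU_b(\lambda)=2$, so I choose $u_\lambda,v_\lambda\in L^2]a,b[$ spanning $\Ker(L_\M-\lambda)$ with $W(v_\lambda,u_\lambda)=1$ and form the forward Green's operator $G_\rightarrow^\lambda$ for $L-\lambda$. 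Since the Wronskian condition defining $\cD(L_a)$ is $\lambda$-independent, $\cD(L_a-\lambda)=\cD(L_a)$, and the argument above repeats verbatim to produce $(L_a-\lambda)^{-1}=G_\rightarrow^\lambda$ Hilbert--Schmidt for every $\lambda$, hence $\spec(L_a)=\emptyset$.

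For (3), the decomposition from Prop.~\ref{piuy5}(5) reads $\cD(L_\M)=\Ran(G_\rightarrow)\oplus\Ker(L_\M)=G_\rightarrow L^2]a,b[\,\oplus\,\Span(u,v)$, which is precisely \eqref{moj} with $f_a=G_\rightarrow L_\M f$. For (4), formula \eqref{eq:trans} combined with the explicit kernels gives
\[
G_\rightarrow^\#(x,y)=G_\rightarrow(y,x)=G_\leftarrow(x,y),
\]
so $G_\rightarrow^\#=G_\leftarrow$. Since $L_a^\#=L_b$ and the transpose of a bounded inverse is the inverse of the transpose, $L_b^{-1}=(L_a^\#)^{-1}=(L_a^{-1})^\#=G_\rightarrow^\#=G_\leftarrow$; all remaining statements for $G_\leftarrow$ follow either by this transposition symmetry or by repeating the argument with the roles of $a$ and $b$ swapped. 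The main obstacle is the Wronskian computation in~(2): the cancellation producing $F(x)W(v,h;x)-\tilde F(x)W(u,h;x)$ is essential because $u,v$ and their derivatives may well be singular at $a$ (when $a$ is not semiregular), yet the abstract existence of Wronskian limits from Theorem~\ref{perq2}(3) is what turns $F(a)=\tilde F(a)=0$ into $W(G_\rightarrow g,h;a)=0$.
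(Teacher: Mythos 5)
Your proof is correct and follows essentially the same route as the paper: the Hilbert--Schmidt bound from $u,v\in L^2]a,b[$, the abstract right-inverse machinery of Propositions \ref{piuy5}--\ref{piuy6}, the replacement of $V$ by $V-\lambda$ to get empty spectrum, and transposition of kernels for (4). The one place you go beyond the paper's (rather terse) argument is the explicit identity $W(G_\rightarrow g,h;x)=F(x)W(v,h;x)-\tilde F(x)W(u,h;x)$ used to verify $\Ran(G_\rightarrow)=\cD(L_a)$ --- a step the paper leaves implicit when it invokes Prop.~\ref{piuy} --- and this is a legitimate and welcome filling-in of that gap (the sign in your claim $W(f,u;a)=-\beta$ is off given the convention $W(v,u)=1$, but this is immaterial to the conclusion).
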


\proof By hypothesis, $u,v\in L^2]a,b[$.  The Hilbert-Schmidt norm of
$G_\rightarrow $ is clearly bounded by $\sqrt2\|u\|_2\|v\|_2$. Then by
Proposition \ref{piuy} zero belongs to the resolvent set of $L_a$,
$L_a^{-1}=G_a$, and
\begin{equation}
  \cD(L_{\max})=\cD(L_a)\oplus\Ker (L_\M),
\end{equation} 
which can be restated as the decomposition (\ref{moj}). If
$\lambda\in\C$ and $V$ is replaced by $V-\lambda$ then the new
$G_\rightarrow $ will be the resolvent at $\lambda$ of $L_a$, which
proves the second assertion in (2). Finally, \eqref{eq:GaGb} is proved
by a simple computation.  \qed

\begin{proposition}
  $G_\rightarrow $ is bounded if and only if $\dim\Ker (L_\M)=2$ (so
  that the assumptions of Prop.  \ref{moj1} are valid).
  \end{proposition}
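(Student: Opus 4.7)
The ``if'' direction is already contained in Proposition \ref{moj1}(1): when $\dim\Ker(L_\M)=2$ both spanning solutions $u,v$ lie in $L^2]a,b[$, hence the kernel $v(x)u(y)-u(x)v(y)\,\mathbf{1}_{x>y}$ of $G_\rightarrow$ lies in $L^2\big(]a,b[\times]a,b[\big)$, so $G_\rightarrow$ is Hilbert--Schmidt. The substance is the converse, which I plan to prove by reading off $u$ and $v$ from the action of $G_\rightarrow$ (and of its transpose) on well-chosen compactly supported test functions.

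For the converse, assume $G_\rightarrow$ is bounded. By Proposition \ref{moj1}(4) we have $G_\rightarrow^\#=G_\leftarrow$, and since transposition preserves operator norm in a Hilbert space with conjugation, $G_\leftarrow$ is bounded as well. Fix any $c\in\,]a,b[$. For $g\in L^2]a,b[$ with compact support in $]a,c[$ and $x>c$, the integration range in $\int_a^x$ is effectively $[a,c]$, so
\[
(G_\rightarrow g)(x)=v(x)\int_a^c u(y)g(y)\d y-u(x)\int_a^c v(y)g(y)\d y=\alpha(g)v(x)+\beta(g)u(x).
\]
Boundedness of $G_\rightarrow$ forces $(G_\rightarrow g)\big|_{]c,b[}\in L^2]c,b[$. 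The map $g\mapsto(\alpha(g),\beta(g))\in\C^2$ is surjective on compactly supported $g$'s (see the last paragraph), so by choosing $g$'s realizing $(1,0)$ and $(0,1)$ I conclude $u,v\in L^2]c,b[$. Applying the symmetric argument to $G_\leftarrow$, which on $g$ compactly supported in $]c,b[$ reads $(G_\leftarrow g)(x)=\gamma(g)u(x)+\delta(g)v(x)$ for $x<c$, yields $u,v\in L^2]a,c[$. Combining, $u,v\in L^2]a,b[$, hence $u,v\in\Ker(L_\M)$, which forces $\dim\Ker(L_\M)=2$.

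The only non-routine point is the surjectivity of $g\mapsto\bigl(\int u g,\int v g\bigr)$ from compactly supported $L^2$ functions on a subinterval to $\C^2$: this is immediate because $u,v$ are linearly independent elements of $\Ker(L)$ and neither can $\lambda u+\mu v$ vanish on any nonempty open subinterval without being identically zero (by uniqueness in the Cauchy problem, Proposition \ref{mojo}); so $u$ and $v$ induce linearly independent continuous functionals on $L^2_{\mathrm{c}}$ of any subinterval. No real obstacle is expected; the argument is essentially the observation that $G_\rightarrow$ reproduces the fundamental solutions when tested against bumps localized away from $b$.
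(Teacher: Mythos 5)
Your proof is correct, and the converse is argued along a mildly different route from the paper's. Both proofs start from the same key observation that boundedness of $G_\rightarrow$ forces boundedness of its transpose $G_\leftarrow$; but the paper then invokes the identity \eqref{mojj}, $G_\rightarrow-G_\leftarrow=|v\rangle\langle u|-|u\rangle\langle v|=G_\leftrightarrow$, so that this rank-two operator is bounded on $L^2]a,b[$ and one reads off $u,v\in L^2]a,b[$ globally in one stroke. You instead split the interval at a point $c$, extract $u,v\in L^2]c,b[$ by testing $G_\rightarrow$ against functions supported in $]a,c[$, and $u,v\in L^2]a,c[$ by the symmetric test on $G_\leftarrow$. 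The two arguments are of essentially the same depth: the paper's final step (``boundedness of the right-hand side implies $u,v\in L^2$'') tacitly uses exactly the lemma you make explicit, namely that $g\mapsto\big(\langle u|g\rangle,\langle v|g\rangle\big)$ is surjective on compactly supported $g$ because no nontrivial combination $\lambda u+\mu v$ can vanish on a nonempty open subinterval (uniqueness in the Cauchy problem). So your version is a little longer but more self-contained; the paper's is slicker because it reuses the algebra of the Green's operators. One small citation issue: in the converse you should not appeal to Prop.~\ref{moj1}(4) for $G_\rightarrow^\#=G_\leftarrow$, since that proposition is stated under the hypothesis $\dim\Ker(L_\M)=2$, which is precisely what you are trying to establish. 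The identity is, however, an immediate computation on integral kernels, $G_\rightarrow(y,x)=G_\leftarrow(x,y)$, valid on $L_\mathrm{c}^2]a,b[$ without any hypothesis on $\Ker(L_\M)$ (this is how the paper uses it), so the step is sound once justified that way; with that repair your argument is complete.
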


\begin{proof}
  Let $G_\rightarrow $ be bounded. Then so is
  $G_\rightarrow ^\#=G_\leftarrow $. Let us recall the identity
  (\ref{mojj}):
  \begin{equation}
    G_\rightarrow -G_\leftarrow 
    =|v\rangle\langle u|-|u\rangle\langle v|.\label{mojj1}
 \end{equation}
 But the boundedness of the rhs of (\ref{mojj1}) implies
 $v,u\in L^2]a,b[$.
\end{proof}

$G_\rightarrow$ is useful even if it not a bounded operator,
especially if $\dim\cU_a(0)=2$:

\begin{proposition} \label{moj1a} 
  Assume that $\dim\cU_a(0)=2$.  Then $G_\rightarrow $ extends as a
  map from $L^2]a,b[$ to $C^1]a,b[$ satisfying the bounds
  \begin{align}
    |G_\rightarrow
    g(x)|&\leq\Big(|u(x)|\|v\|_x+|v(x)|\|u\|_x\Big)\|g\|_x,\\ 
    |\partial_xG_\rightarrow
    g(x)|&\leq\Big(|u'(x)|\|v\|_x+|v'(x)|\|u\|_x\Big)\|g\|_x, 
  \end{align}
  where $\|g\|_x:=\Big(\int_a^x |g(y)|^2\d y\Big)^{\frac12}$.  If
  $\chi\in C_\mathrm{c}^\infty[a,b[$, $\chi=1$ around $a$, then every
  $f\in \mathcal{D}(L_{\max})$ has a unique decomposition as
  \begin{equation} \label{moja}
    f=\alpha\chi u+\beta\chi v+f_a,\quad f_a\in
    \mathcal{D}(L_a).
  \end{equation}
\end{proposition}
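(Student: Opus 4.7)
The plan is to first establish the two pointwise bounds by Cauchy--Schwarz, use them to obtain the $C^1$ extension, and then reduce the decomposition to a question about the boundary functionals $\vec u_a,\vec v_a$.

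For the bounds, I start from the explicit formula
\begin{equation*}
G_\rightarrow g(x)=v(x)\int_a^x u(y)g(y)\,\d y-u(x)\int_a^x v(y)g(y)\,\d y.
\end{equation*}
The hypothesis $\dim \cU_a(0)=2$ combined with Def.\ \ref{defempty} gives $u,v\in L^2]a,d[$ for every $d<b$, hence $\|u\|_x$ and $\|v\|_x$ are finite for every $x\in\,]a,b[$, and Cauchy--Schwarz applied to each integral immediately yields the first bound. Differentiating, the two boundary terms $v(x)u(x)g(x)$ and $-u(x)v(x)g(x)$ cancel, leaving $\partial_x G_\rightarrow g(x)=v'(x)\int_a^x ug-u'(x)\int_a^x vg$, to which Cauchy--Schwarz again applies. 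Since $u,v\in AC^1]a,b[\subset C^1]a,b[$ and the two integrals depend continuously on $x$ on $]a,b[$ (being indefinite integrals of locally integrable functions), both $G_\rightarrow g$ and its derivative are continuous on $]a,b[$; this yields the extension $G_\rightarrow:L^2]a,b[\to C^1]a,b[$ and shows the bounds persist for the extended operator.

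For the decomposition, the key observation is that $W_a(u,v)=-W(v,u)=-1\neq 0$, so by Theorem \ref{th:bvf} the pair $\{\vec u_a,\vec v_a\}$ is a basis of $\cb_a(L)$. Because $\chi\equiv 1$ near $a$, the cut-off functions $\chi u,\chi v$ lie in $\cD(L_\M)$ (one checks $L(\chi u)=\chi Lu-\chi''u-2\chi' u'=-\chi''u-2\chi' u'\in L^2$, since $\chi',\chi''$ have compact support in $]a,b[$ and $u,u'$ are bounded on that support) and the locality of boundary functionals at $a$ (Def.\ \ref{df:bvf}) gives $\vec{(\chi u)}_a=\vec u_a$, $\vec{(\chi v)}_a=\vec v_a$. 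For $f\in\cD(L_\M)$, Kodaira's identity \eqref{eq:kodafgh} applied with the basis $\{u,v\}$ and $c=-1/W_a(u,v)=1$ gives
\begin{equation*}
\vec f_a=W_a(v,f)\vec u_a+W_a(f,u)\vec v_a,
\end{equation*}
so setting $\alpha:=W_a(v,f)$, $\beta:=W_a(f,u)$ and $f_a:=f-\alpha\chi u-\beta\chi v$ one obtains $\vec{(f_a)}_a=0$, i.e.\ $W(f_a,g;a)=0$ for every $g\in\cD(L_\M)$, which is the defining condition of $\cD(L_a)$ in Def.\ \ref{defno}. Uniqueness is immediate: any relation $\alpha\chi u+\beta\chi v+f_a=0$ with $f_a\in\cD(L_a)$ yields, after applying $\vec{\cdot}_a$, the equation $\alpha\vec u_a+\beta\vec v_a=0$, which forces $\alpha=\beta=0$ by linear independence and then $f_a=0$. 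The only genuinely technical point is verifying that $\chi u$ and $\chi v$ lie in $\cD(L_\M)$ and that their boundary functionals at $a$ coincide with $\vec u_a,\vec v_a$, but both are consequences of the locality of the construction.
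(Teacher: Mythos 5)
Your proof is correct, but it takes a genuinely different route from the paper's. The paper disposes of the whole proposition in three lines: it restricts everything to a subinterval $]a,d[$ with $a<d<b$, observes that $\dim\cU_a(0)=\dim\cU_d(0)=2$ (so $\dim\Ker(L^{a,d}_\M)=2$), and invokes Prop.\ \ref{moj1}, under which $G_\rightarrow$ restricted to $L^2]a,d[$ is a Hilbert--Schmidt $L^2$ Green's operator of $L^{a,d}$ and the decomposition holds there; the stated pointwise bounds and the gluing via $\chi$ are left implicit. You instead prove everything directly: Cauchy--Schwarz on the explicit kernel gives the two bounds (your cancellation of the terms $v(x)u(x)g(x)$ and $-u(x)v(x)g(x)$ when differentiating is correct, and the $C^1$ regularity follows since $u,v\in AC^1$ and the indefinite integrals are absolutely continuous), and for the decomposition you use the boundary-functional machinery of Theorem \ref{th:bvf} and Kodaira's identity rather than Prop.\ \ref{moj1}(3). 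This buys you explicit constants in the bounds and a cleaner justification of uniqueness; the cost is that you must verify $\chi u,\chi v\in\cD(L_\M)$ by hand, which you do correctly. One small presentational caveat: you write $\vec u_a$, $\vec v_a$ and $W_a(u,v)$ although $u,v$ need not belong to $\cD(L_\M)$ (they may fail to be square integrable near $b$), so Definition \ref{def:vec} and Theorem \ref{th:bvf} do not literally apply to them; the intended reading --- define $\vec u_a:=\vec{(\chi u)}_a$ and note $W_a(\chi u,\chi v)=W(u,v)=-W(v,u)=-1$ since $\chi\equiv1$ near $a$ --- is exactly what your locality remark supplies, so this is a matter of wording rather than a gap.
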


\begin{proof}
  Let $a<d<b$. Then we can restrict our problem to $]a,d[$. Now
  $\dim\cU_a(0)=\dim\cU_d(0)=2$.  Therefore, we can apply
  Prop. \ref{moj1}, using the fact that $G_\rightarrow $ restricted to
  $L^2]a,d[$ is an $L^2$ Green's operator of
  $L^{a,d}$.
\end{proof}

The main assertion of Theorem \ref{th:proofconj} is,
  technically speaking, that $\dim\cU_a(0)=2$ if $\nu_a(L)=2$.  We may
  state an improved version of this assertion as a boundary value
  problem and this is of a certain interest: it says that if
  $\nu_a(L)=2$ then the endpoint $a$ behaves almost as if it were a
  regular end (in the regular case one works with $L^1$ instead of
  $L^2$).  Note that since only the behavior near $a$ of the solutions
  matters, we may assume $b$ a regular endpoint.

\begin{proposition}\label{pr:bvproblem}
  Suppose that $\nu_a(L)=2$ and $b$ is a regular endpoint for
  $L$. Let $\phi,\psi\in\cB_a(L)$ be a pair of linearly
  independent boundary value functionals.
Then the linear continuous map
\begin{equation}\label{eq:bvproblem}
  \cD(L_\M)\ni f \mapsto (Lf,\phi(f),\psi(f))\in
  L^2]a,b[\times\C\times\C
\end{equation}
is bounded and invertible. In particular, for any $g\in L^2]a,b[$ and
any $\alpha,\beta\in\C$, there is a unique $f\in\cD(L_\M)$ such that
$Lf=g$, $\phi(f)=\alpha$, and $\psi(f)=\beta$.
\end{proposition}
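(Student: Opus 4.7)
The plan is to reduce to Proposition~\ref{moj1} via the boundary functional formalism and then invert an explicit $2\times 2$ matrix. First I would check that $\dim\Ker(L_\M)=2$. Since $b$ is regular, Proposition~\ref{endpoint} gives $\dim\cU_b(0)=2$; combining $\nu_a(L)=2$ with Theorem~\ref{th:proofconj} yields $\dim\cU_a(0)=2$. The $2$-dimensional solution space of $Lf=0$ is therefore square-integrable near both endpoints, hence lies in $L^2]a,b[$, so $\dim\Ker(L_\M)=2$. Fix a basis $u,v$ of $\Ker(L_\M)$ with $W(v,u)=1$. Proposition~\ref{moj1} then provides $L_a^{-1}=G_\rightarrow$ and the topological direct decomposition
\[
\cD(L_\M)=\Ker(L_\M)\oplus\cD(L_a),
\]
so each $f\in\cD(L_\M)$ writes uniquely as $f=c_u u+c_v v+f_a$ with $f_a=G_\rightarrow(Lf)\in\cD(L_a)$.

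Next I would analyze the action of $\phi,\psi$ on this decomposition. By Theorem~\ref{th:bvf}(i), every $\phi\in\cB_a(L)$ has the form $\phi=\vec{h}_a$ for some $h\in\cD(L_\M)$, i.e.\ $\phi(g)=W_a(h,g)$. For $f_a\in\cD(L_a)$, the defining property \eqref{proba} gives $W_a(f_a,h)=0$, hence $\phi(f_a)=-W_a(f_a,h)=0$, and likewise for $\psi$. Consequently
\[
\phi(f)=c_u\phi(u)+c_v\phi(v),\qquad \psi(f)=c_u\psi(u)+c_v\psi(v).
\]
The identification \eqref{eq:polar} gives $\cB_a(L)\simeq(\cD(L_\M)/\cD(L_a))'$, which via the decomposition above is canonically isomorphic to $(\Ker(L_\M))'$. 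Since $\cB_a(L)$ is $2$-dimensional, linear independence of $\phi,\psi$ is equivalent to the invertibility of
\[
M:=\begin{pmatrix}\phi(u)&\phi(v)\\ \psi(u)&\psi(v)\end{pmatrix}.
\]

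Finally I would assemble the inverse explicitly. Boundedness of the forward map is immediate, since $\|Lf\|\leq\|f\|_L$ by definition of the graph norm and $\phi,\psi\in\cD(L_\M)'$ by the definition of boundary functionals. For the inverse, given $(g,\alpha,\beta)\in L^2]a,b[\,\times\C\times\C$, set $f_a:=G_\rightarrow g\in\cD(L_a)$, solve $M(c_u,c_v)^\top=(\alpha,\beta)^\top$, and define $f:=c_u u+c_v v+f_a$. Then $f\in\cD(L_\M)$ with $Lf=g$ (since $Lu=Lv=0$ and $Lf_a=L_a f_a=g$), $\phi(f)=\alpha$, $\psi(f)=\beta$. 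Uniqueness follows from uniqueness of the decomposition together with invertibility of $M$; continuity of the inverse follows from the explicit formulas combined with boundedness of $G_\rightarrow$ provided by Proposition~\ref{moj1}(1), or alternatively from the open mapping theorem.

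The only non-routine ingredient is the deep implication $\nu_a(L)=2\Rightarrow\dim\cU_a(0)=2$ of Theorem~\ref{th:proofconj}; once that is in hand, the rest is a clean reduction to Proposition~\ref{moj1} and a piece of $2$-dimensional linear algebra. This matches the interpretation highlighted in the statement: when $\nu_a(L)=2$, the endpoint $a$ behaves essentially like a regular endpoint, with $\phi,\psi$ playing the role of $f\mapsto f(a)$ and $f\mapsto f'(a)$.
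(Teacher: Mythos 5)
Your proof is correct and follows essentially the same route as the paper: both hinge on the bijectivity of $L_a$ from Proposition \ref{moj1} (available because $\nu_a(L)=2$ and the regularity of $b$ force $\dim\Ker(L_\M)=2$ via Theorem \ref{th:proofconj}), the decomposition $\cD(L_\M)=\Ker(L_\M)\oplus\cD(L_a)$ on which $\phi,\psi$ vanish along the second summand, and a two-dimensional linear-algebra step on $\Ker(L_\M)$. The only cosmetic difference is that you invert the matrix $\bigl(\begin{smallmatrix}\phi(u)&\phi(v)\\ \psi(u)&\psi(v)\end{smallmatrix}\bigr)$ explicitly for arbitrary independent $\phi,\psi$, whereas the paper reduces to the canonical pair $\vec u_a,\vec v_a$ and invokes the open mapping theorem.
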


    \begin{proof}
   By Proposition \ref{moj1}, the operator $L_a:\cD(L_a)\to L^2]a,b[$
    is bijective hence the map \eqref{eq:bvproblem} is
    injective. Since the map is clearly continuous, by the open
    mapping theorem it suffices to prove its surjectivity. Let
    $g\in L^2]a,b[$ and $\alpha,\beta\in\C$. Since $L_a$ is
    surjective, there is $h\in\cD(L_a)$ such that $Lh=g$.  Now it
    suffices to show that there is $k\in\Ker(L_\M)$ such that
    $\phi(k)=\alpha,\psi(k)=\beta$ because then $f=h+k\in\cD(L_\M)$
    will satisfy $Lf=g$, $\phi(f)=\alpha$, and
    $\psi(f)=\beta$. Clearly, it suffices to prove this just for one
    couple $\phi,\psi$.  Since $\Ker(L_\M)$ is two dimensional, there
    are $u,v\in\Ker(L_\M)$ with $W(u,v)=1$ and we may take
    $\phi=\vec u_a$ and $\psi=\vec v_a$ since, by Theorem
    \ref{th:bvf} the boundary value functionals
    $\vec u_a,\vec v_a\in\cb_a(L)$ are linearly independent. Then it
    suffices to take $k=-\beta u+\alpha v$.
\end{proof}

\subsection{Green's operators with two-sided boundary conditions}

Recall from Def. \ref{def:1} that if $\phi\in\cB_a$ and $\psi\in\cB_b$
be are nonzero functionals, then $L_{\phi,\psi}$ is the operator
$ L_{\phi,\psi}\subset L_\M$ with
\begin{align*}
\cD(L_{\phi,\psi})&:=
\{f\in\cD(L_\M)\ \mid
  \phi(f)= \psi(f)=0\}.\end{align*}
Note that $L_{\phi,\psi}^\#=L_{\phi,\psi}$.

Recall also that, if $u,v$ are solutions of the equation $Lf=0$ with
$W(v,u)=1$, we defined in Def. \ref{two-side} the two-sided Green's
operator $G_{u,v}$
 \[G_{u,v}g(x):=\int_x^bu(x)v (y)g(y)\d y+
 \int_a^x v (x)u(y) g(y)\d y.\]
Clearly, there exists a close relationship between realizations of $L$ of the form $L_{\phi,\psi}$ and Green's operators of the form 
$G_{u,v}$.

\begin{proposition}
  Suppose $\phi\in\cB_a$, $\psi\in\cB_b$ and
  $0\in \rs( L_{\phi,\psi})$. Then there exists $u\in\cU_a(0)$ and
  $v\in\cU_b(0)$ with $W(v,u)\neq0$ such that, in the notation of
  Def. \ref{def:vec},
  \begin{equation} \label{eq:aa}
    \phi=\vec u_a,\quad \psi=\vec v_b,
  \end{equation}
\end{proposition}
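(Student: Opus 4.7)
The plan is to combine three earlier results---Theorem~\ref{th:dimbv} and Theorem~\ref{th:proofconj} together with Proposition~\ref{moj1}---and then finish with a short Wronskian contradiction.

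First I would reduce to the case $\dim\Ker(L_\M)=2$. Since $\phi\in\cB_a$ and $\psi\in\cB_b$ are nonzero, Theorem~\ref{th:dimbv} gives $\nu_a(L)=\nu_b(L)=2$, and Theorem~\ref{th:proofconj} then upgrades this to $\dim\cU_a(0)=\dim\cU_b(0)=2$. Because $\dim\cU_b(0)=2$ means \emph{every} solution of $Lf=0$ is square integrable near $b$ (and symmetrically near $a$), each element of $\cU_a(0)$ lies in $L^2]a,b[$, hence in $\Ker(L_\M)$. Counting dimensions, $\cU_a(0)=\cU_b(0)=\Ker(L_\M)$, which is $2$-dimensional.

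Next I would construct $u$ and $v$. Proposition~\ref{moj1} now applies and says $0\in\rs(L_a)$, so $\Ker(L_a)=\{0\}$. Since $\cD(L_a)=\{f\in\cD(L_\M)\mid \vec f_a=0\}$, this is precisely the statement that the linear map
\[
\Ker(L_\M)\ni u\mapsto \vec u_a\in\cB_a
\]
is injective. Both sides being $2$-dimensional, the map is a bijection, and there is a unique $u\in\cU_a(0)=\Ker(L_\M)$ with $\vec u_a=\phi$. The analogous argument at $b$ produces a unique $v\in\cU_b(0)$ with $\vec v_b=\psi$.

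Finally I would verify $W(v,u)\neq0$. Since $Lu=Lv=0$, the Wronskian $W(v,u;x)$ is constant in $x$, so its boundary limits both equal a common value $W(v,u)$. Suppose for contradiction $W(v,u)=0$; then $u,v$ are linearly dependent, and since $\phi\neq0$ forces $u\neq0$, we can write $v=cu$ with $c\in\C$. Antisymmetry of $W$ gives
\[
\phi(u)=W_a(u,u)=0,\qquad \psi(u)=W_b(v,u)=W(v,u)=0,
\]
so $u\in\cD(L_{\phi,\psi})$; combined with $Lu=0$, this places $u$ in $\Ker(L_{\phi,\psi})$, which is trivial because $0\in\rs(L_{\phi,\psi})$. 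Hence $u=0$, contradicting $\vec u_a=\phi\neq0$.

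The main conceptual input is Theorem~\ref{th:proofconj}, the paper's deepest result, which turns the boundary-functional count $\nu_a(L)=2$ into a full $2$-dimensional space of $L^2$-near-$a$ solutions of the homogeneous equation; once that is in hand, existence of $u,v$ is a dimension count driven by Proposition~\ref{moj1}, and the nonvanishing of the Wronskian is forced by invertibility of $L_{\phi,\psi}$ through the Lagrange/antisymmetry identities. No technical obstacle beyond correctly identifying $\cU_a(0)=\Ker(L_\M)$ seems to arise.
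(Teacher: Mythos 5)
Your argument is correct in the case it actually treats, but it takes a genuinely different route from the paper. The paper never invokes Theorem \ref{th:proofconj}: it extracts $u$ directly from the resolvent, using Proposition \ref{pr:gkernel} to show that for a suitable $c$ the function $y\mapsto G_{\phi,\psi}(c,y)$ restricted to $]a,c[$ is a nonzero element of $\Ker(L^{a,c}_{\phi})$, i.e.\ a solution of $Lf=0$ that is $L^2$ near $a$ and annihilated by $\phi$ (the key step being the transposition identity $(L^{a,c}_{\phi,c})^\#=L^{a,c}_{\phi}$ together with the observation from Proposition \ref{exto1} that $G_{\phi,\psi}(c,\cdot)|_{]a,c[}$ can be made nonzero). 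You instead use the deep result $\nu_a(L)=2\Rightarrow\dim\cU_a(0)=2$ to identify $\cU_a(0)=\cU_b(0)=\Ker(L_\M)$, and then observe that $0\in\rs(L_a)$ (Proposition \ref{moj1}) makes $u\mapsto\vec u_a$ an injective, hence bijective, map from the $2$-dimensional space $\Ker(L_\M)$ onto $\cB_a$. That is a clean and shorter argument when it applies, at the price of resting on heavier machinery; the paper's kernel argument is more self-contained and, importantly, works without any dimension count at $a$. The final Wronskian step is essentially identical in both proofs (proportional $u,v$ would give a nonzero element of $\Ker(L_{\phi,\psi})$).

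The gap is that your first line assumes $\phi$ and $\psi$ are nonzero, which is not part of the hypothesis: the statement only assumes $\phi\in\cB_a$, $\psi\in\cB_b$, and the paper explicitly remarks after the proposition that $\phi=0$ or $\psi=0$ (or both, as for $L_{0,0}=L_\M$) is allowed. If $\phi=0$ your entire chain collapses: $\nu_a(L)$ may be $0$, $\dim\cU_a(0)$ may be $1$, $\Ker(L_\M)$ may be trivial, and Proposition \ref{moj1} does not apply, so the bijection $\Ker(L_\M)\to\cB_a$ is unavailable. The case can be repaired separately --- if $0\in\rs(L_{0,\psi})$ then Theorem \ref{fred1} forces $\dim\cD(L_\M)/\cD(L_\m)\leq2$, hence $\nu_a(L)=0$, so $W_a\equiv0$ and any nonzero $u\in\cU_a(0)$ (which exists by Proposition \ref{exto1}) satisfies $\vec u_a=0=\phi$; the Wronskian argument then goes through unchanged --- but as written your proof does not cover it, whereas the paper's resolvent-kernel construction handles all cases uniformly.
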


\begin{proof}
  Let us prove the existence of $u$.  Note that by
  Proposition \ref{exto1} we have $\dim\cU_a(0)\geq1$. Then, by
  Proposition \ref{pr:gkernel}, the operator $L_{\phi,\psi}^{-1}$ has
  an integral kernel $G_{\phi,\psi}(\cdot,\cdot)$ such that for any
  $a<c<b$ the restriction of $G_{\phi,\psi}(c,\cdot)$ to $]a,c[$
  belongs to $\cD(L^{a,c}_\M)$ and satisfies
  $LG_{\phi,\psi}(c,\cdot)=0$. If $f\in\cD(L_{\phi,\psi})$ the
  relation \eqref{eq:LG} gives 
\begin{equation*} 
  f(x)=\int_a^b G_{\phi,\psi}(x,y)(L_{\phi,\psi} f)(y) \d y
\end{equation*}
hence if $a<x<c$ and $f(x)=0$ for $x>c$ we have
\begin{equation} \label{eq:lac}
  0=\int_a^c G_{\phi,\psi}(c,y)(L_{\phi,\psi} f)(y) \d y.
\end{equation}
Denote $L^{a,c}_{\phi,c}$ the operator in $L^2]a,c[$ defined by $L$
and the boundary conditions $\phi(f)=0$ and $f(c)=f'(c)=0$. Clearly,
any function $f$ satisfying such conditions extends to a function in 
$\cD(L_{\phi,\psi})$ if we set $f(x)=0$ for $x>c$ hence \eqref{eq:lac} 
is equivalent to
\begin{equation*}
  \int_a^c G_{\phi,\psi}(c,\cdot) L^{a,c}_{\phi,c} f \d x=0
  \quad \forall f\in\cD(L^{a,c}_{\phi,c}) .
\end{equation*}
We noted above that $L_{\phi,\psi}^\#=L_{\phi,\psi}$ and by a simple
argument this implies
$(L^{a,c}_{\phi,c})^\#=L^{a,c}_{\phi,0}\equiv L^{a,c}_{\phi}$ hence
the preceding relation means
$G_{\phi,\psi}(c,\cdot)|_{]a,c[} \in \Ker(L^{a,c}_{\phi})$.  Now
recall that during the proof of Proposition \ref{exto1} we have seen
that $c$ may be chosen such that
$G_{\phi,\psi}(c,\cdot)|_{]a,c[}\neq0$.  Finally, if we fix such a $c$
and denote $u=G_{\phi,\psi}(c,\cdot)$ then we get a nonzero element
$u\in\cU_a(0)$ such that $\phi(u)=0$ which, since $u\neq0$, is
equivalent to $\phi=\alpha\vec u_a$.

In an analogous way we prove the existence of $v$.  Both are nonzero.
If $u$ is proportional to $v$, then they are eigenvectors of
$L_{\phi,\psi}$ for the eigenvalue $0$, which contradicts
$0\in \rs( L_{\phi,\psi})$.  Hence they are not proportional to one
another, so that $W(v,u)\neq0$.
\end{proof}

Note that in the above proposition we can have $\phi=0$ or $\psi=0$,
or both.  However, $u$ and $v$ are always non-zero.

Suppose now that we start from a two-sided Green's operator.
\begin{proposition}
  Let $G_{u,v}$ be bounded on $L^2]a,b[$. Then $u\in \mathcal{U}_a(0)$
  and $v\in\mathcal{U}_b(0)$.
\end{proposition}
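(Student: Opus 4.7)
The plan is to test $G_{u,v}$ against a compactly supported $L^2$ function placed entirely near one endpoint, so that the explicit piecewise formula for $G_{u,v}$ collapses to a scalar multiple of $v$ (respectively $u$) on a nontrivial subinterval on the other side.

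First, I would record the basic observation. If $g\in L^2]a,b[$ has support contained in $]a,c[$ for some $a<c<b$, then, using the integral kernel of $G_{u,v}$ from Definition \ref{two-side}, for every $x\in]c,b[$ we have $\int_x^b v(y)g(y)\d y=0$ and $\int_a^x u(y)g(y)\d y=\int_a^b u(y)g(y)\d y$, hence
\[
(G_{u,v}g)(x)=\Big(\int_a^b u(y)g(y)\d y\Big)\, v(x)\qquad (x\in\,]c,b[).
\]
Symmetrically, if $g$ is supported in $]d,b[$ for some $a<d<b$, then
\[
(G_{u,v}g)(x)=\Big(\int_a^b v(y)g(y)\d y\Big)\, u(x)\qquad (x\in\,]a,d[).
\]

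Next I would choose $g$ so that the coefficient in front of $v$ does not vanish. Since $u$ is a nonzero solution of $Lu=0$, by uniqueness in Prop.\ \ref{mojo} it cannot vanish identically on any open subinterval of $]a,b[$; in particular, on any interval $]a,c[$ there exists $g\in L^2_{\mathrm{c}}]a,c[$ with $\alpha:=\int_a^b u(y)g(y)\d y\neq0$. For such $g$, boundedness of $G_{u,v}$ on $L^2]a,b[$ gives $G_{u,v}g\in L^2]a,b[$, and the identity above yields $\alpha v\in L^2]c,b[$; since $\alpha\neq 0$, this means $v$ is square integrable near $b$, i.e.\ $v\in\mathcal{U}_b(0)$. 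Choosing analogously $g\in L^2_{\mathrm{c}}]d,b[$ with $\int_a^b v(y)g(y)\d y\neq 0$ (possible because $v$ is a nonzero solution and thus does not vanish identically on $]d,b[$) gives $u\in L^2]a,d[$, hence $u\in\mathcal{U}_a(0)$.

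There is really no hard step here; the only point one must be careful about is justifying the non-vanishing of the coefficients $\int u g$ and $\int v g$ for suitable $g$, which rests precisely on the uniqueness part of Prop.\ \ref{mojo} applied to $u$ and $v$.
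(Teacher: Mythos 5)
Your proof is correct and is essentially the paper's argument carried out at the level of a single test function: the paper truncates $G_{u,v}$ by the indicators $\one_{]a,d]}(x)$ and $\one_{]d,b]}(y)$ to obtain a rank-one operator whose norm is $\big(\int_a^d|u|^2\big)^{1/2}\big(\int_d^b|v|^2\big)^{1/2}$, which is exactly the factorization of the kernel across a cut point that you exploit by applying $G_{u,v}$ to one $g$ supported near an endpoint. Your explicit justification that the coefficients $\int ug$ and $\int vg$ can be made nonzero (via uniqueness for the Cauchy problem, so that a nonzero solution cannot vanish on an open subinterval) is a point the paper leaves implicit, and it is genuinely needed to pass from boundedness to square integrability of $u$ and $v$.
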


\begin{proof}
  Let $a<d<b$.  If $G_{u,v}$ is bounded, then so is
  $\one_{]a,d]}( x)G_{u,v} \one_{]d,b]}(x)$, where $x$ denotes the
  operator of multiplication by the variable in $]a,b[$. But its
  integral kernel is
  \[
    u(x)\one_{]a,d]}(x)v (y) \one_{]d,b]}(y)
  \]
  where $x$ and $y$ denote the variables in $]a,b[$.  This is a rank
  one operator with the norm
  \[
    \Big(\int_a^d|u|^2(x)\d x\Big)^{\frac12} \Big(\int_d^b|v |^2(x)\d
    x\Big)^{\frac12} . \qedhere
  \]
\end{proof}

Until the end of this subsection we assume that $u\in\cU_a(0)$ and
$v\in\cU_b(0)$ and the functionals $\phi,\psi$ are given by
(\ref{eq:aa}).  Thus we have both Green's operator $G_{u,v}$ and the
operator $L_{\phi,\psi}$.

 Let
$\chi\in C^\infty]a,b[$ such that $\chi=1$ close to $a$ and $\chi=0$
close to $b$. Clearly,
\begin{equation}\label{eq:0}  \cD(L_{\phi,\psi})=\cD(L_{\min})+\Span\{ \chi u,(1-\chi) v\big).\end{equation}

We will show that  $G_{u,v}$ is bounded 
iff and only if $0\in\rs(L_{\phi,\psi})$.
However, it seems that there is no guarantee that 
 $G_{u,v}$ is bounded.

\begin{proposition}\label{khava-} 
\begin{align}
    G_{u,v}L_\mathrm{c}^2]a,b[&\subset \cD(L_{\phi,\psi}),\label{khav1} \\
    G_{u,v}L^2]a,b[&\subset AC^1]a,b[.\label{khav2}
  \end{align}
  Moreover, $G_{u,v}$ is bounded if and only if there exists $c>0$
  such that
\begin{equation}
      \|      L_{\phi,\psi}f\|\geq c\|f\|,\quad  f\in\cD(L_{\phi,\psi}).\label{khav3}
\end{equation}
If this is the case, then $0$ belongs to the resolvent set of $L_{\phi,\psi}$, we have
$G_{u,v}=L_{\phi,\psi}^{-1}$, $G_{u,v}^\#=G_{u,v}$ and
\begin{equation}\cD(L_{\phi,\psi})=G_{u,v}L^2]a,b[.\label{khav4}
\end{equation}
\end{proposition}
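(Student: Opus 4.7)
The strategy is to treat the three assertions of the proposition in the order stated, with the principal work concentrated in the equivalence (\ref{khav3}).

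For (\ref{khav1}), take $g\in L^2_{\mathrm{c}}]a,b[$ with support in $[a_1,b_1]\subset\,]a,b[$. Direct inspection of the integral kernel yields $G_{u,v}g(x)=u(x)\int_{a_1}^{b_1}v(y)g(y)\d y$ on $]a,a_1[$ and $G_{u,v}g(x)=v(x)\int_{a_1}^{b_1}u(y)g(y)\d y$ on $]b_1,b[$, so $u\in\cU_a(0)$ and $v\in\cU_b(0)$ force $G_{u,v}g\in L^2]a,b[$. Since $LG_{u,v}g=g$, $G_{u,v}g\in\cD(L_\M)$, and since $G_{u,v}g$ is a constant multiple of $u$ near $a$ (respectively of $v$ near $b$) the Wronskians $W_a(u,G_{u,v}g)$ and $W_b(v,G_{u,v}g)$ both vanish, so $\phi(G_{u,v}g)=\psi(G_{u,v}g)=0$. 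For (\ref{khav2}), decompose $G_{u,v}g(x)=u(x)A(x)+v(x)B(x)$ with $A(x)=\int_x^b v(y)g(y)\d y$ and $B(x)=\int_a^x u(y)g(y)\d y$; Cauchy--Schwarz combined with $u\in L^2$ near $a$ and $v\in L^2$ near $b$, plus local boundedness of $u,v$ elsewhere, places $A,B\in AC_\loc]a,b[$, and since $u,v\in AC^1]a,b[$ we conclude $G_{u,v}g\in AC^1]a,b[$.

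For the equivalence, first assume $G_{u,v}$ is bounded. The subspace $\cD(L_{\phi,\psi})$ is closed in the graph norm of $L_\M$ because $\phi,\psi$ are continuous in this norm by Theorem \ref{perq2}(4); passing to graph-norm limits in (\ref{khav1}) and invoking Proposition \ref{piuy6} yields $\Ran G_{u,v}\subset\cD(L_{\phi,\psi})$ and $L_\M G_{u,v}=\one$, so $L_{\phi,\psi}G_{u,v}=\one$ and $L_{\phi,\psi}$ is surjective. Injectivity reduces to $\cD(L_{\phi,\psi})\cap\Ker L_\M=\{0\}$: any $h\in\Ker L_\M$ is a solution of $Lh=0$ in $L^2]a,b[$, and since Wronskians of solutions are constant and $W(v,u)=1$, a case analysis applies. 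If $\dim\cU_a(0)=1$ then $h$ is proportional to $u$ globally and $\psi(h)=W_b(v,h)$ equals the scalar, forcing it to vanish; if $\dim\cU_b(0)=1$, symmetrically $h$ is proportional to $v$ and $\phi(h)$ is (up to sign) the scalar; otherwise $\dim\cU_a(0)=\dim\cU_b(0)=2$, $\{u,v\}$ spans $\Ker L_\M$, and writing $h=\alpha u+\beta v$ gives $\phi(h)=-\beta$ and $\psi(h)=\alpha$. In every case $\phi(h)=\psi(h)=0$ forces $h=0$, so $L_{\phi,\psi}$ is bijective with bounded inverse $G_{u,v}$. This yields $0\in\rs(L_{\phi,\psi})$, the coercive bound (\ref{khav3}) with $c=\|G_{u,v}\|^{-1}$, and the identity (\ref{khav4}).

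Conversely, assume (\ref{khav3}) holds. Then $L_{\phi,\psi}$ is injective with closed range; since $L_{\phi,\psi}^\#=L_{\phi,\psi}$, the closed range theorem (Theorem \ref{th:crt}) gives $\Ran L_{\phi,\psi}=\Ker L_{\phi,\psi}^\per=L^2]a,b[$, so $L_{\phi,\psi}^{-1}$ is defined on all of $L^2]a,b[$ and bounded by the open mapping theorem. By (\ref{khav1}) the (a priori only densely defined) operator $G_{u,v}$ agrees with $L_{\phi,\psi}^{-1}$ on the dense subspace $L^2_{\mathrm{c}}]a,b[$, hence $G_{u,v}$ extends uniquely to the bounded operator $L_{\phi,\psi}^{-1}$. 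The identity $G_{u,v}^\#=G_{u,v}$ is immediate from the pointwise symmetry $G_{u,v}(x,y)=G_{u,v}(y,x)$ of the integral kernel. The main obstacle in this plan is the injectivity step: verifying $\cD(L_{\phi,\psi})\cap\Ker L_\M=\{0\}$ is where the normalization $W(v,u)=1$ enters essentially, and the argument splits cleanly only through the case analysis on $\dim\cU_a(0)$ and $\dim\cU_b(0)$.
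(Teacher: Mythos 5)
Your proof is correct, and for the central equivalence it takes a genuinely different route from the paper. The paper's key step is an explicit integration by parts showing $G_{u,v}L_{\phi,\psi}f=f$ on the dense subspace $\cD(L_{\mathrm{c}})+\Span\{\chi u,(1-\chi)v\}$ of $\cD(L_{\phi,\psi})$; when $G_{u,v}$ is bounded this extends by continuity to a genuine left-inverse identity, which delivers injectivity and the coercive bound \eqref{khav3} in one stroke, and the converse is then a two-line density argument ($\|g\|=\|L_{\phi,\psi}G_{u,v}g\|\geq c\|G_{u,v}g\|$ on $L^2_{\mathrm{c}}$). You instead establish only the right-inverse identity $L_{\phi,\psi}G_{u,v}=\one$ (by graph-norm closure of $\cD(L_{\phi,\psi})$ together with Propositions \ref{piuy6} and \ref{piuy5}), and supply injectivity separately through the case analysis on $\Ker(L_\M)\subset\cU_a(0)\cap\cU_b(0)$, using the constancy of the Wronskian and $W(v,u)=1$; your converse invokes $L_{\phi,\psi}^\#=L_{\phi,\psi}$ and the closed range theorem (Theorem \ref{th:crt}) to get surjectivity before identifying $G_{u,v}$ with $L_{\phi,\psi}^{-1}$ on $L^2_{\mathrm{c}}$. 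What your route buys is that it avoids the somewhat delicate boundary-term computation at the (possibly singular) endpoints; what it costs is that you need the explicit description of $\Ker(L_\M)$ and the heavier functional-analytic machinery in the converse, whereas the paper's single computation serves both directions and yields the left-inverse identity on a dense domain even without boundedness. Both arguments are sound, and your treatment of \eqref{khav1}, \eqref{khav2} and of $G_{u,v}^\#=G_{u,v}$ matches the paper's.
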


\proof
It is easy to see that
\[    G_{u,v}L_\mathrm{c}^2]a,b[\subset
    \cD(L_\mathrm{c})+\Span\big\{ \chi u,(1-\chi) v\big\}, \]
    which implies (\ref{khav1}).

Let $g\in L^2]a,b[$. For $a<x<b$ we compute:
\begin{align}
  \partial_xG_{u,v} g(x)
  =u'(x)\int_{x}^bv (y)g(y)\d y+ v' (x)\int_a^{x}u(y)g(y)\d y.
\label{mojk2}
\end{align}
Now,
$x\mapsto u'(x),v'(x),\int_{x}^bv (y)g(y)\d y,\, \int_a^{x}u(y)g(y)\d
y$ belong to $AC]a,b[$. Hence (\ref{mojk2}) belongs to
$AC]a,b[$. Therefore, (\ref{khav2}) is true.  Next, let
 \begin{equation}
      f=f_\mathrm{c}+\alpha \chi u+\beta(1-\chi)v,\quad
      f_\mathrm{c}\in\cD(L_\mathrm{c}).\label{piuy1}
\end{equation} 
We compute, integrating by parts, \label{p:379}
\begin{align}
      G_{u,v}L_{\phi,\psi}f(x)&=
      \int_a^b\Big(\big(-\partial_y^2+V(y)\big)G_{u,v}(x,y)\Big)f(y)\d y
      \\
      &+\lim_{y\to a}\big(G_{u,v}(x,y)f'(y)-\partial_yG_{u,v}(x,y)f(y)\big)\\
      &-\lim_{y\to b}\big(G_{u,v}(x,y)f'(y)-\partial_yG_{u,v}(x,y)f(y)\big)\\
      &=f(x)+v(x)W(u,f;a)-u(x)W(v,f;b)\quad =
      \quad f(x).\label{piuy2}
\end{align}
Moreover, functions of the form (\ref{piuy1}) are dense in
$\cD(L_{\phi,\psi})$. Therefore, if $G_{u,v}$ is bounded, then (\ref{piuy2})
extends to
\begin{equation}
  G_{u,v}L_{\phi,\psi}f=f,\quad f\in \cD(L_{\phi,\psi}).
      \label{khava}
\end{equation}
Hence $\|f\|=\|G_{u,v}L_{\phi,\psi}f\|\leq \|G_{u,v}\|\|L_{\phi,\psi}f\|$
  which gives \eqref{khav3}.

Assume that   $G_{u,v}$ is bounded in the sense of
$L^2]a,b[$. By Prop. \ref{piuy6}, $G_{u,v}$ is an $L^2$ Green's operator.
  By Prop. \ref{piuy5},  it is also bounded from $L^2]a,b[$ to
$\cD(L_{\max})$. Therefore (\ref{khav1}) extends then to
    \begin{equation}
      G_{u,v}L^2]a,b[\subset \cD(L_{\phi,\psi}),
    \end{equation}
    so that
\begin{equation}
  L_{\phi,\psi}      G_{u,v}g=g,\quad g\in L^2]a,b[.  \label{khavb}\end{equation}
By   (\ref{khava}) and (\ref{khavb}),   $G_{u,v}$ is a (bounded)
inverse of $L_{\phi,\psi}$ so that 
(\ref{khav3}) and  (\ref{khav4}) are true.

Now assume that   (\ref{khav3}) holds. By
(\ref{khav1}), we then have
\begin{equation}
  g=L_{\phi,\psi}G_{u,v}g,\quad g\in L_\mathrm{c}^2]a,b[.
\end{equation}
Hence,
\begin{equation}
 \| g\|=\|L_{\phi,\psi}G_{u,v}g\|\geq c\|G_{u,v} g\|
\end{equation}
on $L_\mathrm{c}^2]a,b[$, which is dense in $L^2]a,b[$.  Therefore,
$G_{u,v}$ is bounded.
\qed

\subsection{Classification of realizations possessing non-empty
  resolvent set} 

In applications well posed operators (possessing non-empty resolvent
set) are by far the most useful. The following theorem describes a
classification of realizations of $L$ with this property.

\begin{theorem}
  Suppose that $L_\bullet$ is a realization of $L$ with a non-empty
  resolvent set. Then exactly one of the following statements is true.
  \begin{enumerate}

  \item $L_\bullet=L_\M$.\\
    Then also $L_\m=L_\bullet$, so that $L$ possesses a unique
    realization. We have $\nu(L)=0$. \\
    If $\lambda\in\rs(L_\bullet)$, then $\dim\Ker(L_\M-\lambda)=0$,
    $\dim\cU_a(\lambda)=\dim\cU_b(\lambda)=1$ and
    $\cU_a(\lambda)\neq\cU_b(\lambda)$. If $u\in\cU_a(\lambda)$ and
    $v\in\cU_b(\lambda)$ with $W(v,u)=1$, then
    \[
      (L_\bullet-\lambda)^{-1}=G_{u,v}.
    \]
    $L_\bullet$ is self-transposed and has separated boundary
    conditions. (See Def. \ref{separated} for separated boundary
    conditions.)

  \item The inclusion $\cD(L_\bullet)\subset\cD(L_\M)$ is of
    codimension $1$.\\
    Then the inclusion $\cD(L_\m)\subset\cD(L_\bullet)$ is of
    codimension $1$ and we have $\nu(L)=2$. \\
    If $\lambda\in\rs(L_\bullet)$, then $\dim\Ker(L_\M-\lambda)=1$,
    $\dim\cU_a(\lambda)=2$ and $\dim\cU_b(\lambda)=1$, or
    $\dim\cU_a(\lambda)=1$ and $\dim\cU_b(\lambda)=2$.  We can find
    $u\in\cU_a(\lambda)$ and $v\in\cU_b(\lambda)$ with $W(v,u)=1$ such
    that
    \[(L_\bullet-\lambda)^{-1}=G_{u,v}.\]
    $L_\bullet$ is self-transposed and has separated boundary conditions.
  \item The inclusion $\cD(L_\bullet)\subset\cD(L_\M)$ is of
    codimension $2$.\\
    Then the inclusion $\cD(L_\m)\subset\cD(L_\bullet)$ is of
    codimension $2$.  We have $\nu(L)=4$. \\
    The spectrum of $L_\bullet$ is discrete and its resolvents are
    Hilbert-Schmidt.  For any $\lambda\in\C$ we have
    $\dim\Ker(L_\M-\lambda)=2$, $\dim\cU_a(\lambda)=2$ and
    $\dim\cU_b(\lambda)=2$. \\
    If in addition $L_\bullet$ is self-transposed, has separated
    boundary conditions, and $\lambda\in\rs(L_\bullet)$, then we can
    find $u\in\cU_a(\lambda)$ and $v\in\cU_b(\lambda)$ with $W(v,u)=1$
    such that
    \[(L_\bullet-\lambda)^{-1}=G_{u,v}.\] If, instead, $L_\bullet$ is
    not self-transposed and has separated boundary conditions, then it
    has empty spectrum and one of the following possibilities hold:
\begin{romanenumerate}
\item $L_\bullet=L_a$ and $(L_\bullet-\lambda)^{-1}$ is given by the forward Green's operator.
  \item $L_\bullet=L_b$ and $(L_\bullet-\lambda)^{-1}$ is given by the backward Green's operator.
\end{romanenumerate}
We have $L_a^\#=L_b$, and both (i) and (ii) are described
 in Prop. \ref{moj1a}.
  \end{enumerate}
\end{theorem}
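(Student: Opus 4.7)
I would begin by fixing $\lambda_0 \in \rs(L_\bullet)$. Since $L_\bullet - \lambda_0$ is bijective, it is Fredholm of index $0$; Theorem \ref{fred1}(1) applied to the nested chain $L_\m - \lambda_0 \subset L_\bullet - \lambda_0 \subset L_\M - \lambda_0$ yields
\[
\dim \cD(L_\M)/\cD(L_\bullet) = \dim \cD(L_\bullet)/\cD(L_\m) = \tfrac{1}{2}\dim \cD(L_\M)/\cD(L_\m) = \tfrac{1}{2}\nu(L),
\]
where the last equality uses \eqref{eq:polarB}. Combined with $\nu(L) \in \{0,2,4\}$ from Theorem \ref{th:dimbv}, this produces the three mutually exclusive cases with codimensions $0$, $1$, $2$. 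At the same time, Proposition \ref{piuy} applied to $L_\M - \lambda_0$ (with right inverse $(L_\bullet - \lambda_0)^{-1}$) gives $\cD(L_\M) = \cD(L_\bullet) \oplus \Ker(L_\M - \lambda_0)$, so $\dim \Ker(L_\M - \lambda_0)$ matches the codimension in each case.

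For cases 1 and 2 the plan is to translate the $\nu$-information into dimensions of $\cU_a(\lambda_0), \cU_b(\lambda_0)$ via Theorem \ref{th:proofconj}, refined by Proposition \ref{exto1} which guarantees $\dim \cU_a(\lambda_0), \dim \cU_b(\lambda_0) \geq 1$. In case 1 this forces $(\dim \cU_a, \dim \cU_b) = (1,1)$, while $\Ker(L_\M - \lambda_0) = \{0\}$ forces $\cU_a(\lambda_0) \neq \cU_b(\lambda_0)$; picking independent $u \in \cU_a(\lambda_0)$, $v \in \cU_b(\lambda_0)$ with $W(v,u) = 1$, the identity $(L_\bullet - \lambda_0)^{-1} = G_{u,v}$ follows from Proposition \ref{khava-} applied to $L - \lambda_0$ (the boundary functionals being trivial). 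In case 2, summing $\nu_a + \nu_b = 2$ with $\nu_a,\nu_b \in \{0,2\}$ forces $(\nu_a, \nu_b) \in \{(2,0), (0,2)\}$, which fixes the $\cU_a/\cU_b$ dimensions as stated; self-transposedness comes from Proposition \ref{zorn-1} since both intermediate codimensions are $1$, and separability of the boundary conditions is automatic because the single non-trivial functional lives at one endpoint. The resolvent formula is produced by the proposition preceding Proposition \ref{khava-}, which manufactures $u \in \cU_a(\lambda_0)$, $v \in \cU_b(\lambda_0)$ encoding the prescribed boundary data, followed by Proposition \ref{khava-} itself.

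Case 3 is the richest. The identity $\nu_a = \nu_b = 2$ together with Theorem \ref{th:proofconj} gives $\dim \cU_a(\lambda) = \dim \cU_b(\lambda) = 2$ for every $\lambda \in \C$, hence $\dim \Ker(L_\M - \lambda) = 2$ everywhere. Proposition \ref{moj1} applied to $L - \lambda$ then shows that $L_a$ and $L_b$ have empty spectrum with Hilbert-Schmidt resolvents $G_\rightarrow$, $G_\leftarrow$ (appropriately shifted). For a general realization $L_\bullet$ with $\lambda_0 \in \rs(L_\bullet)$, I will apply Proposition \ref{piuy7} to conclude that $(L_\bullet - \lambda_0)^{-1}$ and the shifted $G_\rightarrow$ differ by an operator of rank at most $\dim \Ker(L_\M - \lambda_0) = 2$, so $(L_\bullet - \lambda_0)^{-1}$ is Hilbert-Schmidt; the analytic Fredholm theorem applied to $\one + (\lambda - \lambda_0)(L_\bullet - \lambda_0)^{-1}$ then gives discreteness of $\spec(L_\bullet)$. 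For separated boundary conditions, the $2$-dimensional $\cC_\bullet \subset \cB_a \oplus \cB_b$ has type $(2,0)$, $(0,2)$, or $(1,1)$, producing $L_a$, $L_b$, or $L_{\phi,\psi}$ with both $\phi,\psi$ nonzero; the first two are handled by Proposition \ref{moj1} together with \eqref{eq:GaGb}, while the third is self-transposed by the observation immediately after Definition \ref{def:1} and its resolvent is $G_{u,v}$ through the same pipeline as case 2.

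The principal obstacle I foresee is the general case 3 realization: propagating the Hilbert-Schmidt property and discreteness of spectrum from $L_a$ to an arbitrary $L_\bullet$ requires combining Proposition \ref{piuy7} (controlling the rank of the resolvent difference) with an analytic Fredholm argument, and one must check that the rank-$2$ correction is genuinely assembled from bounded functionals on $\cH$. A secondary delicate point, appearing in case 2, is the arithmetic step using $\nu_a, \nu_b \in \{0,2\}$ to rule out a hypothetical symmetric $(1,1)$ splitting of the single unit of codimension across the two endpoints.
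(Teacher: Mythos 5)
Your proof is correct and follows exactly the route the paper intends: the theorem is stated there without an explicit proof, and the assembly you give --- Theorem \ref{fred1} plus Proposition \ref{piuy} for the codimension bookkeeping, Theorems \ref{th:dimbv} and \ref{th:proofconj} with Proposition \ref{exto1} for the dimensions of $\cU_a(\lambda)$, $\cU_b(\lambda)$, and Propositions \ref{moj1}, \ref{piuy7}, \ref{khava-} together with the proposition preceding Proposition \ref{khava-} for the resolvent formulas --- is precisely the combination of preceding results the paper relies on. The two delicate points you flag are already settled by the cited statements: the rank-two correction is a continuous map into $\Ker(L_\M-\lambda_0)$ by Proposition \ref{piuy7}, hence finite rank and Hilbert--Schmidt, and the hypothetical $(1,1)$ splitting in case 2 is excluded by $\nu_a(L),\nu_b(L)\in\{0,2\}$ from Theorem \ref{th:dimbv}.
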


\subsection{Existence of realizations with non-empty resolvent set}

$\C\backslash\R$ is contained in the resolvent set of all self-adjoint operators. The following proposition gives a generalization of this fact.

\begin{proposition}\label{pr:lpc}
  Let $V_{\mathrm{R}}$ and $V_{\mathrm{I}}$ be the real and imaginary
  part of $V$. Let $\|V_\mathrm{I}\|_\infty=:\beta<\infty$.
  Then
  \begin{equation}
    \{\lambda\in\C\mid|\Im\lambda|>\beta\}\label{hull1}
  \end{equation}
  is contained in the resolvent set of some realizations of $L$. All
  realizations of $L$ possess only discrete spectrum in (\ref{hull1}).
  \end{proposition}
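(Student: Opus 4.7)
The strategy is to view $L$ as a bounded perturbation of the real-potential operator $L_R:=-\partial^2+V_R$ and then exploit the self-adjoint theory.

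First I would decompose $V=V_R+\i V_I$ with $V_R=\Re V$, $V_I=\Im V$, and denote by $M$ the bounded multiplication operator by $\i V_I$, so $\|M\|=\beta$. Since $\|(V-V_R)f\|=\|V_I f\|\leq\beta\|f\|$, Proposition~\ref{pr:relbdd} applied with $\alpha=0$ gives $\cD(L_\M)=\cD(L_{R,\M})$ and $\cD(L_\m)=\cD(L_{R,\m})$; on this common domain $L_\M=L_{R,\M}+M$ and $L_\m=L_{R,\m}+M$.

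Next, since $V_R$ is real, $L_{R,\m}$ is Hermitian and its domain is stable under complex conjugation. Classical von Neumann theory (equivalently, Theorem~\ref{zorn} applied to $L_R$ together with a conjugation-stable choice of Lagrangian subspace of $\cD(L_{R,\M})/\cD(L_{R,\m})$) yields a self-adjoint realization $L_{R,\bullet}$ with $L_{R,\m}\subset L_{R,\bullet}\subset L_{R,\M}$. Define $L_\bullet$ on the domain $\cD(L_{R,\bullet})$ as the restriction of $L_\M$; equivalently $L_\bullet=L_{R,\bullet}+M$, which is closed because $M$ is bounded, and satisfies $L_\m\subset L_\bullet\subset L_\M$. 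For $\lambda$ with $|\Im\lambda|>\beta$, self-adjointness gives $\|(L_{R,\bullet}-\lambda)^{-1}\|\leq1/|\Im\lambda|$, hence $\|(L_{R,\bullet}-\lambda)^{-1}M\|\leq\beta/|\Im\lambda|<1$, and the Neumann series makes $\one+(L_{R,\bullet}-\lambda)^{-1}M$ invertible. The factorization
\[
L_\bullet-\lambda=(L_{R,\bullet}-\lambda)\bigl(\one+(L_{R,\bullet}-\lambda)^{-1}M\bigr)
\]
then shows $\lambda\in\rs(L_\bullet)$, proving the first assertion.

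For the second assertion, let $L'_\bullet$ be any realization of $L$ with $L_\m\subset L'_\bullet\subset L_\M$. By Theorem~\ref{th:dimbv}, $\dim\cD(L_\M)/\cD(L_\m)\leq4$, so $\cD(L'_\bullet)$ differs from $\cD(L_\bullet)$ by a finite-dimensional subspace. Since $L_\bullet-\lambda$ is invertible---in particular Fredholm of index $0$---throughout the connected region $\{|\Im\lambda|>\beta\}$, the index formula \eqref{eq:dimrel} applied to the pairs $L_\bullet\subset L_\M$ and $L'_\bullet\subset L_\M$ implies that $L'_\bullet-\lambda$ is Fredholm of constant index on the whole region. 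The analytic Fredholm theorem applied to the holomorphic family $\lambda\mapsto L'_\bullet-\lambda$ then yields the usual dichotomy: either $L'_\bullet-\lambda$ is nowhere invertible on the region, or the set of non-invertibility is discrete and each such point is an eigenvalue of finite algebraic multiplicity---i.e.\ the spectrum is discrete.

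The main obstacle is ensuring that Theorem~\ref{zorn}, applied to the real operator $L_R$, produces a genuine self-adjoint (not merely self-transposed) extension; this is standard, as conjugation exchanges the two deficiency subspaces of $L_{R,\m}$ so that the deficiency indices coincide. A secondary technical point is the Fredholm index bookkeeping at the end: one must make sure the shift between $\ind(L_\bullet-\lambda)$ and $\ind(L'_\bullet-\lambda)$ depends only on $\dim\cD(L_\M)/\cD(L'_\bullet)$ and not on $\lambda$, which is precisely the content of \eqref{eq:dimrel}.
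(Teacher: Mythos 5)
Your treatment of the first assertion is essentially the paper's own proof: decompose $V=V_{\mathrm R}+\i V_{\mathrm I}$, pick a self-adjoint extension $L_{\mathrm R,\bullet}$ of the Hermitian operator $L_{\mathrm R,\min}$ (equal deficiency indices because $L_{\mathrm R,\min}$ commutes with conjugation), and invert $L_\bullet-\lambda$ with $L_\bullet=L_{\mathrm R,\bullet}+\i V_{\mathrm I}$ by a Neumann series using $\|(L_{\mathrm R,\bullet}-\lambda)^{-1}\|\leq|\Im\lambda|^{-1}$. Your appeal to Prop.~\ref{pr:relbdd} to identify the domains is a harmless addition, and the factorization you write is the same as the paper's up to the order of the factors. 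This part is correct.

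The second assertion is where the proposal has a genuine gap. The analytic Fredholm alternative gives exactly the dichotomy you state --- either $L'_\bullet-\lambda$ is invertible for no $\lambda$ in the region, or the set of non-invertible $\lambda$ is discrete there --- and your final ``i.e.\ the spectrum is discrete'' silently discards the first horn. It cannot be discarded: if $\dim\cD(L_\M)/\cD(L'_\bullet)\neq\dim\cD(L_\M)/\cD(L_\bullet)$, the constant index is nonzero and $L'_\bullet-\lambda$ is never invertible; and even for index-zero realizations the first horn can occur, e.g.\ $V=0$ on $]0,1[$ with the degenerate boundary conditions $f(0)=f(1)$, $f'(0)=-f'(1)$, for which the characteristic determinant vanishes identically and every $\lambda\in\C$ is an eigenvalue. (Your argument also tacitly assumes $\cD(L_\m)\subset\cD(L'_\bullet)$, whereas the statement speaks of all realizations, whose domains may have infinite codimension in $\cD(L_\M)$.) To be fair, the paper's own proof stops at the resolvent formula and offers no argument for the second sentence either; what your Fredholm bookkeeping legitimately yields is that $\spec_{\mathrm{F}0}(L'_\bullet)$ avoids the region \eqref{hull1} for realizations satisfying \eqref{dimen3}, and that for those the spectrum in the region is either discrete or all of it --- ruling out the latter requires an additional argument (or is simply false for degenerate boundary conditions, so the statement must be read in the $\spec_{\mathrm{F}0}$ sense).
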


\begin{proof}
  Let $L_\mathrm{R}:=-\partial_x^2+V_\mathrm{R}$. By Theorem
  \ref{perq2}, $L_{\mathrm{R},\min}$ is densely defined and
  $ L_{\mathrm{R},\min}^\# =L_{\mathrm{R},\max}\supset
  L_{\mathrm{R},\min}$.  By the reality of $V_\mathrm{R}$,
  $L_{\mathrm{R},\min}^*=L_{\mathrm{R},\min}^\#$. Therefore,
  $L_{\mathrm{R},\min}^*\supset L_{\mathrm{R},\min}$.  This means that
  $L_{\mathrm{R},\min}$ is Hermitian (symmetric).  Let us now apply
  the well-known theory of self-adjoint extensions of Hermitian
  operators.  Let $d_\pm:=\Ker \big(L_{\mathrm{R},\min}^*\mp\i)$ be
  the deficiency indices.  Using the fact that $L_{\mathrm{R},\min}$
  is real we conclude that $d_+=d_-$. Therefore, $L_{\mathrm{R},\min}$
  possesses at least one self-adjoint extension, which we denote
  $L_{\mathrm{R},\bullet}$.  By the self-adjointness of
  $L_{\mathrm{R},\bullet}$ we have
  $ \|( L_{\mathrm{R},\bullet}-\lambda)^{-1}\|\leq |\Im\lambda|^{-1}$
  for all $\lambda\not\in\R$.  Set
  $L_\bullet:= L_{\mathrm{R},\bullet}+\i V_\mathrm{I}$.  Clearly,
\begin{equation}\label{exto}
  L_{\max}\supset L_\bullet\supset L_{\min}.
\end{equation}
For $|\Im\lambda|>\beta$, $\lambda$ belongs to the resolvent set of
$L_\bullet$, and its resolvent is given by
\[(L_\bullet- \lambda)^{-1}= ( L_{\mathrm{R},\bullet}-\lambda)^{-1}\big(\one+\i V_\mathrm{I} ( L_{\mathrm{R},\bullet}-\lambda)^{-1}\big)^{-1}.
\qedhere\]
\end{proof}

Note that the above proposition can be improved to cover some singularities of $V_\mathrm{I}$. In fact, if
 there are
  numbers $\alpha,\beta$ with $0\leq\alpha<1$ such that
  \[\|V_{\mathrm{I}}f\|^2\leq \alpha^2\big(\|L_{\mathrm{R},\bullet} f\|^2+\beta^2\|f\|^2\big),\quad \forall f\in\cD(L_{\mathrm{R},\bullet}),\] then still
  \[\|V_\mathrm{I} ( L_{\mathrm{R},\bullet}-\lambda)^{-1}\|\leq\alpha<1,\]
  and the conclusion of Prop. \ref{pr:lpc} holds.

\subsection{``Pathological'' spectral properties} \label{ss:usp}

We construct now 1d Schr\"odinger operators whose realizations have an
empty resolvent set.  Such operators seem to be rather pathological
and not very interesting for applications.

\begin{proposition}\label{pr:example}
  There is $V\in L_\loc^\infty[0,\infty[$ such that if
  $L=-\partial^2+V$ then any operator $L_\bullet$ on $L^2]0,\infty[$
  with $L_{\min}\subset L_\bullet\subset L_{\max}$ has empty resolvent
  set, hence $\sigma(L_\bullet)= \C$.
\end{proposition}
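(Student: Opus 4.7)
The plan is to produce an explicit $V\in L^\infty_\loc[0,\infty[$ for which the space $\cU_b(\lambda)$ of solutions of $(L-\lambda)f=0$ that are square integrable near $\infty$ reduces to $\{0\}$ for every $\lambda\in\C$. By Proposition~\ref{exto1} this prevents any $\lambda$ from belonging to the resolvent set of any realization $L_\bullet$ with $L_{\min}\subset L_\bullet\subset L_{\max}$, so $\sigma(L_\bullet)=\C$. The endpoint $a=0$ is automatically regular because $V\in L^\infty_\loc\subset L^1_\loc$ near $0$, so $\dim\cU_a(\lambda)=2$ trivially, and the obstruction must come entirely from the behaviour at $\infty$.

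I propose to take $V(x):=e^{(1+i)x}$ and to apply the Liouville--Green (WKB) method to $(L-\lambda)f=0$ near $\infty$. Since $|V|\to\infty$ super-exponentially while $V''/V^{3/2}=(1+i)^2V^{-1/2}\in L^1[0,\infty[$, the error-controlled Liouville--Green theorem yields a fundamental system
\[
  f_\pm(x)=V(x)^{-1/4}\exp\!\Big(\pm\!\int_0^x\!\sqrt{V(t)-\lambda}\,dt\Big)\bigl(1+o(1)\bigr)
\]
with $o(1)\to 0$ as $x\to\infty$, uniformly for $\lambda$ in compact subsets of $\C$. A direct calculation gives $\int_0^x\!\sqrt V\,dt=(1-i)\bigl(e^{(1+i)x/2}-1\bigr)$, whose real part equals $\sqrt 2\,e^{x/2}\sin(x/2+\pi/4)+O(1)$; the $\lambda$-dependent correction $-\tfrac\lambda2\int_0^x V^{-1/2}\,dt$ stays bounded in $x$ because $V^{-1/2}$ is integrable at $+\infty$. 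Consequently $\log|f_\pm(x)|=\pm\sqrt 2\,e^{x/2}\sin(x/2+\pi/4)-x/4+O(1)$.

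The core of the argument is the peak analysis. Set $x_k:=\pi/2+4\pi k$ and $y_k:=x_k-2\pi$, so that $\sin(x/2+\pi/4)$ equals $+1$ at $x_k$ and $-1$ at $y_k$. Near $x_k$, a second-order Taylor expansion of the sine yields a Gaussian profile for $|f_+|^2$ of width of order $e^{-x_k/4}$ and peak height of order $e^{-x_k/2}\exp(2\sqrt 2\,e^{x_k/2})$, so
\[
  \int_{|x-x_k|\leq 1}|f_+(x)|^2\,dx\;\gtrsim\;c\,e^{-3x_k/4}\exp\!\bigl(2\sqrt 2\,e^{x_k/2}\bigr)\;\longrightarrow\;\infty\quad(k\to\infty),
\]
whereas $|f_-|$ is super-exponentially negligible on the same window. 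At $y_k$ the roles of $f_+$ and $f_-$ are reversed. Consequently, in a nontrivial combination $\alpha f_++\beta f_-$, the condition $\alpha\neq 0$ forces $L^2$-blow-up along the sequence $x_k$ and $\beta\neq 0$ forces the analogous blow-up along $y_k$; thus no nonzero solution of $(L-\lambda)f=0$ is in $L^2$ near $\infty$, i.e.\ $\cU_b(\lambda)=\{0\}$ for every $\lambda\in\C$.

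The main obstacle is justifying the Liouville--Green asymptotics rigorously with a multiplicative error $1+o(1)$ uniform in $\lambda$ on compacta: the super-exponential peak of the exponential factor must not be masked by the $V^{-1/4}$ prefactor or by the error term. This is handled by the standard Liouville transformation $\xi(x):=\int_0^x\!\sqrt{V-\lambda}\,dt$, $g(\xi):=(V-\lambda)^{1/4}f(x)$, which converts $(L-\lambda)f=0$ into $g''(\xi)=\varphi(\xi)g(\xi)$ where the control function $\varphi$ is expressible through $V''/V^{3/2}$ and $(V')^2/V^{5/2}$, both integrable at $+\infty$ with bounds uniform for $\lambda$ in compact sets. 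A Volterra iteration then supplies the asserted asymptotics, after which the peak estimate above becomes rigorous and yields $\sigma(L_\bullet)=\C$ for every realization of $L$.
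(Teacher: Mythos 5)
Your overall strategy --- exhibit a $V$ with $\cU_b(\lambda)=\{0\}$ for every $\lambda$ and invoke Proposition \ref{exto1} --- is logically sound and genuinely different from the paper's, but the decisive analytic step is not justified and, as written, fails. The error-controlled Liouville--Green theorem you appeal to does not merely require integrability of $V''/V^{3/2}$ and $(V')^2/V^{5/2}$; it requires the path to be \emph{progressive}, i.e.\ $\Re\,\xi(x)=\Re\int_0^x\sqrt{V-\lambda}\,dt$ must be monotone on the range over which the Volterra kernel $\tfrac12\bigl(1-e^{-2(\xi(t)-\xi(x))}\bigr)$ is to stay bounded. For $V(x)=e^{(1+\i)x}$ you computed yourself that $\Re\,\xi(x)=\sqrt2\,e^{x/2}\sin(x/2+\pi/4)+O(1)$ oscillates with exponentially growing amplitude, so the path is as far from progressive as possible: the kernel reaches size $\exp(c\,e^{x/2})$ and the iteration does not deliver a multiplicative error $1+o(1)$ valid on all of $[0,\infty[$. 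This is precisely the regime of the Stokes phenomenon (the substitution $t=(1-\i)e^{(1+\i)x/2}$ turns the equation into a modified Bessel equation along a path with $\arg t\to\infty$), where a solution recessive in one window acquires a dominant component in the next; there is no basis $f_\pm$ for which your asymptotics hold simultaneously on the whole half-line, and without that the peak argument collapses. Two smaller points: the peaks of $|f_+|$ sit at the critical points of $e^{x/2}\sin(x/2+\pi/4)$, i.e.\ at $x=\pi+4\pi j$, not where the sine equals $1$; and the paper's closing remark notes that it is apparently unknown whether a complex $V$ can have $\cU_b(\lambda)=\{0\}$ for all $\lambda$, so a correct proof of your claim would settle an open question --- which should itself be a warning sign.

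The paper's own proof avoids asymptotics entirely: it builds a locally bounded $V$ taking each complex rational value $c$ on arbitrarily long intervals, and on such an interval of length $r$ uses a plateau function $\phi\in\cD(L_{\min})$ with $\phi\equiv1$ away from the ends, so that $\|(L_\bullet-c)\phi\|=\|\phi''\|=O(1)$ while $\|\phi\|\sim\sqrt r$. Thus every complex rational is an approximate eigenvalue of every realization $L_\bullet\supset L_{\min}$, and closedness of the spectrum gives $\sigma(L_\bullet)=\C$. If you wish to keep your route you would have to replace the global WKB claim by an honest connection-problem analysis; the approximate-eigenvalue argument is both shorter and airtight.
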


\begin{proof}
  Let $I_n=]n^2-n,n^2+n[$ with $n\geq1$ integer. Then $I_n$ is an open
  interval of length $|I_n|=2n$ and $I_{n+1}$ starts with the point
  $n^2+n$ which is the endpoint of $I_n$. Thus $\cup_n I_n$ is a
  disjoint union equal to $]0,\infty[\,\setminus\{n^2+n\mid
  n\geq1\}$. Let $\P$ be the set prime numbers $\P=\{2,3,5,\dots\}$
  and for each prime $p$ denote $J_p=\cup_{k\geq1}I_{p^k}$. We get a
  family of open subsets $J_p$ of $]0,\infty[$ which are pairwise
  disjoint and each of them contains intervals of length as large as
  we wish. Now let $p\mapsto c_p$ be a bijective map from $\P$ to the
  set of complex rational numbers and let us define a function
  $V:[0,\infty[\,\to\C$ by the following rules: if $x\in J_p$ for
  some prime $p$ then $V(x)=c_p$ and $V(x)=0$ if
  $x\notin\cup_pJ_p$. Then $V$ is a locally bounded function whose
  range contains all the complex rational numbers. We set
  $L=-\partial^2+V(x)$ and we prove that the spectrum of any
  $L_\bullet$ with $L_{\min}\subset L_\bullet\subset L_{\max}$ is
  equal to $\C$. Since the spectrum is closed, it suffices to show
  that any complex rational number $c$ belongs to the spectrum of any
  $L_\bullet$. If not, there is a number $\alpha>0$ such that
  $\|(L_\bullet-c)\phi\|\geq\alpha\|\phi\|$ for any
  $\phi\in\cD(L_\bullet)$. If $r$ is a (large) positive number then
  there is an open interval $I$ of length $\geq r$ such that $V(x)=c$
  on $I$. Let $\phi\in C^\infty_{\mathrm{c}}(I)$ such that $\phi(x)=1$
  for $x$ at distance $\geq1$ from the boundary of $I$ and with
  $|\phi''|\leq\beta$ with a constant $\beta$ independent of $r$ (take
  $r>3$ for example). Then $\phi\in\cD(L_{\min})$ and
  $(L-c)\phi=-\phi''+V\phi-c\phi=-\phi''$ hence
  $\|\phi''\|=\|(L_\bullet-c)\phi\|\geq\alpha\|\phi\|$ so
  $\alpha\|\phi\|\leq 2\beta$ which is impossible because the left
  hand side is of order $\sqrt{r}$. One may choose $V$ of class
  $C^\infty$ by a simple modification of this construction.
\end{proof}

\section{Potentials with a negative imaginary part}
\protect\setcounter{equation}{0}

\subsection{Dissipative 1d Schr\"odinger  operators} \label{ss:dsto}

Recall that an operator $A$ is called {\em dissipative} if
\begin{equation} \label{eq:diss}
  \Im(f|Af)\leq0,\quad f\in\cD(A),\end{equation}
that is, if its numerical range is contained in
$\{\lambda\in\C\mid\Im\lambda\leq0\}$. It is called {\em maximal
  dissipative} if in addition its spectrum is contained in
$\{\lambda\in\C\mid\Im\lambda\leq0\}$.
The following criterion is well-known \cite{Kato}.

\begin{proposition}\label{pr:adis}
  Assume $A$ is closed, densely defined and dissipative. Then $A$ is
  maximal dissipative if and only if $-A^*$ is dissipative, and then
  $-A^*$ is maximal dissipative.
\end{proposition}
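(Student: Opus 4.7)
The plan is to use the standard characterization of maximal dissipativity through surjectivity of $A-i\mu$ for $\mu>0$, and to transfer the problem to $A^*$ via the adjoint relation $(A-i\mu)^*=A^*+i\mu$.

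First I would record the computation, for $h\in\cD(A)$ and $\mu>0$,
\[
\|(A-i\mu)h\|^2=\|Ah\|^2+\mu^2\|h\|^2+2\mu\Im(Ah|h),
\]
together with the companion identity obtained by replacing $A-i\mu$ by $A^*+i\mu$, which carries a minus sign on the last term. Dissipativity of $A$ gives $\Im(Ah|h)=-\Im(h|Ah)\geq 0$, hence $\|(A-i\mu)h\|\geq\mu\|h\|$. Combined with closedness of $A$, this makes $A-i\mu$ injective with closed range, so $i\mu\in\rs(A)$ is equivalent to $\Ker(A^*+i\mu)=\{0\}$.

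For the direction $(\Leftarrow)$, suppose $-A^*$ is dissipative. Any $f\in\Ker(A^*+i\mu)$ satisfies $(f|A^*f)=-i\mu\|f\|^2$, so $\Im(f|A^*f)=-\mu\|f\|^2$; dissipativity of $-A^*$ forces this quantity to be $\geq 0$, hence $f=0$. Thus $i\mu\in\rs(A)$ for every $\mu>0$, so $A$ is maximal dissipative.

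For $(\Rightarrow)$, assume $A$ is maximal dissipative. Then $\|(A-i\mu)^{-1}\|\leq 1/\mu$ for all $\mu>0$, so by taking adjoints $\|(A^*+i\mu)f\|\geq\mu\|f\|$ for $f\in\cD(A^*)$. Squaring and using the companion identity yields
\[
\|A^*f\|^2+\mu^2\|f\|^2-2\mu\Im(A^*f|f)\geq\mu^2\|f\|^2,
\]
whence $\Im(A^*f|f)\leq\|A^*f\|^2/(2\mu)$ for every $\mu>0$. Letting $\mu\to\infty$ gives $\Im(A^*f|f)\leq 0$, i.e.\ $\Im(f|-A^*f)\leq 0$, so $-A^*$ is dissipative. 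For the maximality of $-A^*$ I would simply note that $\spec(A^*)=\overline{\spec(A)}\subset\{\Im\lambda\geq 0\}$ implies $\spec(-A^*)\subset\{\Im\lambda\leq 0\}$.

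The only mildly delicate step is the passage $\mu\to\infty$ in $(\Rightarrow)$: it converts a uniform family of resolvent bounds on $A^*$ into the pointwise sign condition defining dissipativity. Everything else is bookkeeping with the definitions and the adjoint relation.
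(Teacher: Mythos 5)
The paper itself gives no proof of this proposition---it is quoted as a well-known criterion with a reference to Kato---so the only question is whether your argument stands on its own. It is the classical argument and is essentially correct: the two norm identities, the reduction of surjectivity of $A-\i\mu$ to the triviality of $\Ker(A^*+\i\mu)$ via injectivity plus closed range, the limit $\mu\to\infty$ in the direction $(\Rightarrow)$, and the spectral relation $\spec(-A^*)=-\overline{\spec(A)}$ for the final maximality claim are all in order.

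There is one small but real gap in the direction $(\Leftarrow)$: you prove $\i\mu\in\rs(A)$ for every $\mu>0$ and conclude directly that $A$ is maximal dissipative. With the paper's definition, maximal dissipativity requires $\spec(A)\subset\{\lambda\in\C\mid\Im\lambda\leq0\}$, i.e.\ the \emph{whole} open upper half-plane must lie in $\rs(A)$, not just the positive imaginary axis. The fix is one line: for $\lambda=\sigma+\i\mu$ with $\sigma\in\R$, $\mu>0$, the operator $A-\sigma$ is again dissipative, so your identity yields $\|(A-\lambda)h\|\geq\mu\|h\|$, and the same adjoint-kernel criterion applies with $\Ker(A^*-\bar\lambda)$ in place of $\Ker(A^*+\i\mu)$; if $A^*f=\bar\lambda f$ then $\Im(f|A^*f)=-\mu\|f\|^2$, which dissipativity of $-A^*$ forces to vanish. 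Alternatively, one may invoke the standard connectedness argument: on a dissipative operator the bound $\|(A-\lambda)^{-1}\|\leq1/\Im\lambda$ makes $\rs(A)\cap\{\Im\lambda>0\}$ both open and closed in the connected set $\{\Im\lambda>0\}$, so nonempty implies everything. Either remark completes your proof.
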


Let us now consider $L=-\partial_x^2+V(x)$ with $V\in L_\loc^1]a,b[$.

\begin{proposition}\label{pr:mindis}
  The operator $L_\m$ is dissipative if and only if $\Im V\leq0$.
  It is maximal dissipative if in addition $L_\m=L_\M$.
\end{proposition}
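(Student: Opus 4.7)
The identity underlying both implications is the integration by parts
\begin{equation}\label{eq:mindis-ibp}
  (f|L_\mathrm{c}f)=\int_a^b|f'|^2\,\d x+\int_a^b V|f|^2\,\d x
  \qquad(f\in\cD(L_\mathrm{c})),
\end{equation}
valid because $f\in AC^1]a,b[$ is compactly supported, so the boundary contributions vanish; taking the imaginary part gives $\Im(f|L_\mathrm{c}f)=\int_a^b\Im V\cdot|f|^2\,\d x$. Since $|(f|L_\M g)|\leq\|f\|_L\|g\|_L$, the sesquilinear form $(f,g)\mapsto(f|L_\M g)$ is continuous for the graph topology.

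If $\Im V\leq 0$ almost everywhere, \eqref{eq:mindis-ibp} gives $\Im(f|L_\mathrm{c}f)\leq 0$ on $\cD(L_\mathrm{c})$, and the graph density of $\cD(L_\mathrm{c})$ in $\cD(L_\m)$ combined with the continuity of the form extends the inequality to all of $\cD(L_\m)$, so $L_\m$ is dissipative.

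Assume conversely that $L_\m$ is dissipative, so \eqref{eq:mindis-ibp} yields $\int\Im V\cdot|f|^2\,\d x\leq 0$ for every $f\in\cD(L_\mathrm{c})$. It suffices to prove the same inequality for every $\phi\in C_\mathrm{c}^\infty]a,b[$, since then, for any nonnegative $\psi\in C_\mathrm{c}]a,b[$, one uniformly approximates $\sqrt\psi$ by real polynomials times a fixed smooth cutoff (Stone--Weierstrass) to obtain $\phi_k\in C_\mathrm{c}^\infty]a,b[$ with $\phi_k^2\to\psi$ uniformly and supports in a fixed compact; using $\Im V\in L^1_\loc$ to pass to the limit gives $\int\Im V\cdot\psi\,\d x\leq 0$ for every such $\psi$, whence $\Im V\leq 0$ a.e. The obstacle, which I regard as the main technical point, is that when $V\notin L^2_\loc$ the space $C_\mathrm{c}^\infty]a,b[$ may fail to be contained in $\cD(L_\mathrm{c})$. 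My plan is a cutoff argument: for $\phi\in C_\mathrm{c}^\infty]a,b[$ with $\supp\phi\subset K$ compact, set $E_n:=\{x\in K:|V(x)|>n\}$; then $|E_n|\to 0$ by Chebyshev since $V\in L^1(K)$. I will construct $\eta_n\in C^\infty(\R)$ with $0\leq\eta_n\leq 1$, $\eta_n=0$ on an open neighbourhood of $E_n$, and $|\{\eta_n\neq 1\}|\to 0$ (in fact summably, so $\eta_n\to 1$ a.e.\ by Borel--Cantelli). Enclosing $E_n$ in an open $U_n$ with $|U_n|<2|E_n|$ (outer regularity) and mollifying the characteristic function of a slightly thickened $\widetilde U_n\supset U_n$ produces such an $\eta_n$; the only delicate point is controlling $|\widetilde U_n\setminus U_n|$ when $U_n$ has infinitely many connected components, which is handled by choosing the mollification width on the $k$-th component of $U_n$ summably small in $k$. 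With this $\eta_n$, the function $\phi_n:=\eta_n\phi$ lies in $C_\mathrm{c}^\infty]a,b[$ and satisfies $|V\phi_n|\leq n\|\phi\|_\infty$ with compact support, so $L\phi_n=-\phi_n''+V\phi_n\in L^2]a,b[$ and hence $\phi_n\in\cD(L_\mathrm{c})$. Dissipativity yields $\int\Im V\,|\phi_n|^2\,\d x\leq 0$, and dominated convergence (with $|\phi_n|^2\leq|\phi|^2$, $\phi_n\to\phi$ a.e., and dominating function $|\Im V|\cdot|\phi|^2\in L^1(K)$) gives $\int\Im V\,|\phi|^2\,\d x\leq 0$, as needed.

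For the maximal dissipativity statement, Theorem \ref{perq2}(6) gives $L_\m^*=\overline{L_\M}$, and under the extra hypothesis $L_\m=L_\M$ this reduces to $L_\m^*=\overline{L_\m}$. Any $f\in\cD(L_\m^*)$ has the form $f=\bar g$ with $g\in\cD(L_\m)$, and then $L_\m^*f=\overline{L_\m g}$, so
\[
  \Im(f|{-}L_\m^*f)=-\Im\overline{(g|L_\m g)}=\Im(g|L_\m g)\leq 0
\]
by the dissipativity of $L_\m$ already established. Proposition \ref{pr:adis} then implies that $L_\m$ is maximal dissipative.
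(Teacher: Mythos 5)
Your proof is correct and follows essentially the same route as the paper: the integration-by-parts identity on $\cD(L_{\mathrm{c}})$, extension to $\cD(L_\m)$ by graph-norm density, and maximal dissipativity via Proposition \ref{pr:adis} applied to $-L_\m^*=-\overline{L_\m}$ under the hypothesis $L_\m=L_\M$. The one place you go beyond the paper is the converse implication: the paper treats ``dissipative $\Rightarrow\Im V\leq0$'' as immediate from \eqref{dissi}, whereas you correctly observe that when $V\notin L^2_\loc$ the space $C_\mathrm{c}^\infty]a,b[$ need not lie in $\cD(L_{\mathrm{c}})$, and your truncation $\phi_n=\eta_n\phi$ with $\eta_n$ vanishing near $\{|V|>n\}$ (together with a subsequence to guarantee $\eta_n\to1$ a.e.) supplies exactly the test functions needed to make that step rigorous.
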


\begin{proof}
  
  If $f\in \cD(L_\M)$ then
  \begin{align*}
    & (f|L_\M f) =\int_a^b \big(\bar{f}'f'-(\bar{f}f')' +V\bar{f}f  \\
    &=\lim_{\substack{a_1\to a \\ b_1\to b}}
    \left(\bar{f}(a_1)f'(a_1) -\bar{f}(b_1)f'(b_1)
    +\int_{a_1}^{b_1} \left(|f'|^2+V|f|^2 \right) \right) 
  \end{align*}
  hence
  \begin{equation}\label{eq:diseq}
    \Im(f|L_\M f)  = 
    \lim_{\substack{a_1\to a \\ b_1\to b}}
    \left(\int_{a_1}^{b_1}\Im(V)|f|^2  + \Im(\bar{f}(a_1)f'(a_1))
      -\Im(\bar{f}(b_1)f'(b_1))   \right).
  \end{equation}
  Thus
  \begin{equation}\label{dissi}
    \Im(f|L_\m f) = \int_{a}^{b}\Im(V)|f|^2\geq0,\quad f\in
    \cD(L_{\mathrm{c}}).
  \end{equation}
  By continuity, (\ref{dissi}) extends to $f\in \cD(L_\m)$, which
  clearly implies that $L_\m$ is dissipative.  The same argument as
  above shows that $-\bar L_\m$ is dissipative

  If $L_\m=L_\M$, then $\bar L_\m=\bar L_\M$.  But $L_\m^*=\bar L_\M$.
  Hence $-L_\m^*$ is dissipative. This proves maximal dissipativity of
  $L_\m$.

  If $L_\m\neq L_\M$, then the spectrum of $L_\m$ is $\C$, so $L_\m$
  is not maximally dissipative.
\end{proof}

A convenient criterion for dissipativity holds if
$\cD(A)\subset \cD(A^*)$: then $A$ is dissipative if and only if
$\frac{1}{\i 2}(A-A^*)\geq 0$.  Unfortunately, an operator $A$ can be
dissipative even if $\cD(A)\cap\cD(A^*)=\{0\}.$

This is related to a certain difficulty when one tries to study
the dissipativity of Schr\"odinger operators with singular complex
potentials.  Suppose that we want to check that $L_\bullet$ is
dissipative using this criterion.  If
$L_\m\subset L_\bullet \subset L_\M$, then
$\bar{L}_\m\subset L_\bullet^*\subset \bar{L}_\M$.  But we may have
$\cD(\bar{L}_\M) \cap \cD(L_\M)=\{0\}$ (Lemma \ref{lm:conjugate}),
hence we could have $\cD(L_\bullet)\cap \cD(L_\bullet^*)=\{0\}$ which
is annoying. Indeed, although
$W_{x}(\bar{f},f)=2\i\Im(\bar{f}(x)f'(x))$, we cannot use in
\eqref{eq:diseq} the existence of the limits \eqref{wronski-a} and
\eqref{wronski-b} because in general $\bar f\not\in \cD(L_\M)$.

Let us describe the action of conjugation on boundary conditions.
The \emph{conjugate} of $\alpha\in\cb(L)$ is the boundary functional
$\bar\alpha\in\cb(\bar L)$ given by
\begin{equation}\label{eq:conjfunc}
\bar{\alpha}(f):=\bar{\alpha(\bar f)} .
\end{equation}
Clearly $\alpha\mapsto\bar\alpha$ is a bijective anti-linear map
$\cb(L)\to\cb(\bar L)$ which sends $\cb_a(L)$ into $\cb_a(\bar L)$ and
$\cb_b(L)$ into $\cb_b(\bar L)$.  Then if $g\in \cD(L_\M)$ is a
representative of $\alpha\in\cB_a$, so that
\begin{equation}
  \alpha(f)=W_a(g,f),\quad f\in\cD(L_\M),\label{repre1}
\end{equation}
then 
\begin{equation} \label{repre2}
  \bar\alpha(f)=W_a(\bar g,f) ,\quad f\in\cD(\bar{L}_\M) .
\end{equation}
Recall that $\cB_a$ and $\cB_b$ are equipped with symplectic forms
$\lbra\cdot|\cdot\rbra_a$ and $\lbra\cdot|\cdot\rbra_b$, see \eqref{sigma}.

Fix $\alpha\in\cB_a$ and $\beta\in\cB_b$. Consider $L_{\alpha,\beta}$,
the realizations of $L$ introduced in Definition \ref{def:1}. Recall
that it is the restriction of $L_\M$ to 
$\cD(L_{\alpha\beta})=\{f\in \cD(L_\M)\mid \alpha(f)=\beta(f)=0\}$.

\begin{proposition}\label{pr:hadpp}
$L_{\alpha \beta}^*=\bar{L}_{\bar\alpha\,\bar\beta}$.
\end{proposition}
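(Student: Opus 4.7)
My plan is to combine three facts: (i) $L_{\alpha\beta}$ is self-transposed, $L_{\alpha\beta}^\# = L_{\alpha\beta}$; (ii) for any densely defined closed operator, $T^* = \overline{T^\#}$; and (iii) $\overline{L_{\alpha\beta}} = \bar L_{\bar\alpha,\bar\beta}$. Together (i) and (ii) give $L_{\alpha\beta}^* = \overline{L_{\alpha\beta}^\#} = \overline{L_{\alpha\beta}}$, and (iii) closes the argument. Observation (ii) is just a rearrangement of $T^\# = \overline{T^*}$ from Subsection \ref{ss:bil}, so the real work is in (i) and (iii).

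For (i), I use the symplectic framework of Subsection \ref{ss:L2GII}. Since $\alpha$ and $\beta$ are nonzero, Theorem \ref{th:dimbv} forces $\nu_a(L) = \nu_b(L) = 2$, so $\cD(L_\M)/\cD(L_\m)$ is a $4$-dimensional symplectic space. The functionals $\alpha,\beta$ are independent in $\cb(L) = \cb_a(L) \oplus \cb_b(L)$, so $\cW_{\alpha\beta} := \cD(L_{\alpha\beta})/\cD(L_\m)$ has dimension $2$, exactly half the ambient dimension; it is therefore Lagrangian as soon as it is isotropic, in which case Proposition \ref{sympl6}(2) yields $L_{\alpha\beta}^\# = L_{\alpha\beta}$. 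To establish isotropy, pick $g_0 \in \cD(L_\M)$ with $\alpha(h) = W_a(g_0, h)$ for all $h \in \cD(L_\M)$ (possible by Theorem \ref{th:bvf}(i)). For $f, g \in \cD(L_{\alpha\beta})$ and arbitrary $k \in \cD(L_\M)$, Kodaira's identity \eqref{eq:fghka} applied to the quadruple $(f, g, g_0, k)$ reads
\[
  W_a(f,g)\alpha(k) - \alpha(g)W_a(f,k) + \alpha(f)W_a(g,k) = 0,
\]
which under $\alpha(f) = \alpha(g) = 0$ collapses to $W_a(f, g)\alpha(k) = 0$ for every $k$. Since $\alpha \neq 0$, this forces $W_a(f, g) = 0$; the analogous argument at $b$ gives $W_b(f, g) = 0$, so $\lbra[f]|[g]\rbra = W_b(f, g) - W_a(f, g) = 0$.

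For (iii), I unwind definitions. By \eqref{eq:conjug}, $f \in \cD(\overline{L_{\alpha\beta}})$ iff $\bar f \in \cD(L_{\alpha\beta})$, which amounts to $f \in \cD(\bar L_\M)$ together with $\alpha(\bar f) = \beta(\bar f) = 0$; by \eqref{eq:conjfunc} these are precisely $\bar\alpha(f) = \bar\beta(f) = 0$, so $\cD(\overline{L_{\alpha\beta}}) = \cD(\bar L_{\bar\alpha,\bar\beta})$. The actions agree because $\overline{L_{\alpha\beta}}f = \overline{L_\M \bar f} = \bar L_\M f$, which is also how $\bar L_{\bar\alpha,\bar\beta}$ acts.

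The main obstacle is the isotropy step in (i); the decisive trick is plugging a representative of the boundary functional $\alpha$ into Kodaira's identity, which collapses its four terms into a single nonvanishing multiple of $W_a(f, g)$.
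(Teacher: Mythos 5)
Your proof is correct, and its top-level skeleton---(i) $L_{\alpha\beta}^\#=L_{\alpha\beta}$, (ii) $T^*=\overline{T^\#}$, (iii) $\overline{L_{\alpha\beta}}=\bar L_{\bar\alpha\,\bar\beta}$---is exactly the three-line chain the paper uses. The genuine difference is in how you justify (i). The paper cites Proposition \ref{khava-}, which yields $G_{u,v}^\#=G_{u,v}$ and $G_{u,v}=L_{\phi,\psi}^{-1}$ only under the extra hypothesis that $G_{u,v}$ is bounded (equivalently $0\in\rs(L_{\phi,\psi})$); outside that case the identity $L_{\phi,\psi}^\#=L_{\phi,\psi}$ is only asserted as a ``Note'' in the text. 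You instead prove self-transposedness from scratch: since $\alpha\in\cb_a(L)$ and $\beta\in\cb_b(L)$ are nonzero and the sum $\cb(L)=\cb_a(L)\oplus\cb_b(L)$ is direct, the subspace $\cW_{\alpha\beta}=\cD(L_{\alpha\beta})/\cD(L_\m)$ has dimension $2$ in the $4$-dimensional symplectic space $\cD(L_\M)/\cD(L_\m)$, and your Kodaira-identity computation with a representative $g_0$ of $\alpha$ (Theorem \ref{th:bvf}(i), identity \eqref{eq:fghka}) correctly collapses to $W_a(f,g)\,\alpha(k)=0$ for all $k$, hence $W_a(f,g)=0$, and likewise at $b$; so $\cW_{\alpha\beta}$ is isotropic of half the ambient dimension, hence Lagrangian, and Proposition \ref{sympl6} gives (i). This buys you a self-contained argument that needs no invertibility or boundedness of any Green's operator, at the price of being slightly longer; the paper's route is shorter but leans on a fact it never fully proves in the generality needed here. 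One remark on scope, which applies to both proofs: the argument requires $\alpha\neq0$ and $\beta\neq0$ (as in the surrounding text), since for instance $L_{0,0}=L_\M$ is not self-transposed when $\nu(L)>0$ and the stated identity then fails.
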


\begin{proof}
  By Proposition \ref{khava-},
  $L_{\alpha \beta}^\#={L}_{\alpha\,\beta}$.  Clearly,
  $\bar{L_{\alpha \beta} } =\bar{L}_{\bar\alpha\,\bar\beta}$.
  Therefore, the proposition follows from
  $L_{\alpha \beta}^*=\bar{L_{\alpha\,\beta}^\#}$.
\end{proof}
  
In the following two subsections we will describe boundary conditions
that guarantee the dissipativity of $L_{\alpha,\beta}$.  We will
consider separately two classes of potentials: $V\in L_\loc^2]a,b[$
and $V\in L^1]a,b[$.

\subsection{Dissipative boundary conditions for locally square
  integrable potentials}

Note first the following sesquilinear version of Green's identity
(\ref{eq:lagrange1}).

\begin{lemma} \label{lm:ImVl} Suppose that
  $f,\bar{f},g\in\cD(L_\M)$. Then $\Im( V) f\in L^2]a,b[$ and
  \begin{align} \label{lsh}
    (L_\M f|g)-(f|L_\M g)
    =-2\i\int_a^b\Im(V)\bar fg+W_b(\bar f,g)-W_a(\bar f,g).
  \end{align}
\end{lemma}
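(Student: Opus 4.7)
The plan is to deduce this sesquilinear Green's identity from its bilinear counterpart, namely \eqref{eq:lagrange1}, by applying the latter to the pair $(\bar f, g)$ rather than $(f,g)$. The relevant translation between the two brackets is $\braket{u}{v}=(\bar u\,|\,v)$, so the hypothesis $\bar f\in\cD(L_\M)$ is essential: it is what allows us to legitimately plug $\bar f$ into \eqref{eq:lagrange1} and to make sense of $W_a(\bar f,g)$ and $W_b(\bar f,g)$ as in Theorem \ref{perq2}(3).

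First I would establish that $\Im(V)f\in L^2]a,b[$. Since $f\in\cD(L_\M)$ we have $-f''+Vf\in L^2$, and since $\bar f\in\cD(L_\M)$ we have $L\bar f=-\bar f''+V\bar f\in L^2$. Taking the complex conjugate of $Lf$ gives $\overline{Lf}=-\bar f''+\bar V\bar f\in L^2$, and subtracting yields
\[
L\bar f-\overline{Lf}=(V-\bar V)\bar f=2\i\,\Im(V)\bar f\in L^2]a,b[.
\]
Conjugating (and using that $\Im V$ is real) gives $\Im(V)f\in L^2]a,b[$, so the integral $\int_a^b\Im(V)\bar f g$ is absolutely convergent.

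Next I would apply \eqref{eq:lagrange1} to the pair $\bar f,g\in\cD(L_\M)$:
\[
\braket{L_\M\bar f}{g}-\braket{\bar f}{L_\M g}=W_b(\bar f,g)-W_a(\bar f,g).
\]
The second term on the left equals $(f\,|\,L_\M g)$ directly from $\braket{u}{v}=(\bar u\,|\,v)$. For the first term, the same rule gives $\braket{L_\M\bar f}{g}=(\overline{L\bar f}\,|\,g)$, and the identity $\overline{L\bar f}=L_\M f-2\i\,\Im(V)f$ derived in the previous step yields
\[
\braket{L_\M\bar f}{g}=(L_\M f\,|\,g)-2\i(\Im(V)f\,|\,g)=(L_\M f\,|\,g)+2\i\int_a^b\Im(V)\bar f g,
\]
since $(\Im(V)f\,|\,g)=\int\Im(V)\bar f g$ (here $\Im V$ is real, so complex-conjugating it inside the sesquilinear pairing leaves it unchanged up to the sign on the $2\i$).

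Substituting this into the bilinear Green identity and rearranging gives \eqref{lsh}. I do not anticipate any real obstacle: the only subtle point is bookkeeping the complex conjugate of $L\bar f$ correctly, which is purely algebraic once the hypothesis $\bar f\in\cD(L_\M)$ legitimizes all the manipulations.
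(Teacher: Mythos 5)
Your proof is correct and follows essentially the same route as the paper's: both reduce to the bilinear Green's identity \eqref{eq:lagrange1} applied to the pair $(\bar f,g)$ and use $\overline{L\bar f}=L_\M f-2\i\,\Im(V)f$ (equivalently $\bar L_\M-L_\M=-2\i\,\Im(V)$) to convert between the brackets $\langle\cdot|\cdot\rangle$ and $(\cdot|\cdot)$. The only blemish is the intermediate equality $(L_\M f|g)-2\i(\Im(V)f|g)$, which should read $(L_\M f|g)+2\i(\Im(V)f|g)$ once the scalar $-2\i$ is pulled out of the antilinear slot; your final expression is nevertheless correct.
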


\begin{proof}
  The left hand side of (\ref{lsh}) is
\begin{align}\label{lsh1}
  \langle \overline{L_\M f}|g\rangle-\langle\bar f|L_\M g\rangle
  &= \langle \overline{L_\M f}|g\rangle-\langle L_\M\bar f|g\rangle\\
  &+\langle L_\M\bar f|g\rangle-\langle\bar f|L_\M g\rangle.\label{lsh2}
\end{align}
Then we apply $\bar{L_\M}-L_\M=-2\i\Im(V)$ to (\ref{lsh1}) and Green's
identity (\ref{eq:lagrange1}) to (\ref{lsh2}). \end{proof}

For the rest of the argument we need the equality of the domains
$\cD(\bar{L}_\M)=\cD(L_\M)$ which, by Lemma \ref{lm:conjugate}, is
equivalent to $\Im V\in L_\loc^2]a,b[$.  Then we have
$\cB(L)=\cB(\bar L)$, hence $\alpha\mapsto\bar\alpha$ is a conjugation
in $\cB(L)$ which leaves invariant the subspaces $\cB_a(L)$ and
$\cB_b(L)$. Recall that in (\ref{sigma}) we equiped $\cB_a(L)$ in a
symplectic form $\lbra\cdot|\cdot\rbra_a$.  Note that
$\lbra\bar\alpha|\alpha\rbra_a$ is well defined for any
$\alpha\in\cB_a(L)$.

\begin{lemma} \label{lemo} If $\Im V\in L_\loc^2]a,b[$ and
  $\alpha\in\cB_a$ then the number $\lbra\bar\alpha|\alpha\rbra_a$ is
  purely imaginary and
  \begin{align*}
    \frac{1}{2\i}\lbra\bar\alpha|\alpha\rbra_a\geq0
    \ \Longleftrightarrow\ 
    \frac{1}{2\i} W_a(\bar f,f)\geq0 \ \forall
    f\in\cD(L_\M) \text{ with } \alpha(f)=0 .
  \end{align*}
\end{lemma}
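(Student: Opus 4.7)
My plan is to pick a representative $g\in\cD(L_\M)$ of $\alpha$, so that $\alpha=\vec g_a$; using the hypothesis $\Im V\in L_\loc^2]a,b[$ together with Lemma \ref{lm:conjugate} we have $\cD(L_\M)=\cD(\bar L_\M)$, and a direct check from (\ref{eq:conjfunc}) gives $\bar\alpha=\vec{\bar g}_a$. By the definition (\ref{sigma}) of the symplectic form this yields $\lbra\bar\alpha|\alpha\rbra_a=W_a(\bar g,g)$. The first assertion then follows from the elementary identity $\overline{W_a(\bar g,g)}=-W_a(\bar g,g)$, obtained by combining $\overline{W_x(u,v)}=W_x(\bar u,\bar v)$ with the antisymmetry of the Wronskian and passing to the limit $x\to a$.

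For the equivalence the engine is a single identity deduced from Kodaira's relation (\ref{eq:fghka}). Substituting $h=\bar f$, $k=\bar g$ and simplifying via $W_a(\bar u,\bar v)=\overline{W_a(u,v)}$ and antisymmetry, the four Wronskians collapse to
\[
|\alpha(f)|^2-|\bar\alpha(f)|^2 = W_a(\bar f,f)\,\lbra\bar\alpha|\alpha\rbra_a, \qquad f\in\cD(L_\M).
\]
Writing the two purely imaginary scalars as $\lbra\bar\alpha|\alpha\rbra_a=2\i\mu$ and $W_a(\bar f,f)=2\i\nu_f$ with $\mu,\nu_f\in\R$, its restriction to $f$ with $\alpha(f)=0$ becomes $\mu\nu_f=\tfrac14|\bar\alpha(f)|^2\geq 0$, which drives the rest of the argument.

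From here the equivalence splits into three cases. If $\mu>0$, the inequality $\mu\nu_f\geq 0$ forces $\nu_f\geq 0$ for every such $f$. If $\mu<0$, then $\lbra\bar\alpha|\alpha\rbra_a\neq 0$, so $\bar\alpha$ and $\alpha$ are symplectically, and hence (in the $2$-dimensional space $\cB_a$) linearly, independent; I can therefore produce an $f\in\cD(L_\M)$ with $\alpha(f)=0$ and $\bar\alpha(f)\neq 0$, which forces $\nu_f<0$. The delicate case is $\mu=0$: here the product identity becomes vacuous. I would handle it by noting that the symplectic orthogonal of $\alpha$ in the $2$-dimensional space $\cB_a$ is the line $\C\alpha$, so $\bar\alpha=c\alpha$ for some $c\in\C$; combined with the identification $\vec{\bar f}_a=\overline{\vec f_a}$ and the fact that $\alpha(f)=0$ forces $\vec f_a=d\alpha$ for some $d\in\C$ (Theorem \ref{th:bvf}(iii)), this gives $W_a(\bar f,f)=\vec{\bar f}_a(f)=\bar d\,c\,\alpha(f)=0$.

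The main obstacle is exactly this $\mu=0$ case, where Kodaira's identity is silent and one must instead exploit the $2$-dimensionality of the symplectic space $\cB_a$. The fully degenerate situation $\nu_a(L)=0$ requires no separate argument, since then $W_a\equiv 0$ and both sides of the equivalence reduce to $0\geq 0$.
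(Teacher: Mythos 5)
Your proof is correct and runs on the same engine as the paper's: pick a representative $g$ of $\alpha$, note $\lbra\bar\alpha|\alpha\rbra_a=W_a(\bar g,g)$ is purely imaginary, and apply Kodaira's identity to $\bar g,g,\bar f,f$ to get $\lbra\bar\alpha|\alpha\rbra_a\,W_a(\bar f,f)=|\alpha(f)|^2-|\bar\alpha(f)|^2$, hence $\lbra\bar\alpha|\alpha\rbra_a\,W_a(\bar f,f)\le 0$ when $\alpha(f)=0$. The paper's proof in fact ends at this product inequality; that settles the implication $\Leftarrow$ (most directly by taking $f=g$, which is admissible since $\alpha(g)=W_a(g,g)=0$) and the implication $\Rightarrow$ when $\lbra\bar\alpha|\alpha\rbra_a\neq0$, but says nothing about the degenerate case $\lbra\bar\alpha|\alpha\rbra_a=0$ — your argument there (in the two-dimensional symplectic space $\cB_a$ one has $\bar\alpha\in(\C\alpha)^\sperp=\C\alpha$ and $\vec f_a\in\C\alpha$, whence $W_a(\bar f,f)=0$) genuinely completes the published proof. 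One caveat you share with the paper: the argument, and indeed the statement, implicitly requires $\alpha\neq0$, since for $\alpha=0$ and $\nu_a(L)=2$ the right-hand condition fails (replacing $f$ by $\bar f$ flips the sign of $\frac{1}{2\i}W_a(\bar f,f)$) while the left-hand one holds trivially; this is harmless for the application in Theorem \ref{th:ImV}, which only uses nonzero functionals.
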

\begin{proof} Let $g\in \cD(L_\M)$ be a representative of $\alpha$,
so that (\ref{repre1}) and  (\ref{repre2}) are true. Then
\[\lbra\bar\alpha|\alpha\rbra_a=W_a(\bar g,g)
  =\lim_{c\searrow a}\big(\bar g(c)g'(c)-\bar g'(c)g(c)\big),\] which
proves that $\lbra\bar\alpha|\alpha\rbra_a$ is purely imaginary.  Now,
by the Kodaira identity
\[W_a(\bar g,g)W_a(\bar f,f)= |W_a(g,f)|^2-|W_a(\bar g,f)|^2.\] But
$\alpha(f)=0$ means $W_a(g,f)=0$. Therefore,
$\lbra(\bar\alpha|\alpha\rbra_a W_a(\bar f,f)\leq0$. 
\end{proof}

\begin{theorem} \label{th:ImV}
  Let $\alpha\in\cB_a$ and $\beta\in\cB_b$.  If
  $\Im V \in L^2_\loc]a,b[$, we have
  \begin{equation} \label{eq:lemo1} L_{\alpha\beta} \text{ is
      dissipative } \Longleftrightarrow \Im V\leq0, \
    \frac{1}{2\i}\lbra\bar\alpha|\alpha\rbra_a\leq0,\text{ and }
    \frac{1}{2\i}\lbra\bar\beta|\beta\rbra_b\geq0.
 \end{equation}
 And then $L_{\alpha\beta}$ is maximal dissipative.
\end{theorem}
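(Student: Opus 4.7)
The strategy is to compute $\Im(f|L_{\alpha\beta}f)$ explicitly via Lemma \ref{lm:ImVl}. The hypothesis $\Im V\in L^2_\loc]a,b[$ and Lemma \ref{lm:conjugate} give $\cD(\bar L_\M)=\cD(L_\M)$, so $\bar f\in\cD(L_\M)$ whenever $f\in\cD(L_{\alpha\beta})$, and Lemma \ref{lm:ImVl} with $g=f$ combined with $(L_\M f|f)-(f|L_\M f)=-2\i\,\Im(f|L_\M f)$ yields the master identity
\begin{equation}\label{eq:master}
\Im(f|L_{\alpha\beta}f)=\int_a^b\Im V\,|f|^2+\tfrac{1}{2\i}W_a(\bar f,f)-\tfrac{1}{2\i}W_b(\bar f,f),\qquad f\in\cD(L_{\alpha\beta}).
\end{equation}
All three implications in \eqref{eq:lemo1} will be read off this formula. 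The easy direction (``$\Leftarrow$'') is immediate: for $f\in\cD(L_{\alpha\beta})$ one has $\alpha(f)=\beta(f)=0$, and Lemma \ref{lemo} gives $\tfrac{1}{2\i}W_a(\bar f,f)\leq 0$ and $\tfrac{1}{2\i}W_b(\bar f,f)\geq 0$, making each summand in \eqref{eq:master} non-positive.

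For the converse, I first restrict \eqref{eq:master} to $f\in\cD(L_\m)\subset\cD(L_{\alpha\beta})$, where the Wronskian terms vanish by Theorem \ref{perq2}(5). This shows $L_\m$ is itself dissipative, so Prop.~\ref{pr:mindis} yields $\Im V\leq 0$. To establish the sign condition at $a$, I argue by contradiction. Suppose $\tfrac{1}{2\i}\lbra\bar\alpha|\alpha\rbra_a>0$, and pick a representative $g\in\cD(L_\M)$ of $\alpha$; by antisymmetry $\alpha(g)=W_a(g,g)=0$, and $w:=\tfrac{1}{2\i}W_a(\bar g,g)=\tfrac{1}{2\i}\lbra\bar\alpha|\alpha\rbra_a>0$. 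The key technical input is that $(\Im V)g\in L^2]a,b[$: indeed, $g\in\cD(L_\M)=\cD(\bar L_\M)$ forces both $Lg$ and $\bar L g$ into $L^2$, hence $(\bar L-L)g=-2\i(\Im V)g\in L^2$, and Cauchy-Schwarz gives $|\Im V|\,|g|^2\in L^1]a,b[$. Now choose real cutoffs $\chi_n\in C^\infty]a,b[$ with $\chi_n'\in C^\infty_\mathrm{c}]a,b[$, $\chi_n\equiv 1$ on a neighbourhood of $a$, $\chi_n\equiv 0$ on a neighbourhood of $b$, and $\supp\chi_n\subset\,]a,c_n[$ with $c_n\to a$. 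By Lemma \ref{lm:embed} the function $f_n:=\chi_ng$ lies in $\cD(L_\M)$; since $\alpha,\beta$ are localised at the respective endpoints (Def.~\ref{df:bvf}), $\alpha(f_n)=\alpha(g)=0$ and $\beta(f_n)=0$, so $f_n\in\cD(L_{\alpha\beta})$. Moreover $W_a(\bar f_n,f_n)=W_a(\bar g,g)$ and $W_b(\bar f_n,f_n)=0$. Plugging $f_n$ into \eqref{eq:master} and using dissipativity gives
\[
0\geq\int_a^{c_n}\Im V\,|f_n|^2+w,\qquad\text{i.e.}\qquad w\leq\int_a^{c_n}|\Im V|\,|g|^2\underset{n\to\infty}{\longrightarrow}0
\]
by dominated convergence, contradicting $w>0$. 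The sign condition at $b$ is obtained symmetrically with cutoffs concentrated near $b$ and a representative of $\beta$.

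Finally, to upgrade to maximal dissipativity I invoke Prop.~\ref{pr:adis}, reducing to the verification that $-L_{\alpha\beta}^*$ is dissipative. By Prop.~\ref{pr:hadpp}, $L_{\alpha\beta}^*=\bar L_{\bar\alpha,\bar\beta}$, and the analogue of \eqref{eq:master} for $\bar L$ (whose imaginary potential is $-\Im V$) gives
\[
\Im(f|-\bar L_{\bar\alpha,\bar\beta}f)=\int_a^b \Im V\,|f|^2-\tfrac{1}{2\i}W_a(\bar f,f)+\tfrac{1}{2\i}W_b(\bar f,f).
\]
Since $\cD(\bar L_\M)=\cD(L_\M)$, the boundary spaces $\cB_a(\bar L),\cB_b(\bar L)$ coincide with $\cB_a(L),\cB_b(L)$ as symplectic spaces; combined with $\overline{\bar\alpha}=\alpha$ and the antisymmetry $\lbra\alpha|\bar\alpha\rbra_a=-\lbra\bar\alpha|\alpha\rbra_a$, Lemma \ref{lemo} (applied to $\bar L$) translates the hypotheses on $\alpha,\beta$ into precisely the sign conditions needed to render the three terms above non-positive. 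The main obstacle in the whole argument is the localisation step in the converse direction; what makes it succeed is the global integrability $|\Im V|\,|g|^2\in L^1]a,b[$, which itself relies on $\cD(\bar L_\M)=\cD(L_\M)$ and would fail without the hypothesis $\Im V\in L^2_\loc$.
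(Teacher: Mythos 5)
Your proof is correct and follows essentially the same route as the paper: the sesquilinear Green's identity of Lemma \ref{lm:ImVl} applied with $g=f$, localization near each endpoint to make the volume term negligible, Lemma \ref{lemo} to translate Wronskian sign conditions into conditions on $\lbra\bar\alpha|\alpha\rbra_a$ and $\lbra\bar\beta|\beta\rbra_b$, and Propositions \ref{pr:adis} and \ref{pr:hadpp} for the upgrade to maximal dissipativity. The only (harmless) variation is in the necessity of the endpoint conditions, where you run a contradiction argument on a localized representative $g$ of $\alpha$ (using $\alpha(g)=W_a(g,g)=0$ and $\lbra\bar\alpha|\alpha\rbra_a=W_a(\bar g,g)$), whereas the paper establishes $\tfrac{1}{2\i}W_a(\bar f,f)\le 0$ for every $f\in\cD(L_{\alpha\beta})$ vanishing near $b$ and then invokes Lemma \ref{lemo}.
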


\begin{proof}

We consider only the case $\alpha\neq0,\beta\neq0$.  Lemma
\ref{lm:ImVl} gives  
\begin{align} \label{eq:lsh}
  \Im(f|L_\M f)
  =\int_a^b\Im(V)|f|^2+\frac{1}{2\i}W_a(\bar f,f)
   - \frac{1}{2\i}W_b(\bar f,f) \quad \forall f\in \cD(L_\M)
\end{align}
and this implies that $L_{\alpha\beta}$ is dissipative if and only if
\begin{align} \label{eq:lshdis}
  \frac{1}{2\i}W_a(\bar f,f) - \frac{1}{2\i}W_b(\bar f,f)
\leq \int_a^b\Im(-V)|f|^2 
  \quad \forall f\in \cD(L_{\alpha\beta}) .
\end{align}
If $L_{\alpha\beta}$ is dissipative, by taking
$f\in \cD(L_{\mathrm{c}})$ in \eqref{eq:lshdis} we get
$\Im(-V)\geq0$. Then by choosing $f\in \cD(L_{\alpha\beta})$ equal to
zero near $b$ we get
$\frac{1}{2\i}W_a(\bar f,f)\leq \int_a^b\Im(-V)|f|^2$. If we fix such
an $f$ and replace it in this estimate by $f\theta$ where
$\theta\in C^\infty(\R)$ with $0\leq\theta\leq1$ and $\theta(x)=1$ on
a neighborhood of $a$ the we get
$\frac{1}{2\i}W_a(\bar f,f)\leq \int_a^b\Im(-V)|f\theta|^2$. Since the
right hand side here can be made as small as we wish by taking
$\theta$ equal to zero for $x>d>a$ with $d$ close to $a$, we see that
we must have $\frac{1}{2\i}W_a(\bar f,f)\leq0$ and this clearly
implies the same inequality for any $f\in \cD(L_{\alpha\beta})$.  Then
we get $\frac{1}{2\i}\lbra\bar\alpha|\alpha\rbra_a\leq0$ by Lemma
\ref{lemo}. We similarly prove
$\frac{1}{2\i}\lbra\bar\beta|\beta\rbra_b\geq0$.

We proved the implication $\Rightarrow$ in \eqref{eq:lemo1} and
$\Leftarrow$ is clear by \eqref{eq:lshdis}. It remains to show the
maximal dissipativity assertion. Due to Propositions \ref{pr:adis} and
\ref{pr:hadpp} it suffices to prove that the operator
$-L^*_{\alpha\beta}=-\bar{L}_{\bar\alpha\,\bar\beta}$ is
dissipative. Observe first that the relation $\cD(\bar{L}_\M)=\cD(L_\M)$
implies
$\cD(\bar{L}_{\bar\alpha\,\bar\beta})=\cD(L_{\bar\alpha\,\bar\beta})$. Then
\eqref{eq:lsh} gives
\begin{align} \label{eq:lsh3}
  \Im(f|-\bar{L}_\M f)
  =\int_a^b\Im(V)|f|^2-\frac{1}{2\i}W_a(\bar f,f)
   +\frac{1}{2\i}W_b(\bar f,f) \quad \forall f\in \cD(L_\M)
\end{align}
hence instead of \eqref{eq:lshdis} we get the condition 
\begin{align*} 
  -\frac{1}{2\i}W_a(\bar f,f) + \frac{1}{2\i}W_b(\bar f,f)
\leq \int_a^b\Im(-V)|f|^2 
  \quad \forall f\in \cD(L_{\bar\alpha\,\bar\beta}) .
\end{align*}
As above we get $\frac{1}{2\i}W_a(\bar f,f) \geq0$ and
$\frac{1}{2\i}W_b(\bar f,f)\leq0$ for any
$f\in \cD(L_{\bar\alpha\,\bar\beta})$.  Thus, if $f\in \cD(L_\M)$ and
$\bar\alpha(f)=0$ then $\frac{1}{2\i}W_a(\bar f,f) \geq0$ and by Lemma
\ref{lemo} this means $\frac{1}{2\i}\lbra\alpha|\bar\alpha\rbra_a\geq0$
which is equivalent to
$\frac{1}{2\i}\lbra\bar\alpha|\alpha\rbra\leq0$. Similarly we get
$\frac{1}{2\i}\lbra\bar\beta|\beta\rbra_b\geq0$ and the last two
conditions are satisfied by the assumptions in the right hand side of
\eqref{eq:lemo1}. Hence $-L^*_{\alpha\beta}$ is dissipative.
\end{proof}

\subsection{Dissipative regular boundary conditions} \label{ss:regbc}

Suppose that the operator $L$ has a regular left endpoint at $a$.  As
we noted several times, for regular boundary conditions $\cB_a$ can be
identified with $\C^2$. Indeed,
\[\alpha(f)=\alpha_0f'(a)-\alpha_1f(a),\]
is a general form of a boundary functional, with
 $\alpha=(\alpha_0,\alpha_1)\in\C^2$
and $f\in\cD(L_\M)$.

The space $\cB_a$ is equipped with the symplectic form $\lbra \cdot|\cdot\rbra_a$, which coincides with the usual (two-dimensional) vector product:
\[\lbra\alpha|\beta\rbra_a=\alpha_0\beta_1-\alpha_1\beta_0=\alpha\times\beta.\]
Thus, if we write $\vec f_a:=\big(f(a),f'(a)\big)$,  an alternative notation for $\alpha(f)$ is
\[\alpha(f)=\alpha\times\vec f_a.\]

Note that there is no guarantee that $\cD(L_\m)$ and $\cD(L_\M)$ are
invariant wrt the complex conjugation. However the space
$\cB_a\simeq\C^2$ is equipped with the obvious complex conjugation:
\[
  \bar\alpha(f)=\bar\alpha_0f'(a)-\bar\alpha_1f(a)=\bar\alpha\times
  \vec f_a.
\]

\begin{lemma}\label{lm:easy}
  (1) $\alpha\times\beta=0$ if and only if the vectors
  $\alpha,\beta$ are collinear. \\[1mm]
  \phantom{xxxxxxxxxxxxx} (2) $\bar\alpha\times\alpha\in\i\R$ and
  $\bar\alpha\times\alpha =0$ if and only if $\alpha$ is
  proportional to a real vector.\\[1mm]
  \phantom{xxxxxxxxxxxxx} (3)
  $(\bar\alpha\times\alpha)(\bar\beta\times\beta)=|\alpha\times\beta|^2
  -|\bar\alpha\times\beta|^2$
\end{lemma}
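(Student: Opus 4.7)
All three statements are purely computational identities about the cross product $\alpha\times\beta=\alpha_0\beta_1-\alpha_1\beta_0$, which is just the determinant of the $2\times 2$ matrix with columns $\alpha$ and $\beta$. The plan is to verify each by direct calculation from this explicit formula; there is no substantive difficulty, only sign bookkeeping. Part (1) is the standard fact from linear algebra that two vectors in $\C^2$ are linearly dependent precisely when their $2\times 2$ determinant vanishes, which is immediate from the definition.

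For (2), I compute $\bar\alpha\times\alpha=\bar\alpha_0\alpha_1-\bar\alpha_1\alpha_0$ and conjugate to get $\overline{\bar\alpha\times\alpha}=\alpha_0\bar\alpha_1-\alpha_1\bar\alpha_0=-(\bar\alpha\times\alpha)$, so $\bar\alpha\times\alpha\in\i\R$. For the second assertion I would invoke (1): the case $\alpha=0$ is trivial, otherwise $\bar\alpha\times\alpha=0$ means $\bar\alpha$ and $\alpha$ are collinear, so $\bar\alpha=\lambda\alpha$ for some $\lambda\in\C$. Applying this relation twice gives $\alpha=|\lambda|^2\alpha$, hence $|\lambda|=1$; writing $\lambda=\e^{-2\i\theta}$ one finds $\overline{\e^{-\i\theta}\alpha}=\e^{\i\theta}\bar\alpha=\e^{-\i\theta}\alpha$, so $\e^{-\i\theta}\alpha$ is real and $\alpha$ is proportional to a real vector. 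The converse is immediate since $\bar v\times v=0$ for any real $v$.

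For (3), I would just expand both sides. One computes
\[
  |\alpha\times\beta|^2=|\alpha_0|^2|\beta_1|^2+|\alpha_1|^2|\beta_0|^2-\alpha_0\bar\alpha_1\beta_1\bar\beta_0-\bar\alpha_0\alpha_1\bar\beta_1\beta_0,
\]
and by the analogous computation with $\alpha$ replaced by $\bar\alpha$,
\[
  |\bar\alpha\times\beta|^2=|\alpha_0|^2|\beta_1|^2+|\alpha_1|^2|\beta_0|^2-\bar\alpha_0\alpha_1\beta_1\bar\beta_0-\alpha_0\bar\alpha_1\bar\beta_1\beta_0.
\]
The squared-modulus terms cancel and the difference factors as $(\alpha_0\bar\alpha_1-\bar\alpha_0\alpha_1)(\bar\beta_1\beta_0-\beta_1\bar\beta_0)$; each factor is the negative of the corresponding purely imaginary form appearing in (2), so the two minus signs cancel and the product equals $(\bar\alpha\times\alpha)(\bar\beta\times\beta)$. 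The main thing to watch is the order of the arguments in the cross products, which is where the identity could most easily be fumbled.
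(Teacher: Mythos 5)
Your proof is correct and follows essentially the same route as the paper: part (1) is the same determinant argument, and your unimodular-factor argument in (2) is exactly the paper's observation that $\bar\alpha=c^2\alpha$ with $|c|=1$ forces $c\alpha$ to be real. The only cosmetic difference is in (3), where the paper simply cites the Kodaira identity \eqref{eq:fghk} (with $\alpha\times\beta$ playing the role of the Wronskian) while you verify the same identity by direct expansion; your computation and sign bookkeeping check out.
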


\begin{proof}
(1):  If $\alpha_0\beta_1=\alpha_1\beta_0$ and $\beta\neq0$ then
  $\beta_k=0\Rightarrow\alpha_k=0$ and if $\beta_0\neq0\neq\beta_1$
  then $\alpha_0/\beta_0=\alpha_1/\beta_1$. (2): If
  $\bar\alpha\times\alpha=0$ we get $\bar\alpha=c^2\alpha$
  for some complex $c$ with $|c|=1$ which implies
  $(c\alpha)^*=c\alpha$. (3) follows by the Kodaira identity.
\end{proof}  

Here is a version of Lemma \ref{lm:ImVl} for the regular case.

\begin{lemma}\label{lm:ImVV}
  Let $V\in L^1]a,b[$.  Suppose that $f,g\in\cD(L_\M)$. Then
  \begin{align}
    (L_\M f|g)-(f|L_\M g)
    &=-2\i\int_a^b\Im(V)\bar fg+W_b(\bar f,g)-W_a(\bar f,g).
    \label{lhsl}
  \end{align}
  \end{lemma}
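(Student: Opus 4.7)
Since $V\in L^1]a,b[$ and $a,b$ are finite, both endpoints are regular. By Proposition \ref{re:reg}, every element of $\cD(L_\M)$ extends to a function of class $C^1$ on the closed interval $[a,b]$, so in particular $f,g,\bar f\in C^1[a,b]$ and the Wronskians $W_a(\bar f,g)$ and $W_b(\bar f,g)$ are well defined pointwise as $\bar f(x)g'(x)-\bar f'(x)g(x)$ at $x=a,b$. The plan is to prove the identity by a direct integration by parts, treating $\bar f$ as a $C^1$ test function (we do not need $\bar f\in\cD(L_\M)$).

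First I would note that the right-hand side of \eqref{lhsl} is well defined: on the finite interval $[a,b]$ the functions $f,g\in C[a,b]$ are bounded, so $\Im(V)\,\bar f g\in L^1]a,b[$. Next I would write, for $f,g\in\cD(L_\M)$,
\[
(L_\M f|g)-(f|L_\M g)=\int_a^b\bigl(\overline{Lf}\,g-\bar f\,Lg\bigr)\,\d x
=\int_a^b\bigl(-\overline{f''}g+\bar f g''\bigr)\,\d x + \int_a^b(\bar V - V)\bar f g\,\d x,
\]
where the last integrand equals $-2\i\,\Im(V)\bar f g$. This uses $f\in\cD(L_\M)\Rightarrow f''=Vf-Lf\in L^1]a,b[$ (since $V\in L^1$ and $f\in L^\infty$, and $Lf\in L^2\subset L^1$ on a bounded interval), and similarly for $g''$, so both integrands make sense in $L^1$.

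Then I would apply integration by parts twice. Since $\bar f,g\in AC[a,b]$ and $\bar f',g'\in AC[a,b]$, the product rule \eqref{leibnitz1} and the fundamental theorem \eqref{leibnitz2} give
\[
\int_a^b\!\overline{f''}g\,\d x = \bigl[\overline{f'}g\bigr]_a^b - \int_a^b\!\overline{f'}g'\,\d x,
\qquad
\int_a^b\!\bar f g''\,\d x = \bigl[\bar f g'\bigr]_a^b - \int_a^b\!\overline{f'}g'\,\d x .
\]
Subtracting eliminates the middle integrals and produces
\[
\int_a^b\bigl(-\overline{f''}g+\bar f g''\bigr)\,\d x
=\bigl[\bar f g'-\overline{f'}g\bigr]_a^b = W_b(\bar f,g)-W_a(\bar f,g),
\]
which combined with the previous display yields \eqref{lhsl}.

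The only subtlety is verifying that the two boundary terms at $a$ and $b$ really equal $W_a(\bar f,g)$ and $W_b(\bar f,g)$ as used elsewhere in the paper. In the regular setting this is immediate because $f,f',g,g'$ admit continuous extensions to $[a,b]$ by Proposition \ref{re:reg}, so the pointwise formula for $W$ at the endpoints coincides with the limit definitions \eqref{wronski-a}--\eqref{wronski-b}. No further technical obstacle arises: the argument is strictly easier than that of Lemma \ref{lm:ImVl} precisely because we do not invoke $\bar f\in\cD(L_\M)$, which is where the singular-$\Im V$ pathology of Lemma \ref{lm:conjugate} would otherwise force the extra hypothesis $\Im V\in L^2_\loc]a,b[$.
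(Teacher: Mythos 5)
Your proof is correct. The paper in fact states Lemma \ref{lm:ImVV} without proof, and your direct double integration by parts is the natural argument: regularity of both endpoints gives $f,g\in C^1[a,b]$ and $f'',g''\in L^1]a,b[$, so $\overline{f'}g$ and $\bar f g'$ lie in $AC[a,b]$ and \eqref{leibnitz2} legitimately produces the boundary terms, which coincide with the limits \eqref{wronski-a}--\eqref{wronski-b}. Your route is genuinely different from the paper's proof of the singular analogue, Lemma \ref{lm:ImVl}: there the authors insert $\langle L_\M\bar f|g\rangle$, use $\bar L_\M-L_\M=-2\i\Im(V)$ on one half and Green's identity \eqref{eq:lagrange1} on the other, which is what forces the extra hypothesis $\bar f\in\cD(L_\M)$ (and hence $\Im V\in L^2_\loc$ via Lemma \ref{lm:conjugate}). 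Your argument buys exactly what the regular case needs --- the identity for all $f,g\in\cD(L_\M)$ with no conjugation-invariance of the domain --- and you correctly identify that this is the point of having a separate lemma. The only reading you add silently is finiteness of $a$ and $b$; this is clearly the intended setting of the subsection (cf.\ Theorem \ref{lm:ImV1}), and in any case an infinite endpoint with $V\in L^1$ would only make the corresponding Wronskian vanish by Proposition \ref{thm010}, so nothing is lost.
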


  Next we have a version of Thm \ref{th:ImV} for the regular case.
  Fix \emph{nonzero vectors} $\alpha,\beta$ and define
  $L_{\alpha\beta}$ by imposing the boundary conditions at $a$ and
  $b$:
\[
f(a)\alpha_1-f'(a)\alpha_0=0, \quad f(b)\beta_1-f'(b)\beta_0=0 .
\]
In this context it is quite easy to prove that
$L^*_{\alpha\beta}=\bar{L}_{\bar\alpha\,\bar\beta}$. 

\begin{theorem} \label{lm:ImV1} Suppose that $a,b$ are finite and
  $V\in L^1]a,b[$.  Then
  \[
    L_{\alpha\beta}\text{ is dissipative }\Leftrightarrow \Im
    V\leq0,\quad\Im(\bar\alpha_0\alpha_1)\leq0,\text{ and }
    \Im(\bar\beta_0\beta_1)\geq0.
  \]
  And in this case $L_{\alpha \beta}$ is maximal dissipative.
\end{theorem}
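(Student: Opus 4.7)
\medskip

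\noindent\textbf{Proof plan.}
The regular endpoints give us pointwise values $f(a),f'(a),f(b),f'(b)$ for every $f\in\cD(L_\M)$, so the strategy is to run the argument of Theorem \ref{th:ImV} in a completely explicit form, using the elementary identity
\[
W_x(\bar f,f)=\bar f(x)f'(x)-\bar f'(x)f(x)=2\i\,\Im\bigl(\bar f(x)f'(x)\bigr),\qquad x\in\{a,b\}.
\]
First I would put $g=f$ in Lemma \ref{lm:ImVV}; taking imaginary parts yields the basic identity (compare \eqref{eq:lsh})
\[
\Im(f|L_\M f)=\int_a^b\Im(V)|f|^2+\tfrac{1}{2\i}W_a(\bar f,f)-\tfrac{1}{2\i}W_b(\bar f,f),\qquad f\in\cD(L_\M).
\]

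Next I would make the boundary term completely explicit on $\cD(L_{\alpha\beta})$. The condition $f(a)\alpha_1-f'(a)\alpha_0=0$, together with $\alpha\neq0$, forces $\bigl(f(a),f'(a)\bigr)=\lambda_a(\alpha_0,\alpha_1)$ for some $\lambda_a\in\C$ (the case in which a component of $\alpha$ vanishes is handled separately but identically), and analogously $\bigl(f(b),f'(b)\bigr)=\lambda_b(\beta_0,\beta_1)$. A direct computation then gives
\[
\tfrac{1}{2\i}W_a(\bar f,f)=|\lambda_a|^2\Im(\bar\alpha_0\alpha_1),\qquad
\tfrac{1}{2\i}W_b(\bar f,f)=|\lambda_b|^2\Im(\bar\beta_0\beta_1),
\]
so that for every $f\in\cD(L_{\alpha\beta})$
\[
\Im(f|L_{\alpha\beta}f)=\int_a^b\Im(V)|f|^2+|\lambda_a|^2\Im(\bar\alpha_0\alpha_1)-|\lambda_b|^2\Im(\bar\beta_0\beta_1).
\]
The $(\Leftarrow)$ implication is immediate from this identity. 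For $(\Rightarrow)$, testing with $f\in\cD(L_{\mathrm c})$ (so $\lambda_a=\lambda_b=0$) forces $\Im V\leq0$; then localising by a cutoff $\theta\in C^\infty[a,b]$ that equals $1$ near $a$ and vanishes near $b$, applied to a function with $\lambda_a\neq0$, isolates the contribution at $a$ and forces $\Im(\bar\alpha_0\alpha_1)\leq0$; symmetrically at $b$.

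For maximal dissipativity I would invoke Proposition \ref{pr:adis}: since $L_{\alpha\beta}$ is closed and densely defined, it is enough to show that $-L^*_{\alpha\beta}$ is also dissipative. By Proposition \ref{pr:hadpp} we have $L^*_{\alpha\beta}=\bar L_{\bar\alpha\,\bar\beta}$, and applying Lemma \ref{lm:ImVV} to $\bar L=-\partial^2+\bar V$ (so that $\Im\bar V=-\Im V$) produces
\[
\Im(f|-\bar L_{\bar\alpha\,\bar\beta}f)=\int_a^b\Im(V)|f|^2-\tfrac{1}{2\i}W_a(\bar f,f)+\tfrac{1}{2\i}W_b(\bar f,f).
\]
The boundary conditions for $\bar L_{\bar\alpha\,\bar\beta}$ give $(f(a),f'(a))=\mu_a(\bar\alpha_0,\bar\alpha_1)$, hence
$\tfrac{1}{2\i}W_a(\bar f,f)=|\mu_a|^2\Im(\alpha_0\bar\alpha_1)=-|\mu_a|^2\Im(\bar\alpha_0\alpha_1)$,
and analogously at $b$; the signs conspire so that exactly the same three conditions $\Im V\leq0$, $\Im(\bar\alpha_0\alpha_1)\leq0$, $\Im(\bar\beta_0\beta_1)\geq0$ make $-L^*_{\alpha\beta}$ dissipative.

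The only genuinely delicate point is the localisation step in the $(\Rightarrow)$ direction: one must check that multiplying an $f\in\cD(L_{\alpha\beta})$ by a smooth cutoff keeps it in $\cD(L_{\alpha\beta})$ (it does, since the boundary condition at $a$ depends only on $f(a),f'(a)$, which are unchanged by a cutoff equal to $1$ near $a$) and that the bulk term $\int\Im V|f\theta|^2$ can be made arbitrarily small by shrinking the support of $\theta$ towards $a$, which is straightforward because $V\in L^1]a,b[$. Everything else is bookkeeping with the regular-case formulas.
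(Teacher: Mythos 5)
Your proposal is correct and follows essentially the same route as the paper: the paper's proof also reduces to the argument of Theorem \ref{th:ImV} via the regular Green's identity of Lemma \ref{lm:ImVV}, identifies $\frac{1}{2\i}\lbra\bar\alpha|\alpha\rbra_a$ with $\Im(\bar\alpha_0\alpha_1)$ (your $|\lambda_a|^2$-parametrization is just an explicit form of this), and obtains maximal dissipativity from Propositions \ref{pr:adis} and \ref{pr:hadpp} together with the sign flip coming from $\partial^2-\bar V$. Your extra care about the cutoff preserving membership in $\cD(L_{\alpha\beta})$ is a harmless elaboration of what the paper leaves implicit.
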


\begin{proof}
  The proof is similar to that of Theorem \ref{th:ImV}, but much
  simpler.  We use Lemma \ref{lm:ImVV} instead of Lemma \ref{lm:ImVl}
  and get the same relation \eqref{eq:lshdis} as necessary and
  sufficient condition for dissipativity. Then we use
  \[
    \frac{1}{2\i}\lbra\bar\alpha|\alpha\rbra_a =
  \frac{1}{2\i}(\bar\alpha_0\alpha_1-\bar\alpha_1\alpha_0)=
  \Im(\bar\alpha_0\alpha_1)
  \]
  and a similar relation for $\beta$. 
  Finally, when checking the dissipativity of $-L^*_{\alpha\beta}$,
  note that this operator is associated to the differential expression
  $\partial^2-\bar{V}$, which explains a difference of sign. 
 \end{proof}

\subsection{Weyl circle in the regular case}

In this subsection we fix a regular operator $L$ whose potential has
a negative imaginary part. We study solutions of $(L-\lambda)f=0$
for $\Im\lambda>0$ with real boundary conditions. In Theorem
\ref{th:weyl} we show that they define a certain circle in the complex plane called the {\em Weyl circle}.
This result  will be needed in the next subsection
\S\ref{ss:lplc}, where general boundary conditions
are studied.

We will use an argument essentially due to H.\ Weyl
in the real case, cf.\ \cite{CL,Nai,Pry} for example.  The Weyl circle for potentials
with semi-bounded imaginary part were first treated in \cite{Si}, see
\cite{BCEP} for more recent results.

Let us denote $U=\Im(\lambda-V)$ and
\begin{equation}\label{eq:newscp}
  \bracet{f}{g}_U=\int_a^b \bar{f}gU .
\end{equation}
We set $\|f\|_U^2=\bracet{f}{f}_U$ and note that if $U\geq 0$, then
$\bracet{\cdot}{\cdot}_U$ is a positive hermitian form and we denote
$\|\cdot\|_U$ is the corresponding seminorm.  Now if $f,g\in\cD(L_\M)$
and $Lf=\lambda f$, $Lg=\lambda g$ for some complex number $\lambda$
then (\ref{lsh}) can be rewritten as
\begin{equation}\label{eq:lagrangecl2}
  2\i\bracet{f}{g}_U= W_a(\bar f,g) -W_b(f,g) .
\end{equation}

\begin{theorem} \label{th:weyl} Assume that $\Im V\leq0$ and
  $\Im\lambda>0$.  Let $u,v$ be solutions of the equation
  $Lf=\lambda f$ with real boundary condition at $a$ and satisfying
  $ W(v,u)=1.$ If $w$ is a solution of $Lf=\lambda f$ with a real
  boundary condition at $b$, then there is a unique $m\in\C$ such that
  $w=mu+v$; this number is on the circle
  \begin{equation}\label{eq:weylcircle0}
  \int_a^b|mu+v|^2\Im(\lambda-V) =\Im m,
  \end{equation}
  which has
  \begin{equation}\label{eq:weylcircle}
  \text{center }
  c=\frac{\i/2-\bracet{u}{v}_U}{\|u\|^2_U}=\frac{W_b(\bar
    u,v)}{2\i\|u\|_U^2} \quad\text{and radius } r=\frac{1}{2\|u\|_U^2} .
\end{equation}
Conversely, let $m$ be a complex number on the circle
\eqref{eq:weylcircle0}, and define $w$ by $w=mu+v$. Then $w$ has a
real boundary condition at $b$ and $W(w,u)=1$.
\end{theorem}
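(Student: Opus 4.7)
The plan is to combine the sesquilinear Lagrange identity \eqref{eq:lagrangecl2} with the real boundary conditions of $u,v$ at $a$ and of $w$ at $b$ to reduce the problem to a quadratic equation in $m$. Recall that ``real boundary condition at $x$'' for a solution $f$ means that the functional $\vec f_x$ is (complex) proportional to a real one, equivalently $W_x(\bar f,f)=0$; after an overall scaling we may therefore assume $u(a), u'(a), v(a), v'(a)\in\R$, yielding the elementary identities $W_a(\bar u, u) = W_a(\bar v, v) = 0$ and $W_a(\bar u, v) = -W_a(\bar v, u) = -1$ (the last using $W(v,u)=1$).

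First I would establish the normalization $w=mu+v$ for a unique $m\in\C$. Writing $w=\alpha u+\beta v$ gives $W(w,u)=\beta$, so one must rule out $\beta=0$. But $\beta=0$ would force $w\propto u$ and hence $u$ to satisfy real boundary conditions at both endpoints; this gives $W_a(\bar u,u)=W_b(\bar u,u)=0$, so applying \eqref{eq:lagrangecl2} to $f=g=u$ yields $2\i\|u\|_U^2=0$, contradicting $u\neq0$ and $U>0$. Applying \eqref{eq:lagrangecl2} to $f=g=w=mu+v$ and using $W_b(\bar w,w)=0$, the bilinear expansion $W_a(\bar w,w)=\bar m(-1)+m(1)=2\i\Im m$ then yields $\Im m=\|mu+v\|_U^2$, which is \eqref{eq:weylcircle0}.

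To obtain \eqref{eq:weylcircle}, I would expand $\|mu+v\|_U^2 = |m|^2\|u\|_U^2 + 2\Re(m\bracet{v}{u}_U)+\|v\|_U^2$ and use the identity $-\Im m = 2\Re(\i m/2)$ to rewrite the circle equation as
\[
|m|^2\|u\|_U^2 + 2\Re\bigl(m(\bracet{v}{u}_U + \i/2)\bigr) + \|v\|_U^2 = 0.
\]
Dividing by $\|u\|_U^2$ and completing the square gives $|m-c|^2=r^2$ with $c=(\i/2-\bracet{u}{v}_U)/\|u\|_U^2$ and $r^2 = |c|^2 - \|v\|_U^2/\|u\|_U^2$. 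The alternative form $c=W_b(\bar u,v)/(2\i\|u\|_U^2)$ follows because \eqref{eq:lagrangecl2} applied to $(u,v)$ gives $W_b(\bar u,v)=-1-2\i\bracet{u}{v}_U$. The radius requires more work: computing $W_b$ on each of the four pairs $(\bar u,u),(\bar v,v),(\bar u,v),(\bar v,u)$ via \eqref{eq:lagrangecl2} and inserting these into the Kodaira identity \eqref{eq:fghk} at $b$ for the quadruple $(\bar u,\bar v,u,v)$ yields the relation
\[
\|u\|_U^2\|v\|_U^2 - |\bracet{u}{v}_U|^2 = \Im\bracet{v}{u}_U,
\]
which, substituted into the expression for $r^2$, collapses everything to $r^2=1/(4\|u\|_U^4)$.

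For the converse, given $m$ on the circle, set $w=mu+v$; then $W(w,u)=W(v,u)=1$ is automatic. The circle equation reads $\Im m=\|w\|_U^2$, and $W_a(\bar w,w)=2\i\Im m$ by the same bilinear expansion as above, so \eqref{eq:lagrangecl2} forces $W_b(\bar w,w)=0$, i.e., $w$ satisfies a real boundary condition at $b$. The main obstacle is the radius identification: this is where one genuinely needs Kodaira, since both Cauchy--Schwarz and the direct completion-of-the-square leave a leftover term $\Im\bracet{v}{u}_U$ that must be matched against $\|u\|_U^2\|v\|_U^2-|\bracet{u}{v}_U|^2$; every other step reduces to bilinear bookkeeping with \eqref{eq:lagrangecl2}.
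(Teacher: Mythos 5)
Your proposal is correct and follows essentially the same route as the paper's proof: the sesquilinear Green identity \eqref{eq:lagrangecl2} together with $W_a(\bar u,u)=W_a(\bar v,v)=0$ yields $\|w\|_U^2=\Im m$, completing the square gives the center, and the Kodaira identity at $b$ produces the radius $1/(2\|u\|_U^2)$, exactly as in the paper. Your explicit exclusion of the degenerate case $w\propto u$ (via $\|u\|_U^2>0$) is a small detail the paper's proof leaves implicit, but it does not amount to a different approach.
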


\begin{proof}
  From Lemma \ref{lm:easy} (2) and the reality of the boundary
  conditions at $a$ we get
  \begin{equation}\label{mospilan}
    W_a(\bar u,u)=0,\quad W_a(\bar v,v)=0.
  \end{equation}
  This implies
  \begin{equation} \label{mos2}
    \|u\|_U^2=\frac{\i}{2}W_b(\bar u,u),\quad
    \|v\|_U^2=\frac{\i}{2}W_b(\bar v,v),
  \end{equation}
  due to \eqref{eq:lagrangecl2}. And if $w$ is as in the first part of
  the theorem then the same argument gives
  \begin{equation}
    \|w\|_U^2 =\frac{1}{2\i} W_a(\bar w,w).\label{mos1}
  \end{equation}
  Since $u,v$ are linearly independent solutions of $Lf=\lambda f$, if
  $w$ is another solution then we have $w=mu+nv$ for uniquely
  determined complex numbers $m,n$. Since $W(v,u)=1$ we see that
  $n=1$.
  
  Now fix $w=mu+v$.  Using (\ref{mospilan}) and $W_a(u,v)=-1$, we get
  \begin{equation}\label{eq:req}
    W(\bar w,w)_a
        =|m|^2W_a(\bar u,u)+\bar{m}W_a(\bar u,v)+
    mW_a(\bar v,u)+W_a(\bar v,v) = 2\i\Im m .
  \end{equation}
From (\ref{mos1}) and  \eqref{eq:req} we get
\begin{equation}\label{eq:uveqU}
  \|w\|_U^2 = \Im m .
\end{equation}
From this relation we get
\begin{equation}\label{eq:req2}
  \Im m=\|mu+v\|^2_U =|m|^2\|u\|^2_U
  + 2\Re\big(\bar{m}\bracet{u}{v}_U\big)+ \|v\|^2_U  
\end{equation}
and since $\Im m = 2\Re (\bar{m}\,\i/2)$ we may rewrite this as
\begin{equation} \label{eq:meq}
|m|^2\|u\|^2_U -2\Re\big(\bar{m}(\i/2-\bracet{u}{v}_U)\big)
    +\|v\|^2_U =0 .
\end{equation}
Clearly, $\|w\|_U>0$ hence $\Im m>0$ by
\eqref{eq:uveqU} so \eqref{eq:meq} may be rewritten
\begin{equation} \label{eq:meq2}
  |m|^2
  -2\Re\left(\bar{m}\frac{\i/2-\bracet{u}{v}_U}{\|u\|^2_U}\right)
    +\frac{\|v\|^2_U}{\|u\|^2_U} =0 .
\end{equation}
If $c\in\C$ and $d\in\R$ then
\( |m|^2-2\Re(\bar{m}c) +d=|m-c|^2 - (|c|^2-d) \). Hence there is $m$
such that $|m|^2-2\Re(\bar{m}c) +d=0$ if and only if $d\leq |c|^2$, and
then $|m|^2-2\Re(\bar{m}c) +d=0$ is the equation of a circle with
center $c$ and radius $\sqrt{|c|^2 -d}$.  Thus \eqref{eq:meq2} is the
equation of the circle with
\[
  \text{center}\quad  c=\frac{\i/2-\bracet{u}{v}_U}{\|u\|^2_U}
  \quad\text{and square of radius}\quad 
r^2=\frac{|\i/2-\bracet{u}{v}_U|^2 - \|u\|^2_U \|v\|^2_U}{\|u\|^4_U} .
\]
From \eqref{eq:lagrangecl2} we get
\(
  2\i\bracet{u}{v}_U= W_a(\bar u,v) -W_b(\bar u,v)=-1 -W_b(\bar u,v),
\)
hence $\i/2-\bracet{u}{v}_U  = W_b(\bar u,v)/2\i$. 
Then \eqref{mos2} implies 
\[
\|u\|^2_U \|v\|^2_U=-\frac{1}{4}W_b(\bar u,u) W_b(\bar v,v) ,
\]
hence
\[
|\i/2-\bracet{u}{v}_U|^2 - \|u\|^2_U \|v\|^2_U=
\big(|W_b(\bar u,v)|^2 +W_b(\bar u,u) W_b(\bar v,v) \big)/4 .
\]
But by the Kodaira identity
$W_b(\bar u, u)W_b(\bar v,v) =1-|W_b(\bar u,v)|^2$, hence we get
\[
|\i/2-\bracet{u}{v}_U|^2 - \|u\|^2_U \|v\|^2_U=1/4
\]
so \eqref{eq:meq2} is just the circle described by
\eqref{eq:weylcircle}.

To prove the reciprocal part of the theorem, consider a point $m$ on
this circle and let $w=mu+v$. Clearly $Lw=\lambda w$ and $W(w,u)=1$
and the computation \eqref{eq:req} gives us $W_a(\bar w,w)=2\i\Im
m$. We also have \eqref{eq:req2} because this just says that $m$ is on
the circle \eqref{eq:weylcircle}. Thus we have
\[
  \|w\|_U^2=\Im m = W_a(\bar w,w)/2\i,
\]
and then \eqref{eq:lagrangecl2} implies $W_b(\bar w,w)=0$. Therefore,
by Lemma \ref{eq:lagrangecl2} $w$ has a real boundary condition at
$b$. This proves the final assertion of the theorem.
\end{proof}

\subsection{Limit point/circle} \label{ss:lplc}

In this section we again assume that $\Im V\leq0$ and $\Im\lambda>0$.
We allow $b$ to be an irregular endpoint. We also assume that $a$ is a
regular endpoint. Thus we assume that $V\in L_\loc^1[a,b[$.  This
class of potentials has first been considered in \cite{Si}; see
\cite{BCEP} for more general conditions.
    
Using Theorem \ref{th:weyl}, we will obtain a classification of the
properties of $L$ around $b$ into three categories. This
classification can be called the {\em Weyl trichotomy}.  It replaces
the {\em Weyl dichotomy}, well known classification of irregular
endpoints for real potentials.
    
Note that this classifiction depends only on the behaviour of $V$
close to $b$. In particular, the assumption of regularity of $a$ is
made only for convenience. If $a$ is also irregular, then the analysis
should be done separately on intervals $]a,a_1]$ and $[b_1,b[$.

Let $u,v$ be solutions of $Lf=\lambda f$ on $]a,b[$ with real boundary
conditions at $a$ and such that $W(v,u)=1$.
\begin{definition}
  For any $d\in]a,b[$ we define
  \begin{align*}\text{the \emph{Weyl circle}}\quad
    \mathscr C_d&:=\Big\{m\in\C\mid
                  \int_a^d|mu+v|^2\Im(\lambda-V) =\Im m\Big\},\\
    \text{the \emph{open Weyl disk}}\quad
    \mathscr C_d^\circ&:=\Big\{m\in\C\mid
                        \int_a^d|mu+v|^2\Im(\lambda-V) <\Im m\Big\},\\
    \text{the \emph{closed Weyl disk}}\quad
    \mathscr C_d^\bullet&:=\Big\{m\in\C\mid
    \int_a^d|mu+v|^2\Im(\lambda-V) \leq\Im m\Big\}\,=
    \, \mathscr C_d^\circ\cup \mathscr C_d.
  \end{align*}
\end{definition}
Thus the Weyl circle is given by the condition \eqref{eq:weylcircle0}
with $b$ replaced by $d$. Since the left hand side of
\eqref{eq:weylcircle0} growth like $|m|^2$ when $m\to\infty$, it
follows that $\mathscr C_d^\circ$ is inside $\mathscr C_d$ .  If
$d_1<d_2$ then
\[
  \int_a^{d_2}|mu+v|^2\Im(\lambda-V) \leq\Im m \Rightarrow
\int_a^{d_1}|mu+v|^2\Im(\lambda-V) <\Im m.
\]
Hence $\mathscr C^\bullet_{d_2}\subset\mathscr C^\circ_{d_1}$ strictly
if $d_1<d_2<b$.
\begin{definition}
  We set
  \begin{align*}
    \mathscr C_b^\bullet&:=\bigcap_{d<b}\mathscr C^\bullet_d,\\
    \mathscr C_b&:=\text{the boundary of }\mathscr C_b^\bullet.
  \end{align*}
\end{definition}
It follows that either $\mathscr C_b^\bullet=\mathscr C_b$ is a point,
or $\mathscr C_b^\bullet$ is a disk and $\mathscr C_b$ is a circle of
radius $>0$.
\begin{definition} We say that $b$ is \emph{limit point} if $\mathscr C_b$
is a point. We say that $b$ is \emph{limit circle} if $\mathscr C_b$
is a circle of a positive radius. 
\end{definition}

\begin{lemma}
  Let $m\in \mathscr C_b^\bullet$.  Then
  \begin{equation}\label{mos4}
    \int_a^b|w|^2\Im(\lambda-V)\leq\Im(m).
  \end{equation}
  If $b$ is limit point then $\int_a^b|u|^2\Im(\lambda-V)=\infty$.
\end{lemma}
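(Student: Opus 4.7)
The plan is to unwind the two statements directly from the definitions and from the radius formula in Theorem \ref{th:weyl}.

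For the first inequality, I would observe that $m\in\mathscr C_b^\bullet$ means $m\in\mathscr C_d^\bullet$ for every $d<b$, i.e.\
\[
\int_a^d |mu+v|^2\,\Im(\lambda-V)\,\leq\,\Im m \qquad\forall\,d\in\,]a,b[.
\]
Since $\Im\lambda>0$ and $\Im V\leq0$, the integrand $|w|^2\Im(\lambda-V)$ is non-negative, so the map $d\mapsto \int_a^d|w|^2\Im(\lambda-V)$ is non-decreasing. Letting $d\nearrow b$ and applying the monotone convergence theorem gives the required bound $\int_a^b|w|^2\Im(\lambda-V)\leq \Im m$.

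For the second statement, I would use the explicit expression for the radius of the Weyl circle provided by Theorem \ref{th:weyl}, applied to the sub-interval $]a,d[$ instead of $]a,b[$. Denoting $\|u\|_{U,d}^2:=\int_a^d|u|^2\Im(\lambda-V)$, the radius of $\mathscr C_d$ is
\[
r_d=\frac{1}{2\|u\|_{U,d}^2}.
\]
By definition, $b$ is limit point precisely when $\mathscr C_b^\bullet=\bigcap_{d<b}\mathscr C_d^\bullet$ is a single point; since the closed disks $\mathscr C_d^\bullet$ are nested (strictly decreasing as $d$ grows), this forces $r_d\to 0$ as $d\nearrow b$, and hence $\|u\|_{U,d}^2\to\infty$. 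By monotone convergence again, $\int_a^b|u|^2\Im(\lambda-V)=\lim_{d\nearrow b}\|u\|_{U,d}^2=\infty$.

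No real obstacle is expected here: both claims are essentially bookkeeping, the first being monotone convergence applied to a non-negative integrand, the second being the observation that ``limit point'' is equivalent to the Weyl radii collapsing to zero, which by the formula from Theorem \ref{th:weyl} is equivalent to $u$ failing to be square-integrable against the weight $\Im(\lambda-V)$ near $b$. The only point that might need a brief justification is the strict nesting $\mathscr C_{d_2}^\bullet\subset\mathscr C_{d_1}^\circ$ for $d_1<d_2$, but this is already remarked in the text just before the definition of $\mathscr C_b^\bullet$.
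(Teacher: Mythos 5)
Your proposal is correct and follows essentially the same route as the paper: the first claim is obtained by intersecting over $d<b$ and letting $d\nearrow b$ (the paper leaves the monotone-convergence step implicit), and the second by noting that in the limit point case the radii $r_d=1/(2\|u\|^2_{U,d})$ from \eqref{eq:weylcircle} must tend to zero. No gaps.
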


\begin{proof}
  For any $d\in]a,b[$, we have
  $ \mathscr C_b^\bullet\subset \mathscr C_d^\bullet$. Therefore,
  \begin{equation}\label{mos5}
    \int_a^d|w|^2\Im(\lambda-V)\leq\Im(m).
  \end{equation}
  Then we take the limit $d\nearrow b$. If $b$ is limit point then the
  radius of the Weyl circle $\mathscr C_d$ tends to zero as $d\to b$
  hence $\lim_{d\to b}\int_a^d|u|^2\Im(\lambda-V)=\infty$ by the last
  relation in \eqref{eq:weylcircle}.
\end{proof}

The above lemma implies immediately the following theorem:

\begin{theorem}
If $b$ is limit circle, then all solutions of $(L-\lambda)f-0$ satisfy
\begin{equation} \label{mos3}
  \int_a^b|w|^2\Im(\lambda-V)\quad\text{ is bounded.}
\end{equation}
If $b$ is limit point, then there exists only one (modulo a complex
factor) solution of $(L-\lambda)f=0$ satisfying (\ref{mos3}).
\end{theorem}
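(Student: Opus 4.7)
The two assertions follow almost mechanically from the preceding lemma (which provides the key inequality $\int_a^b|mu+v|^2\Im(\lambda-V)\leq \Im m$ for every $m\in\mathscr C_b^\bullet$, together with the fact that $\int_a^b|u|^2\Im(\lambda-V)=\infty$ in the limit point case). The whole argument takes place in the solution space $\Ker(L-\lambda)$, which is two-dimensional and spanned by $u,v$.

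For the limit circle case, I would exploit the positive radius of $\mathscr C_b^\bullet$ to pick two distinct points $m_1,m_2\in\mathscr C_b^\bullet$. Setting $w_j:=m_ju+v$, the preceding lemma gives $\int_a^b|w_j|^2\Im(\lambda-V)\leq \Im m_j<\infty$ for $j=1,2$. Then $w_1-w_2=(m_1-m_2)u$ inherits finiteness of $\int_a^b|\cdot|^2\Im(\lambda-V)$, so $u$ does, and so does $v=w_1-m_1u$. By linearity (using $|\alpha u+\beta v|^2\leq 2|\alpha|^2|u|^2+2|\beta|^2|v|^2$), every element of $\Ker(L-\lambda)$ satisfies the integrability condition, which is what was claimed.

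For the limit point case, the preceding lemma, applied to the unique point $m_\infty\in\mathscr C_b^\bullet=\mathscr C_b$, already produces one solution, namely $w_\infty:=m_\infty u+v$, with $\int_a^b|w_\infty|^2\Im(\lambda-V)\leq \Im m_\infty<\infty$. For uniqueness up to a scalar, let $f=\alpha u+\beta v$ be any solution satisfying $\int_a^b|f|^2\Im(\lambda-V)<\infty$. If $\beta=0$, then $f=\alpha u$; combined with $\int_a^b|u|^2\Im(\lambda-V)=\infty$ (the second half of the preceding lemma), this forces $\alpha=0$. If $\beta\neq0$, consider $g:=f-\beta w_\infty=(\alpha-\beta m_\infty)u$; the function $g$ is a multiple of $u$, and by the triangle inequality its weighted integral is finite, so the same argument as before forces $\alpha-\beta m_\infty=0$, i.e.\ $f=\beta w_\infty$.

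The main obstacle is really not an obstacle — everything is packaged inside the previous lemma. The only thing worth double-checking is that one may indeed split the weighted $L^2$ integrability condition through linear combinations, which is immediate since $\Im(\lambda-V)\geq\Im\lambda>0$ and the weighted space $L^2(]a,b[,\Im(\lambda-V)\,dx)$ is a genuine Hilbert space; alternatively, one just uses the elementary inequality $|\alpha u+\beta v|^2\leq 2|\alpha|^2|u|^2+2|\beta|^2|v|^2$.
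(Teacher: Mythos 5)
Your proposal is correct and follows exactly the route the paper intends: the paper simply states that the theorem is an immediate consequence of the preceding lemma, and your elaboration (extracting $u$ from the difference of two points of the Weyl disk in the limit circle case, and using $\int_a^b|u|^2\Im(\lambda-V)=\infty$ to pin down the unique ray in the limit point case) is precisely the argument being left to the reader.
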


Note that $\Im(\lambda-V)\geq\Im\lambda>0$. Therefore, (\ref{mos3})
implies the square integrability of $w$.

Thus, for potentials with a negative imaginary part instead of
Weyl's dichotomy we have three possibilities for solutions
of $(L-\lambda)f=0$ (we consider solutions modulo a complex factor):
\begin{enumerate}
\item limit point case, only one solution satisfies \eqref{mos3},
  only one solution is square integrable;
\item limit point case, only one solution satisfies \eqref{mos3},
  all solutions are square integrable;
\item limit circle, all solutions satisfy \eqref{mos3},
    and hence all solutions are square integrable.
\end{enumerate}
If $V$ is real, then the case (2) is absent and we have the usual
Weyl's dichotomy.  Then $L$ is limit point at $b$ iff for any
$\lambda$ there is at most one solution of $Lf=\lambda f$ which is
square integrable near $b$. But this is not the case if $V$ is
complex.

We emphasize that \emph{the limit point/circle terminology is
  interpreted here in the geometric sense described above} (based on
Theorem \ref{th:weyl}).

There exist examples of (2) in the literature.  \emph{In the limit
  point case, it is possible that we have only one nonzero solution
  satisfying \eqref{mos3}, whereas all solutions are square integrable
  with respect to the Lebesgue measure.} Indeed, Sims \cite[p.\
257]{Si} has shown that this happens in simple examples like
$V(x)= x^6-3\i x^2/2$ on $]1,\infty[$. See also the discussion in
\cite{BCEP}.

\begin{remark}
  We also note that if $V$ is real then for any non-real $\lambda$
  there is at least one nonzero solution of $Lf=\lambda f$ which is
  square integrable near $b$.  But it does not seem to be known
  whether for arbitrary complex $V$ there is $\lambda$ such that
  $Lf=\lambda f$ has a nonzero solution which is square integrable
  near $b$.
\end{remark}

\appendix

\section{Symplectic spaces}  \label{a1}
\protect\setcounter{equation}{0}

Let $\cV$ be a vector space. A bilinear form $\lbra\cdot|\cdot\rbra$
on $\cV$ is called {\em symplectic} if it is antisymmetric and
nondegenerate, i.e.  $ \lbra\phi|\psi\rbra=- \lbra\phi|\psi\rbra$ and
for any $\phi\in\cV$ there exists $\psi\in\cV$ such that
$ \lbra\phi|\psi\rbra\neq0$. For a subspace $\cW$ in $\cV$ we define
its {\em symplectic orthogonal complement}
\[
  \cW^{\,\s\!\perp}:=\{\phi\mid\lbra \phi|\psi\rbra=0,\quad
  \psi\in\cW\}.
\]
$\cW$ is called {\em isotropic} if $\cW\subset\cW^{\,\s\!\perp}$ and
Lagrangian if $\cW=\cW^{\,\s\!\perp}$.

The following proposition is well-known and easy to prove.  Perhaps
the only nontrivial point is (4) for infinite-dimensional $\cV$, where
the usual induction argument needs to use the Zorn Lemma.

\begin{proposition}  $ $
   \begin{enumerate}
  \item
    If $\cV$ is a symplectic space, then $\dim\cV$ is even or infinite.
 \item If $\cW$ is a subspace of  $\cV$, then
   $\dim\cW=\dim\cV/\cW^\sperp.$
   \item
  If $\cW$ is a Lagrangian subspace, then $\dim\cW=\frac12\dim\cV$.
\item There exist Lagrangian subspaces in $\cV$.
  \end{enumerate}
  \label{sympl4}
\end{proposition}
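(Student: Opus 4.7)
The plan is to treat the four assertions in a rather standard way, handling the purely algebraic parts (1)--(3) by a Darboux-type construction and a dimension count, and reserving Zorn's lemma for (4) in the infinite-dimensional case.

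For (1), I would proceed by building (finitely many) symplectic pairs. Pick any nonzero $\phi_1\in\cV$; by nondegeneracy there is $\psi_1$ with $\lbra\phi_1|\psi_1\rbra\neq 0$, and after rescaling we may assume $\lbra\phi_1|\psi_1\rbra=1$. Antisymmetry forces $\lbra\phi_1|\phi_1\rbra=\lbra\psi_1|\psi_1\rbra=0$, so $\mathrm{Span}(\phi_1,\psi_1)$ is $2$-dimensional. The key algebraic observation is that $\cV=\mathrm{Span}(\phi_1,\psi_1)\oplus \mathrm{Span}(\phi_1,\psi_1)^{\,\s\!\perp}$: given $v\in\cV$, the element $v-\lbra v|\psi_1\rbra\phi_1+\lbra v|\phi_1\rbra\psi_1$ lies in the symplectic orthogonal. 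Moreover, the restriction of $\lbra\cdot|\cdot\rbra$ to $\mathrm{Span}(\phi_1,\psi_1)^{\,\s\!\perp}$ is again symplectic (nondegeneracy is inherited). Iterating, one exhausts $\cV$ if $\dim\cV<\infty$ after finitely many steps, showing that $\dim \cV$ is even.

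For (2), consider the linear map $T:\cV\to\cW^*$, $v\mapsto \lbra v|\cdot\rbra\big|_\cW$. By definition $\Ker T=\cW^{\,\s\!\perp}$, and nondegeneracy of the restriction of the form to pairings $\cV\times\cW$ (applied via the symplectic basis above adapted to $\cW$, or directly since any nonzero $w\in\cW$ admits some $v\in\cV$ with $\lbra v|w\rbra\neq 0$) shows that $T$ is surjective, hence $\dim\cV/\cW^{\,\s\!\perp}=\dim\cW^*=\dim\cW$. Assertion (3) is then immediate: if $\cW=\cW^{\,\s\!\perp}$, then $\dim\cW=\dim\cV/\cW$, so $\dim\cV=2\dim\cW$.

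The main obstacle is (4) in the infinite-dimensional case, where the inductive construction of (1) need not terminate. Here I would apply Zorn's lemma to the poset $\mathscr{I}$ of isotropic subspaces of $\cV$, ordered by inclusion. A chain $\{\cW_\alpha\}\subset\mathscr{I}$ has the upper bound $\bigcup_\alpha \cW_\alpha$, which is isotropic since $\lbra\cdot|\cdot\rbra$ vanishes on any finite subset coming from the chain. Let $\cW$ be a maximal element. If $\cW\subsetneq \cW^{\,\s\!\perp}$ and $\phi\in \cW^{\,\s\!\perp}\setminus \cW$, then $\cW+\C\phi$ is still isotropic (antisymmetry gives $\lbra\phi|\phi\rbra=0$, and $\lbra\phi|w\rbra=0$ for $w\in\cW$ by choice of $\phi$), contradicting maximality. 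Hence $\cW=\cW^{\,\s\!\perp}$, i.e.\ $\cW$ is Lagrangian. The only subtle point is to verify that the union of a chain of isotropic subspaces is indeed a subspace, which is standard, so no real technical difficulty arises beyond the use of Zorn.
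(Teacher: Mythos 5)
Your proof is correct, and it fills in exactly the argument the paper leaves implicit: the paper states this proposition without proof, remarking only that it is well-known and that the single nontrivial point is (4) in infinite dimensions, where Zorn's lemma is needed --- which is precisely how you handle it (Zorn on the poset of isotropic subspaces, with maximality forcing $\cW=\cW^{\sperp}$). The Darboux-pair decomposition for (1) and the duality map $\cV\to\cW^*$ for (2) are the standard routes and are sound (with the usual understanding that (2) and (3) are read as cardinality statements, the surjectivity argument being literal in the finite-dimensional case relevant to the paper).
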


Assume in addition that $\cV$ is a Banach space with norm $\|\cdot\|$.
We say that the symplectic form is continuous if
\begin{equation}
  |  \lbra\phi|\psi\rbra|\leq c\|\phi\|\|\psi\|.
\end{equation}

\begin{proposition} \label{sympl5} If $\cV$ is
    a Banach space and the symplectic form is continuous, then:
  \begin{enumerate}
  \item If $\cW$ is a subspace of $\cV$, then $\cW^\sperp$ is closed.
  \item Every Lagrangian subspace of $\cV$ is closed.
  \end{enumerate}
\end{proposition}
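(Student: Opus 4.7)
The plan is to reduce both parts to the elementary observation that, under continuity of the symplectic form, each map $\phi \mapsto \lbra \phi | \psi \rbra$ with $\psi \in \cV$ fixed is a continuous linear functional on $\cV$.

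For part (1), I would write
\[
\cW^\sperp = \bigcap_{\psi \in \cW} \{\phi \in \cV \mid \lbra\phi|\psi\rbra = 0\}.
\]
Fix $\psi \in \cW$ and set $T_\psi(\phi) := \lbra\phi|\psi\rbra$. The continuity hypothesis gives $|T_\psi(\phi)| \le c\|\psi\|\cdot\|\phi\|$, so $T_\psi$ is a bounded linear functional on $\cV$, and $\Ker T_\psi$ is closed. Since $\cW^\sperp$ is the intersection of the closed subspaces $\Ker T_\psi$ as $\psi$ ranges over $\cW$, it is closed.

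Part (2) is then immediate: if $\cW$ is Lagrangian, then by definition $\cW = \cW^\sperp$, and the right-hand side is closed by part (1). Hence $\cW$ is closed.

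There is essentially no obstacle here; the only thing worth flagging is that one should not confuse $\cW^\sperp$ with the orthogonal complement for a scalar product (where an analogous statement would need no continuity of the form because Banach-space continuity is automatic for finite-rank/Hilbert data). The continuity assumption on $\lbra\cdot|\cdot\rbra$ is what makes the individual functionals $T_\psi$ bounded and is used only in this single step.
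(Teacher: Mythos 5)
Your proof is correct and is exactly the argument the paper has in mind: the paper simply states that (1) is obvious and implies (2), and your write-up supplies precisely those details ($\cW^\sperp$ as an intersection of kernels of the bounded functionals $\phi\mapsto\lbra\phi|\psi\rbra$, and a Lagrangian subspace being equal to its own symplectic orthogonal complement). Nothing further is needed.
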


\begin{proof} (1) is obvious. It implies (2). \end{proof}

Let $\cV'$ denote the dual space of $\cV$. We have the map
\begin{equation}
  \cV\ni v\mapsto v':=\lbra v|\cdot\rbra\in\cV'.
\end{equation}
If $\dim\cV$ is finite, then this is an isomorphism.
$\cV'$ is then equipped with a symplectic form
\begin{equation}\label{symplo}
\lbra v'|w'\rbra:=\lbra v|w\rbra.
\end{equation}


\end{document}